\providecommand{\U}[1]{\protect\rule{.1in}{.1in}}
\newtheorem{theorem}{Theorem}
\newtheorem{conjecture}[theorem]{Conjecture}
\newtheorem{corollary}[theorem]{Corollary}
\newtheorem{definition}[theorem]{Definition}
\newtheorem{lemma}[theorem]{Lemma}
\newtheorem{problem}[theorem]{Problem}
\newtheorem{proposition}[theorem]{Proposition}
\newenvironment{proof}[1][Proof]{\noindent\textbf{#1.} }{\ \rule{0.5em}{0.5em}}
\begin{document}

\title{The Computational Complexity of Linear Optics}
\author{Scott Aaronson\thanks{MIT. \ Email: aaronson@csail.mit.edu. \ \ This material
is based upon work supported by the National Science Foundation under Grant
No. 0844626. \ Also supported by a DARPA YFA grant and a Sloan Fellowship.}
\and Alex Arkhipov\thanks{MIT. \ Email: arkhipov@mit.edu. \ Supported by an Akamai
Foundation Fellowship.}}
\date{}
\maketitle

\begin{abstract}
We give new evidence that quantum computers---moreover, rudimentary quantum
computers built entirely out of linear-optical elements---cannot be
efficiently simulated by classical computers. \ In particular, we define a
model of computation in which identical photons are generated, sent through a
linear-optical network, then nonadaptively\ measured\ to count the number of
photons in each mode. \ This model is not known or believed to be universal
for quantum computation, and indeed, we discuss the prospects for realizing
the model using current technology. \ On the other hand, we prove that the
model is able to solve sampling problems and search problems that are
classically intractable under plausible assumptions.

Our first result says that, if there exists a polynomial-time classical
algorithm that samples from the same probability distribution as a
linear-optical network, then $\mathsf{P}^{\mathsf{\#P}}=\mathsf{BPP}%
^{\mathsf{NP}}$, and hence the polynomial hierarchy collapses to the third
level. \ Unfortunately, this result assumes an extremely accurate simulation.

Our main result suggests that even an approximate or noisy classical
simulation would already imply a collapse of the polynomial hierarchy. \ For
this, we need two unproven conjectures: the \textit{Permanent-of-Gaussians
Conjecture}, which says that it is $\mathsf{\#P}$-hard to approximate the
permanent of a matrix $A$\ of independent $\mathcal{N}\left(  0,1\right)
$\ Gaussian entries, with high probability over $A$; and the \textit{Permanent
Anti-Concentration Conjecture}, which says that $\left\vert
\operatorname*{Per}\left(  A\right)  \right\vert \geq\sqrt{n!}%
/\operatorname*{poly}\left(  n\right)  $ with high probability over $A$. We
present evidence for these conjectures,\ both of which seem interesting even
apart from our application.

This paper does not assume knowledge of quantum optics.\ Indeed, part of its
goal is to develop the beautiful theory of noninteracting bosons underlying
our model, and its connection to the permanent function, in a self-contained
way accessible to theoretical computer scientists.

\end{abstract}
\tableofcontents

\section{Introduction\label{INTRO}}

The Extended Church-Turing Thesis says that all computational problems that
are efficiently solvable by realistic physical devices, are efficiently
solvable by a probabilistic Turing machine. \ Ever since Shor's algorithm
\cite{shor}, we have known that this thesis is in severe tension with the
currently-accepted laws of physics. \ One way to state Shor's discovery is this:

\begin{quotation}
\noindent\textit{Predicting the results of a given quantum-mechanical
experiment, to finite accuracy, cannot be done by a classical computer in
probabilistic polynomial time, unless factoring integers can as well.}
\end{quotation}

As the above formulation makes clear, Shor's result is not merely about some
hypothetical future in which large-scale quantum computers are built.\ \ It is
also a hardness result for a practical problem. \ For \textit{simulating
quantum systems} is one of the central computational problems of modern
science, with applications from drug design to nanofabrication to nuclear
physics. \ It has long been a major application of high-performance computing,
and Nobel Prizes have been awarded for methods (such as the Density Functional
Theory)\ to handle special cases. \ What Shor's result shows is that, if we
had an efficient, \textit{general-purpose} solution to the quantum simulation
problem, then we could also break widely-used cryptosystems such as RSA.

However, as evidence against the Extended Church-Turing Thesis, Shor's
algorithm has two significant drawbacks. \ The first is that, even by the
conjecture-happy standards of complexity theory, it is no means settled that
factoring is classically hard. \ Yes, we believe this enough to base modern
cryptography on it---but as far as anyone knows, factoring could be in
$\mathsf{BPP}$\ without causing any collapse of complexity classes or other
disastrous theoretical consequences. \ Also, of course, there \textit{are}
subexponential-time factoring algorithms (such as the number field sieve), and
few would express confidence that they cannot be further improved. \ And thus,
ever since Bernstein and Vazirani\ \cite{bv} defined the class $\mathsf{BQP}%
$\ of quantumly feasible problems, it has been a dream of quantum computing
theory to show (for example) that, if $\mathsf{BPP}=\mathsf{BQP}$, then the
polynomial hierarchy would collapse, or some other \textquotedblleft generic,
foundational\textquotedblright\ assumption of theoretical computer science
would fail. \ In this paper, we do not \textit{quite} achieve that dream, but
we come closer than one might have thought possible.

The second, even more obvious drawback of Shor's algorithm is that
implementing it scalably is well beyond current technology. \ To run Shor's
algorithm, one needs to be able to perform arithmetic (including modular
exponentiation) on a coherent superposition of integers encoded in binary.
\ This does not seem much easier than building a \textit{universal} quantum
computer.\footnote{One caveat is a result of Cleve and Watrous \cite{cw}, that
Shor's algorithm can be implemented using \textit{log-depth} quantum circuits
(that is, in $\mathsf{BPP}^{\mathsf{BQNC}}$). \ But even here, fault-tolerance
will presumably be needed, among other reasons because one still has
polynomial \textit{latency} (the log-depth circuit does not obey spatial
locality constraints).} \ In particular, it appears one first needs to solve
the problem of \textit{fault-tolerant quantum computation}, which is known to
be possible in principle if quantum mechanics is valid \cite{ab,klz}, but
might require decoherence rates that are several orders of magnitude below
what is achievable today.

Thus, one might suspect that proving a quantum system's computational power by
having it factor integers encoded in binary is a bit like proving a dolphin's
intelligence by teaching it to solve arithmetic problems. \ Yes, with heroic
effort, we can probably do this, and perhaps we have good reasons to.
\ However, if we just watched the dolphin in its natural habitat, then we
might see it display equal intelligence with no special training at all.

Following this analogy, we can ask: are there more \textquotedblleft
natural\textquotedblright\ quantum systems that \textit{already} provide
evidence against the Extended Church-Turing Thesis? \ Indeed, there are
countless quantum systems accessible to current experiments---including
high-temperature superconductors, Bose-Einstein condensates, and even just
large nuclei and molecules---that seem intractable to simulate on a classical
computer, and largely for the reason a theoretical computer scientist would
expect: namely, that the dimension of a quantum state increases exponentially
with the number of particles. \ The difficulty is that it is not clear how to
interpret these systems as \textit{solving computational problems}. \ For
example, what is the \textquotedblleft input\textquotedblright\ to a
Bose-Einstein condensate? \ In other words, while these systems might be hard
to simulate, we would not know how to justify that conclusion using the one
formal tool (reductions) that is currently available to us.

So perhaps the real question is this:\ do there exist quantum systems that are
\textquotedblleft intermediate\textquotedblright\ between Shor's algorithm and
a Bose-Einstein condensate---in the sense that

\begin{enumerate}
\item[(1)] they are significantly closer to experimental reality than
universal quantum computers, but

\item[(2)] they can be proved, under plausible complexity assumptions (the
more \textquotedblleft generic\textquotedblright\ the better), to be
intractable to simulate classically?
\end{enumerate}

In this paper, we will argue that the answer is yes.

\subsection{Our Model\label{INTMODEL}}

We define and study a formal model of \textit{quantum computation with
noninteracting bosons}. \ Physically, our model could be implemented using a
\textit{linear-optical network}, in which $n$ identical photons pass through a
collection of simple optical elements (beamsplitters and phaseshifters), and
are then measured to determine their locations. \ In Section \ref{MODEL}, we
give a detailed exposition of the model that does not presuppose any physics
knowledge. \ For now, though, it is helpful to imagine a rudimentary
\textquotedblleft computer\textquotedblright\ consisting of $n$ identical
balls, which are dropped one by one into a vertical lattice of pegs, each of
which randomly scatters each incoming ball onto one of two other pegs. \ Such
an arrangement---called \textit{Galton's board}---is sometimes used in science
museums to illustrate the binomial distribution (see Figure \ref{galtonfig}).
\ The \textquotedblleft input\textquotedblright\ to the computer is the exact
arrangement $A$\ of the pegs, while the \textquotedblleft
output\textquotedblright\ is the number of balls that have landed at each
location on the bottom (or rather, a sample from the joint distribution
$\mathcal{D}_{A}$ over these numbers). \ There is no interaction between pairs
of balls.%
\begin{figure}[ptb]%
\centering
\includegraphics[
height=1.6596in,
width=1.4641in
]%
{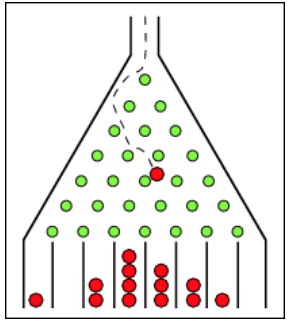}%
\caption{Galton's board, a simple \textquotedblleft computer\textquotedblright%
\ to output samples from the binomial distribution. \ From MathWorld,
http://mathworld.wolfram.com/GaltonBoard.html}%
\label{galtonfig}%
\end{figure}

Our model is essentially the same as that shown in Figure \ref{galtonfig},
except that instead of identical balls, we use \textit{identical bosons}
governed by quantum statistics. \ Other minor differences are that, in our
model, the \textquotedblleft balls\textquotedblright\ are each dropped from
different starting locations, rather than a single location; and the
\textquotedblleft pegs,\textquotedblright\ rather than being arranged in a
regular lattice, can be arranged arbitrarily to encode a problem of interest.

Mathematically, the key point about our model is that, to find the probability
of any particular output of the computer, one needs to calculate the
\textit{permanent} of an $n\times n$ matrix. \ This can be seen even in the
classical case: suppose there are $n$ balls and $n$ final locations, and ball
$i$ has probability $a_{ij}$\ of landing at location $j$. \ Then the
probability of one ball landing in each of the $n$ locations is%
\[
\operatorname*{Per}\left(  A\right)  =\sum_{\sigma\in S_{n}}\prod_{i=1}%
^{n}a_{i\sigma\left(  i\right)  },
\]
where $A=\left(  a_{ij}\right)  _{i,j\in\left[  n\right]  }$. \ Of course, in
the classical case, the $a_{ij}$'s are nonnegative real numbers---which means
that we can approximate $\operatorname*{Per}\left(  A\right)  $\ in
probabilistic polynomial time, by using the celebrated algorithm of Jerrum,
Sinclair, and Vigoda \cite{jsv}. \ In the quantum case, by contrast, the
$a_{ij}$'s are complex numbers. \ And it is not hard to show that, given a
general matrix $A\in\mathbb{C}^{n\times n}$, even \textit{approximating}
$\operatorname*{Per}\left(  A\right)  $\ to within a constant factor is
$\mathsf{\#P}$-complete. \ This fundamental difference between nonnegative and
complex matrices is the starting point for everything we do in this paper.

It is not hard to show that a boson computer can be simulated by a
\textquotedblleft standard\textquotedblright\ quantum computer (that is, in
$\mathsf{BQP}$). \ But the other direction seems extremely unlikely---indeed,
it even seems unlikely that a boson computer can do universal
\textit{classical} computation! \ Nor do we have any evidence that a boson
computer could factor integers, or solve any other decision or promise problem
not in $\mathsf{BPP}$. \ However, if we broaden the notion of a computational
problem to encompass \textit{sampling} and \textit{search} problems, then the
situation is quite different.

\subsection{Our Results\label{RESULTS}}

In this paper we study \textsc{BosonSampling}: the problem of sampling, either
exactly or approximately, from the output distribution of a boson computer.
\ Our goal is to give evidence that this problem is hard for a classical
computer. \ Our main results fall into three categories:

\begin{enumerate}
\item[(1)] Hardness results for exact \textsc{BosonSampling}, which give an
essentially complete picture of that case.

\item[(2)] Hardness results for \textit{approximate} \textsc{BosonSampling}%
,\ which depend on plausible conjectures about the permanents of
i.i.d.\ Gaussian matrices.

\item[(3)] A program aimed at understanding and proving the conjectures.
\end{enumerate}

We now discuss these in turn.

\subsubsection{The Exact Case\label{EXACTCASE}}

Our first (easy) result, proved in Section \ref{WARMUPSEC}, says the following.

\begin{theorem}
\label{warmup}\noindent The exact \textsc{BosonSampling}\ problem is not
efficiently solvable by a classical computer, unless\textit{ }$\mathsf{P}%
^{\mathsf{\#P}}=\mathsf{BPP}^{\mathsf{NP}}$\textit{\ }and the polynomial
hierarchy collapses to the third level.

More generally, let $\mathcal{O}$ be any oracle that \textquotedblleft
simulates boson computers,\textquotedblright\ in the sense that $\mathcal{O}%
$\ takes as input a random string $r$ (which $\mathcal{O}$\ uses as its only
source of randomness) and a description of a boson computer $A$, and returns a
sample $\mathcal{O}_{A}\left(  r\right)  $ from the probability distribution
$\mathcal{D}_{A}$\ over possible outputs of $A$. \ Then $\mathsf{P}%
^{\mathsf{\#P}}\subseteq\mathsf{BPP}^{\mathsf{NP}^{\mathcal{O}}}$.
\end{theorem}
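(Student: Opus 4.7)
The plan is to reduce computing the permanent of an arbitrary complex matrix to approximating output probabilities of a boson computer, and then use Stockmeyer's approximate counting theorem to turn the sampling oracle $\mathcal{O}$ into an approximate-counting oracle that lives in $\mathsf{BPP}^{\mathsf{NP}^{\mathcal{O}}}$.

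First I would invoke the physics of noninteracting bosons, as set up in Section~\ref{MODEL}, to write the probability of any particular outcome $x$ of a boson computer $A$ as $\Pr[\mathcal{D}_A = x] = |\operatorname{Per}(A_x)|^2/m(x)$, where $A_x$ is an $n\times n$ matrix of complex amplitudes built from $A$ and $x$, and $m(x)$ is a simple combinatorial factor. I would then argue, using a padding/embedding trick, that given an \emph{arbitrary} complex matrix $X \in \mathbb{C}^{n \times n}$ (say scaled so $\|X\| \leq 1$), one can efficiently construct a boson computer $A$ and a specific outcome $x^\ast$ such that $A_{x^\ast} = X$ (up to a known scalar). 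Thus, even approximating $\Pr[\mathcal{D}_A = x^\ast]$ to a sufficiently good multiplicative factor is enough to solve a $\mathsf{\#P}$-hard problem, by the classical result that multiplicatively approximating $|\operatorname{Per}(X)|^2$ for arbitrary complex $X$ is $\mathsf{\#P}$-hard.

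Next I would show how $\mathcal{O}$ lets us approximate $\Pr[\mathcal{D}_A=x^\ast]$ inside $\mathsf{BPP}^{\mathsf{NP}^{\mathcal{O}}}$. By assumption, $\mathcal{O}_A(r) = x^\ast$ is a polynomial-time-checkable event whose probability over the random string $r$ is exactly $\Pr[\mathcal{D}_A = x^\ast]$. Stockmeyer's theorem says that for any $\mathsf{FP}^{\#\mathsf{P}}$-style probability of the form $\Pr_r[f(r)=1]$ with $f$ polynomial-time, a multiplicative approximation can be computed in $\mathsf{FBPP}^{\mathsf{NP}^f}$. Applied to the polynomial-time predicate $f(r) := [\mathcal{O}_A(r) = x^\ast]$, this gives an approximation to $\Pr[\mathcal{D}_A = x^\ast]$ in $\mathsf{FBPP}^{\mathsf{NP}^{\mathcal{O}}}$. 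Combining with the previous step, any $\mathsf{\#P}$ problem reduces to a $\mathsf{BPP}^{\mathsf{NP}^{\mathcal{O}}}$ computation, i.e.\ $\mathsf{P}^{\mathsf{\#P}} \subseteq \mathsf{BPP}^{\mathsf{NP}^{\mathcal{O}}}$. Specializing to $\mathcal{O}\in \mathsf{BPP}$ gives $\mathsf{P}^{\mathsf{\#P}} \subseteq \mathsf{BPP}^{\mathsf{NP}}$, and by Toda's theorem $\mathsf{PH}\subseteq\mathsf{P}^{\mathsf{\#P}}\subseteq\mathsf{BPP}^{\mathsf{NP}}\subseteq\Sigma_3^{\mathsf{P}}$, the claimed third-level collapse.

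The main obstacle I expect is the embedding step: I need to show that an arbitrary $X$ (scaled appropriately) really does appear as an $A_x$ submatrix for some efficiently-constructible linear-optical network. Concretely, this requires dilating a contractive $X$ to a unitary $U$ of polynomial size (which is standard), checking that the correspondence between outcomes and submatrices picks out exactly $X$, and carefully tracking the normalizing factor $m(x)$ and the scaling so that the multiplicative approximation to the probability converts back to a multiplicative approximation of $|\operatorname{Per}(X)|^2$ of the quality needed for $\mathsf{\#P}$-hardness. Everything else is bookkeeping plus two off-the-shelf tools (Stockmeyer and Toda).
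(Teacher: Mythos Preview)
Your proposal is correct and matches the paper's first proof (Section~\ref{BASIC}) essentially step for step: dilate an arbitrary contraction $X$ to a submatrix of a unitary (the paper's Lemma~\ref{embedlem}), identify the outcome probability $p_A=|\langle 1_n|\varphi(U)|1_n\rangle|^2=\varepsilon^{2n}|\operatorname{Per}(X)|^2$ via Theorem~\ref{perust}, invoke Stockmeyer (Theorem~\ref{approxcount}) to multiplicatively approximate $p_A$ in $\mathsf{FBPP}^{\mathsf{NP}^{\mathcal{O}}}$, and appeal to the $\mathsf{\#P}$-hardness of multiplicatively approximating $\operatorname*{Per}(X)^2$ (the paper proves this as Theorem~\ref{approxhard} rather than citing it). The paper also gives a second, independent proof via $\mathsf{PostBQP}=\mathsf{PP}$ and the KLM theorem, but your argument is the primary one.
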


In particular, even if the exact \textsc{BosonSampling} problem were solvable
by a classical computer \textit{with an oracle for a }$\mathsf{PH}%
$\textit{\ problem}, Theorem \ref{warmup}\ would still imply that
$\mathsf{P}^{\mathsf{\#P}}\subseteq\mathsf{BPP}^{\mathsf{PH}}$---and therefore
that the polynomial hierarchy would collapse, by Toda's Theorem \cite{toda}.
\ This provides evidence that quantum computers have capabilities outside the
entire polynomial hierarchy, complementing the recent evidence of Aaronson
\cite{aar:ph} and Fefferman and Umans \cite{feffumans}.

At least for a computer scientist, it is tempting to interpret Theorem
\ref{warmup} as saying that \textquotedblleft the exact \textsc{BosonSampling}
problem is $\mathsf{\#P}$-hard under $\mathsf{BPP}^{\mathsf{NP}}%
$-reductions.\textquotedblright\ \ Notice that this would have a shocking
implication: that quantum computers (indeed, quantum computers of a
particularly simple kind) could efficiently solve a $\mathsf{\#P}$-hard\ problem!

There is a catch, though, arising from the fact that \textsc{BosonSampling}%
\ is a sampling problem rather than a decision problem. \ Namely, if
$\mathcal{O}$\ is an oracle for sampling from the boson distribution
$\mathcal{D}_{A}$, then Theorem \ref{warmup}\ shows that $\mathsf{P}%
^{\mathsf{\#P}}\subseteq\mathsf{BPP}^{\mathsf{NP}^{\mathcal{O}}}%
$---\textit{but only if the} $\mathsf{BPP}^{\mathsf{NP}}$ \textit{machine gets
to fix the random bits\ used by} $\mathcal{O}$. \ This condition is clearly
met if $\mathcal{O}$ is a classical randomized algorithm, since we can always
interpret a randomized algorithm as just a deterministic algorithm that takes
a random string $r$\ as part of its input. \ On the other hand, the condition
would \textit{not} be met if we implemented $\mathcal{O}$\ (for example) using
the boson computer itself. \ In other words, our \textquotedblleft
reduction\textquotedblright\ from $\mathsf{\#P}$-complete problems to
\textsc{BosonSampling}\ makes essential use of the hypothesis that we have a
\textit{classical} \textsc{BosonSampling}\ algorithm.

We will give two proofs of Theorem \ref{warmup}. \ In the first proof, we
consider the probability $p$ of some particular basis state when a boson
computer is measured. \ We then prove two facts:

\begin{enumerate}
\item[(1)] Even \textit{approximating} $p$ to within a multiplicative constant
is a $\mathsf{\#P}$-hard problem.

\item[(2)] \textit{If} we had a polynomial-time classical algorithm for exact
\textsc{BosonSampling},\ \textit{then} we could approximate $p$ to within a
multiplicative constant in the class $\mathsf{BPP}^{\mathsf{NP}}$, by using a
standard technique called \textit{universal hashing}.
\end{enumerate}

Combining facts (1) and (2), we find that, if the classical
\textsc{BosonSampling} algorithm exists, then $\mathsf{P}^{\mathsf{\#P}%
}=\mathsf{BPP}^{\mathsf{NP}}$, and therefore the polynomial hierarchy collapses.

Our second proof was inspired by independent work of Bremner, Jozsa, and
Shepherd \cite{bjs}. \ In this proof, we start with a result of Knill,
Laflamme, and Milburn \cite{klm}, which says that linear optics with
\textit{adaptive measurements} is universal for $\mathsf{BQP}$. \ A
straightforward modification of their construction shows that linear optics
with \textit{postselected} measurements is universal for $\mathsf{PostBQP}$
(that is, quantum polynomial-time with postselection on possibly
exponentially-unlikely measurement outcomes). \ Furthermore, Aaronson
\cite{aar:pp} showed that $\mathsf{PostBQP}=\mathsf{PP}$. \ On the other hand,
if a classical \textsc{BosonSampling} algorithm existed, then we will show
that we could simulate postselected linear optics in $\mathsf{PostBPP}$\ (that
is, \textit{classical} polynomial-time with postselection, also called
$\mathsf{BPP}_{\mathsf{path}}$). \ We would therefore get%
\[
\mathsf{BPP}_{\mathsf{path}}=\mathsf{PostBPP}=\mathsf{PostBQP}=\mathsf{PP},
\]
which is known to imply a collapse of the polynomial hierarchy.

Despite the simplicity of the above arguments, there is something conceptually
striking about them. \ Namely, starting from an algorithm to \textit{simulate
quantum mechanics}, we get an algorithm\footnote{Admittedly, a $\mathsf{BPP}%
^{\mathsf{NP}}$\ algorithm.} to \textit{solve }$\mathsf{\#P}$\textit{-complete
problems}---even though solving $\mathsf{\#P}$-complete problems is believed
to be well beyond what a quantum computer itself can do! \ Of course, one
price we pay is that we need to talk about sampling problems rather than
decision problems. \ If we do so, though, then we get to base our belief in
the power of quantum computers on $\mathsf{P}^{\mathsf{\#P}}\neq
\mathsf{BPP}^{\mathsf{NP}}$, which is a much more \textquotedblleft
generic\textquotedblright\ (many would say safer) assumption than
\textsc{Factoring}$\notin\mathsf{BPP}$.

As we see it, the central drawback of Theorem \ref{warmup} is that it only
addresses the consequences of a fast classical algorithm that \textit{exactly}
samples the boson distribution $\mathcal{D}_{A}$. \ One can relax this
condition slightly: if the oracle $\mathcal{O}$ samples from some distribution
$\mathcal{D}_{A}^{\prime}$ whose probabilities are all
\textit{multiplicatively close} to those in $\mathcal{D}_{A}$, then we still
get the conclusion that $\mathsf{P}^{\mathsf{\#P}}\subseteq\mathsf{BPP}%
^{\mathsf{NP}^{\mathcal{O}}}$. \ In our view, though, multiplicative closeness
is already too strong an assumption. \ At a minimum, given as input an error
parameter $\varepsilon>0$, we ought to let our simulation algorithm sample
from some distribution $\mathcal{D}_{A}^{\prime}$\ such that $\left\Vert
\mathcal{D}_{A}^{\prime}-\mathcal{D}_{A}\right\Vert \leq\varepsilon$ (where
$\left\Vert \cdot\right\Vert $\ represents total variation distance), using
$\operatorname*{poly}\left(  n,1/\varepsilon\right)  $\ time.

Why are we so worried about this issue? \ One obvious reason is that noise,
decoherence, photon losses, etc.\ will be unavoidable features in any real
implementation of a boson computer. \ As a result, not even the boson computer
\textit{itself} can sample exactly from the distribution $\mathcal{D}_{A}$!
\ So it seems arbitrary and unfair to require this of a classical simulation algorithm.

A second, more technical reason to allow error is that later, we would like to
show that a boson computer can solve classically-intractable \textit{search}
problems, in addition to sampling problems. \ However, while Aaronson
\cite{aar:samp} proved an extremely general connection between search problems
and sampling problems, that connection only works for \textit{approximate}
sampling, not exact sampling.

The third, most fundamental reason to allow error is that the connection we
are claiming, between quantum computing and $\mathsf{\#P}$-complete problems,
is so counterintuitive. \ One's first urge is to dismiss this connection as an
artifact of poor modeling choices. \ So the burden is on us to demonstrate the
connection's robustness.

Unfortunately, the proof of Theorem \ref{warmup} fails completely when we
consider approximate sampling algorithms. \ The reason is that the proof
hinges on the $\mathsf{\#P}$-completeness of estimating a single,
exponentially-small probability $p$. \ Thus, if a sampler \textquotedblleft
knew\textquotedblright\ which $p$ we wanted to estimate, then it could
adversarially choose to corrupt that $p$. \ It would still be a perfectly good
approximate sampler, but would no longer reveal the solution to
the\ $\mathsf{\#P}$-complete\ instance that we were trying to solve.

\subsubsection{The Approximate Case\label{APPROXCASE}}

To get around the above problem, we need to argue that a boson computer can
sample from a distribution $\mathcal{D}$\ that \textquotedblleft
robustly\textquotedblright\ encodes the solution to a $\mathsf{\#P}$-complete
problem. \ This means intuitively that, even if a sampler was badly wrong
about any $\varepsilon$ fraction of the probabilities in $\mathcal{D}$, the
remaining $1-\varepsilon$\ fraction would still allow the $\mathsf{\#P}%
$-complete problem\ to be solved.

It is well-known that there exist $\mathsf{\#P}$-complete\ problems with
\textit{worst-case/average-case equivalence}, and that one example of such a
problem is the permanent, at least over finite fields. \ This is a reason for
optimism that the sort of robust encoding we need might be possible. \ Indeed,
it was precisely our desire to encode the \textquotedblleft robustly
$\mathsf{\#P}$-complete\textquotedblright\ permanent function into a quantum
computer's amplitudes\ that led us to study the noninteracting-boson model in
the first place. \ That this model also has great experimental interest simply
came as a bonus.

In this paper, \textit{our main technical contribution is to prove a
connection between the ability of classical computers to solve the approximate
\textsc{BosonSampling} problem and their ability to approximate the
permanent.} \ This connection \textquotedblleft almost\textquotedblright%
\ shows that even approximate classical simulation of boson computers would
imply a collapse of the polynomial hierarchy. \ There is still a gap in the
argument, but it has nothing to do with quantum computing. \ The gap is simply
that it is not known, at present, how to extend the worst-case/average-case
equivalence of the permanent from finite fields to suitably analogous
statements over the reals or complex numbers. \ We will show that, \textit{if}
this gap can be bridged, then there exist search problems and approximate
sampling problems that are solvable in polynomial time by a boson computer,
but not by a $\mathsf{BPP}$\ machine unless $\mathsf{P}^{\mathsf{\#P}%
}=\mathsf{BPP}^{\mathsf{NP}}$.

More concretely, consider the following problem, where the \textsc{GPE} stands
for \textsc{Gaussian Permanent Estimation}:

\begin{problem}
[$\left\vert \text{\textsc{GPE}}\right\vert _{\pm}^{2}$]\label{gpe2+}Given as
input a matrix $X\thicksim\mathcal{N}\left(  0,1\right)  _{\mathbb{C}%
}^{n\times n}$ of i.i.d.\ Gaussians, together with error bounds $\varepsilon
,\delta>0$, estimate $\left\vert \operatorname*{Per}\left(  X\right)
\right\vert ^{2}$ to within additive error $\pm\varepsilon\cdot n!$, with
probability at least $1-\delta$\ over $X$, in $\operatorname*{poly}\left(
n,1/\varepsilon,1/\delta\right)  $\ time.
\end{problem}

Then our main result is the following.

\begin{theorem}
[Main Result]\label{mainresult}Let $\mathcal{D}_{A}$\ be the probability
distribution sampled by a boson computer $A$. \ Suppose there exists a
classical algorithm $C$ that takes as input a description of $A$ as well as an
error bound $\varepsilon$, and that samples from a probability distribution
$\mathcal{D}_{A}^{\prime}$\ such that $\left\Vert \mathcal{D}_{A}^{\prime
}-\mathcal{D}_{A}\right\Vert \leq\varepsilon$ in $\operatorname*{poly}\left(
\left\vert A\right\vert ,1/\varepsilon\right)  $\ time. \ Then the $\left\vert
\text{\textsc{GPE}}\right\vert _{\pm}^{2}$\ problem is solvable in
$\mathsf{BPP}^{\mathsf{NP}}$. \ Indeed, if we treat $C$ as a black box, then
$\left\vert \text{\textsc{GPE}}\right\vert _{\pm}^{2}\in\mathsf{BPP}%
^{\mathsf{NP}^{C}}$.
\end{theorem}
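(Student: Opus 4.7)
The plan is to reduce $|\text{\textsc{GPE}}|_{\pm}^{2}$ to a question about a single output probability of a boson computer, and then to estimate that probability in $\mathsf{BPP}^{\mathsf{NP}^{C}}$ using Stockmeyer's approximate counting, exactly as in the second half of the proof of Theorem~\ref{warmup}. The only new work is to show that the approximation error of $C$ does not spoil the reduction, and for that I would use a hiding/averaging argument.

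First I would set up the embedding. Given an $n\times n$ Gaussian matrix $X\sim\mathcal{N}(0,1)_{\mathbb{C}}^{n\times n}$, pick $m=\operatorname{poly}(n)$ sufficiently large and draw a Haar-random $m\times m$ unitary $U$. A standard fact about truncations of Haar unitaries is that, for $m\gg n$, the top-left $n\times n$ block of $\sqrt{m}\,U$ is within inverse-polynomial variation distance of $\mathcal{N}(0,1)_{\mathbb{C}}^{n\times n}$. Moreover, by right-multiplying $U$ by a uniformly random permutation of columns (and left-multiplying by a uniformly random permutation of rows), we can ``hide'' the $n\times n$ submatrix as a uniformly random principal $n\times n$ submatrix $U_{S,T}$ of $U$. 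The boson computer whose unitary is $U$, with input state $\ket{1_1,\dots,1_n}$, produces a particular collision-free output $\ket{S}$ corresponding to $U_{S,T}$ with probability $p_{S,T}=|\operatorname{Per}(U_{S,T})|^{2}$, and by the connection between permanents and boson amplitudes derived earlier in the paper, estimating $p_{S,T}$ to additive error $\varepsilon/m^{n}$ is essentially equivalent to estimating $|\operatorname{Per}(X)|^{2}$ to additive error $\varepsilon\cdot n!$ (using $\mathbb{E}_{X}|\operatorname{Per}(X)|^{2}=n!$ for the normalization).

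Next I would apply Stockmeyer's theorem to the classical sampler $C$. Treating $C$ as a deterministic function of its random string $r$, the quantity $q_{S,T}=\Pr_{r}[C(U,r)=\ket{S}]$ can be approximated to any inverse-polynomial multiplicative accuracy in $\mathsf{FBPP}^{\mathsf{NP}^{C}}$ by standard universal-hashing/Stockmeyer counting. By hypothesis $\sum_{\ket{S'}}|q_{S'}-p_{S'}|\le 2\varepsilon$, so $\mathbb{E}_{S}[\,|q_{S,T}-p_{S,T}|\,]\le 2\varepsilon/\binom{m}{n}$. Here is where the hiding pays off: because the position of the hidden submatrix was chosen uniformly at random and is statistically independent of $C$ and of any adversarial ``corrupted'' outputs, a Markov argument over the choice of $S$ (and $T$) shows that, with probability at least $1-\delta$ over the embedding, $|q_{S,T}-p_{S,T}|\le O(\varepsilon/(\delta\binom{m}{n}))$. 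Combining this with the Stockmeyer multiplicative error and renormalizing by $m^{n}/n!$ to translate back to $|\operatorname{Per}(X)|^{2}$ yields an additive error of the required form $\varepsilon\cdot n!$, with failure probability $\delta$, running in time $\operatorname{poly}(n,1/\varepsilon,1/\delta)$.

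The main obstacle I expect is step three: making the error budget actually close. On the face of it, spreading the sampler's $\varepsilon$ of variation-distance error uniformly over all $\binom{m}{n}$ collision-free outcomes only gives error $\varepsilon/\binom{m}{n}$ per outcome on average, but the sampler could in principle concentrate its error on a tiny set. What rescues the argument is that the uniformly random permutation used to embed $X$ makes the ``tiny bad set'' a measure-zero event from the sampler's point of view, so a Markov bound over the random hiding controls the error at a typical output. Verifying that this independence actually holds (in particular, that the reduction does not leak information about the embedding to $C$, and that the Haar-induced distribution of $U_{S,T}$ is close enough in total variation to $\mathcal{N}(0,1)_{\mathbb{C}}^{n\times n}$ to preserve the $|\text{\textsc{GPE}}|_{\pm}^{2}$ guarantee with only an inverse-polynomial loss) is the technically delicate step, and handling it cleanly is what forces the statement to be about \emph{additive} error $\varepsilon\cdot n!$ and about distributions close in \emph{total variation} distance, rather than multiplicative error.
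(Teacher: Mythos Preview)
Your overall architecture---hide $X$ as a random submatrix of a Haar-looking matrix, feed that matrix to $C$, Stockmeyer-count to estimate the probability of the hidden outcome, and use Markov's inequality over the random hiding position to control $C$'s adversarial error---is exactly the paper's strategy, and the error bookkeeping in your second and third paragraphs is on target.

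The embedding step, however, is stated backward. You write ``draw a Haar-random $m\times m$ unitary $U$'' and then permute; but a $U$ drawn independently of the input $X$ does not contain $X$ anywhere, so there is nothing to hide and the reduction computes nothing about $\operatorname*{Per}(X)$. The direction you actually need is: \emph{given} $X$, construct $A\in\mathcal{U}_{m,n}$ such that (i) $X/\sqrt{m}$ sits as the $n\times n$ submatrix at a uniformly random set of rows $S^{*}$, and (ii) the marginal law of $A$ (over $X\sim\mathcal{G}^{n\times n}$ and the construction's randomness) is exactly Haar, so that $C$ learns nothing about $S^{*}$ from $A$. The paper accomplishes this with a Hiding Lemma: first use rejection sampling to convert the Gaussian input $X$ into a sample from the truncated-Haar distribution $\mathcal{S}_{m,n}$ (failing with probability only $O(\delta)$), and then complete $X/\sqrt{m}$ to a Haar-random $A$ by sampling the remaining rows from the conditional Haar measure, which is done in $\mathsf{PostBPP}\subseteq\mathsf{BPP}^{\mathsf{NP}}$. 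The rejection step requires not merely the total-variation closeness you invoke but the stronger pointwise density bound $p_{S}(X)\le(1+O(\delta))\,p_{G}(X)$ for every $X$; this is the paper's Haar-Unitary Hiding Theorem and is what drives the choice $m\ge\frac{n^{5}}{\delta}\log^{2}\frac{n}{\delta}$. Finally, the column permutation in your sketch does no work: since the input state is $\left\vert 1_{n}\right\rangle$, the relevant submatrices always use the first $n$ columns of $U$, and only the row position $S^{*}$ is hidden from $C$.
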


Theorem \ref{mainresult}\ is proved in Section \ref{MAIN}. \ The key idea of
the proof is to \textquotedblleft smuggle\textquotedblright\ the $\left\vert
\text{\textsc{GPE}}\right\vert _{\pm}^{2}$\ instance $X$ that we want to solve
into the probability of a \textit{random} output of a boson computer $A$.
\ That way, even if the classical sampling algorithm $C$\ is adversarial, it
will not know which of the exponentially many probabilities in $\mathcal{D}%
_{A}$\ is the one we care about. \ And therefore, provided $C$ correctly
approximates \textit{most} probabilities in $\mathcal{D}_{A}$, with high
probability it will correctly approximate \textquotedblleft
our\textquotedblright\ probability, and will therefore allow $\left\vert
\operatorname*{Per}\left(  X\right)  \right\vert ^{2}$\ to be estimated in
$\mathsf{BPP}^{\mathsf{NP}}$.

Besides this conceptual step, the proof of Theorem \ref{mainresult}\ also
contains a technical component that might find other applications in quantum
information. \ This is that, if we choose an $m\times m$\ unitary matrix
$U$\ randomly according to the Haar measure, then \textit{any }$n\times
n$\textit{\ submatrix of }$U$\textit{ will be close in variation distance to a
matrix of i.i.d.\ Gaussians, provided that }$n\leq m^{1/6}$\textit{.}
\ Indeed, the fact that i.i.d.\ Gaussian matrices naturally arise as
submatrices of Haar unitaries is the reason why we will be so interested in
Gaussian matrices in this paper, rather than Bernoulli matrices or other
well-studied ensembles.

In our view, Theorem \ref{mainresult} already shows that fast, approximate
classical simulation of boson computers would have a surprising complexity
consequence. \ For notice that, if $X\thicksim\mathcal{N}\left(  0,1\right)
_{\mathbb{C}}^{n\times n}$\ is a complex Gaussian matrix, then
$\operatorname*{Per}\left(  X\right)  $\ is a sum of $n!$\ complex terms,
almost all of which usually cancel each other out, leaving only a tiny residue
exponentially smaller than $n!$. \ \textit{A priori}, there seems to be little
reason to expect that residue to be approximable in the polynomial hierarchy,
let alone in $\mathsf{BPP}^{\mathsf{NP}}$.

\subsubsection{The Permanents of Gaussian Matrices\label{PGRESULTS}}

One could go further, though, and speculate that estimating
$\operatorname*{Per}\left(  X\right)  $ for Gaussian $X$\ is actually
$\mathsf{\#P}$-hard. \ We call this the \textit{Permanent-of-Gaussians
Conjecture}, or PGC.\footnote{The name is a pun on the well-known Unique Games
Conjecture (UGC) \cite{khot}, which says that a certain approximation problem
that \textquotedblleft ought\textquotedblright\ to be $\mathsf{NP}$-hard
really \textit{is} $\mathsf{NP}$-hard.} \ We prefer to state the PGC using a
more \textquotedblleft natural\textquotedblright\ variant of the
\textsc{Gaussian Permanent Estimation}\ problem than $\left\vert
\text{\textsc{GPE}}\right\vert _{\pm}^{2}$. \ The more natural variant talks
about estimating $\operatorname*{Per}\left(  X\right)  $\ itself, rather than
$\left\vert \operatorname*{Per}\left(  X\right)  \right\vert ^{2}$, and also
asks for a \textit{multiplicative} rather than additive approximation.

\begin{problem}
[\textsc{GPE}$_{\times}$]\label{gpe*}Given as input a matrix $X\thicksim
\mathcal{N}\left(  0,1\right)  _{\mathbb{C}}^{n\times n}$ of
i.i.d.\ Gaussians, together with error bounds $\varepsilon,\delta>0$, estimate
$\operatorname*{Per}\left(  X\right)  $ to within error $\pm\varepsilon
\cdot\left\vert \operatorname*{Per}\left(  X\right)  \right\vert $, with
probability at least $1-\delta$\ over $X$, in $\operatorname*{poly}\left(
n,1/\varepsilon,1/\delta\right)  $\ time.
\end{problem}

Then the main complexity-theoretic challenge we offer is to prove or disprove
the following:

\begin{conjecture}
[Permanent-of-Gaussians Conjecture or PGC]\label{pgc}\textsc{GPE}$_{\times}$
is $\mathsf{\#P}$-hard. \ In other words, if $\mathcal{O}$\ is any oracle that
solves \textsc{GPE}$_{\times}$,\ then $\mathsf{P}^{\mathsf{\#P}}%
\subseteq\mathsf{BPP}^{\mathcal{O}}$.
\end{conjecture}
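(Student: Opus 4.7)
The plan is to attempt a worst-case-to-average-case self-reduction for the permanent, in the style of Lipton's classical argument over finite fields, but adapted to the complex Gaussian distribution. Since computing the permanent of an arbitrary integer matrix is $\mathsf{\#P}$-hard in the worst case, it would suffice to exhibit, for any $A\in\mathbb{C}^{n\times n}$, a $\mathsf{BPP}^{\mathcal{O}}$-algorithm that recovers $\operatorname{Per}(A)$ given an oracle $\mathcal{O}$ for \textsc{GPE}$_{\times}$.

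Concretely, I would sample $G\sim\mathcal{N}(0,1)_{\mathbb{C}}^{n\times n}$ and form the univariate polynomial $q(t):=\operatorname{Per}((1-t)A+tG)$, which has degree at most $n$ in $t$ and satisfies $q(0)=\operatorname{Per}(A)$. For $t$ close to $1$, the matrix $(1-t)A+tG$, after rescaling by $1/t$, equals $G+((1-t)/t)A$, a unit-variance complex Gaussian matrix with a small mean shift; its total-variation distance from $\mathcal{N}(0,1)_{\mathbb{C}}^{n\times n}$ is controlled by a function of $((1-t)/t)\|A\|_F$. On such near-Gaussian inputs the \textsc{GPE}$_{\times}$ oracle must still succeed with high probability and return a $(1\pm\varepsilon)$-multiplicative estimate of the permanent, which multiplying by $t^n$ converts into a multiplicative estimate of $q(t)$. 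I would then query $n+1$ nodes $t_1,\ldots,t_{n+1}$ in a narrow window near $1$, apply Lagrange interpolation, and extract $q(0)=\operatorname{Per}(A)$.

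The main obstacle, and the one the excerpt itself flags as the open question needed to complete the picture, is controlling the interpolation error. Over $\mathbb{F}_p$ a random translation is exactly uniform, so one interpolates on $n+1$ uniformly random points and recovers $q$ exactly; over $\mathbb{C}$ the translates are only approximately Gaussian, and only in a narrow band of $t$ near $1$. Extrapolating a degree-$n$ polynomial from such a short window down to $t=0$ has condition number exponentially large in $n$, so a $(1\pm\varepsilon)$-multiplicative error on each $q(t_i)$ is blown up catastrophically in $q(0)$. Worse still, the multiplicative errors are measured relative to $|q(t_i)|$, which can be exponentially larger than $|\operatorname{Per}(A)|$, so the resulting additive error on the quantity of interest is exponentially swamped. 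Resolving the conjecture therefore appears to demand one of: (i) a genuinely new robust interpolation scheme tailored to multiplicative-error oracles; (ii) a bootstrapping argument that amplifies \textsc{GPE}$_{\times}$-style approximation into far finer precision along the line $\{(1-t)A+tG\}$; or (iii) a qualitatively different reduction that never evaluates the permanent outside the support of the Gaussian measure, perhaps by exploiting algebraic structure of the permanent beyond its degree. The difficulty is entirely classical and algebraic, concerning the permanent as a polynomial function rather than anything quantum-mechanical, and it is precisely the missing ingredient that would upgrade Theorem \ref{mainresult} into an unconditional hardness theorem for approximate \textsc{BosonSampling}.
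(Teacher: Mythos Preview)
The statement is a \emph{conjecture}, not a theorem: the paper does not prove it and explicitly leaves it as the central open problem. Your write-up correctly recognizes this, sketches the natural Lipton-style worst-case-to-average-case reduction, and identifies why it breaks down---this is exactly what the paper itself does in Sections~\ref{EVPGC} and~\ref{BARRIER}.

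Your discussion and the paper's are essentially the same in spirit but differ slightly in how the barrier is articulated. You frame the obstruction in terms of the exponential condition number of extrapolating a degree-$n$ polynomial from a short window near $t=1$ down to $t=0$, together with the mismatch between the scale of $|q(t_i)|\approx\sqrt{n!}$ and the possibly much smaller target $|\operatorname{Per}(A)|$. The paper makes the same point more concretely by exhibiting an explicit ``spoiler'' polynomial $q_0$ (coming from the all-ones matrix, with $|q_0(0)|=n!$ but $|q_0(t_i)|=n^{O(1)}\sqrt{n!}$) and observing that $q$ and $q+2^{-n}q_0$ are indistinguishable at the sample points yet differ wildly at $0$. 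The paper also notes that this barrier argument itself relies on the Permanent Anti-Concentration Conjecture to guarantee that $|q(t_i)|$ is not too small, a subtlety you do not mention. Both presentations reach the same conclusion: a genuinely new idea---perhaps interpolation over a class of polynomials not closed under addition---is needed to prove the PGC.
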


Of course, a question arises as to whether one can bridge the gap between
the\ $\left\vert \text{\textsc{GPE}}\right\vert _{\pm}^{2}$ problem that
appears in Theorem \ref{mainresult}, and\ the more \textquotedblleft
natural\textquotedblright\ \textsc{GPE}$_{\times}$\ problem used in Conjecture
\ref{pgc}. \ We are able to do so assuming \textit{another} conjecture, this
one an extremely plausible anti-concentration bound for the permanents of
Gaussian random matrices.

\begin{conjecture}
[Permanent Anti-Concentration Conjecture]\label{pacc}There exists a polynomial
$p$ such that for all $n$ and $\delta>0$,%
\[
\Pr_{X\sim\mathcal{N}\left(  0,1\right)  _{\mathbb{C}}^{n\times n}}\left[
\left\vert \operatorname*{Per}\left(  X\right)  \right\vert <\frac{\sqrt{n!}%
}{p\left(  n,1/\delta\right)  }\right]  <\delta.
\]

\end{conjecture}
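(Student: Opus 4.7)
The plan is to combine exact moment computations for the Gaussian permanent with a distributional anti-concentration tool. The natural starting point is the identity $\mathbb{E}[|\operatorname*{Per}(X)|^2]=n!$, which one obtains by expanding $|\operatorname*{Per}(X)|^2=\sum_{\sigma,\tau\in S_n}\prod_i X_{i,\sigma(i)}\overline{X_{i,\tau(i)}}$ and using the one-column orthogonality $\mathbb{E}[X_{ij}\overline{X_{ik}}]=\delta_{jk}$ to kill all off-diagonal terms $\sigma\ne\tau$. So $\sqrt{n!}$ is the correct scale, and the conjecture demands that $|\operatorname*{Per}(X)|$ rarely falls more than a $\operatorname{poly}(n,1/\delta)$ factor below it.

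The first concrete step is to compute the fourth moment by Wick's theorem. The non-vanishing terms in $\mathbb{E}[|\operatorname*{Per}(X)|^4]$ correspond to quadruples $(\sigma_1,\sigma_2,\tau_1,\tau_2)\in S_n^4$ with $\{\sigma_1(i),\sigma_2(i)\}=\{\tau_1(i),\tau_2(i)\}$ as multisets for every row $i$, weighted by $2^{f(\sigma_1,\sigma_2)}$ where $f$ counts rows with $\sigma_1(i)=\sigma_2(i)$. Counting valid $(\tau_1,\tau_2)$ for fixed $(\sigma_1,\sigma_2)$ as perfect matchings in a $2$-regular bipartite graph, the contribution is $2^{\operatorname{cyc}(\sigma_2\sigma_1^{-1})}$, so
\[
\mathbb{E}[|\operatorname*{Per}(X)|^4] \;=\; n!\sum_{\pi\in S_n}2^{\operatorname{cyc}(\pi)} \;=\; n!\cdot(n+1)!
\]
by the classical identity $\sum_\pi z^{\operatorname{cyc}(\pi)}=z(z+1)\cdots(z+n-1)$ at $z=2$. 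Paley--Zygmund then yields $\Pr[|\operatorname*{Per}(X)|^2\geq\alpha n!]\geq(1-\alpha)^2/(n+1)$, which is only \emph{inverse-polynomial} probability of being large --- enough to rule out the worst failure mode (the permanent being exponentially small with overwhelming probability), but far from what the conjecture asserts.

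To upgrade the bound from inverse-polynomial to $1-\delta$, I would extend the Wick computation to $\mathbb{E}[|\operatorname*{Per}(X)|^{2k}]$ for arbitrary $k$, hoping for an analogous closed form --- a moderate-degree polynomial in $n$ times $(n!)^k$. With all even moments under control, one could then either apply Carbery--Wright sharpened by those moment bounds, or, equivalently, prove a quantitative distributional limit theorem for $|\operatorname*{Per}(X)|^2/n!$ converging to an explicit absolutely continuous law on $[0,\infty)$ with density bounded near the origin, from which the desired polynomial anti-concentration follows immediately.

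The main obstacle is precisely this last step. The computed fourth moment $(n+1)(n!)^2$ already shows that $\operatorname*{Per}(X)/\sqrt{n!}$ does \emph{not} converge to a complex Gaussian (its fourth-to-second-moment-squared ratio grows linearly in $n$), so the correct limit law --- conjecturally some gamma-like distribution on $[0,\infty)$ for $|\operatorname*{Per}|^2/n!$ --- must be identified first, and then quantitative convergence to it must be proved. Unlike the determinant, the permanent lacks orthogonal invariance and any free-probabilistic structure, so the standard random-matrix machinery does not apply. I therefore expect any proof of the conjecture to require either a new combinatorial identity organizing the full moment sequence (generalizing the $z(z+1)\cdots(z+n-1)$ formula), or a genuinely new analytic tool for establishing anti-concentration of polynomial functionals of Gaussians beyond the generic Carbery--Wright framework.
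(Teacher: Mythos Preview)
This statement is a \emph{conjecture}; the paper does not prove it either. Your fourth-moment computation and the Paley--Zygmund step reproduce exactly the paper's Lemma~\ref{fourthmoment} and Theorem~\ref{weakpacc} (``weak anti-concentration'': $\Pr[|\operatorname*{Per}(X)|^2\ge\alpha\,n!]>(1-\alpha)^2/(n+1)$), via the same cycle-counting identity $\sum_{\pi\in S_n}2^{\operatorname*{cyc}(\pi)}=(n+1)!$. So up through that point you and the paper coincide.

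The concrete obstacle you should know about is that your hoped-for extension to higher moments provably fails. You write that you would hope for ``a moderate-degree polynomial in $n$ times $(n!)^k$'' for $\mathbb{E}[|\operatorname*{Per}(X)|^{2k}]$. The paper actually computes these: its Theorem~\ref{gentm} shows that $\operatorname*{E}[P_n^k]=M_k$, the expected number of ordered decompositions of a random $k$-regular bipartite multigraph on $[n]\sqcup[n]$ into $k$ perfect matchings, and then observes (via the van der Waerden permanent bound, proved by Falikman and Egorychev) that $M_k\sim(k/e)^n$ for every $k\ge 3$. So all higher normalized moments grow \emph{exponentially} in $n$, and no polynomial control of the moment sequence is available. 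This also kills any sharpening of Carbery--Wright by moments: the raw degree-$n$ Carbery--Wright bound gives only $\Pr[|\operatorname*{Per}(X)|<\varepsilon\sqrt{n!}]=O(n\varepsilon^{1/n})$, which is far too weak, and exponentially large higher moments cannot improve it. The paper's remaining evidence is the determinant analogue (Theorem~\ref{detthm}, proved from the exact product-of-$\chi^2$ law for $|\operatorname*{Det}(X)|^2$, a structure the permanent lacks) and numerics. Your final paragraph is right that a genuinely new idea is required; the point to add is that the specific moment route you outline is already closed.
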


In Section \ref{GPE2GPE}, we give a complicated reduction that proves the following:

\begin{theorem}
\label{decompthm}Suppose the Permanent Anti-Concentration Conjecture holds.
\ Then $\left\vert \text{\textsc{GPE}}\right\vert _{\pm}^{2}$\ and
\textsc{GPE}$_{\times}$\ are polynomial-time equivalent.
\end{theorem}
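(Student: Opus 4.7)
The plan is to prove each direction of the equivalence separately. The easy direction, $\text{\textsc{GPE}}_{\times} \Rightarrow \left\vert\text{\textsc{GPE}}\right\vert_{\pm}^{2}$, does not require PACC. Given a multiplicative approximation $\widetilde{P}$ with $|\widetilde{P} - \operatorname*{Per}(X)| \leq \varepsilon' |\operatorname*{Per}(X)|$, the estimator $|\widetilde{P}|^{2}$ satisfies $\bigl| |\widetilde{P}|^{2} - |\operatorname*{Per}(X)|^{2} \bigr| = O(\varepsilon')\,|\operatorname*{Per}(X)|^{2}$. A standard Gaussian second-moment computation gives $\mathbb{E}_{X}[|\operatorname*{Per}(X)|^{2}] = n!$, so Markov's inequality bounds $|\operatorname*{Per}(X)|^{2} \leq n!/\delta$ with probability $1-\delta$; choosing $\varepsilon' = \Theta(\varepsilon\delta)$ then converts the multiplicative error into the additive error $\pm\varepsilon\cdot n!$ required by $\left\vert\text{\textsc{GPE}}\right\vert_{\pm}^{2}$.

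The hard direction is $\left\vert\text{\textsc{GPE}}\right\vert_{\pm}^{2} \Rightarrow \text{\textsc{GPE}}_{\times}$, and this is where PACC enters. PACC guarantees $|\operatorname*{Per}(X)| \geq \sqrt{n!}/\operatorname{poly}(n,1/\delta)$ with probability $1-\delta$, so by driving $\varepsilon$ down to $1/\operatorname{poly}(n,1/\delta)$ the additive $\pm\varepsilon\cdot n!$ estimate of $|\operatorname*{Per}(X)|^{2}$ coming from $\left\vert\text{\textsc{GPE}}\right\vert_{\pm}^{2}$ becomes a $(1\pm 1/\operatorname{poly})$-multiplicative estimate of $|\operatorname*{Per}(X)|$ itself. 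All that remains is to recover $\arg\operatorname*{Per}(X)$.

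For the phase, I would use a polynomial-interpolation self-reduction engineered so that every matrix submitted to the oracle remains an i.i.d.\ complex Gaussian. Sample an independent $Y \sim \mathcal{N}(0,1)_{\mathbb{C}}^{n\times n}$ and, for each real $t$, set $Z(t) := (X + tY)/\sqrt{1+t^{2}}$. By rotational invariance of the complex Gaussian measure, each $Z(t)$ is marginally i.i.d.\ Gaussian, so the oracle correctly approximates $|\operatorname*{Per}(Z(t))|^{2} = |p(t)|^{2}/(1+t^{2})^{n}$, where $p(t) := \operatorname*{Per}(X + tY)$ is a complex polynomial of degree $n$ in $t$. Evaluating at $2n+1$ real nodes and Lagrange-interpolating yields the real polynomial $|p(t)|^{2} = \sum_{m=0}^{2n} c_{m} t^{m}$ up to inverse-polynomial error, and the $c_{m}$'s determine the Hermitian products $\sum_{j+k=m} a_{j}\overline{a_{k}}$ of the unknown coefficients $a_{0} = \operatorname*{Per}(X),\ldots,a_{n} = \operatorname*{Per}(Y)$. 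This alone leaves a conjugate-flip ambiguity and a global-phase ambiguity, so I would iterate the construction along a \emph{second} Gaussian-preserving perturbation (or along one-entry-at-a-time Gaussian shifts of $X$) to gather enough independent equations to pin down the individual $a_{k}$'s up to one overall phase, and finally anchor that phase inductively (for $n=1$ the permanent is just a single Gaussian entry whose value is known outright).

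The main obstacle is precisely this phase-recovery step. Because the $\left\vert\text{\textsc{GPE}}\right\vert_{\pm}^{2}$ oracle comes only with an average-case guarantee over the Gaussian ensemble, the reduction forbids the most natural tools---evaluating $p(t)$ at complex $t$, or perturbing $X$ along deterministic directions---and must instead stay inside a one-parameter family of honest Gaussian matrices, which is exactly what makes $|p|^{2}$ recoverable but not $p$ itself. The Lagrange interpolation also amplifies the oracle's additive error, which must be controlled by choosing well-spaced nodes and by reinvoking PACC to lower-bound the magnitudes of the intermediate polynomial values; ensuring that the final phase is recovered to multiplicative accuracy $\varepsilon$ forces calling the oracle at precision $1/\operatorname{poly}(n,1/\varepsilon,1/\delta)$. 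Everything else---the forward direction, the magnitude recovery via PACC, and the interpolation bookkeeping---is routine once this Gaussian-preserving phase-extraction scheme is in place.
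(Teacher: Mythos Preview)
Your easy direction and the conversion from additive to multiplicative \emph{magnitude} via PACC are correct and match the paper (this is Lemma~\ref{implications} plus one application of PACC). The gap is in the phase-recovery step.

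The core problem is that knowing $|p(t)|^{2}$ for real $t$ determines $p$ only up to a $2^{n}$-fold ambiguity, not merely ``a conjugate-flip and a global phase.'' Writing $p(t)=c\prod_{j=1}^{n}(t-r_{j})$, one has $|t-r_{j}|=|t-\overline{r_{j}}|$ for every real $t$, so each complex root can be conjugated independently without changing $|p|^{2}$ on $\mathbb{R}$. The resulting $2^{n}$ candidate polynomials have the same $|p(0)|$ but in general $2^{n}$ different values of $\arg p(0)$, so the interpolated coefficients $c_{m}$ simply do not determine $\arg\operatorname*{Per}(X)$. Your proposed fix of adding a second independent perturbation $Y'$ faces the same root-by-root ambiguity for the new polynomial, with no mechanism offered for matching up the choices across the two families; the parenthetical alternative ``one-entry-at-a-time Gaussian shifts'' is in the right direction but, as written, is not a scheme. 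A separate concern is that recovering the $c_{m}$'s from $2n+1$ noisy samples via Lagrange interpolation can amplify additive error by a factor exponential in $n$, and you give no argument that this can be controlled.

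The paper handles the phase by a different mechanism. It inducts on the bottom-right $k\times k$ submatrices $X_{k}$ of $X$ (each of which is itself Gaussian), recovering $\theta_{k}-\theta_{k-1}$ at each step by \emph{triangulation}: one queries the oracle on $X_{k}$, on $X_{k}^{[\lambda]}$, and on $X_{k}^{[i\lambda]}$, where the latter two are obtained by shifting the top-left entry of $X_{k}$ by a small real $\lambda$ and by $i\lambda$. Since $\operatorname*{Per}\bigl(X_{k}^{[w]}\bigr)=\operatorname*{Per}(X_{k})-w\operatorname*{Per}(X_{k-1})$, the three squared moduli
\[
|\operatorname*{Per}(X_{k})|^{2},\qquad |\operatorname*{Per}(X_{k})-\lambda\operatorname*{Per}(X_{k-1})|^{2},\qquad |\operatorname*{Per}(X_{k})-i\lambda\operatorname*{Per}(X_{k-1})|^{2}
\]
pin down the relative phase; the $i\lambda$ shift is exactly what supplies the imaginary-part information that your real-$t$ construction cannot access. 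Choosing $\lambda$ of order $\delta/N$ keeps the shifted matrices within variation distance $O(\lambda)$ of $\mathcal{G}^{k\times k}$ (Lemma~\ref{autocor}), so the oracle is still valid on them, and PACC lower-bounds each $|\operatorname*{Per}(X_{k})|$ so that the oracle's additive error becomes a controlled angular error. The numerical stability of each triangulation step is the content of Lemma~\ref{triang}, which plays the role your unanalyzed interpolation would have had to play.
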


Figure \ref{orgfig} summarizes the overall structure of our hardness argument
for approximate \textsc{BosonSampling}.%
\begin{figure}[ptb]%
\centering
\includegraphics[
height=2.2788in,
width=1.5636in
]%
{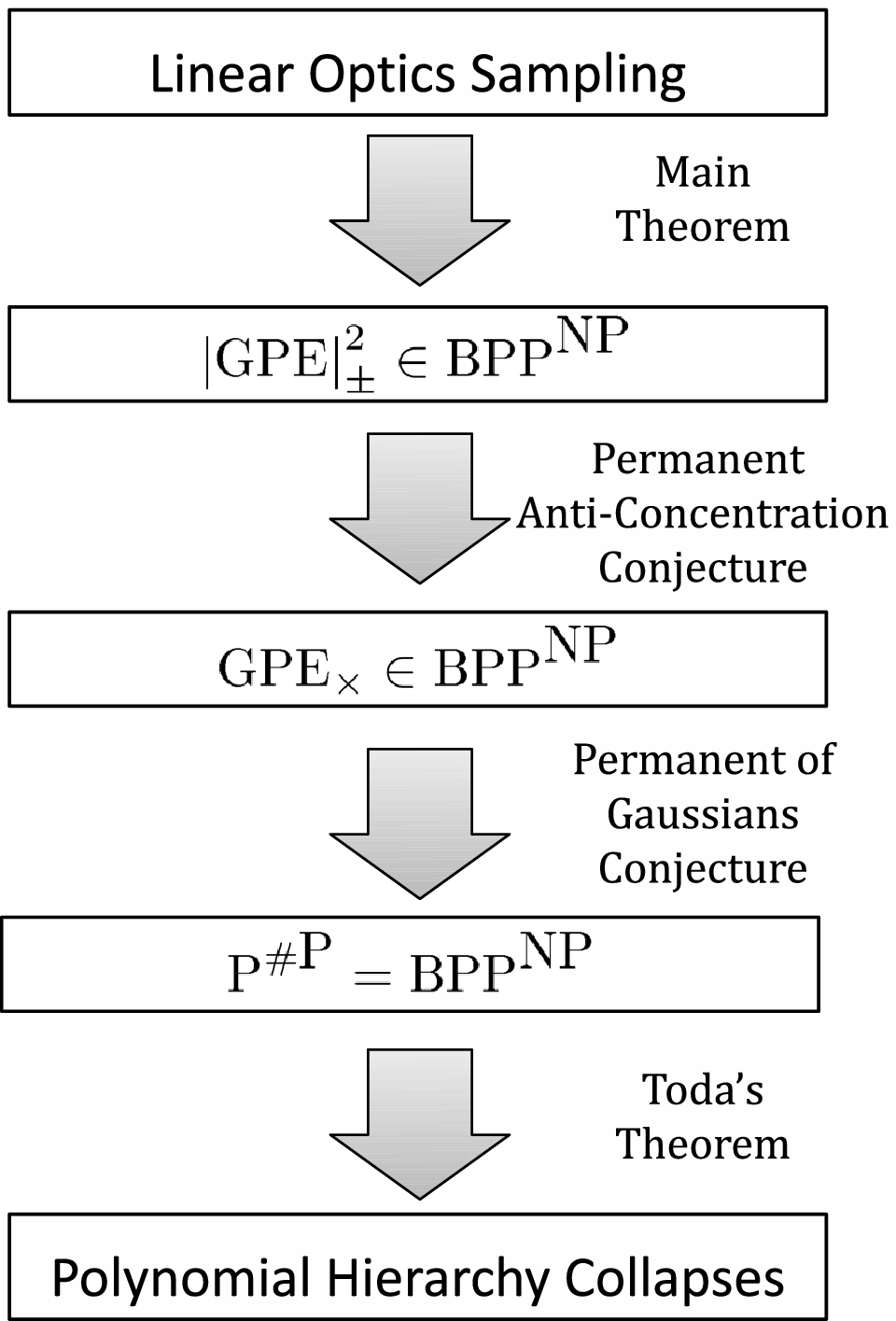}%
\caption{Summary of our hardness argument (modulo conjectures). \ If there
exists a polynomial-time classical algorithm for approximate
\textsc{BosonSampling}, then Theorem \ref{mainresult} says that $\left\vert
\text{\textsc{GPE}}\right\vert _{\pm}^{2}\in\mathsf{BPP}^{\mathsf{NP}}$.
\ Assuming Conjecture \ref{pacc} (the PACC), Theorem \ref{decompthm} says that
this is equivalent to \textsc{GPE}$_{\times}\in\mathsf{BPP}^{\mathsf{NP}}$.
\ Assuming Conjecture \ref{pgc}\ (the PGC), this is in turn equivalent to
$\mathsf{P}^{\mathsf{\#P}}=\mathsf{BPP}^{\mathsf{NP}}$, which collapses the
polynomial hierarchy by Toda's Theorem \cite{toda}.}%
\label{orgfig}%
\end{figure}

The rest of the body of the paper aims at a better understanding of
Conjectures \ref{pgc}\ and \ref{pacc}.

First, in Section \ref{DISTPER}, we summarize the considerable evidence for
the Permanent Anti-Concentration Conjecture. \ This includes numerical
results; a weaker anti-concentration bound for the permanent recently proved
by Tao and Vu \cite{taovu:perm}; another weaker bound that we prove; and the
analogue of Conjecture \ref{pacc}\ for the determinant.

Next, in Section \ref{HARDPER}, we discuss the less certain state of affairs
regarding the Permanent-of-Gaussians Conjecture. \ On the one hand, we extend
the random self-reducibility of permanents over finite fields proved by Lipton
\cite{lipton}, to show that \textit{exactly} computing the permanent of
\textit{most} Gaussian matrices $X\sim\mathcal{N}\left(  0,1\right)
_{\mathbb{C}}^{n\times n}$\ is $\mathsf{\#P}$-hard. \ On the other hand, we
also show that extending this result further, to show that
\textit{approximating} $\operatorname*{Per}\left(  X\right)  $ for Gaussian
$X$ is $\mathsf{\#P}$-hard, will require going beyond Lipton's polynomial
interpolation technique in a fundamental way.

Two appendices give some additional results. \ First, in Appendix \ref{ALGS},
we present two remarkable algorithms due to Gurvits \cite{gurvits:alg} (with
Gurvits's kind permission) for solving certain problems related to
linear-optical networks in classical polynomial time. \ We also explain why
these algorithms do not conflict with our hardness conjecture. \ Second, in
Appendix \ref{BIRTHDAY}, we bring out a useful fact that was implicit in our
proof of Theorem \ref{mainresult}, but seems to deserve its own treatment.
\ This is that, if we have $n$ identical bosons scattered among $m\gg n^{2}$
locations, with no two bosons in the same location, and if we apply a
Haar-random $m\times m$\ unitary transformation $U$ and then measure the
number of bosons in each location, with high probability we will
\textit{still} not find two bosons in the same location. \ In other words, at
least asymptotically, the birthday paradox works the same way for identical
bosons as for classical particles, in spite of bosons' well-known tendency to
cluster in the same state.

\subsection{Experimental Implications\label{EXPERIMP}}

An important motivation for our results is that they immediately suggest a
linear-optics experiment, which would use simple optical elements
(beamsplitters and phaseshifters) to induce a Haar-random $m\times m$\ unitary
transformation $U$ on an input state of $n$ photons, and would then check that
the probabilities of various final states of the photons correspond to the
permanents of $n\times n$\ submatrices of $U$, as predicted by quantum
mechanics. \ Were such an experiment successfully scaled to large values of
$n$, Theorem \ref{mainresult}\ asserts that no polynomial-time classical
algorithm could simulate the experiment even approximately, unless $\left\vert
\text{\textsc{GPE}}\right\vert _{\pm}^{2}\in\mathsf{BPP}^{\mathsf{NP}}$.

Of course, the question arises of how large $n$ has to be before one can draw
interesting conclusions. \ An obvious difficulty is that \textit{no} finite
experiment can hope to render a decisive verdict on the Extended Church-Turing
Thesis, since the ECT is a statement about the asymptotic limit as
$n\rightarrow\infty$. \ Indeed, this problem is actually \textit{worse} for us
than for (say) Shor's algorithm, since unlike with \textsc{Factoring}, we do
not believe there is any $\mathsf{NP}$\ witness for \textsc{BosonSampling}.
\ In other words, if $n$ is large enough that a classical computer cannot
solve \textsc{BosonSampling}, then $n$ is probably \textit{also} large enough
that a classical computer cannot even verify that a quantum computer is
solving \textsc{BosonSampling} correctly.

Yet while this sounds discouraging, it is not really an issue from the
perspective of near-term experiments. \ For the foreseeable future, $n$ being
\textit{too large} is likely to be the least of one's problems! \ If one could
implement our experiment with (say)\ $20\leq n\leq30$, then certainly a
classical computer could verify the answers---but at the same time, one would
be getting direct evidence that a quantum computer could efficiently solve an
\textquotedblleft interestingly difficult\textquotedblright\ problem, one for
which the best-known classical algorithms require many millions of operations.
\ While \textit{disproving} the Extended Church-Turing Thesis is formally
impossible, such an experiment would arguably constitute the strongest
evidence against the ECT to date.

Section \ref{EXPER} goes into more detail about the physical resource
requirements for our proposed experiment, as well as how one would interpret
the results. \ In Section \ref{EXPER}, we also show that the size and depth of
the linear-optical network needed for our experiment can both be improved by
polynomial factors over the na\"{\i}ve bounds. \ Complexity theorists who are
not interested in the \textquotedblleft practical side\textquotedblright\ of
boson computation can safely skip Section \ref{EXPER}, while experimentalists
who are \textit{only} interested the practical side can skip everything else.

While most further discussion of experimental issues is deferred to Section
\ref{EXPER}, there is one question we need to address now. \ Namely:
\textit{what, if any, are the advantages of doing our experiment, as opposed
simply to building a somewhat larger \textquotedblleft
conventional\textquotedblright\ quantum computer, able (for example) to factor
}$10$\textit{-digit numbers using Shor's algorithm?} \ While a full answer to
this question will need to await detailed analysis by experimentalists,\ let
us mention four aspects of \textsc{BosonSampling} that might make it
attractive for quantum computing experiments.

\begin{enumerate}
\item[(1)] Our proposal does not require any explicit \textit{coupling}
between pairs of photons. \ It therefore bypasses what has long been seen as
one of the central technological obstacles to building a scalable quantum
computer: namely, how to make arbitrary pairs of particles \textquotedblleft
talk to each other\textquotedblright\ (e.g., via two-qubit gates), in a manner
that still preserves the particles' coherence. \ One might ask how there is
any possibility of a quantum speedup, if the particles are never entangled.
\ The answer is that, because of the way boson statistics work, every two
identical photons are \textit{somewhat} entangled \textquotedblleft for
free,\textquotedblright\ in the sense that the amplitude for any process
involving both photons includes contributions in which the photons swap their
states. \ This \textquotedblleft free\textquotedblright\ entanglement is the
only kind that our model ever uses.

\item[(2)] Photons traveling through linear-optical networks are known to have
some of the best coherence properties of any quantum system accessible to
current experiments. \ From a \textquotedblleft traditional\textquotedblright%
\ quantum computing standpoint, the disadvantages of photons are that they
have no direct coupling to one another, and also that they are extremely
difficult to store (they are, after all, traveling at the speed of light).
\ There have been ingenious proposals for working around these problems, most
famously the adaptive scheme of Knill, Laflamme, and Milburn \cite{klm}. \ By
contrast, rather than trying to remedy photons' disadvantages as qubits, our
proposal simply never uses photons as qubits at all, and thereby gets the
coherence advantages of linear optics without having to address the disadvantages.

\item[(3)] To implement Shor's algorithm, one needs to perform modular
arithmetic on a coherent superposition of integers encoded in binary.
\ Unfortunately, this requirement causes significant constant blowups, and
helps to explain why the \textquotedblleft world record\textquotedblright\ for
implementations of Shor's algorithm is still the factoring of $15$ into
$3\times5$, first demonstrated in 2001 \cite{vsbysc}. \ By contrast, because
the \textsc{BosonSampling}\ problem is so close to the \textquotedblleft
native physics\textquotedblright\ of linear-optical networks, an $n$-photon
experiment corresponds directly to a problem instance of size $n$, which
involves the permanents of $n\times n$ matrices. \ This raises the hope that,
using current technology, one could sample quantum-mechanically from a
distribution in which the probabilities depended (for example) on the
permanents of $10\times10$ matrices of complex numbers.

\item[(4)] The resources that our experiment \textit{does} demand---including
reliable single-photon sources and photodetector arrays---are ones that
experimentalists, for their own reasons, have devoted large and successful
efforts to improving within the past decade. \ We see every reason to expect
further improvements.
\end{enumerate}

In implementing our experiment, the central difficulty is likely to be getting
a reasonably-large probability of an $n$\textit{-photon coincidence}: that is,
of all $n$ photons arriving at the photodetectors at the same time (or rather,
within a short enough time interval that interference is seen). \ If the
photons arrive at different times, then they effectively become
\textit{distinguishable} particles, and the experiment no longer solves the
\textsc{BosonSampling}\ problem. \ Of course, one solution is simply to repeat
the experiment many times, then \textit{postselect} on the $n$-photon
coincidences. \ However, if the probability of an $n$-photon coincidence
decreases exponentially with $n$, then this \textquotedblleft
solution\textquotedblright\ has obvious scalability problems.

\textit{If} one could scale our experiment to moderately large values of $n$
(say, $10$ or $20$),\ without the probability of an $n$-photon
coincidence\ falling off dramatically, then our experiment would raise the
exciting possibility of doing an interestingly-large quantum computation
without any need for explicit quantum error-correction. \ Whether or not this
is feasible is the main open problem we leave for experimentalists.

\subsection{Related Work\label{RELATED}}

By necessity, this paper brings together many ideas from quantum computing,
optical physics, and computational complexity. \ In this section, we try to
survey the large relevant literature, organizing it into eight
categories.\bigskip

\textbf{Quantum computing with linear optics.} \ There is a huge body of work,
both experimental and theoretical, on quantum computing with linear optics.
\ Much of that work builds on a seminal 2001 result of Knill, Laflamme, and
Milburn \cite{klm}, showing that linear optics combined with\textit{ adaptive
measurements} is universal for quantum computation. \ It is largely because of
this result that linear optics is considered a viable proposal for building a
universal quantum computer.\footnote{An earlier proposal for building a
universal optical quantum computer was to use \textit{nonlinear optics}: in
other words, explicit entangling interactions between pairs of photons. \ (See
Nielsen and Chuang \cite{nc}\ for discussion.) \ The problem is that, at least
at low energies, photons \textit{have} no direct coupling to one another. \ It
is therefore necessary to use other particles as intermediaries, which greatly
increases decoherence, and negates many of the advantages of using photons in
the first place.}

In the opposite direction, several interesting classes of linear-optics
experiments have been proved to be efficiently simulable on a classical
computer. \ For example, Bartlett and Sanders \cite{bartlettsanders} showed
that a linear-optics network with \textit{coherent-state inputs} and
possibly-adaptive \textit{Gaussian measurements} can be simulated in classical
polynomial time. \ (Intuitively, a coherent state---the output of a standard
laser---is a superposition over different numbers of photons that behaves
essentially like a classical wave, while a Gaussian measurement is a
measurement that preserves this classical wave behavior.) \ Also, Gurvits
\cite{gurvits:alg} showed that, in any $n$-photon linear-optics experiment,
the probability of measuring a particular basis state can be estimated to
within $\pm\varepsilon$\ additive error in $\operatorname*{poly}\left(
n,1/\varepsilon\right)  $\ time.\footnote{While beautiful, this result is of
limited use in practice---since in a typical linear-optics experiment, the
probability $p$\ of measuring any \textit{specific} basis state is so small
that $0$\ is a good additive estimate to $p$.} \ He also showed that the
marginal distribution over any $k$ photon modes can be computed
deterministically in $n^{O\left(  k\right)  }$\ time. \ We discuss Gurvits's
results in detail in Appendix \ref{ALGS}.

Our model can be seen as intermediate between the above two extremes: unlike
Knill et al.\ \cite{klm}, we do not allow adaptive measurements, and as a
result, our model is probably not universal for $\mathsf{BQP}$. \ On the other
hand, unlike Bartlett and Sanders, we \textit{do} allow single-photon inputs
and photon-number measurements; and unlike Gurvits \cite{gurvits:alg}, we
consider the complexity of sampling from the joint distribution over all
$\operatorname*{poly}\left(  n\right)  $\ photon modes. \ Our main result
gives strong evidence that the resulting model cannot be simulated in
classical polynomial time. \ On the other hand, it might be significantly
easier to implement than a universal quantum computer.\bigskip

\textbf{Intermediate models of quantum computation.} \ By now, several
interesting models of quantum computation have been proposed that are neither
known to be universal for $\mathsf{BQP}$, nor simulable in classical
polynomial time. \ A few examples, besides the ones mentioned elsewhere in the
paper, are the \textquotedblleft one-clean-qubit\textquotedblright\ model of
Knill and Laflamme \cite{knilaflamme}; the permutational quantum computing
model of Jordan \cite{jordan}; and stabilizer circuits with non-stabilizer
initial states (such as $\cos\frac{\pi}{8}\left\vert 0\right\rangle +\sin
\frac{\pi}{8}\left\vert 0\right\rangle $) and nonadaptive measurements
\cite{ag}. \ The noninteracting-boson model is another addition to this
list.\bigskip

\textbf{The Hong-Ou-Mandel dip.} \ In 1987, Hong, Ou, and Mandel
\cite{hom}\ performed a now-standard experiment that, in essence, directly
confirms that \textit{two}-photon amplitudes correspond to $2\times
2$\ permanents in the way predicted by quantum mechanics. \ From an
experimental perspective, what we are asking for could be seen as a
generalization of the so-called \textquotedblleft Hong-Ou-Mandel
dip\textquotedblright\ to the $n$-photon case, where $n$ is as large as
possible. \ Lim and Beige \cite{limbeige} previously proposed an $n$-photon
generalization of the Hong-Ou-Mandel dip, but without the computational
complexity motivation.\bigskip

\textbf{Bosons and the permanent.} \ \textit{Bosons} are one of the two basic
types of particle in the universe; they include photons and the carriers of
nuclear forces. \ It has been known since work by Caianiello \cite{caianiello}%
\ in 1953\ (if not earlier) that the amplitudes for $n$-boson processes can be
written as the permanents of $n\times n$\ matrices. \ Meanwhile, Valiant
\cite{valiant}\ proved in 1979 that the permanent is $\mathsf{\#P}$-complete.
\ Interestingly, according to Valiant (personal communication), he and others
put these two facts together immediately, and wondered what they might mean
for the computational complexity of simulating bosonic systems. \ To our
knowledge, however, the first authors to discuss this question in print were
Troyansky and Tishby \cite{troyanskytishby} in 1996. \ Given an arbitrary
matrix $A\in\mathbb{C}^{n\times n}$, these authors showed how to construct a
quantum observable with expectation value equal to $\operatorname*{Per}\left(
A\right)  $. \ However, they correctly pointed out that this did not imply a
polynomial-time quantum algorithm to \textit{calculate} $\operatorname*{Per}%
\left(  A\right)  $, since the variance of their observable was large enough
that exponentially many samples would be needed.

Later, Scheel \cite{scheel} explained how permanents arise as amplitudes in
linear-optical networks, and noted that calculations involving linear-optical
networks might be intractable because the permanent is $\mathsf{\#P}%
$-complete.\bigskip

\textbf{Fermions and the determinant.} \ Besides bosons, the other basic
particles in the universe are \textit{fermions}; these include matter
particles such as quarks and electrons. \ Remarkably, the amplitudes for
$n$-fermion processes are given not by permanents but by \textit{determinants}
of $n\times n$\ matrices. \ Despite the similarity of their definitions, it is
well-known that the permanent and determinant differ dramatically in their
computational properties; the former is $\mathsf{\#P}$-complete\ while the
latter is in $\mathsf{P}$. \ In a lecture in 2000, Wigderson called attention
to this striking connection between the boson-fermion dichotomy of physics and
the permanent-determinant dichotomy of computer science. \ He joked that,
between bosons and fermions, \textquotedblleft the bosons got the harder
job.\textquotedblright\ \ One could view this paper as a formalization of
Wigderson's joke.

To be fair, \textit{half} the work of formalizing Wigderson's joke has already
been carried out. \ In 2002, Valiant \cite{valiant:qc} defined a beautiful
subclass of quantum circuits called \textit{matchgate circuits},\ and showed
that these circuits could be efficiently simulated classically, via a
nontrivial algorithm that ultimately relied on computing
determinants.\footnote{Or rather, a closely-related matrix function called the
Pfaffian.} \ Shortly afterward, Terhal and DiVincenzo \cite{td:fermion} (see
also Knill \cite{knill:matchgate}) pointed out that matchgate circuits were
equivalent to systems of noninteracting fermions\footnote{Strictly speaking,
\textit{unitary} matchgate circuits are equivalent to noninteracting fermions
(Valiant also studied matchgates that violated unitarity).}: in that sense,
one could say Valiant had \textquotedblleft rediscovered
fermions\textquotedblright! \ Indeed, Valiant's matchgate model can be seen as
the direct counterpart of the model studied in this paper, but with
noninteracting fermions in place of noninteracting bosons.\footnote{However,
the noninteracting-boson model is somewhat more complicated to define, since
one can have multiple bosons occupying the same state, whereas fermions are
prohibited from this by the Pauli exclusion principle. \ This is why the basis
states in our model are lists of nonnegative integers, whereas the basis
states in Valiant's model are binary strings.} \ At a very high level,
Valiant's model is easy to simulate classically because the determinant is in
$\mathsf{P}$, whereas our model is hard to simulate because the permanent is
$\mathsf{\#P}$-complete.

Ironically, when the \textit{quantum Monte Carlo method} \cite{ceperley}\ is
used to approximate the ground states of many-body systems, the computational
situation regarding bosons and fermions is reversed. \ Bosonic ground states
tend to be \textit{easy} to approximate because one can exploit
non-negativity, while fermionic ground states tend to be \textit{hard} to
approximate because of cancellations between positive and negative terms, what
physicists call \textquotedblleft the sign problem.\textquotedblright\bigskip

\textbf{Quantum computing and }$\mathsf{\#P}$\textbf{-complete problems.
\ }Since amplitudes in quantum mechanics are the sums of exponentially many
complex numbers, it is natural to look for some formal connection between
quantum computing and the class $\mathsf{\#P}$ of counting problems. \ In
1993, Bernstein and Vazirani \cite{bv}\ proved that $\mathsf{BQP}%
\subseteq\mathsf{P}^{\mathsf{\#P}}$.\footnote{See also Rudolph \cite{rudolph}
for a direct encoding of quantum computations by matrix permanents.}
\ However, this result says only that $\mathsf{\#P}$\ is an \textit{upper}
bound on the power of quantum computation,\ so the question arises of whether
solving $\mathsf{\#P}$-complete problems is in any sense\ \textit{necessary}
for simulating quantum mechanics.

To be clear, we do not expect that $\mathsf{BQP}=\mathsf{P}^{\mathsf{\#P}}$;
indeed, it would be a scientific revolution even if $\mathsf{BQP}$\ were found
to contain $\mathsf{NP}$. \ However, already in 1999, Fenner, Green, Homer,
and\ Pruim \cite{fghp}\ noticed that, if we ask more refined questions about a
quantum circuit than

\begin{quotation}
\noindent\textit{\textquotedblleft does this circuit accept with probability
greater than }$1-\varepsilon$\textit{\ or less than }$\varepsilon$\textit{,
promised that one of those is true?,\textquotedblright}
\end{quotation}

\noindent then we can quickly encounter $\mathsf{\#P}$-completeness. \ In
particular, Fenner et al.\ showed that deciding whether a quantum circuit
accepts with \textit{nonzero or zero} probability is complete for the
complexity class $\mathsf{coC}_{\mathsf{=}}\mathsf{P}$. \ Since $\mathsf{P}%
^{\mathsf{\#P}}\subseteq\mathsf{NP}^{\mathsf{coC}_{\mathsf{=}}\mathsf{P}}$,
this means that the problem is $\mathsf{\#P}$-hard under nondeterministic reductions.

Later, Aaronson \cite{aar:pp} defined the class $\mathsf{PostBQP}$, or quantum
polynomial-time\ with \textit{postselection} on possibly
exponentially-unlikely measurement outcomes. \ He showed that
$\mathsf{PostBQP}$ is equal to the classical class $\mathsf{PP}$. \ Since
$\mathsf{P}^{\mathsf{PP}}=\mathsf{P}^{\mathsf{\#P}}$, this says that quantum
computers with postselection can already solve $\mathsf{\#P}$-complete
problems. \ Following \cite{bjs}, in Section \ref{ALTPROOF} we will use the
$\mathsf{PostBQP}=\mathsf{PP}$\ theorem to give an alternative proof of
Theorem \ref{warmup}, which does not require using the $\mathsf{\#P}%
$-completeness of the permanent.\bigskip

\textbf{Quantum speedups for sampling and search problems.} \ Ultimately, we
want a hardness result for simulating \textit{real} quantum experiments,
rather than postselected ones. \ To achieve that, a crucial step in this paper
will be to switch attention from \textit{decision} problems to
\textit{sampling} and \textit{search} problems. \ The value of that step in a
quantum computing context was recognized in several previous works.

In 2008, Shepherd and Bremner \cite{shepherd}\ defined and studied a
fascinating subclass of quantum computations, which they called
\textquotedblleft commuting\textquotedblright\ or \textquotedblleft
temporally-unstructured.\textquotedblright\ \ Their model is probably not
universal for $\mathsf{BQP}$, and there is no known example of a decision
problem solvable by their model that is not also in $\mathsf{BPP}$. \ However,
if we consider \textit{sampling} problems or interactive protocols, then
Shepherd and Bremner plausibly argued (without formal evidence) that their
model might be hard to simulate classically.

Recently, and independently of us, Bremner, Jozsa, and Shepherd \cite{bjs}
showed that commuting quantum computers can sample from probability
distributions that cannot be efficiently sampled classically, unless
$\mathsf{PP}=\mathsf{BPP}_{\mathsf{path}}$ and hence the polynomial hierarchy
collapses to the third level. \ This is analogous to our Theorem \ref{warmup},
except with commuting quantum computations instead of noninteracting-boson ones.

Previously, in 2002, Terhal and DiVincenzo \cite{td} showed that
constant-depth quantum circuits can sample from probability distributions that
cannot be efficiently sampled by a classical computer, unless $\mathsf{BQP}%
\subseteq\mathsf{AM}$. \ By using our arguments and Bremner et
al.'s\ \cite{bjs}, it is not hard to strengthen Terhal and DiVincenzo's
conclusion, to show that exact classical simulation of their model would also
imply $\mathsf{PP}=\mathsf{PostBQP}=\mathsf{BPP}_{\mathsf{path}}$, and hence
that the polynomial hierarchy collapses.

However, all of these results (including our Theorem \ref{warmup}) have the
drawback that they only address sampling from \textit{exactly} the same
distribution $\mathcal{D}$\ as the quantum algorithm---or at least, from some
distribution in which all the probabilities are multiplicatively close to the
ideal ones. \ Indeed, in these results, everything hinges on the
$\mathsf{\#P}$-completeness of estimating a single, exponentially-small
probability $p$. \ For this reason, such results might be considered
\textquotedblleft cheats\textquotedblright: presumably not even the quantum
device \textit{itself} can sample perfectly from the ideal distribution
$\mathcal{D}$! \ What if we allow \textquotedblleft realistic
noise,\textquotedblright\ so that one only needs to sample from some
probability distribution $\mathcal{D}^{\prime}$\ that is
$1/\operatorname*{poly}\left(  n\right)  $\textit{-close} to $\mathcal{D}$\ in
total variation distance? \ Is that\ \textit{still} a classically-intractable
problem? \ This is the question we took as our starting point.\bigskip

\textbf{Oracle results.} \ We know of one previous work that addressed the
hardness of sampling \textit{approximately} from a quantum computer's output
distribution. \ In 2010, Aaronson \cite{aar:ph}\ showed that, relative to a
random oracle $A$, quantum computers can sample from probability distributions
$\mathcal{D}$\ that are not even \textit{approximately} samplable in
$\mathsf{BPP}^{\mathsf{PH}^{A}}$\ (that is, by classical computers with
oracles for the polynomial hierarchy). \ Relative to a random oracle $A$,
quantum computers can also solve \textit{search} problems not in
$\mathsf{BPP}^{\mathsf{PH}^{A}}$. \ The point of these results was to give the
first formal evidence that quantum computers have \textquotedblleft
capabilities outside $\mathsf{PH}$.\textquotedblright

For us, though, what is more relevant is a striking feature of the
\textit{proofs} of these results. \ Namely, they showed that, if the sampling
and search problems in question were in $\mathsf{BPP}^{\mathsf{PH}^{A}}$, then
(via a nonuniform, nondeterministic reduction) one could extract small
constant-depth circuits for the $2^{n}$-bit \textsc{Majority}\ function,
thereby violating the celebrated circuit lower bounds of H\aa stad
\cite{hastad:book} and others. \ What made this surprising was that the
$2^{n}$-bit \textsc{Majority}\ function is $\mathsf{\#P}$%
-complete.\footnote{Here we are abusing terminology (but only slightly) by
speaking about the\ $\mathsf{\#P}$-completeness of an oracle problem. \ Also,
strictly speaking we mean $\mathsf{PP}$-complete---but since $\mathsf{P}%
^{\mathsf{PP}}=\mathsf{P}^{\mathsf{\#P}}$, the distinction is unimportant
here.} \ In other words, even though there is no evidence that quantum
computers can solve $\mathsf{\#P}$-complete problems, somehow we managed
to\textit{ prove the hardness of simulating a }$\mathsf{BQP}$\textit{\ machine
by using the hardness of }$\mathsf{\#P}$\textit{.}

Of course, a drawback of Aaronson's results \cite{aar:ph} is that they were
relative to an oracle. \ However, just like Simon's oracle algorithm
\cite{simon} led shortly afterward to Shor's algorithm \cite{shor}, so too in
this case one could hope to \textquotedblleft reify the
oracle\textquotedblright: that is, find a real, unrelativized problem with the
same behavior that the oracle problem illustrated more abstractly. \ That is
what we do here.

\section{Preliminaries\label{PRELIM}}

Throughout this paper, we use $\mathcal{G}$\ to denote $\mathcal{N}\left(
0,1\right)  _{\mathbb{C}}$, the complex Gaussian distribution with mean $0$
and variance $\operatorname*{E}_{z\sim\mathcal{G}}\left[  \left\vert
z\right\vert ^{2}\right]  =1$. \ (We often use the word \textquotedblleft
distribution\textquotedblright\ for continuous probability measures, as well
as for discrete distributions.) \ We will be especially interested in
$\mathcal{G}^{n\times n}$, the distribution over $n\times n$\ matrices with
i.i.d.\ Gaussian entries.

For $m\geq n$, we use $\mathcal{U}_{m,n}$\ to denote the set of matrices
$A\in\mathbb{C}^{m\times n}$ whose columns are orthonormal vectors,\ and
$\mathcal{H}_{m,n}$\ to denote the Haar measure over $\mathcal{U}_{m,n}$. \ So
in particular, $\mathcal{H}_{m,m}$\ is the Haar measure over the set
$\mathcal{U}_{m,m}$\ of $m\times m$\ unitary matrices.

We use $\overline{\alpha}$\ to denote the complex conjugate of $\alpha$. \ We
denote the set $\left\{  1,\ldots,n\right\}  $\ by\ $\left[  n\right]  $.
\ Let $v\in\mathbb{C}^{n}$ and $A\in\mathbb{C}^{n\times n}$. \ Then
$\left\Vert v\right\Vert :=\sqrt{\left\vert v_{1}\right\vert ^{2}%
+\cdots+\left\vert v_{n}\right\vert ^{2}}$, and $\left\Vert A\right\Vert
:=\max_{\left\Vert v\right\Vert =1}\left\Vert Av\right\Vert $. \ Equivalently,
$\left\Vert A\right\Vert =\sigma_{\max}\left(  A\right)  $\ is the largest
singular value of $A$.

We generally omit floor and ceiling signs, when it is clear that the relevant
quantities can be rounded to integers without changing the asymptotic
complexity. \ Likewise, we will talk about a polynomial-time algorithm
receiving as input a matrix $A\in\mathbb{C}^{n\times n}$, often drawn from the
Gaussian distribution $\mathcal{G}^{n\times n}$. \ Here it is understood that
the entries of $A$\ are rounded to $p\left(  n\right)  $\ bits of precision,
for some polynomial $p$. \ In all such cases, it will be straightforward to
verify that there exists a fixed polynomial $p$, such that none of the
relevant calculations are affected by precision issues.

We assume familiarity with standard computational complexity classes such as
$\mathsf{BQP}$ (Bounded-Error Quantum Polynomial-Time) and $\mathsf{PH}$ (the
Polynomial Hierarchy).\footnote{See the Complexity Zoo, www.complexityzoo.com,
for definitions of these and other classes.} \ We now define some other
complexity classes that will be important in this work. \ 

\begin{definition}
[$\mathsf{PostBPP}$ and $\mathsf{PostBQP}$]Say the algorithm $\mathcal{A}$
\textquotedblleft succeeds\textquotedblright\ if its first output bit is
measured to be $1$ and \textquotedblleft fails\textquotedblright\ otherwise;
conditioned on succeeding, say $\mathcal{A}$ \textquotedblleft
accepts\textquotedblright\ if its second output bit is measured to be $1$\ and
\textquotedblleft rejects\textquotedblright\ otherwise. \ Then
$\mathsf{PostBPP}$ is the class of languages $L\subseteq\left\{  0,1\right\}
^{\ast}$\ for which there exists a probabilistic polynomial-time algorithm
$\mathcal{A}$ such that, for all inputs $x$:

\begin{enumerate}
\item[(i)] $\Pr\left[  \mathcal{A}\left(  x\right)  \text{ succeeds}\right]
>0$.

\item[(ii)] If $x\in L$ then $\Pr\left[  \mathcal{A}\left(  x\right)  \text{
accepts
$\vert$
}\mathcal{A}\left(  x\right)  \text{ succeeds}\right]  \geq\frac{2}{3}$.

\item[(iii)] If $x\notin L$ then $\Pr\left[  \mathcal{A}\left(  x\right)
\text{ accepts
$\vert$
}\mathcal{A}\left(  x\right)  \text{ succeeds}\right]  \leq\frac{1}{3}$.
\end{enumerate}

$\mathsf{PostBQP}$\ is defined the same way, except that $\mathcal{A}$ is a
quantum algorithm rather than a classical one.
\end{definition}

$\mathsf{PostBPP}$ is easily seen to equal the complexity class $\mathsf{BPP}%
_{\mathsf{path}}$, which was defined by Han, Hemaspaandra, and Thierauf
\cite{hht}. \ In particular, it follows from Han et al.'s results that
$\mathsf{MA}\subseteq\mathsf{PostBPP}$\ and that $\mathsf{P}_{||}%
^{\mathsf{NP}}\subseteq\mathsf{PostBPP}\subseteq\mathsf{BPP}_{||}%
^{\mathsf{NP}}$, where $\mathsf{P}_{||}^{\mathsf{NP}}$\ and $\mathsf{BPP}%
_{||}^{\mathsf{NP}}$\ denote $\mathsf{P}$\ and $\mathsf{BPP}$\ respectively
with \textit{nonadaptive} queries to an $\mathsf{NP}$\ oracle. \ As for
$\mathsf{PostBQP}$, we have the following result of Aaronson \cite{aar:pp},
which characterizes $\mathsf{PostBQP}$\ in terms of the classical complexity
class $\mathsf{PP}$\ (Probabilistic Polynomial-Time).

\begin{theorem}
[Aaronson \cite{aar:pp}]\label{postbqpthm}$\mathsf{PostBQP}=\mathsf{PP}$.
\end{theorem}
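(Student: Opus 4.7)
The theorem asserts both $\mathsf{PostBQP} \subseteq \mathsf{PP}$ and $\mathsf{PP} \subseteq \mathsf{PostBQP}$; I would prove them separately.

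For the direction $\mathsf{PostBQP} \subseteq \mathsf{PP}$, I would use the standard Feynman path-integral representation of quantum amplitudes already used by Bernstein--Vazirani to show $\mathsf{BQP} \subseteq \mathsf{P}^{\mathsf{\#P}}$. Concretely, for a given poly-size quantum circuit, let $P_s$ denote the probability that the ``success'' qubit is measured to be $1$ and $P_a$ the probability that both the success and accept qubits are measured to be $1$. Expanding $P_s$ and $P_a$ as sums of squared amplitudes, and each amplitude as a sum of products of gate entries along a computational path, both $P_s$ and $P_a$ equal $\mathsf{GapP}$ functions divided by a fixed power of two. The postselected acceptance criterion $P_a \geq (2/3) P_s$ (respectively, $\leq (1/3) P_s$) is then equivalent to the sign of the $\mathsf{GapP}$ function $3 P_a - 2 P_s$ (respectively, $3 P_a - P_s$) being non-negative, and deciding the sign of a $\mathsf{GapP}$ function is precisely $\mathsf{PP}$.

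For the direction $\mathsf{PP} \subseteq \mathsf{PostBQP}$, I would reduce to the canonical $\mathsf{PP}$-complete problem: given a poly-time Boolean function $f \colon \{0,1\}^n \to \{0,1\}$, decide the sign of the gap $s := \sum_x (-1)^{f(x)}$. The plan is to build a postselected quantum algorithm whose output encodes $s$ to precision sufficient to recover its sign. Start with a control qubit in the state $\alpha|0\rangle + \beta|1\rangle$ and an $n$-qubit ancilla in $|0^n\rangle$. Controlled on the first qubit being $1$, apply $H^{\otimes n}$, the phase oracle $|x\rangle \mapsto (-1)^{f(x)}|x\rangle$, and $H^{\otimes n}$ again; this places amplitude $s/2^n$ on $|0^n\rangle$ in the ``$1$'' branch of the ancilla. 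Then postselect on the ancilla being $|0^n\rangle$: the surviving qubit is proportional to $\alpha|0\rangle + (\beta s/2^n)|1\rangle$. By choosing $\beta/\alpha = 2^n/r$ for a binary-searched threshold $r$, the qubit becomes $\propto |0\rangle + (s/r)|1\rangle$, and a standard-basis measurement decides whether $|s| > r$ or $|s| < r$. An analogous variant with a Hadamard before the final measurement extracts the sign of $s$. Running $O(n)$ such postselected experiments with geometrically-spaced thresholds $r \in \{1, 2, 4, \ldots, 2^n\}$ recovers the sign of $s$ as required.

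The main obstacle is the second direction, and specifically the ``amplitude amplification via postselection'' step. A naive quantum algorithm yields $|s|/2^n$ as a measurement probability, which is exponentially small and thus statistically indistinguishable from $0$. The key idea is that postselection converts an exponentially unlikely event into a certainty, effectively rescaling the amplitudes in the surviving branch. The subtlety is choosing the control-qubit weights $(\alpha,\beta)$ so that the postselection event is not overwhelmingly dominated by either the control-$0$ or the control-$1$ branch; when tuned correctly (i.e., with $\beta/\alpha$ proportional to $2^n/r$), the resulting single-qubit state has $s$ and $r$ appearing as its two suitably normalized amplitudes. This converts the exponentially small signal into a macroscopic ratio that a single measurement can detect. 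Everything else---the binary search, the treatment of the sign, and the correctness of the Feynman-path reduction---is essentially routine once this trick is in place.
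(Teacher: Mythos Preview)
The paper does not contain a proof of this theorem; it is stated with a citation to \cite{aar:pp} and used as a black box. Your outline matches the argument in that reference: the $\mathsf{PostBQP}\subseteq\mathsf{PP}$ direction via the path-sum/$\mathsf{GapP}$ representation of amplitudes, and the $\mathsf{PP}\subseteq\mathsf{PostBQP}$ direction via postselecting the ancilla to $|0^n\rangle$ so as to produce a single-qubit state proportional to $\alpha|0\rangle + \beta(s/2^n)|1\rangle$, then sweeping $\beta/\alpha$ over $O(n)$ geometrically-spaced values and measuring in both the standard and Hadamard bases.

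One point worth tightening: your description separates ``determine whether $|s|>r$'' from ``extract the sign of $s$,'' but in the actual argument these are intertwined. The reason you need the sweep over ratios is not primarily to binary-search $|s|$, but because the sign test---apply a Hadamard to $|0\rangle + (s/r)|1\rangle$ and check whether the result is close to $|+\rangle$ or $|-\rangle$---only has constant bias when $|s|/r$ is $\Theta(1)$. Since you don't know $|s|$ in advance, you try all $r=2^i$ and accept if \emph{any} of them yields a state close to $|+\rangle$ (which happens for some $i$ iff $s>0$). As written, your ``standard-basis measurement decides whether $|s|>r$ or $|s|<r$'' step is fine when $|s|$ and $r$ differ by a constant factor but degrades when they are close, so the correctness really rests on the Hadamard-basis step and the fact that one of the $O(n)$ ratios must land in the good range. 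With that clarification, your sketch is correct and is the same proof as in the cited reference.
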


It is well-known that $\mathsf{P}^{\mathsf{PP}}=\mathsf{P}^{\mathsf{\#P}}%
$---and thus, Theorem \ref{postbqpthm}\ has the surprising implication that
$\mathsf{BQP}$\ with postselection is as powerful as an oracle for
\textit{counting} problems.

Aaronson \cite{aar:pp}\ also observed that, just as intermediate measurements
do not affect the power of $\mathsf{BQP}$, so intermediate postselected
measurements do not affect the power of $\mathsf{PostBQP}$.

\subsection{Sampling and Search Problems\label{SAMPNREL}}

In this work, a central role is played not only by decision problems, but also
by \textit{sampling} and \textit{search} problems. \ By a \textit{sampling
problem} $S$, we mean a collection of probability distributions $\left(
\mathcal{D}_{x}\right)  _{x\in\left\{  0,1\right\}  ^{\ast}}$, one for each
input string $x\in\left\{  0,1\right\}  ^{n}$. \ Here $\mathcal{D}_{x}$\ is a
distribution over $\left\{  0,1\right\}  ^{p\left(  n\right)  }$, for some
fixed polynomial $p$. \ To \textquotedblleft solve\textquotedblright\ $S$
means to sample from $\mathcal{D}_{x}$, given $x$ as input, while to
solve\textit{ }$S$\textit{ }approximately means (informally) to sample from
some distribution that is $1/\operatorname*{poly}\left(  n\right)  $-close to
$\mathcal{D}_{x}$\ in variation distance. \ In this paper, we will be
interested in both notions, but especially approximate sampling.

We now define the classes $\mathsf{SampP}$\ and $\mathsf{SampBQP}$, consisting
of those sampling problems that are approximately solvable by polynomial-time
classical and quantum algorithms respectively.

\begin{definition}
[$\mathsf{SampP}$ and $\mathsf{SampBQP}$]$\mathsf{SampP}$ is the class of
sampling problems $S=\left(  \mathcal{D}_{x}\right)  _{x\in\left\{
0,1\right\}  ^{\ast}}$ for which there exists a probabilistic polynomial-time
algorithm $A$\ that, given $\left\langle x,0^{1/\varepsilon}\right\rangle $ as
input,\footnote{Giving $\left\langle x,0^{1/\varepsilon}\right\rangle $\ as
input (where $0^{1/\varepsilon}$\ represents $1/\varepsilon$\ encoded in
unary) is a standard trick for forcing an algorithm's running time to be
polynomial in $n$\ as well as $1/\varepsilon$.} samples from a probability
distribution $\mathcal{D}_{x}^{\prime}$\ such that $\left\Vert \mathcal{D}%
_{x}^{\prime}-\mathcal{D}_{x}\right\Vert \leq\varepsilon$. \ $\mathsf{SampBQP}%
$\ is defined the same way, except that $A$\ is a quantum algorithm rather
than a classical one.
\end{definition}

Another class of problems that will interest us are \textit{search problems}
(also confusingly called \textquotedblleft relation problems\textquotedblright%
\ or \textquotedblleft function problems\textquotedblright). \ In a search
problem, there is always at least one valid solution, and the problem is to
\textit{find} a solution: a famous example is finding a Nash equilibrium of a
game, the problem shown to be $\mathsf{PPAD}$-complete by Daskalakis et
al.\ \cite{dgp}. \ More formally, a search problem $R$ is a collection of
nonempty sets $\left(  B_{x}\right)  _{x\in\left\{  0,1\right\}  ^{\ast}}$,
one for each input $x\in\left\{  0,1\right\}  ^{n}$. \ Here $B_{x}%
\subseteq\left\{  0,1\right\}  ^{p\left(  n\right)  }$\ for some fixed
polynomial $p$. \ To solve $R$ means to output an element of $B_{x}$, given
$x$\ as input.

We now define the complexity classes $\mathsf{FBPP}$\ and $\mathsf{FBQP}$,
consisting of those search problems that are solvable by $\mathsf{BPP}$\ and
$\mathsf{BQP}$\ machines respectively.

\begin{definition}
[$\mathsf{FBPP}$ and $\mathsf{FBQP}$]$\mathsf{FBPP}$ is the class of search
problems $R=\left(  B_{x}\right)  _{x\in\left\{  0,1\right\}  ^{\ast}}$\ for
which there exists a probabilistic polynomial-time algorithm $A$ that, given
$\left\langle x,0^{1/\varepsilon}\right\rangle $ as input, produces an output
$y$ such that $\Pr\left[  y\in B_{x}\right]  \geq1-\varepsilon,$\ where the
probability is over $A$'s internal randomness. \ $\mathsf{FBQP}$\ is defined
the same way, except that $A$ is a quantum algorithm rather than a classical one.
\end{definition}

Recently, and directly motivated by the present work, Aaronson \cite{aar:samp}%
\ proved a general connection between sampling problems and search problems.

\begin{theorem}
[Sampling/Searching Equivalence Theorem \cite{aar:samp}]\label{samprel}Let
$S=\left(  \mathcal{D}_{x}\right)  _{x\in\left\{  0,1\right\}  ^{\ast}}$\ be
any approximate sampling problem. \ Then there exists a search problem
$R_{S}=\left(  B_{x}\right)  _{x\in\left\{  0,1\right\}  ^{\ast}}$\ that is
\textquotedblleft equivalent\textquotedblright\ to $S$ in the following two senses.

\begin{enumerate}
\item[(i)] Let $\mathcal{O}$\ be any oracle that, given $\left\langle
x,0^{1/\varepsilon},r\right\rangle $\ as input, outputs a sample from a
distribution $\mathcal{C}_{x}$\ such that $\left\Vert \mathcal{C}%
_{x}-\mathcal{D}_{x}\right\Vert \leq\varepsilon$, as we vary the random string
$r$. \ Then$\ R_{S}\in\mathsf{FBPP}^{\mathcal{O}}$.

\item[(ii)] Let $M$\ be any probabilistic Turing machine that, given
$\left\langle x,0^{1/\delta}\right\rangle $\ as input, outputs an element
$Y\in B_{x}$\ with probability at least $1-\delta$. \ Then$\ S\in
\mathsf{SampP}^{M}$.
\end{enumerate}
\end{theorem}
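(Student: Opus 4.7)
The plan is to take $R_S$ to be a canonically-defined Kolmogorov-complexity search problem: given input $\langle x, 0^{1/\varepsilon}\rangle$, output a long tuple of strings that is Kolmogorov-random with respect to the product distribution $\mathcal{D}_x^{\otimes m}$. Concretely, fix polynomials $m = m(n, 1/\varepsilon)$ and $t = t(n, 1/\varepsilon)$, and define
\[
B_x^{\varepsilon} := \Bigl\{ \vec{y} = (y_1, \dots, y_m) : K_t(\vec{y} \mid x, 0^{1/\varepsilon}) \ge \log_2 \tfrac{1}{\mathcal{D}_x^{\otimes m}(\vec{y})} - c \Bigr\},
\]
where $K_t$ denotes $t$-time-bounded prefix Kolmogorov complexity and $c = O(\log(mn/\varepsilon))$ will be chosen below. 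Requiring $m \gg 1$ independent samples, rather than a single string, is essential: a single-sample Kolmogorov condition cannot distinguish $\mathcal{D}_x$ from a rescaled version supported on half of its support, since the associated ratio is only a single bit.

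For direction (i), the $\mathsf{FBPP}^{\mathcal{O}}$ algorithm queries $\mathcal{O}$ with $m$ fresh random strings at accuracy $\varepsilon/m$ to obtain $y_1, \dots, y_m$, and outputs the tuple. Correctness follows from a Kraft inequality: the prefix code underlying $K_t$ gives $\sum_{\vec{y}} 2^{-K_t(\vec{y} \mid x, 0^{1/\varepsilon})} \le 1$, so
\[
\Pr_{\vec{y} \sim \mathcal{D}_x^{\otimes m}}\!\bigl[\vec{y} \notin B_x^{\varepsilon}\bigr] \;\le\; \sum_{\vec{y}} 2^{-K_t(\vec{y} \mid \cdots) - c} \;\le\; 2^{-c}.
\]
Since independent product distributions satisfy $\|\mathcal{C}_x^{\otimes m} - \mathcal{D}_x^{\otimes m}\| \le m \cdot (\varepsilon/m) = \varepsilon$, the sampled tuple lies in $B_x^\varepsilon$ with probability at least $1 - \varepsilon - 2^{-c}$.

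For direction (ii), the $\mathsf{SampP}^{M}$ algorithm runs $M$ on $\langle x, 0^{1/\delta}\rangle$ for suitably small $\delta$, obtains a tuple $\vec{y}$, and outputs a uniformly random coordinate $y_i$. To see that the induced marginal $\mathcal{M}_x$ is $\varepsilon$-close to $\mathcal{D}_x$, observe that any tuple $\vec{y}$ with $\mathcal{M}_x^{\otimes}(\vec{y}) \ge 2^c \cdot \mathcal{D}_x^{\otimes m}(\vec{y})$ admits a short description through $M$ itself, namely the rank of $\vec{y}$ in $M$'s decreasing-probability enumeration of outputs, taking $\log(1/\mathcal{M}_x^{\otimes}(\vec{y})) + O(\log n)$ bits; for $c$ above the implicit constant this forces $\vec{y} \notin B_x^\varepsilon$. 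Consequently the $\mathcal{M}_x^{\otimes}$-mass on such ``compressible'' tuples is at most $\delta$. A decomposition of the complement by scale --- partitioning by the log-ratio $\log_2(\mathcal{M}_x^{\otimes}/\mathcal{D}_x^{\otimes m})$ --- then yields $\|\mathcal{M}_x^{\otimes} - \mathcal{D}_x^{\otimes m}\| \le \varepsilon$ after tuning $m$ and $c$, and finally $\|\mathcal{M}_x - \mathcal{D}_x\| \le \|\mathcal{M}_x^{\otimes} - \mathcal{D}_x^{\otimes m}\|$ by the data-processing inequality applied to the coordinate projection.

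The main obstacle is the sharpness of the total-variation bound in direction (ii): summing scales naively costs a factor of $c = O(\log(mn/\varepsilon))$, which is only logarithmic but still must be absorbed by choosing $m$ polynomially large so that the Kolmogorov margin is small relative to the joint entropy; intermediate control by Hellinger or KL divergence together with Pinsker's inequality is the natural tool. Secondary subtleties are the uncomputability of $K_t$ and of $\mathcal{D}_x(y)$ itself, but these are not problems because $B_x^\varepsilon$ only has to be \emph{defined}, not decided --- the search task demands a producer, not a verifier --- and $t$ can be chosen large enough that $M$'s own description of a tuple fits within the time bound. Finally, the constant $c$ must absorb $|M|$, which is hidden from the definition of $R_S$; this is handled by the standard trick of making the universal machine take $M$ as an explicit auxiliary input, at cost $O(\log |M|)$ extra description bits.
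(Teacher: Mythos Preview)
The paper does not actually prove this theorem; it is quoted from \cite{aar:samp} with only a one-paragraph sketch: define $R_S$ so that a valid output is a tuple $(y_1,\ldots,y_T)$ of strings in the support of $\mathcal{D}_x$ having ``close-to-maximal Kolmogorov complexity.'' Your construction is precisely this, and your direction (i) via Kraft's inequality is correct.

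Direction (ii), however, has a real gap. You assert that the scale decomposition gives $\|\mathcal{M}_x^\otimes-\mathcal{D}_x^{\otimes m}\|\le\varepsilon$ on the \emph{joint} and then invoke data processing for the coordinate projection. But the joint total-variation distance need not be small at all: from $\mathcal{P}_x(\vec y)\le 2^{c'}\mathcal{D}_x^{\otimes m}(\vec y)$ on $B_x^\varepsilon$ (with $c'=c+|M|+O(1)$) one cannot do better than $\|\mathcal{P}_x-\mathcal{D}_x^{\otimes m}\|\le 1-2^{-c'}+\delta$, which is vacuous---take $\mathcal{P}_x$ to be $\mathcal{D}_x^{\otimes m}$ conditioned on a set of mass $2^{-c'}$. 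The repair, which you half-gesture at with ``KL together with Pinsker,'' is not to bound joint TV but to bound joint KL by $c'+O(\delta)$, then use the \emph{subadditivity} of KL against a product reference,
\[
\sum_{i=1}^m \mathrm{KL}\bigl((\mathcal{P}_x)_i\,\big\|\,\mathcal{D}_x\bigr)\ \le\ \mathrm{KL}\bigl(\mathcal{P}_x\,\big\|\,\mathcal{D}_x^{\otimes m}\bigr),
\]
together with convexity, to get $\mathrm{KL}(\mathcal{M}_x\|\mathcal{D}_x)\le c'/m$ on the random-coordinate marginal, and only then apply Pinsker. It is this $1/m$ gain from KL subadditivity---not data processing on TV---that lets polynomial $m$ suffice. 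Relatedly, your ``standard trick'' for $|M|$ is misstated: $M$ cannot be made auxiliary input to the universal machine defining $K_t$, since $B_x$ must be fixed before $M$ is chosen; rather, $|M|$ simply enters the analysis as an additive constant in $c'$, and is then absorbed by choosing $m$ (which the $\mathsf{SampP}^M$ algorithm is free to set) large enough.
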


Briefly, Theorem \ref{samprel}\ is proved by using the notion of a
\textquotedblleft universal randomness test\textquotedblright\ from
algorithmic information theory. \ Intuitively, given a sampling problem $S$,
we define an \textquotedblleft equivalent\textquotedblright\ search problem
$R_{S}$\ as follows: \textquotedblleft output a collection of strings
$Y=\left(  y_{1},\ldots,y_{T}\right)  $\ in the support of $\mathcal{D}_{x}$,
most of which have large probability in $\mathcal{D}_{x}$\ and which
\textit{also}, conditioned on that, have close-to-maximal Kolmogorov
complexity.\textquotedblright\ \ Certainly, if we can sample from
$\mathcal{D}_{x}$, then we can solve this search problem as well. \ But the
converse also holds: if a probabilistic Turing machine is solving the search
problem $R_{S}$, it can \textit{only} be doing so by sampling approximately
from $\mathcal{D}_{x}$. \ For otherwise, the strings $y_{1},\ldots,y_{T}%
$\ would have short Turing machine descriptions, contrary to assumption.

In particular, Theorem \ref{samprel}\ implies that $S\in\mathsf{SampP}$\ if
and only if $R_{S}\in\mathsf{FBPP}$, $S\in\mathsf{SampBQP}$\ if and only if
$R_{S}\in\mathsf{FBQP}$, and so on. \ We therefore obtain the following consequence:

\begin{theorem}
[\cite{aar:samp}]\label{equivthm}$\mathsf{SampP}=\mathsf{SampBQP}$ if and only
if $\mathsf{FBPP}=\mathsf{FBQP}$.
\end{theorem}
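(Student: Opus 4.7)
The plan is to derive both directions of the equivalence from Theorem \ref{samprel}, noting that one of the directions is nearly immediate even without its full content.

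For the direction $\mathsf{SampP}=\mathsf{SampBQP}\Rightarrow\mathsf{FBPP}=\mathsf{FBQP}$, I would avoid Theorem \ref{samprel} entirely. Given a search problem $R=(B_x)\in\mathsf{FBQP}$ witnessed by a quantum algorithm $A$, I define an auxiliary sampling problem $S_A$ whose distribution on input $\langle x, 0^{1/\varepsilon}\rangle$ is exactly the output distribution of $A$ on that same input. By construction $S_A\in\mathsf{SampBQP}$ (the quantum algorithm $A$ samples from it with zero error), hence by hypothesis $S_A\in\mathsf{SampP}$; let $A'$ be a classical witness. To solve $R$ in $\mathsf{FBPP}$ on input $\langle x, 0^{1/\varepsilon}\rangle$, I feed $A'$ the compound input $\langle\langle x, 0^{2/\varepsilon}\rangle, 0^{2/\varepsilon}\rangle$, so that $A'$ samples within variation distance $\varepsilon/2$ of the output distribution of $A(\langle x, 0^{2/\varepsilon}\rangle)$, which itself hits $B_x$ with probability at least $1-\varepsilon/2$. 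A union bound yields success probability at least $1-\varepsilon$, so $R\in\mathsf{FBPP}$. The reverse containment $\mathsf{FBPP}\subseteq\mathsf{FBQP}$ is trivial, giving $\mathsf{FBPP}=\mathsf{FBQP}$.

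For the direction $\mathsf{FBPP}=\mathsf{FBQP}\Rightarrow\mathsf{SampP}=\mathsf{SampBQP}$, I would invoke Theorem \ref{samprel} in earnest. Take $S\in\mathsf{SampBQP}$ with a quantum witness sampler $\mathcal{O}$. Applying part (i) of Theorem \ref{samprel} with $\mathcal{O}$ as the sampling oracle, the equivalent search problem $R_S$ satisfies $R_S\in\mathsf{FBPP}^{\mathcal{O}}$; since any classical $\mathsf{BPP}$ machine making oracle calls to a quantum polynomial-time sampler can itself be run quantumly in polynomial time, $R_S\in\mathsf{FBQP}$. The hypothesis then gives $R_S\in\mathsf{FBPP}$, witnessed by some classical machine $M$. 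Applying part (ii) of Theorem \ref{samprel} with $M$ as the search oracle yields $S\in\mathsf{SampP}^{M}=\mathsf{SampP}$, as required. The reverse containment is again trivial.

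I do not anticipate a serious obstacle here; the real work has already been done in Theorem \ref{samprel}. The only subtlety is bookkeeping among the various approximation parameters --- the $\varepsilon$ that controls closeness of the sampled distribution to the ideal one, the $\delta$ that controls the search machine's failure probability, and the implicit slacks in the definitions of $\mathsf{FBPP}$ and $\mathsf{FBQP}$. Each of these can be absorbed by halving and applying a triangle inequality, at the cost of only a constant-factor polynomial blowup in running time.
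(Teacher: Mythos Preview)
Your proposal is correct and follows the paper's approach, which simply notes that Theorem \ref{samprel} yields $S\in\mathsf{SampP}\Leftrightarrow R_S\in\mathsf{FBPP}$ and $S\in\mathsf{SampBQP}\Leftrightarrow R_S\in\mathsf{FBQP}$, from which the equivalence is immediate. Your direct argument for the implication $\mathsf{SampP}=\mathsf{SampBQP}\Rightarrow\mathsf{FBPP}=\mathsf{FBQP}$, bypassing Theorem \ref{samprel}, is a small but welcome simplification of the easy direction.
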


\section{The Noninteracting-Boson Model of Computation\label{MODEL}}

In this section, we develop a formal model of computation based on
\textit{identical, noninteracting bosons}: as a concrete example, a
linear-optical network with single-photon inputs and nonadaptive photon-number
measurements. \ This model will yield a complexity class that, as far as we
know, is intermediate between $\mathsf{BPP}$\ and $\mathsf{BQP}$. \ The ideas
behind the model have been the basis for optical physics for almost a century.
\ To our knowledge, however, this is the first time the model has been
presented from a theoretical computer science perspective.

Like quantum mechanics itself, the noninteracting-boson model possesses a
mathematical beauty that can be appreciated even independently of its physical
origins. \ In an attempt to convey that beauty, we will define the model in
\textit{three} ways, and also prove those ways to be equivalent.\ \ The first
definition, in Section \ref{PHYSDEF}, is directly in terms of physical devices
(beamsplitters and phaseshifters) and the unitary transformations that they
induce. \ This definition should be easy to understand for those already
comfortable with quantum computing, and makes it apparent why our model can be
simulated on a standard quantum computer. \ The second definition, in Section
\ref{POLYDEF}, is in terms of multivariate polynomials with an unusual inner
product. \ This definition, which we learned from Gurvits \cite{gurvits:alg},
is the nicest one mathematically, and makes it easy to prove many statements
(for example, that the probabilities sum to $1$) that would otherwise require
tedious calculation. \ The third definition is in terms of permanents of
$n\times n$\ matrices, and is what lets us connect our model to the hardness
of the permanent. \ The second and third definitions do not use any quantum formalism.

Finally, Section \ref{BCT} defines \textsc{BosonSampling}, the basic
computational problem considered in this paper, as well as the complexity
class $\mathsf{BosonFP}$ of search problems solvable using a
\textsc{BosonSampling}\ oracle. \ It also proves the simple but important fact
that $\mathsf{BosonFP}\subseteq\mathsf{FBQP}$: in other words, boson computers
can be simulated efficiently\ by standard quantum computers.

\subsection{Physical Definition\label{PHYSDEF}}

The model that we are going to define involves a quantum system of $n$
identical photons\footnote{For concreteness, we will often talk about photons
in a linear-optical network, but the mathematics would be the same with any
other system of identical, noninteracting bosons (for example, bosonic
excitations in solid-state).} and $m$\ \textit{modes}\ (intuitively, places
that a photon can be in). \ We will usually be interested in the case where
$n\leq m\leq\operatorname*{poly}\left(  n\right)  $, though the model makes
sense for arbitrary $n$\ and $m$.\footnote{The one caveat is that our
\textquotedblleft standard initial state,\textquotedblright\ which consists of
one photon in each of the first $n$ modes, is only defined if $n\leq m$.}
\ Each computational basis state of this system has the form $\left\vert
S\right\rangle =\left\vert s_{1},\ldots,s_{m}\right\rangle $, where $s_{i}%
$\ represents the number of photons in the $i^{th}$\ mode ($s_{i}$ is also
called the $i^{th}$\ \textit{occupation number}). \ Here the $s_{i}$'s can be
any nonnegative integers summing to $n$; in particular, the $s_{i}$'s can be
greater than $1$. \ This corresponds to the fact that photons are bosons, and
(unlike with fermions) an unlimited number of bosons can be in the same place
at the same time.

During a computation, photons are never created or destroyed, but are only
moved from one mode to another. \ Mathematically, this means that the basis
states $\left\vert S\right\rangle $\ of our computer will always satisfy
$S\in\Phi_{m,n}$, where $\Phi_{m,n}$\ is the set of tuples $S=\left(
s_{1},\ldots,s_{m}\right)  $ satisfying $s_{1},\ldots,s_{m}\geq0$\ and
$s_{1}+\cdots+s_{m}=n$. \ Let $M=\left\vert \Phi_{m,n}\right\vert $\ be the
total number of basis states; then one can easily check that $M=\binom
{m+n-1}{n}$.

Since this is quantum mechanics, a general state of the computer has the form%
\[
\left\vert \psi\right\rangle =\sum_{S\in\Phi_{m,n}}\alpha_{S}\left\vert
S\right\rangle ,
\]
where the $\alpha_{S}$'s are complex numbers satisfying $\sum_{S\in\Phi_{m,n}%
}\left\vert \alpha_{S}\right\vert ^{2}=1$. \ In other words, $\left\vert
\psi\right\rangle $\ is a unit vector in the $M$-dimensional complex Hilbert
space spanned by elements of $\Phi_{m,n}$. \ Call this Hilbert space $H_{m,n}$.

Just like in standard quantum computing, the Hilbert space $H_{m,n}$\ is
exponentially large (as a function of $m+n$), which means that we can only
hope to explore a tiny fraction of it using polynomial-size circuits. \ On the
other hand, one difference from standard quantum computing is that $H_{m,n}%
$\ is \textit{not} built up as the tensor product of smaller Hilbert spaces.

Throughout this paper, we will assume that our computer starts in the state%
\[
\left\vert 1_{n}\right\rangle :=\left\vert 1,\ldots,1,0,\ldots,0\right\rangle
,
\]
where the first $n$ modes contain one photon each, and the remaining
$m-n$\ modes are unoccupied. \ We call $\left\vert 1_{n}\right\rangle $\ the
\textit{standard initial state}.

We will also assume that measurement only occurs at the end of the
computation, and that what is measured is the number of photons in each mode.
\ In other words, a measurement of the state $\left\vert \psi\right\rangle
=\sum_{S\in\Phi_{m,n}}\alpha_{S}\left\vert S\right\rangle $ returns an element
$S$ of $\Phi_{m,n}$,\ with probability equal to%
\[
\Pr\left[  S\right]  =\left\vert \alpha_{S}\right\vert ^{2}=\left\vert
\left\langle \psi|S\right\rangle \right\vert ^{2}.
\]

But which unitary transformations can we perform on the state $\left\vert
\psi\right\rangle $, after the initialization and before the final
measurement? \ For simplicity, let us consider the special case where there is
only one photon; later we will generalize to $n$ photons. \ In the one-photon
case, the Hilbert space $H_{m,1}$\ has dimension $M=m$, and the computational
basis states ($\left\vert 1,0,\ldots,0\right\rangle $, $\left\vert
0,1,0,\ldots,0\right\rangle $, etc.) simply record which mode the photon is
in. \ Thus, a general state is just a unit vector in $\mathbb{C}^{m}$: that
is, a superposition over modes.

In standard quantum computing, we know that any unitary transformation on $n$
qubits can be decomposed as a product of \textit{gates}, each of which acts
nontrivially on at most two qubits, and is the identity on the other qubits.
\ Likewise, in the linear optics model, any unitary transformation on $m$
modes can be decomposed into a product of \textit{optical elements}, each of
which acts nontrivially on at most two modes, and is the identity on the other
$m-2$\ modes. \ The two best-known optical elements are called
\textit{phaseshifters} and \textit{beamsplitters}. \ A phaseshifter multiplies
a single amplitude $\alpha_{S}$ by $e^{i\theta}$, for some specified angle
$\theta$, and acts as the identity on the other $m-1$\ amplitudes. \ A
beamsplitter modifies two amplitudes $\alpha_{S}$\ and $\alpha_{T}$\ as
follows, for some specified angle $\theta$:%
\[
\left(
\begin{array}
[c]{c}%
\alpha_{S}^{\prime}\\
\alpha_{T}^{\prime}%
\end{array}
\right)  :=\left(
\begin{array}
[c]{cc}%
\cos\theta & -\sin\theta\\
\sin\theta & \cos\theta
\end{array}
\right)  \left(
\begin{array}
[c]{c}%
\alpha_{S}\\
\alpha_{T}%
\end{array}
\right)  .
\]
It acts as the identity on the other $m-2$\ amplitudes. \ It is easy to see
that beamsplitters and phaseshifters generate all optical elements (that is,
all $2\times2$\ unitaries). \ Moreover, the optical elements generate all
$m\times m$\ unitaries, as shown by the following lemma of Reck et
al.\ \cite{rzbb}:

\begin{lemma}
[Reck et al.\ \cite{rzbb}]\label{decompose}Let $U$ be any $m\times m$\ unitary
matrix. \ Then one can decompose $U$ as a product $U=U_{T}\cdots U_{1}$, where
each $U_{t}$\ is an optical element\ (that is, a unitary matrix that acts
nontrivially on at most $2$ modes and as the identity on the remaining $m-2$
modes). \ \ Furthermore, this decomposition has size $T=O\left(  m^{2}\right)
$, and can be found in time polynomial in $m$.
\end{lemma}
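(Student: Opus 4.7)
The plan is to decompose $U$ by a unitary-coefficient version of Givens elimination. That is, I would multiply $U$ on the left by a sequence of two-mode unitaries that successively zero out its off-diagonal entries, reducing it to the identity; inverting the sequence then expresses $U$ itself as a product of optical elements.

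Concretely, I would process the columns of $U$ in the order $j = 1, 2, \ldots, m-1$. For each such $j$ and each $k = j+1, \ldots, m$, I would multiply by a two-mode unitary $V_{j,k}$ that acts nontrivially only on modes $j$ and $k$ and, applied to the two-component vector $(a_{jj}, a_{kj})^T$ of current matrix entries, sends it to $(r, 0)^T$ where $r = \sqrt{|a_{jj}|^2 + |a_{kj}|^2}$. Such a $V_{j,k}$ always exists: take its first row to be $(\overline{a_{jj}}, \overline{a_{kj}})/r$ and complete to a $2\times 2$ unitary in the standard way. Because each $V_{j,k}$ is the identity on the first $j-1$ coordinates, the zeros produced in earlier columns are preserved, and an induction on $j$ shows that after all $\binom{m}{2}$ two-mode multiplications the resulting matrix is diagonal. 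A final round of at most $m$ single-mode phaseshifters (themselves optical elements, since a one-mode unitary acts on at most two modes) kills the remaining diagonal phases, leaving the identity.

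Writing the reduction as $W_T W_{T-1} \cdots W_1 U = I$ and setting $U_t := W_t^{-1}$ then gives the advertised decomposition $U = U_1 U_2 \cdots U_T$, because the inverse of a two-mode unitary supported on modes $\{j,k\}$ is again a two-mode unitary supported on $\{j,k\}$. The total count $T = \binom{m}{2} + O(m) = O(m^2)$ matches the claimed bound.

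The only step that needs any care is the polynomial-time computability claim. Here each $V_{j,k}$ is specified by a constant number of arithmetic operations on current matrix entries (one norm, one division, and the $2\times 2$ completion), and updating the matrix after multiplication touches only two rows, costing $O(m)$ arithmetic operations per step. So the entire procedure uses $O(m^3)$ arithmetic operations overall; since we only ever apply unitary multiplications the procedure is numerically stable, so under the paper's standing precision conventions polynomially many bits suffice. I do not anticipate any genuine obstacle: this is essentially the classical QR factorization adapted so that the rotations are complex $2\times 2$ unitaries rather than real Givens rotations, as originally observed by Reck et al.\ \cite{rzbb}.
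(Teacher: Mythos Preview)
Your proposal is correct and follows essentially the same approach as the paper's proof sketch: reduce $U$ to the identity by a Gaussian-elimination-style sequence of two-mode (Givens-type) unitaries that zero out off-diagonal entries, then handle the remaining diagonal phases with phaseshifters, and invert the sequence. Your write-up is in fact more explicit than the paper's sketch (e.g.\ you give the formula for each $V_{j,k}$ and the $O(m^3)$ arithmetic count); the only cosmetic discrepancy is that the statement orders the factors as $U_T\cdots U_1$ rather than $U_1\cdots U_T$, which is just a relabeling.
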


\begin{proof}
[Proof Sketch]The task is to produce $U$ starting from the identity
matrix---or equivalently, to produce $I$\ starting from $U$---by successively
multiplying by block-diagonal unitary matrices, each of which contains a
single $2\times2$\ block\ and $m-2$\ blocks consisting of $1$.\footnote{Such
matrices are the generalizations of the so-called \textit{Givens rotations} to
the complex numbers.} \ To do so, we use a procedure similar to Gaussian
elimination, which zeroes out the $m^{2}-m$\ off-diagonal entries of $U$ one
by one. \ Then, once $U$ has been reduced to a diagonal matrix, we use $m$
phaseshifters to produce the identity matrix.
\end{proof}

We now come to the more interesting part: how do we describe the action of an
optical element on \textit{multiple} photons? \ In the case of a phaseshifter,
it is relatively obvious what should happen. \ Namely, phaseshifting the
$i^{th}$ mode by angle $\theta$\ should multiply the amplitude by $e^{i\theta
}$ once for each of the $s_{i}$\ photons in mode $i$. \ In other words, it
should effect the diagonal unitary transformation%
\[
\left\vert s_{1},\ldots,s_{m}\right\rangle \rightarrow e^{i\theta s_{i}%
}\left\vert s_{1},\ldots,s_{m}\right\rangle .
\]
However, it is much less obvious how to describe the action of a beamsplitter
on multiple photons.

As it turns out, there is a natural homomorphism $\varphi$, which maps an
$m\times m$\ unitary transformation\ $U$\ acting on a single photon to the
corresponding $M\times M$\ unitary transformation\ $\varphi\left(  U\right)
$\ acting on $n$ photons. \ Since $\varphi$ is a homomorphism, Lemma
\ref{decompose} implies that we can specify $\varphi$\ merely by describing
its behavior on $2\times2$\ unitaries. \ For given an arbitrary $m\times
m$\ unitary matrix $U$,\ we can write $\varphi\left(  U\right)  $\ as%
\[
\varphi\left(  U_{T}\cdots U_{1}\right)  =\varphi\left(  U_{T}\right)
\cdots\varphi\left(  U_{1}\right)  ,
\]
where each $U_{t}$ is an optical element (that is, a block-diagonal unitary
that acts nontrivially on at most $2$ modes). \ So let%
\[
U=\left(
\begin{array}
[c]{cc}%
a & b\\
c & d
\end{array}
\right)
\]
be any $2\times2$\ unitary matrix, which acts on the Hilbert space $H_{2,1}%
$\ spanned by $\left\vert 1,0\right\rangle $\ and $\left\vert 0,1\right\rangle
$. \ Then since $\varphi\left(  U\right)  $\ preserves photon number, we know
it must be a block-diagonal matrix that satisfies%
\[
\left\langle s,t\right\vert \varphi\left(  U\right)  \left\vert
u,v\right\rangle =0
\]
whenever $s+t\neq u+v$. \ But what about when $s+t=u+v$? \ Here the formula
for the appropriate entry of $\varphi\left(  U\right)  $\ is%
\begin{equation}
\left\langle s,t\right\vert \varphi\left(  U\right)  \left\vert
u,v\right\rangle =\sqrt{\frac{u!v!}{s!t!}}\sum_{k+\ell=u,~k\leq s,~\ell\leq
t}\binom{s}{k}\binom{t}{\ell}a^{k}b^{s-k}c^{\ell}d^{t-\ell}. \label{gate}%
\end{equation}
One can verify by calculation that $\varphi\left(  U\right)  $ is
unitary;\ however, a much more elegant proof of unitarity will follow from the
results in\ Section \ref{POLYDEF}.

One more piece of notation: let $\mathcal{D}_{U}$\ be the probability
distribution over $S\in\Phi_{m,n}$\ obtained by measuring the state
$\varphi\left(  U\right)  \left\vert 1_{n}\right\rangle $\ in the
computational basis. \ That is,%
\[
\Pr_{\mathcal{D}_{U}}\left[  S\right]  =\left\vert \left\langle 1_{n}%
|\varphi\left(  U\right)  |S\right\rangle \right\vert ^{2}.
\]
Notice that $\mathcal{D}_{U}$\ depends only on the first $n$ columns of $U$.
\ Therefore, instead of writing $\mathcal{D}_{U}$\ it will be better to write
$\mathcal{D}_{A}$, where $A\in\mathcal{U}_{m,n}$\ is the $m\times n$\ matrix
corresponding to the first $n$ columns of $U$.

\subsection{Polynomial Definition\label{POLYDEF}}

In this section, we present a beautiful alternative interpretation of the
noninteracting-boson model, in which the \textquotedblleft
states\textquotedblright\ are multivariate polynomials, the \textquotedblleft
operations\textquotedblright\ are unitary changes of variable, and a
\textquotedblleft measurement\textquotedblright\ samples from a probability
distribution over monomials weighted by their coefficients. \ We also prove
that this model is well-defined (i.e.\ that in any measurement, the
probabilities of the various outcomes sum to $1$), and that it is indeed
equivalent to the model from Section \ref{PHYSDEF}. \ Combining these facts
yields the simplest proof we know that the model from Section \ref{PHYSDEF}%
\ is well-defined.

Let $m\geq n$. \ Then the \textquotedblleft state\textquotedblright\ of our
computer, at any time, will be represented by a multivariate complex-valued
polynomial $p\left(  x_{1},\ldots.x_{m}\right)  $ of degree $n$. \ Here
the\ $x_{i}$'s can be thought of as just formal variables.\footnote{For
physicists, they are \textquotedblleft creation operators.\textquotedblright}
\ The standard initial state $\left\vert 1_{n}\right\rangle $\ corresponds to
the degree-$n$ polynomial $J_{m,n}\left(  x_{1},\ldots,x_{m}\right)
:=x_{1}\cdots x_{n}$, where $x_{1},\ldots,x_{n}$\ are the first $n$ variables.
\ To transform the state, we can apply any $m\times m$\ unitary transformation
$U$ we like to the vector of $x_{i}$'s:%
\[
\left(
\begin{array}
[c]{c}%
x_{1}^{\prime}\\
\vdots\\
x_{m}^{\prime}%
\end{array}
\right)  =\left(
\begin{array}
[c]{ccc}%
u_{11} & \cdots & u_{1m}\\
\vdots & \ddots & \vdots\\
u_{m1} & \cdots & u_{mm}%
\end{array}
\right)  \left(
\begin{array}
[c]{c}%
x_{1}\\
\vdots\\
x_{m}%
\end{array}
\right)  .
\]
The new state of our computer is then equal to%
\[
U\left[  J_{m,n}\right]  \left(  x_{1},\ldots.x_{m}\right)  =J_{m,n}\left(
x_{1}^{\prime},\ldots.x_{m}^{\prime}\right)  =\prod_{i=1}^{n}\left(
u_{i1}x_{1}+\cdots+u_{im}x_{m}\right)  .
\]
Here and throughout, we let $L\left[  p\right]  $\ be the polynomial obtained
by starting with $p$ and then applying the $m\times m$\ linear transformation
$L$ to the variables.

After applying one or more unitary transformations to the $x_{i}$'s, we then
get a single opportunity to measure the computer's state. \ Let the polynomial
$p$ at the time of measurement be%
\[
p\left(  x_{1},\ldots.x_{m}\right)  =\sum_{S=\left(  s_{1},\ldots
,s_{m}\right)  }a_{S}x_{1}^{s_{1}}\cdots x_{m}^{s_{m}},
\]
where $S$\ ranges over $\Phi_{m,n}$\ (i.e., lists of nonnegative integers such
that $s_{1}+\cdots+s_{m}=n$). \ Then the measurement returns the monomial
$x_{1}^{s_{1}}\cdots x_{m}^{s_{m}}$ (or equivalently, the list of integers
$S=\left(  s_{1},\ldots,s_{m}\right)  $) with probability equal to%
\[
\Pr\left[  S\right]  :=\left\vert a_{S}\right\vert ^{2}s_{1}!\cdots s_{m}!.
\]

From now on, we will use $x$\ as shorthand for $x_{1},\ldots.x_{m}$, and
$x^{S}$\ as shorthand for the monomial $x_{1}^{s_{1}}\cdots x_{m}^{s_{m}}$.
\ Given two polynomials%
\begin{align*}
p\left(  x\right)   &  =\sum_{S\in\Phi_{m,n}}a_{S}x^{S},\\
q\left(  x\right)   &  =\sum_{S\in\Phi_{m,n}}b_{S}x^{S},
\end{align*}
we can define an inner product between them---the so-called \textit{Fock-space
inner product}---as follows:%
\[
\left\langle p,q\right\rangle :=\sum_{S=\left(  s_{1},\ldots,s_{m}\right)
\in\Phi_{m,n}}\overline{a}_{S}b_{S}s_{1}!\cdots s_{m}!.
\]
The following key result gives a more intuitive interpretation of the
Fock-space inner product.

\begin{lemma}
[Interpretation of Fock Inner Product]\label{innerprod}$\left\langle
p,q\right\rangle =\operatorname*{E}_{x\sim\mathcal{G}^{m}}\left[  \overline
{p}\left(  x\right)  q\left(  x\right)  \right]  $, where $\mathcal{G}$\ is
the Gaussian distribution $\mathcal{N}\left(  0,1\right)  _{\mathbb{C}}$.
\end{lemma}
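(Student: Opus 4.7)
The plan is to expand both sides in the monomial basis and reduce the claim to a single-variable Gaussian moment computation.

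First I would write $p(x) = \sum_{T \in \Phi_{m,n}} a_T x^T$ and $q(x) = \sum_{S \in \Phi_{m,n}} b_S x^S$, so that by linearity
\[
\operatorname{E}_{x \sim \mathcal{G}^m}\bigl[\overline{p}(x)\, q(x)\bigr] = \sum_{S,T} \overline{a_T}\, b_S \operatorname{E}_{x \sim \mathcal{G}^m}\bigl[\overline{x}^{T} x^{S}\bigr].
\]
Comparing with the definition of $\langle p,q\rangle$, it suffices to prove the monomial orthogonality relation
\[
\operatorname{E}_{x \sim \mathcal{G}^m}\bigl[\overline{x}^{T} x^{S}\bigr] = \delta_{ST}\, s_1! \cdots s_m!
\]
for all $S,T \in \Phi_{m,n}$.

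Second, because the coordinates $x_1,\ldots,x_m$ are independent under $\mathcal{G}^m$, this factors as a product over $i$ of single-variable expectations $\operatorname{E}_{z\sim\mathcal{G}}[\overline{z}^{t_i} z^{s_i}]$, so the whole claim reduces to the one-variable identity
\[
\operatorname{E}_{z\sim\mathcal{G}}\bigl[\overline{z}^{\,j} z^{k}\bigr] = \delta_{jk}\, k!.
\]
The off-diagonal case $j \neq k$ follows from rotational invariance: the distribution of $z$ equals that of $e^{i\theta}z$ for every $\theta$, so the expectation picks up a factor $e^{i\theta(k-j)}$ and must vanish. For the diagonal case $j=k$, writing $z = (u+iv)/\sqrt{2}$ with $u,v$ independent real standard Gaussians makes $|z|^2 = (u^2+v^2)/2$ an $\operatorname{Exp}(1)$ random variable, whose $k$th moment is exactly $k!$.

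Third, I would substitute these moment values back: the product over coordinates is nonzero only when $s_i = t_i$ for every $i$ (i.e.\ $S=T$), in which case it equals $\prod_i s_i!$. Plugging into the double sum collapses it to $\sum_S \overline{a_S}\, b_S\, s_1!\cdots s_m!$, which is exactly $\langle p,q\rangle$. There is no real obstacle here; the only point requiring any care is fixing the normalization convention so that $\operatorname{E}[|z|^2]=1$ gives $\operatorname{E}[|z|^{2k}]=k!$ rather than some other constant, and the rotational-invariance argument handles the off-diagonal terms cleanly without any combinatorial bookkeeping.
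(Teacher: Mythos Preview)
Your proof is correct and follows essentially the same approach as the paper: reduce by linearity to monomial inputs, use phase/rotational invariance of the complex Gaussian to kill the off-diagonal terms, and compute the diagonal moments $\operatorname{E}[|z|^{2k}]=k!$. The only cosmetic difference is that you factor over coordinates first and then argue in one variable, whereas the paper handles the multivariate monomial directly; your added remark that $|z|^2\sim\operatorname{Exp}(1)$ is a nice justification for the moment value that the paper simply asserts.
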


\begin{proof}
Since inner product and expectation are linear, it suffices to consider the
case where $p$\ and $q$ are monomials. \ Suppose $p\left(  x\right)  =x^{R}%
$\ and $q\left(  x\right)  =x^{S}$, for some $R=\left(  r_{1},\ldots
,r_{m}\right)  $ and $S=\left(  s_{1},\ldots,s_{m}\right)  $ in $\Phi_{m,n}$.
\ Then%
\[
\operatorname*{E}_{x\sim\mathcal{G}^{m}}\left[  \overline{p}\left(  x\right)
q\left(  x\right)  \right]  =\operatorname*{E}_{x\sim\mathcal{G}^{m}}\left[
\overline{x}^{R}x^{S}\right]  .
\]
If $p\neq q$---that is, if there exists an $i$ such that $r_{i}\neq s_{i}%
$---then the above expectation is clearly $0$, since the Gaussian distribution
is uniform over phases. \ If $p=q$, on the other hand, then the expectation
equals%
\begin{align*}
\operatorname*{E}_{x\sim\mathcal{G}^{m}}\left[  \left\vert x_{1}\right\vert
^{2s_{1}}\cdots\left\vert x_{m}\right\vert ^{2s_{m}}\right]   &
=\operatorname*{E}_{x_{1}\sim\mathcal{G}}\left[  \left\vert x_{1}\right\vert
^{2s_{1}}\right]  \cdots\operatorname*{E}_{x_{m}\sim\mathcal{G}}\left[
\left\vert x_{m}\right\vert ^{2s_{m}}\right] \\
&  =s_{1}!\cdots s_{m}!
\end{align*}
We conclude that%
\[
\operatorname*{E}_{x\sim\mathcal{G}^{m}}\left[  \overline{p}\left(  x\right)
q\left(  x\right)  \right]  =\sum_{S=\left(  s_{1},\ldots,s_{m}\right)
\in\Phi_{m,n}}\overline{a}_{S}b_{S}s_{1}!\cdots s_{m}!
\]
as desired.
\end{proof}

Recall that $U\left[  p\right]  $\ denotes the polynomial $p\left(  Ux\right)
$, obtained by applying the $m\times m$\ linear transformation $U$\ to the
variables $x=\left(  x_{1},\ldots,x_{m}\right)  $\ of $p$. \ Then Lemma
\ref{innerprod}\ has the following important consequence.

\begin{theorem}
[Unitary Invariance of Fock Inner Product]\label{invarthm}$\left\langle
p,q\right\rangle =\left\langle U\left[  p\right]  ,U\left[  q\right]
\right\rangle $ for all polynomials $p,q$\ and all unitary transformations $U$.
\end{theorem}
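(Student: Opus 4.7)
The plan is to deduce the unitary invariance of $\langle \cdot, \cdot \rangle$ directly from Lemma \ref{innerprod}, by combining it with the standard rotational invariance of the complex Gaussian measure $\mathcal{G}^m$. The whole proof should be three lines once the right change of variables is identified, so the only real content is spotting that the Fock inner product has already been expressed as a Gaussian expectation for us, and that $\mathcal{G}^m$ is unitarily invariant as a probability measure on $\mathbb{C}^m$.

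Concretely, I would proceed as follows. First, apply Lemma \ref{innerprod} to the right-hand side to write
\[
\langle U[p], U[q] \rangle = \operatorname*{E}_{x \sim \mathcal{G}^m}\!\left[\, \overline{U[p](x)}\, U[q](x) \,\right] = \operatorname*{E}_{x \sim \mathcal{G}^m}\!\left[\, \overline{p(Ux)}\, q(Ux) \,\right],
\]
using the definition $U[p](x) = p(Ux)$. Second, invoke the fact that the product density of $\mathcal{G}^m$ is $\pi^{-m} e^{-\|x\|^2}$, which depends on $x$ only through $\|x\|$; since $\|Ux\| = \|x\|$ for unitary $U$ and $U$ has Jacobian determinant of modulus $1$, the pushforward of $\mathcal{G}^m$ under $x \mapsto Ux$ is $\mathcal{G}^m$ itself. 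Third, perform the substitution $y = Ux$ to obtain
\[
\operatorname*{E}_{x \sim \mathcal{G}^m}\!\left[\, \overline{p(Ux)}\, q(Ux) \,\right] = \operatorname*{E}_{y \sim \mathcal{G}^m}\!\left[\, \overline{p(y)}\, q(y) \,\right] = \langle p, q \rangle,
\]
where the last equality is Lemma \ref{innerprod} again, applied in the other direction.

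There is essentially no obstacle; the statement is really a theorem about the Gaussian measure rather than about the polynomial formalism. If one wanted to be pedantic, the only nontrivial ingredient to verify is the unitary invariance of $\mathcal{G}^m$, which follows from the factorization $d\mathcal{G}^m(x) = \pi^{-m} e^{-|x_1|^2} \cdots e^{-|x_m|^2}\, dx$ and a direct change-of-variables computation using $|\det U| = 1$ and $\|Ux\|^2 = \|x\|^2$. Everything else is bookkeeping. Note that this derivation also implicitly justifies the unitarity of the homomorphism $\varphi$ from Section \ref{PHYSDEF}: by taking $p = q$ we get $\langle U[p], U[p] \rangle = \langle p, p \rangle$, which, once translated via the monomial basis and the probability formula $\Pr[S] = |a_S|^2 s_1! \cdots s_m!$, is exactly the statement that $\varphi(U)$ preserves norms.
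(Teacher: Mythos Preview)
Your proof is correct and is essentially identical to the paper's: apply Lemma~\ref{innerprod} to rewrite $\langle U[p],U[q]\rangle$ as a Gaussian expectation, use $U[p](x)=p(Ux)$, invoke rotational invariance of $\mathcal{G}^m$ to replace $Ux$ by $x$, and apply Lemma~\ref{innerprod} again. Your closing remark about $\varphi(U)$ preserving norms is exactly what the paper records separately as Corollary~\ref{unitarycor}.
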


\begin{proof}
We have%
\begin{align*}
\left\langle U\left[  p\right]  ,U\left[  q\right]  \right\rangle  &
=\operatorname*{E}_{x\sim\mathcal{G}^{m}}\left[  \overline{U\left[  p\right]
}\left(  x\right)  U\left[  q\right]  \left(  x\right)  \right] \\
&  =\operatorname*{E}_{x\sim\mathcal{G}^{m}}\left[  \overline{p}\left(
Ux\right)  q\left(  Ux\right)  \right] \\
&  =\operatorname*{E}_{x\sim\mathcal{G}^{m}}\left[  \overline{p}\left(
x\right)  q\left(  x\right)  \right] \\
&  =\left\langle p,q\right\rangle ,
\end{align*}
where the third line follows from the rotational invariance of the Gaussian distribution.
\end{proof}

Indeed, we have a more general result:

\begin{theorem}
\label{invarthm2}$\left\langle p,L\left[  q\right]  \right\rangle
=\left\langle L^{\dagger}\left[  p\right]  ,q\right\rangle $ for all
polynomials $p,q$\ and all linear transformations $L$. \ (So in particular, if
$L$ is invertible, then $\left\langle p,q\right\rangle =\left\langle
L^{-\dagger}\left[  p\right]  ,L\left[  q\right]  \right\rangle $.)
\end{theorem}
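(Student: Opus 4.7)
The plan is to reduce the statement for a general linear map $L$ to the special case of a real nonnegative diagonal $L$, for which the identity becomes a transparent calculation in the monomial basis. Two elementary tools will be used throughout: the substitution rule $(AB)[p] = B[A[p]]$, which follows from $(AB)[p](x) = p(ABx) = A[p](Bx) = B[A[p]](x)$; and the adjoint form of Theorem~\ref{invarthm}, namely $\langle r, U[s]\rangle = \langle U^\dagger[r], s\rangle$ for every unitary $U$, obtained by writing $\langle U^\dagger[r], s\rangle = \langle U[U^\dagger[r]], U[s]\rangle = \langle r, U[s]\rangle$, where the inner $U[U^\dagger[r]] = (U^\dagger U)[r] = r$ comes from the substitution rule.

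First I would invoke the singular value decomposition to write $L = VDU$, where $V,U$ are $m\times m$ unitaries and $D = \mathrm{diag}(d_1,\ldots,d_m)$ has nonnegative real entries, so in particular $D^\dagger = D$. The substitution rule then gives $L[q] = U[D[V[q]]]$ and $L^\dagger[p] = V^\dagger[D[U^\dagger[p]]]$. Using the adjoint form of Theorem~\ref{invarthm} to peel $U$ off the right in $\langle p, L[q]\rangle$, and to peel $V^\dagger$ off the left in $\langle L^\dagger[p], q\rangle$, yields
\[
\langle p, L[q]\rangle = \langle U^\dagger[p], D[V[q]]\rangle, \qquad \langle L^\dagger[p], q\rangle = \langle D[U^\dagger[p]], V[q]\rangle.
\]
Setting $\tilde p := U^\dagger[p]$ and $\tilde q := V[q]$, the full theorem reduces to proving $\langle \tilde p, D[\tilde q]\rangle = \langle D[\tilde p], \tilde q\rangle$ for arbitrary polynomials $\tilde p, \tilde q$ and real nonnegative diagonal $D$.

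For this diagonal case, $D$ acts on a monomial by scalar multiplication: $D[x^S] = d^S x^S$, where $d^S := d_1^{s_1}\cdots d_m^{s_m}$. Expanding $\tilde p = \sum_R a_R x^R$ and $\tilde q = \sum_S b_S x^S$ and using that the Fock inner product is diagonal in the monomial basis (so only terms with $R = S$ contribute, with weight $s_1!\cdots s_m!$), both $\langle \tilde p, D[\tilde q]\rangle$ and $\langle D[\tilde p], \tilde q\rangle$ evaluate to $\sum_S \bar{a}_S b_S\, d^S\, s_1!\cdots s_m!$, where on the right one uses that $d^S$ is real so $\overline{d^S} = d^S$. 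Finally, the parenthetical clause for invertible $L$ is a one-line corollary: substituting $p \mapsto L^{-\dagger}[p]$ into the main identity and using $L^\dagger[L^{-\dagger}[p]] = (L^{-\dagger}L^\dagger)[p] = p$ gives $\langle L^{-\dagger}[p], L[q]\rangle = \langle p, q\rangle$.

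The only conceptual obstacle is that the short Gaussian-integral proof of Theorem~\ref{invarthm} does not extend to non-unitary $L$, because the complex Gaussian measure on $\mathbb{C}^m$ is invariant only under unitary, not arbitrary linear, changes of variable; a direct change of variables by a general $L$ drags in a Jacobian and alters the Gaussian weight, so no clean cancellation appears. The SVD step sidesteps this by quarantining all of the non-unitarity into a single real diagonal factor, for which the Fock inner product behaves symmetrically because real scalars are self-conjugate and diagonal maps preserve the monomial decomposition on which the inner product is defined.
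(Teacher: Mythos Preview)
Your proof is correct and follows essentially the same route as the paper's: reduce to the diagonal case via the singular value decomposition, peel off the unitary factors using Theorem~\ref{invarthm}, and handle the diagonal case by direct computation in the monomial basis. The only cosmetic differences are that you make the substitution rule $(AB)[p]=B[A[p]]$ and the adjoint form $\langle r,U[s]\rangle=\langle U^\dagger[r],s\rangle$ explicit (the paper uses them implicitly), and that you prove the diagonal case only for real nonnegative $D$ (which is all SVD requires) whereas the paper does it for arbitrary complex diagonal $L$ directly.
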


\begin{proof}
Let $p\left(  x\right)  =\sum_{S\in\Phi_{m,n}}a_{S}x^{S}$\ and $q\left(
x\right)  =\sum_{S\in\Phi_{m,n}}b_{S}x^{S}$. \ First suppose $L$ is a diagonal
matrix, i.e.\ $L=\operatorname*{diag}\left(  \lambda\right)  $\ for some
$\lambda=\left(  \lambda_{1},\ldots,\lambda_{m}\right)  $. \ Then%
\begin{align*}
\left\langle p,L\left[  q\right]  \right\rangle  &  =\sum_{S=\left(
s_{1},\ldots,s_{m}\right)  \in\Phi_{m,n}}\overline{a}_{S}\left(  b_{S}%
\lambda^{S}\right)  s_{1}!\cdots s_{m}!\\
&  =\sum_{S=\left(  s_{1},\ldots,s_{m}\right)  \in\Phi_{m,n}}\left(
\overline{a_{S}\overline{\lambda}^{S}}\right)  b_{S}s_{1}!\cdots s_{m}!\\
&  =\left\langle L^{\dagger}\left[  p\right]  ,q\right\rangle .
\end{align*}
Now note that we can decompose an arbitrary $L$\ as $U\Lambda V$, where
$\Lambda$\ is diagonal and $U,V$\ are unitary. \ So%
\begin{align*}
\left\langle p,L\left[  q\right]  \right\rangle  &  =\left\langle p,U\Lambda
V\left[  q\right]  \right\rangle \\
&  =\left\langle U^{\dagger}\left[  p\right]  ,\Lambda V\left[  q\right]
\right\rangle \\
&  =\left\langle \Lambda^{\dagger}U^{\dagger}\left[  p\right]  ,V\left[
q\right]  \right\rangle \\
&  =\left\langle V^{\dagger}\Lambda^{\dagger}U^{\dagger}\left[  p\right]
,q\right\rangle \\
&  =\left\langle L^{\dagger}\left[  p\right]  ,q\right\rangle
\end{align*}
where the second and fourth lines follow from Theorem \ref{invarthm}.
\end{proof}

We can also define a \textit{Fock-space norm} as follows:%
\[
\left\Vert p\right\Vert _{\operatorname*{Fock}}^{2}=\left\langle
p,p\right\rangle =\sum_{S=\left(  s_{1},\ldots,s_{m}\right)  }\left\vert
a_{S}\right\vert ^{2}s_{1}!\cdots s_{m}!.
\]
Clearly $\left\Vert p\right\Vert _{\operatorname*{Fock}}^{2}\geq0$\ for all
$p$. \ We also have the following:

\begin{corollary}
\label{normcor}$\left\Vert U\left[  J_{m,n}\right]  \right\Vert
_{\operatorname*{Fock}}^{2}=1$ for all unitary matrices $U$.
\end{corollary}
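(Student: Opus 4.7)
The plan is to reduce the statement immediately to Theorem~\ref{invarthm}, and then to compute the Fock norm of the particular polynomial $J_{m,n}$ by direct inspection. There is essentially nothing more to do: the unitary invariance theorem has already done all the heavy lifting.

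First I would write
\[
\left\Vert U\left[J_{m,n}\right]\right\Vert_{\operatorname*{Fock}}^{2}
= \left\langle U\left[J_{m,n}\right], U\left[J_{m,n}\right]\right\rangle
= \left\langle J_{m,n}, J_{m,n}\right\rangle
= \left\Vert J_{m,n}\right\Vert_{\operatorname*{Fock}}^{2},
\]
where the middle equality is exactly the content of Theorem~\ref{invarthm} applied with $p=q=J_{m,n}$. So it suffices to verify that $\left\Vert J_{m,n}\right\Vert_{\operatorname*{Fock}}^{2}=1$.

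For this, I would simply unpack definitions. The polynomial $J_{m,n}(x_1,\ldots,x_m)=x_1\cdots x_n$ is the single monomial $x^{S_0}$ with $S_0=(1,\ldots,1,0,\ldots,0)\in\Phi_{m,n}$ (ones in the first $n$ positions), and its coefficient is $a_{S_0}=1$ while $a_S=0$ for all other $S\in\Phi_{m,n}$. Plugging into the defining sum for the Fock norm,
\[
\left\Vert J_{m,n}\right\Vert_{\operatorname*{Fock}}^{2}
= \sum_{S\in\Phi_{m,n}} |a_S|^2\, s_1!\cdots s_m!
= 1\cdot\underbrace{1!\cdots 1!}_{n\text{ times}}\cdot\underbrace{0!\cdots 0!}_{m-n\text{ times}}
= 1.
\]
Combining these two displays yields the claim.

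There is no real obstacle here: the only thing to be careful about is that Theorem~\ref{invarthm} has in fact been established for arbitrary polynomials $p,q$ of the relevant degree (not just for $J_{m,n}$), and that the formal variables $x_1,\ldots,x_m$ in $J_{m,n}$ really do form the input to the linear change of variables defining $U[\cdot]$. Both points are already in place. Thus the corollary is a one-line application of unitary invariance plus a trivial evaluation of the norm on the standard initial state.
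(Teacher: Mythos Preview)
Your proof is correct and follows essentially the same approach as the paper: invoke Theorem~\ref{invarthm} with $p=q=J_{m,n}$ to reduce to $\langle J_{m,n},J_{m,n}\rangle$, then observe that this equals $1$. Your version is slightly more explicit in unpacking why $\|J_{m,n}\|_{\operatorname*{Fock}}^2=1$, whereas the paper simply writes this as the final equality.
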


\begin{proof}
By Theorem \ref{invarthm},%
\[
\left\Vert U\left[  J_{m,n}\right]  \right\Vert _{\operatorname*{Fock}}%
^{2}=\left\langle U\left[  J_{m,n}\right]  ,U\left[  J_{m,n}\right]
\right\rangle =\left\langle UU^{\dagger}\left[  J_{m,n}\right]  ,J_{m,n}%
\right\rangle =\left\langle J_{m,n},J_{m,n}\right\rangle =1.
\]

\end{proof}

Corollary \ref{normcor}\ implies, in particular, that our model of computation
based on multivariate polynomials is well-defined: that is, the probabilities
of the various measurement outcomes always sum to $\left\Vert U\left[
J_{m,n}\right]  \right\Vert _{\operatorname*{Fock}}^{2}=1$. \ We now show that
the polynomial-based model of this section is \textit{equivalent} to the
linear-optics model of Section \ref{PHYSDEF}. \ As an immediate consequence,
this implies that probabilities sum to $1$ in the linear-optics model as well.

Given any pure state%
\[
\left\vert \psi\right\rangle =\sum_{S\in\Phi_{m,n}}\alpha_{S}\left\vert
S\right\rangle
\]
in $H_{m,n}$, let $P_{\left\vert \psi\right\rangle }$\ be the multivariate
polynomial defined by%
\[
P_{\left\vert \psi\right\rangle }\left(  x\right)  :=\sum_{S=\left(
s_{1},\ldots,s_{m}\right)  \in\Phi_{m,n}}\frac{\alpha_{S}x^{S}}{\sqrt
{s_{1}!\cdots s_{m}!}}.
\]
In particular, for any computational basis state $\left\vert S\right\rangle $,
we have%
\[
P_{\left\vert S\right\rangle }\left(  x\right)  =\frac{x^{S}}{\sqrt
{s_{1}!\cdots s_{m}!}}.
\]

\begin{theorem}
[Equivalence of Physical and Polynomial Definitions]\label{isomthm}$\left\vert
\psi\right\rangle \longleftrightarrow P_{\left\vert \psi\right\rangle }%
$\ defines an isomorphism between quantum states and polynomials, which
commutes with inner products and unitary transformations in the following
senses:%
\begin{align*}
\left\langle \psi|\phi\right\rangle  &  =\left\langle P_{\left\vert
\psi\right\rangle },P_{\left\vert \phi\right\rangle }\right\rangle ,\\
P_{\varphi\left(  U\right)  \left\vert \psi\right\rangle }  &  =U\left[
P_{\left\vert \psi\right\rangle }\right]  .
\end{align*}

\end{theorem}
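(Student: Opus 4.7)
My plan is to verify the two asserted commutation relations by reducing each to a computation on computational basis states, and in the unitary case, further reducing to a single $2\times 2$ optical element via Lemma~\ref{decompose}.

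First, the map $|\psi\rangle\mapsto P_{|\psi\rangle}$ is manifestly linear and sends the orthonormal basis $\{|S\rangle : S\in\Phi_{m,n}\}$ bijectively to the basis $\{x^S/\sqrt{s_1!\cdots s_m!} : S\in\Phi_{m,n}\}$ of the space of homogeneous degree-$n$ polynomials in $x_1,\ldots,x_m$, hence is a vector-space isomorphism. The inner-product identity $\langle\psi|\phi\rangle=\langle P_{|\psi\rangle},P_{|\phi\rangle}\rangle$ then reduces to a one-line calculation: writing $|\psi\rangle=\sum_S\alpha_S|S\rangle$ and $|\phi\rangle=\sum_S\beta_S|S\rangle$, the factor $s_1!\cdots s_m!$ in the definition of the Fock inner product exactly cancels the two $\sqrt{s_1!\cdots s_m!}$ factors coming from $P_{|\psi\rangle}$ and $P_{|\phi\rangle}$, leaving $\sum_S\overline{\alpha_S}\beta_S=\langle\psi|\phi\rangle$.

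The equivariance identity $P_{\varphi(U)|\psi\rangle}=U[P_{|\psi\rangle}]$ is the substantive part. Both sides are linear in $|\psi\rangle$, so it suffices to check a single basis state $|S\rangle$. Since $\varphi$ is a homomorphism by construction and the substitution map $V\mapsto V[\,\cdot\,]$ composes in a compatible fashion, Lemma~\ref{decompose} reduces the task to the case where $U$ is a single optical element acting nontrivially on two modes $i,j$. In that case the other modes are spectators on both sides---the variables $x_k$ for $k\neq i,j$ are fixed by $U[\,\cdot\,]$, and $\varphi(U)$ preserves the corresponding occupation numbers---so the claim further reduces to the case $m=2$. Writing $U$ as a $2\times 2$ matrix with entries $a,b,c,d$ and $|S\rangle=|u,v\rangle$, I would expand $U[x_1^u x_2^v/\sqrt{u!v!}]$ via the binomial theorem to get an explicit double sum in $k,\ell$ involving $\binom{u}{k}\binom{v}{\ell}$, powers of $a,b,c,d$, and $x_1^{k+\ell}x_2^{u+v-k-\ell}$, and then match the coefficient of $x_1^s x_2^t$ against $\langle s,t|\varphi(U)|u,v\rangle/\sqrt{s!t!}$ from equation~(\ref{gate}).

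The main obstacle I anticipate is bookkeeping of conventions---specifically, whether $U[\,\cdot\,]$ substitutes $x\mapsto Ux$ or $x\mapsto U^{\top}x$, and whether $\varphi(UV)$ equals $\varphi(U)\varphi(V)$ or the reverse---because these choices determine whether a single photon in mode $i$ gets mapped according to the $i$-th column or $i$-th row of $U$. Once these are pinned down consistently (checking agreement on the phase-shifter and single-photon beamsplitter examples is a cheap sanity check), the reduction to $2\times 2$ blocks is forced, and the final identity between the two binomial sums is essentially a rearrangement of equation~(\ref{gate}).
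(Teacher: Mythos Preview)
Your proposal is correct and follows essentially the same route as the paper's own proof: the inner-product identity is dispatched by the factorial cancellation on basis states, and the equivariance identity is reduced via linearity and Lemma~\ref{decompose} to a single $2\times 2$ optical element, where it is verified by binomial expansion against equation~(\ref{gate}). The bookkeeping worry you flag about row/column conventions is real but harmless here, since in the paper $\varphi(U)$ is \emph{defined} precisely so that this identity holds on $2\times 2$ blocks, and the explicit computation merely checks that equation~(\ref{gate}) records that definition correctly.
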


\begin{proof}
That $\left\langle \psi|\phi\right\rangle =\left\langle P_{\left\vert
\psi\right\rangle },P_{\left\vert \phi\right\rangle }\right\rangle $\ follows
immediately from the definitions of $P_{\left\vert \psi\right\rangle }$\ and
the Fock-space inner product. \ For $P_{\varphi\left(  U\right)  \left\vert
\psi\right\rangle }=U\left[  P_{\psi}\right]  $,\ notice that%
\begin{align*}
U\left[  P_{\left\vert \psi\right\rangle }\right]   &  =U\left[
\sum_{S=\left(  s_{1},\ldots,s_{m}\right)  \in\Phi_{m,n}}\frac{\alpha_{S}%
x^{S}}{\sqrt{s_{1}!\cdots s_{m}!}}\right] \\
&  =\sum_{S=\left(  s_{1},\ldots,s_{m}\right)  \in\Phi_{m,n}}\frac{\alpha_{S}%
}{\sqrt{s_{1}!\cdots s_{m}!}}\prod_{i=1}^{m}\left(  u_{i1}x_{1}+\cdots
+u_{im}x_{m}\right)  ^{s_{i}}.
\end{align*}
So in particular, transforming $P_{\left\vert \psi\right\rangle }$\ to
$U\left[  P_{\left\vert \psi\right\rangle }\right]  $\ simply effects a linear
transformation on the coefficients on $P_{\left\vert \psi\right\rangle }$.
\ This means that there must be \textit{some} $M\times M$\ linear
transformation $\varphi\left(  U\right)  $, depending on $U$, such that
$U\left[  P_{\left\vert \psi\right\rangle }\right]  =P_{\varphi\left(
U\right)  \left\vert \psi\right\rangle }$. \ Thus, in defining the
homomorphism $U\rightarrow\varphi\left(  U\right)  $\ in equation
(\ref{gate}), we simply chose it to yield that linear transformation. \ This
can be checked by explicit computation.\ \ By Lemma \ref{decompose}, we can
restrict attention to a $2\times2$\ unitary matrix%
\[
U=\left(
\begin{array}
[c]{cc}%
a & b\\
c & d
\end{array}
\right)  .
\]
By linearity, we can also restrict attention to the action of $\varphi\left(
U\right)  $\ on a computational basis state $\left\vert s,t\right\rangle $ (or
in the polynomial formalism, the action of $U$ on a monomial $x^{s}y^{t}$).
\ Then%
\begin{align*}
U\left[  x^{s}y^{t}\right]   &  =\left(  ax+by\right)  ^{s}\left(
cx+dy\right)  ^{t}\\
&  =\sum_{k=0}^{s}\sum_{\ell=0}^{t}\binom{s}{k}\binom{t}{\ell}a^{k}%
b^{s-k}c^{\ell}d^{t-\ell}x^{k+\ell}y^{s+t-k-\ell}\\
&  =\sum_{u+v=s+t}\sum_{k+\ell=u,~k\leq s,~\ell\leq t}\binom{s}{k}\binom
{t}{\ell}a^{k}b^{s-k}c^{\ell}d^{t-\ell}x^{u}y^{v}.
\end{align*}
Thus, inserting normalization,%
\[
U\left[  \frac{x^{s}y^{t}}{\sqrt{s!t!}}\right]  =\sum_{u+v=s+t}\left(
\sqrt{\frac{u!v!}{s!t!}}\sum_{k+\ell=u,~k\leq s,~\ell\leq t}\binom{s}{k}%
\binom{t}{\ell}a^{k}b^{s-k}c^{\ell}d^{t-\ell}\right)  \frac{x^{u}y^{v}}%
{\sqrt{u!v!}},
\]
which yields precisely the definition of $\varphi\left(  U\right)  $\ from
equation (\ref{gate}).
\end{proof}

As promised in Section \ref{PHYSDEF}, we can also show that $\varphi\left(
U\right)  $\ is unitary.

\begin{corollary}
\label{unitarycor}$\varphi\left(  U\right)  $\ is unitary.
\end{corollary}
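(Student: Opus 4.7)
The plan is to deduce unitarity of $\varphi(U)$ by chaining together the two theorems just proved: the Equivalence of Physical and Polynomial Definitions (Theorem \ref{isomthm}) and the Unitary Invariance of the Fock Inner Product (Theorem \ref{invarthm}). The isomorphism $|\psi\rangle \leftrightarrow P_{|\psi\rangle}$ translates standard Hilbert-space inner products into Fock-space inner products, the commutation relation translates the action of $\varphi(U)$ into the change-of-variables operation $U[\cdot]$ on polynomials, and unitary invariance says that this change of variables preserves the Fock inner product. Composed, these three facts force $\varphi(U)$ to preserve $\langle\cdot|\cdot\rangle$.

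Concretely, I would take arbitrary states $|\psi\rangle,|\phi\rangle \in H_{m,n}$ and compute
\[
\langle \varphi(U)\psi \mid \varphi(U)\phi\rangle
= \langle P_{\varphi(U)|\psi\rangle},\, P_{\varphi(U)|\phi\rangle}\rangle
= \langle U[P_{|\psi\rangle}],\, U[P_{|\phi\rangle}]\rangle
= \langle P_{|\psi\rangle},\, P_{|\phi\rangle}\rangle
= \langle \psi \mid \phi\rangle,
\]
where the first and fourth equalities use the isomorphism part of Theorem \ref{isomthm}, the second uses the intertwining identity $P_{\varphi(U)|\psi\rangle} = U[P_{|\psi\rangle}]$, and the third is Theorem \ref{invarthm} applied to the unitary $U$. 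This shows that $\varphi(U)$ is an isometry on $H_{m,n}$.

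Since $H_{m,n}$ is a finite-dimensional Hilbert space (of dimension $M = \binom{m+n-1}{n}$) and $\varphi(U)$ is linear by its definition via equation (\ref{gate}), any inner-product-preserving linear map on it is automatically invertible and hence unitary. That is the entire argument. There is no real obstacle here: all the substantive work (the Gaussian-integral computation of the Fock inner product, its rotational invariance, and the explicit verification that $\varphi(U)$ effects the linear change of coordinates $U$ on monomial coefficients) was already carried out in the proofs of Lemma \ref{innerprod} and Theorems \ref{invarthm} and \ref{isomthm}; the present corollary is just the clean payoff of that setup, and replaces what would otherwise be a tedious direct check of unitarity of the matrix entries defined by equation (\ref{gate}).
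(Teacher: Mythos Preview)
Your proof is correct and is essentially identical to the paper's own argument: both chain Theorem \ref{isomthm} (the isomorphism and the intertwining identity) with Theorem \ref{invarthm} to show that $\varphi(U)$ preserves inner products, hence is unitary. The paper writes the same four-step equality in the reverse direction, ending with $\langle\psi|\varphi(U)^{\dagger}\varphi(U)|\phi\rangle$, but the content is the same.
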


\begin{proof}
One definition of a unitary matrix is that it preserves inner products. \ Let
us check that this is the case for $\varphi\left(  U\right)  $. \ For all $U$,
we have%
\begin{align*}
\left\langle \psi|\phi\right\rangle  &  =\left\langle P_{\left\vert
\psi\right\rangle },P_{\left\vert \phi\right\rangle }\right\rangle \\
&  =\left\langle U\left[  P_{\left\vert \psi\right\rangle }\right]  ,U\left[
P_{\left\vert \phi\right\rangle }\right]  \right\rangle \\
&  =\left\langle P_{\varphi\left(  U\right)  \left\vert \psi\right\rangle
},P_{\varphi\left(  U\right)  \left\vert \phi\right\rangle }\right\rangle \\
&  =\left\langle \psi\right\vert \varphi\left(  U\right)  ^{\dagger}%
\varphi\left(  U\right)  \left\vert \phi\right\rangle
\end{align*}
where the second line follows from Theorem \ref{invarthm}, and all other lines
from Theorem \ref{isomthm}.
\end{proof}

\subsection{Permanent Definition\label{PERMDEF}}

This section gives a \textit{third} interpretation of the noninteracting-boson
model, which makes clear its connection to the permanent. \ Given an $n\times
n$\ matrix $A=\left(  a_{ij}\right)  \in\mathbb{C}^{n\times n}$, recall that
the permanent is%
\[
\operatorname*{Per}\left(  A\right)  =\sum_{\sigma\in S_{n}}\prod_{i=1}%
^{n}a_{i,\sigma\left(  i\right)  }.
\]
Also, given an $m\times m$\ matrix $V$, let $V_{n,n}$\ be the top-left
$n\times n$\ submatrix of $V$. \ Then the following lemma establishes a direct
connection between $\operatorname*{Per}\left(  V_{n,n}\right)  $ and the
Fock-space inner product defined in Section \ref{POLYDEF}.

\begin{lemma}
\label{perlem1}$\operatorname*{Per}\left(  V_{n,n}\right)  =\left\langle
J_{m,n},V\left[  J_{m,n}\right]  \right\rangle $ for any $m\times
m$\ matrix\ $V$.
\end{lemma}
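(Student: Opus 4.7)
My plan is to prove Lemma \ref{perlem1} by a direct unpacking of both sides and matching coefficients, with no need for any further tools beyond the definitions of $J_{m,n}$, the action $V[\cdot]$, and the Fock inner product.

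First I would expand the right-hand side. By definition, $J_{m,n}(x_1,\ldots,x_m) = x_1\cdots x_n$, so applying the linear change of variables $x \mapsto Vx$ gives
\[
V[J_{m,n}](x) \;=\; \prod_{i=1}^{n}\bigl(v_{i1}x_1 + v_{i2}x_2 + \cdots + v_{im}x_m\bigr).
\]
Now $J_{m,n}$ itself has coefficient $1$ on the monomial $x^{T}$ corresponding to $T = (1,1,\ldots,1,0,\ldots,0) \in \Phi_{m,n}$, and coefficient $0$ on all other monomials. Therefore, from the formula
\[
\langle p,q\rangle = \sum_{S=(s_1,\ldots,s_m)\in\Phi_{m,n}} \overline{a}_{S}\, b_{S}\, s_1!\cdots s_m!,
\]
only the $S=T$ term survives, and since $1!^{n}\cdot 0!^{\,m-n} = 1$, the inner product $\langle J_{m,n},\, V[J_{m,n}]\rangle$ equals exactly the coefficient of $x_1 x_2 \cdots x_n$ in $V[J_{m,n}](x)$.

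Next I would extract that coefficient combinatorially. To get the monomial $x_1 x_2 \cdots x_n$ from the product $\prod_{i=1}^{n}\sum_{j=1}^{m} v_{ij} x_j$, I must select one term $v_{i,\sigma(i)} x_{\sigma(i)}$ from each factor $i$, in such a way that $\sigma$ is a bijection from $[n]$ to $[n]$; any choice that sends some $i$ to a $j>n$ or repeats an index in $[n]$ contributes a different monomial. Summing the products of coefficients over all such $\sigma$ yields
\[
\sum_{\sigma\in S_{n}}\prod_{i=1}^{n} v_{i,\sigma(i)} \;=\; \operatorname*{Per}(V_{n,n}),
\]
which matches the left-hand side and completes the argument.

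There is no real obstacle here; the only thing to be careful about is confirming that the truncation to the top-left $n\times n$ block $V_{n,n}$ is automatic, since forcing the output monomial to be $x_1\cdots x_n$ means only the first $n$ columns of $V$ can appear, while restricting to the first $n$ factors of the product uses only the first $n$ rows. The lemma is really the statement that the permanent is exactly the multilinear coefficient extraction underlying the Fock inner product, and this will be useful later because Theorems \ref{invarthm} and \ref{isomthm} then let one transport such permanent expressions between different choices of initial and final basis states.
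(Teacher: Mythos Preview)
Your proof is correct and follows essentially the same approach as the paper's: expand $V[J_{m,n}]$ as the product $\prod_{i=1}^n (v_{i1}x_1+\cdots+v_{im}x_m)$, observe that the Fock inner product with $J_{m,n}$ picks out the coefficient of $x_1\cdots x_n$, and identify that coefficient as $\operatorname*{Per}(V_{n,n})$. You have simply added a bit more detail about why the factorial weight is $1$ and why only the top-left block contributes, which is fine.
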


\begin{proof}
By definition,%
\[
V\left[  J_{m,n}\right]  =\prod_{i=1}^{n}\left(  v_{i1}x_{1}+\cdots
+v_{im}x_{m}\right)  .
\]
Then $\left\langle J_{m,n},V\left[  J_{m,n}\right]  \right\rangle $\ is just
the coefficient of $J_{m,n}=x_{1}\cdots x_{n}$\ in the above polynomial.
\ This coefficient can be calculated as%
\[
\sum_{\sigma\in S_{n}}\prod_{i=1}^{n}v_{i,\sigma\left(  i\right)
}=\operatorname*{Per}\left(  V_{n,n}\right)  .
\]

\end{proof}

Combining Lemma \ref{perlem1}\ with Theorem \ref{invarthm2}, we immediately
obtain the following:

\begin{corollary}
\label{percor}$\operatorname*{Per}\left(  \left(  V^{\dagger}W\right)
_{n,n}\right)  =\left\langle V\left[  J_{m,n}\right]  ,W\left[  J_{m,n}%
\right]  \right\rangle $ for any two matrices $V,W\in\mathbb{C}^{m\times m}$.
\end{corollary}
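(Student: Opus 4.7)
The plan is a short three-line computation that combines Lemma~\ref{perlem1} with Theorem~\ref{invarthm2}, treating $V^{\dagger}W$ as a single $m\times m$ matrix and then factoring it inside the inner product. First I would apply Lemma~\ref{perlem1} to the matrix $M := V^{\dagger}W$, which immediately yields
\[
\operatorname*{Per}\!\bigl((V^{\dagger}W)_{n,n}\bigr) \;=\; \bigl\langle J_{m,n},\, (V^{\dagger}W)[J_{m,n}]\bigr\rangle.
\]
Note this step uses the fact that Lemma~\ref{perlem1} is stated for arbitrary $m\times m$ matrices, not just unitary ones, so it legitimately applies to the (generally non-unitary) product $V^{\dagger}W$.

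Next, because the substitution operation $L[\cdot]$ respects composition of linear changes of variable, the polynomial $(V^{\dagger}W)[J_{m,n}]$ equals $V^{\dagger}\!\bigl[W[J_{m,n}]\bigr]$; this is immediate from the definition of $L[p]$ as a substitution. Now I would invoke Theorem~\ref{invarthm2} with $L = V^{\dagger}$, $p = J_{m,n}$, and $q = W[J_{m,n}]$; using $(V^{\dagger})^{\dagger} = V$, this gives
\[
\bigl\langle J_{m,n},\, V^{\dagger}[W[J_{m,n}]]\bigr\rangle \;=\; \bigl\langle V[J_{m,n}],\, W[J_{m,n}]\bigr\rangle.
\]
Chaining the two displayed equalities yields exactly the corollary.

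There is no real obstacle here: the entire proof is a two-step rewriting inside the Fock inner product, relying only on previously established facts. The one point worth checking carefully is the composition identity $(V^{\dagger}W)[p] = V^{\dagger}[W[p]]$ used in the middle step, but this is built into how the paper has set up the $L[\cdot]$ notation and requires no new argument beyond unwinding the definition.
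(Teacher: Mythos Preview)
Your proof is correct and is exactly the paper's argument: the paper's proof is the single displayed line
\[
\operatorname*{Per}\bigl((V^{\dagger}W)_{n,n}\bigr)=\bigl\langle J_{m,n},\,V^{\dagger}W[J_{m,n}]\bigr\rangle=\bigl\langle V[J_{m,n}],\,W[J_{m,n}]\bigr\rangle,
\]
which is precisely your application of Lemma~\ref{perlem1} followed by Theorem~\ref{invarthm2}.
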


\begin{proof}%
\[
\operatorname*{Per}\left(  \left(  V^{\dagger}W\right)  _{n,n}\right)
=\left\langle J_{m,n},V^{\dagger}W\left[  J_{m,n}\right]  \right\rangle
=\left\langle V\left[  J_{m,n}\right]  ,W\left[  J_{m,n}\right]  \right\rangle
.
\]

\end{proof}

Now let $U$\ be any $m\times m$\ unitary matrix, and let $S=\left(
s_{1},\ldots,s_{m}\right)  $\ and $T=\left(  t_{1},\ldots,t_{m}\right)  $ be
any two computational basis states (that is, elements of $\Phi_{m,n}$). \ Then
we define an $n\times n$\ matrix $U_{S,T}$\ in the following manner. \ First
form an $m\times n$\ matrix $U_{T}$ by taking $t_{j}$\ copies of the $j^{th}%
$\ column of $U$,\ for each $j\in\left[  m\right]  $. \ Then form the $n\times
n$\ matrix\ $U_{S,T}$ by taking $s_{i}$\ copies of the $i^{th}$\ row of
$U_{T}$, for each $i\in\left[  m\right]  $. \ As an example, suppose%
\[
U=\left(
\begin{array}
[c]{ccc}%
0 & 1 & 0\\
1 & 0 & 0\\
0 & 0 & -1
\end{array}
\right)
\]
and $S=T=\left(  0,1,2\right)  $. \ Then%
\[
U_{S,T}=\left(
\begin{array}
[c]{ccc}%
0 & 0 & 0\\
0 & -1 & -1\\
0 & -1 & -1
\end{array}
\right)  .
\]
Note that if the $s_{i}$'s and $t_{j}$'s are all $0$ or $1$, then $U_{S,T}%
$\ is simply an $n\times n$ submatrix of $U$. \ If some $s_{i}$'s or $t_{j}$'s
are greater than $1$, then $U_{S,T}$\ is like\ a submatrix of $U$, but with
repeated rows and/or columns.

Here is an alternative way to define $U_{S,T}$. \ Given any $S\in\Phi_{m,n}$,
let $I_{S}$\ be a linear substitution of variables, which maps the variables
$x_{1},\ldots,x_{s_{1}}$\ to $x_{1}$, the variables $x_{s_{1}+1}%
,\ldots,x_{s_{1}+s_{2}}$\ to $x_{2}$,\ and so on, so that $I_{S}\left[
x_{1}\cdots x_{n}\right]  =x_{1}^{s_{1}}\cdots x_{m}^{s_{m}}$. \ (If $i>n$,
then $I_{S}\left[  x_{i}\right]  =0$.) \ Then one can check that%
\[
U_{S,T}=\left(  I_{S}^{\dagger}UI_{T}\right)  _{n,n}.
\]
(Note also that $\varphi\left(  I_{S}\right)  \left\vert 1_{n}\right\rangle
=\left\vert S\right\rangle $.)

\begin{theorem}
[Equivalence of All Three Definitions]\label{perust}For all $m\times
m$\ unitaries $U$ and basis states $S,T\in\Phi_{m,n}$,%
\[
\operatorname*{Per}\left(  U_{S,T}\right)  =\left\langle x^{S},U\left[
x^{T}\right]  \right\rangle =\left\langle S|\varphi\left(  U\right)
|T\right\rangle \sqrt{s_{1}!\cdots s_{m}!t_{1}!\cdots t_{m}!}%
\]

\end{theorem}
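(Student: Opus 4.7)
The plan is to prove the two equalities separately, leveraging Corollary~\ref{percor} for the first and Theorem~\ref{isomthm} for the second, with the alternative description $U_{S,T} = (I_S^\dagger U I_T)_{n,n}$ serving as the bridge.

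For the first equality $\operatorname{Per}(U_{S,T}) = \langle x^S, U[x^T]\rangle$, I would start from the identity $U_{S,T} = (I_S^\dagger U I_T)_{n,n}$ and apply Corollary~\ref{percor} with $V := I_S$ and $W := U I_T$. This gives $\operatorname{Per}((I_S^\dagger U I_T)_{n,n}) = \langle I_S[J_{m,n}], (UI_T)[J_{m,n}]\rangle$. By the definition of the substitution $I_S$, we have $I_S[J_{m,n}] = I_S[x_1 \cdots x_n] = x^S$, and likewise $I_T[J_{m,n}] = x^T$, so $(UI_T)[J_{m,n}] = U[x^T]$. Combining these yields the claimed identity. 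The only thing to double-check here is that $L \mapsto L[\cdot]$ composes contravariantly in the way Corollary~\ref{percor} requires, which is immediate from the definition of applying a linear change of variables.

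For the second equality $\langle x^S, U[x^T]\rangle = \langle S|\varphi(U)|T\rangle \sqrt{s_1!\cdots s_m! t_1!\cdots t_m!}$, I would pass through the isomorphism of Theorem~\ref{isomthm}. From the definition $P_{|S\rangle}(x) = x^S/\sqrt{s_1!\cdots s_m!}$, we can write $x^S = \sqrt{s_1!\cdots s_m!}\, P_{|S\rangle}$ and $x^T = \sqrt{t_1!\cdots t_m!}\, P_{|T\rangle}$. Then, using the intertwining relation $U[P_{|T\rangle}] = P_{\varphi(U)|T\rangle}$ from Theorem~\ref{isomthm}, bilinearity of the Fock inner product gives
\[
\langle x^S, U[x^T]\rangle = \sqrt{s_1!\cdots s_m! t_1!\cdots t_m!}\,\langle P_{|S\rangle}, P_{\varphi(U)|T\rangle}\rangle.
\]
Finally, the first part of Theorem~\ref{isomthm} identifies $\langle P_{|S\rangle}, P_{\varphi(U)|T\rangle}\rangle$ with $\langle S|\varphi(U)|T\rangle$, yielding the desired formula.

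There is no real obstacle beyond bookkeeping: both equalities are essentially unwindings of definitions once Corollary~\ref{percor} and Theorem~\ref{isomthm} are in hand. The one subtle point worth stating carefully is the identity $U_{S,T} = (I_S^\dagger U I_T)_{n,n}$, which encodes the combinatorial operation of ``repeating rows/columns according to $S$ and $T$'' as a composition of linear substitutions; verifying it reduces to checking that $I_S$ sends $x_i$ (for $i \le n$) to the variable representing whichever mode the $i$-th ``slot'' of $S$ occupies, so that $I_S^\dagger U I_T$ picks out exactly the entries of $U$ that appear (with multiplicity) in $U_{S,T}$.
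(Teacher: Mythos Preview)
Your proposal is correct and matches the paper's proof essentially line for line: the paper likewise deduces the first equality from Corollary~\ref{percor} via $U_{S,T}=(I_S^\dagger U I_T)_{n,n}$ and $I_S[J_{m,n}]=x^S$, and the second equality from Theorem~\ref{isomthm} by rewriting $x^S,x^T$ in terms of $P_{|S\rangle},P_{|T\rangle}$.
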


\begin{proof}
For the first equality, from Corollary \ref{percor} we have%
\begin{align*}
\left\langle x^{S},U\left[  x^{T}\right]  \right\rangle  &  =\left\langle
I_{S}\left[  J_{m,n}\right]  ,UI_{T}\left[  J_{m,n}\right]  \right\rangle \\
&  =\operatorname*{Per}\left(  \left(  I_{S}^{\dagger}UI_{T}\right)
_{n,n}\right) \\
&  =\operatorname*{Per}\left(  U_{S,T}\right)  .
\end{align*}
For the second equality, from Theorem \ref{isomthm} we have%
\begin{align*}
\left\langle S|\varphi\left(  U\right)  |T\right\rangle  &  =\left\langle
P_{\left\vert S\right\rangle },P_{\varphi\left(  U\right)  \left\vert
T\right\rangle }\right\rangle \\
&  =\left\langle P_{\left\vert S\right\rangle },U\left[  P_{\left\vert
T\right\rangle }\right]  \right\rangle \\
&  =\frac{\left\langle x^{S},U\left[  x^{T}\right]  \right\rangle }%
{\sqrt{s_{1}!\cdots s_{m}!t_{1}!\cdots t_{m}!}}.
\end{align*}

\end{proof}

\subsection{Bosonic Complexity Theory\label{BCT}}

Having presented the noninteracting-boson model from three perspectives, we
are finally ready to define \textsc{BosonSampling}, the central computational
problem considered in this work. \ The input to the problem will be an
$m\times n$\ column-orthonormal matrix $A\in\mathcal{U}_{m,n}$.\footnote{Here
we assume each entry of $A$ is represented in binary, so that it has the form
$\left(  x+yi\right)  /2^{p\left(  n\right)  }$, where $x$ and $y$ are
integers and $p$ is some fixed polynomial. \ As a consequence, $A$ might not
be \textit{exactly} column-orthonormal---but as long as $A^{\dag}A$\ is
exponentially close to the identity, $A$\ can easily be \textquotedblleft
corrected\textquotedblright\ to an element of $\mathcal{U}_{m,n}$\ using
Gram-Schmidt orthogonalization. \ Furthermore, it is not hard to show that
every element of $\mathcal{U}_{m,n}$\ can be approximated in this manner.
\ See for example Aaronson\ \cite{aar:bf} for a detailed error analysis.}
\ Given $A$, together with a basis state $S\in\Phi_{m,n}$---that is, a list
$S=\left(  s_{1},\ldots,s_{m}\right)  $\ of nonnegative integers, satisfying
$s_{1}+\cdots+s_{m}=n$---let $A_{S}$\ be the $n\times n$\ matrix obtained by
taking $s_{i}$\ copies of the $i^{th}$\ row of $A$, for all $i\in\left[
m\right]  $. \ Then let $\mathcal{D}_{A}$\ be the probability distribution
over $\Phi_{m,n}$ defined as follows:%
\[
\Pr_{\mathcal{D}_{A}}\left[  S\right]  =\frac{\left\vert \operatorname*{Per}%
\left(  A_{S}\right)  \right\vert ^{2}}{s_{1}!\cdots s_{m}!}.
\]
(Theorem \ref{perust} implies that $\mathcal{D}_{A}$\ is indeed a probability
distribution, for every $A\in\mathcal{U}_{m,n}$.) \ The goal of
\textsc{BosonSampling} is to sample either exactly or approximately from
$\mathcal{D}_{A}$, given $A$ as input.

Of course, we \textit{also} could have defined $\mathcal{D}_{A}$\ as the
distribution over $\Phi_{m,n}$\ obtained by first completing $A$ to any
$m\times m$\ unitary matrix $U$, then measuring the quantum state
$\varphi\left(  U\right)  \left\vert 1_{n}\right\rangle $\ in the
computational basis. \ Or we could have defined $\mathcal{D}_{A}$\ as the
distribution obtained by first applying the linear change of variables $U$ to
the polynomial $x_{1}\cdots x_{n}$ (where again $U$ is any $m\times
m$\ unitary completion of $A$), to obtain a new $m$-variable polynomial%
\[
U\left[  x_{1}\cdots x_{n}\right]  =\sum_{S\in\Phi_{m,n}}\alpha_{S}x^{S},
\]
and then letting%
\[
\Pr_{\mathcal{D}_{A}}\left[  S\right]  =\left\vert \alpha_{S}\right\vert
^{2}s_{1}!\cdots s_{m}!=\frac{\left\vert \left\langle x^{S},U\left[
x_{1}\cdots x_{n}\right]  \right\rangle \right\vert ^{2}}{s_{1}!\cdots s_{m}%
!}.
\]
For most of the paper, though, we will find it most convenient to use the
definition of $\mathcal{D}_{A}$\ in terms of permanents.

Besides the \textsc{BosonSampling}\ problem, we will also need the concept of
an exact or approximate\ \textsc{BosonSampling}\ \textit{oracle}.
\ Intuitively, a \textsc{BosonSampling}\ oracle is simply an oracle
$\mathcal{O}$\ that solves the \textsc{BosonSampling}\ problem: that is,
$\mathcal{O}$ takes as input a matrix $A\in\mathcal{U}_{m,n}$, and outputs a
sample from $\mathcal{D}_{A}$. \ However, there is a subtlety, arising from
the fact that $\mathcal{O}$\ is an oracle for a \textit{sampling} problem.
\ Namely, it is essential that\ $\mathcal{O}$'s \textit{only} source of random
randomness be a string $r\in\left\{  0,1\right\}  ^{\operatorname*{poly}%
\left(  n\right)  }$ that is also given to $\mathcal{O}$\ as input. \ In other
words, if we fix $r$, then $\mathcal{O}\left(  A,r\right)  $\ must be
deterministic, just like a conventional oracle that decides a language. \ Of
course, if $\mathcal{O}$\ were implemented by a classical algorithm, this
requirement would be trivial to satisfy.

More formally:

\begin{definition}
[\textsc{BosonSampling} oracle]\label{bosonoracle}Let $\mathcal{O}$ be an
oracle that takes as input a string $r\in\left\{  0,1\right\}
^{\operatorname*{poly}\left(  n\right)  }$, an $m\times n$\ matrix\ $A\in
\mathcal{U}_{m,n}$, and an error bound $\varepsilon>0$\ encoded as
$0^{1/\varepsilon}$. \ Also, let $\mathcal{D}_{\mathcal{O}}\left(
A,\varepsilon\right)  $\ be the distribution over outputs of $\mathcal{O}$\ if
$A$\ and $\varepsilon$\ are fixed but $r$\ is uniformly random. \ We call
$\mathcal{O}$\ an exact \textsc{BosonSampling}\ oracle if $\mathcal{D}%
_{\mathcal{O}}\left(  A,\varepsilon\right)  =\mathcal{D}_{A}$ for
all\ $A\in\mathcal{U}_{m,n}$. \ Also, we call $\mathcal{O}$\ an approximate
\textsc{BosonSampling}\ oracle if $\left\Vert \mathcal{D}_{\mathcal{O}}\left(
A,\varepsilon\right)  -\mathcal{D}_{A}\right\Vert \leq\varepsilon$\ for all
$A\in\mathcal{U}_{m,n}$\ and $\varepsilon>0$.
\end{definition}

If we like, we can define the complexity class $\mathsf{BosonFP}$, to be the
set of search problems $R=\left(  B_{x}\right)  _{x\in\left\{  0,1\right\}
^{\ast}}$\ that are in $\mathsf{FBPP}^{\mathcal{O}}$\ for every exact
\textsc{BosonSampling}\ oracle\ $\mathcal{O}$. \ We can also define
$\mathsf{BosonFP}_{\varepsilon}$\ to be the set of search problems that are in
$\mathsf{FBPP}^{\mathcal{O}}$\ for every approximate \textsc{BosonSampling}%
\ oracle\ $\mathcal{O}$. \ We then have the following basic inclusions:

\begin{theorem}
\label{infbqp}$\mathsf{FBPP}\subseteq\mathsf{BosonFP}_{\varepsilon
}=\mathsf{BosonFP}\subseteq\mathsf{FBQP}$.
\end{theorem}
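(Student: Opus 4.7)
The plan is to establish the chain $\mathsf{FBPP}\subseteq\mathsf{BosonFP}_{\varepsilon}=\mathsf{BosonFP}\subseteq\mathsf{FBQP}$ by four separate arguments, of which only two carry real content. The two easy inclusions are $\mathsf{FBPP}\subseteq\mathsf{BosonFP}_{\varepsilon}$ and $\mathsf{BosonFP}_{\varepsilon}\subseteq\mathsf{BosonFP}$. For the first, an $\mathsf{FBPP}$ algorithm witnesses membership in $\mathsf{FBPP}^{\mathcal{O}}$ for every oracle $\mathcal{O}$ by simply ignoring it. For the second, every exact \textsc{BosonSampling}\ oracle trivially satisfies the approximate-oracle condition with deviation $0$, so any algorithm that succeeds against every approximate oracle also succeeds against every exact one.

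For the nontrivial direction $\mathsf{BosonFP}\subseteq\mathsf{BosonFP}_{\varepsilon}$, I would use a query-by-query coupling. Suppose $M$ solves $R$ with error $\delta/2$ using $q=q(n,1/\delta)$ queries to any exact \textsc{BosonSampling}\ oracle. I would build $M'$ that runs $M$ verbatim, except that each query with matrix $A_i$ is sent to the approximate oracle together with the error bound $\varepsilon_i := \delta/(4q)$. Coupling exact and approximate responses query by query and union-bounding over the $q$ queries gives a joint distribution under which the two executions produce identical transcripts with probability at least $1-\delta/4$, yielding total success probability at least $1-\delta/2-\delta/4 > 1-\delta$.

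For $\mathsf{BosonFP}\subseteq\mathsf{FBQP}$, my plan is to show that a standard qubit quantum computer can simulate an exact \textsc{BosonSampling}\ oracle to within any inverse-polynomial total variation error, then compose this subroutine with the underlying $\mathsf{BPP}$ oracle algorithm. Given $A\in\mathcal{U}_{m,n}$, the simulator would complete $A$ to an $m\times m$ unitary $U$ by Gram--Schmidt, invoke Lemma \ref{decompose} to write $U=U_T\cdots U_1$ as a product of $T=O(m^2)$ optical elements, encode $H_{m,n}$ into $O(m\log n)$ qubits using one register of $\lceil\log_2(n+1)\rceil$ qubits per mode, prepare $\left\vert 1_n\right\rangle$, implement each $\varphi(U_t)$ as a quantum circuit on the two affected registers according to formula (\ref{gate}), and finally measure the registers. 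Since each $\varphi(U_t)$ is a block-diagonal $(n+1)^2\times(n+1)^2$ unitary on $O(\log n)$ qubits, Solovay--Kitaev yields a polynomial-size circuit for it to any inverse-polynomial precision. Running the $\mathsf{BPP}$ algorithm with target error $\varepsilon/2$ and simulating each of its polynomially many queries to per-query error at most $\varepsilon/(2q)$ then yields an $\mathsf{FBQP}$ algorithm with total error at most $\varepsilon$.

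The chief obstacle lies in this last inclusion: one must check that the qubit encoding admits a polynomial-size quantum circuit that implements $\varphi(U_t)$ faithfully on the physical subspace of states whose occupation numbers sum to $n$, and that the accumulated Solovay--Kitaev and coupling errors across all optical elements and all oracle queries can be driven below the target $\varepsilon$. Once this bookkeeping is in place, the result follows immediately from the equivalence between the physical and polynomial descriptions of the noninteracting-boson model established in Section \ref{PHYSDEF}.
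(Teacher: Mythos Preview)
Your proposal is correct and follows essentially the same approach as the paper: the two trivial inclusions are handled identically, the equality $\mathsf{BosonFP}=\mathsf{BosonFP}_{\varepsilon}$ via a per-query error budget of order $\delta/q$ matches the paper's argument (the paper phrases it as simulating a ``uniformly random'' exact oracle, which amounts to the same coupling you describe), and your plan for $\mathsf{BosonFP}\subseteq\mathsf{FBQP}$---encode occupation numbers in $O(\log n)$ qubits per mode, decompose $U$ into $O(m^2)$ two-mode optical elements via Lemma \ref{decompose}, implement each $\varphi(U_t)$ as an $O(n^2)\times O(n^2)$ unitary on two registers via Solovay--Kitaev---is exactly the paper's construction. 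Your added detail about Gram--Schmidt completion of $A$ and the explicit error bookkeeping are fine and do not diverge from the paper's route.
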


\begin{proof}
For $\mathsf{FBPP}\subseteq\mathsf{BosonFP}_{\varepsilon}$, just ignore the
\textsc{BosonSampling} oracle. \ For $\mathsf{BosonFP}_{\varepsilon}%
\subseteq\mathsf{BosonFP}$, note that any exact \textsc{BosonSampling}\ oracle
is also an $\varepsilon$-approximate one for every $\varepsilon$. \ For the
other direction, $\mathsf{BosonFP}\subseteq\mathsf{BosonFP}_{\varepsilon}$,
let $M$\ be a $\mathsf{BosonFP}$ machine, and let $\mathcal{O}$\ be $M$'s
exact \textsc{BosonSampling}\ oracle. \ Since $M$ has to work for
\textit{every} $\mathcal{O}$, we can assume without loss of generality that
$\mathcal{O}$\ is chosen uniformly at random, consistent with the requirement
that $\mathcal{D}_{\mathcal{O}}\left(  A\right)  =\mathcal{D}_{A}$\ for every
$A$. \ We claim that we can simulate $\mathcal{O}$ to sufficient accuracy
using an \textit{approximate} \textsc{BosonSampling}\ oracle. \ To do so, we
simply choose $\varepsilon\ll\delta/p\left(  n\right)  $, where $p\left(
n\right)  $\ is an upper bound on the number of queries to $\mathcal{O}$\ made
by $M$, and $\delta$\ is the desired failure probability of $M$.

For $\mathsf{BosonFP}\subseteq\mathsf{FBQP}$, we use an old observation of
Feynman \cite{feynman:qc}\ and Abrams and Lloyd \cite{al:fermi}: that
fermionic and bosonic systems can be simulated efficiently on a standard
quantum computer. \ In more detail, our quantum computer's state at any time
step will have the form%
\[
\left\vert \psi\right\rangle =\sum_{\left(  s_{1},\ldots,s_{m}\right)  \in
\Phi_{m,n}}\alpha_{s_{1},\ldots,s_{m}}\left\vert s_{1},\ldots,s_{m}%
\right\rangle .
\]
That is, we simply encode each occupation number $0\leq s_{i}\leq n$ in binary
using $\left\lceil \log_{2}n\right\rceil $\ qubits. \ (Thus, the total number
of qubits in our simulation is $m\left\lceil \log_{2}n\right\rceil $.) \ To
initialize, we prepare the state $\left\vert 1_{n}\right\rangle =\left\vert
1,\ldots,1,0,\ldots,0\right\rangle $; to measure, we measure in the
computational basis. \ As for simulating an optical element: recall that such
an element acts nontrivially only on two modes $i$ and $j$, and hence on
$2\left\lceil \log_{2}n\right\rceil $\ qubits. \ So we can describe an optical
element by an\ $O\left(  n^{2}\right)  \times O\left(  n^{2}\right)
$\ unitary matrix $U$---and\ furthermore, we gave an explicit formula
(\ref{gate}) for the entries of $U$. \ It follows immediately, from the
Solovay-Kitaev Theorem (see \cite{nc}), that we can simulate $U$ with error
$\varepsilon$, using $\operatorname*{poly}\left(  n,\log1/\varepsilon\right)
$\ qubit gates. \ Therefore an $\mathsf{FBQP}$\ machine can simulate each call
that a $\mathsf{BosonFP}$\ machine makes to the \textsc{BosonSampling} oracle.
\end{proof}

\section{\label{WARMUPSEC}Efficient Classical Simulation of Linear Optics
Collapses \textsf{PH}}

In this section we prove Theorem \ref{warmup}, our hardness result for exact
\textsc{BosonSampling}. \ First, in Section \ref{BASIC}, we prove that
$\mathsf{P}^{\mathsf{\#P}}\subseteq\mathsf{BPP}^{\mathsf{NP}^{\mathcal{O}}}$,
where $\mathcal{O}$\ is any exact \textsc{BosonSampling}\ oracle. \ In
particular, this implies that, if there exists a polynomial-time classical
algorithm for exact \textsc{BosonSampling}, then $\mathsf{P}^{\mathsf{\#P}%
}=\mathsf{BPP}^{\mathsf{NP}}$\ and hence the polynomial hierarchy collapses to
the third level. \ The proof in Section \ref{BASIC}\ directly exploits the
fact that boson amplitudes are given by the permanents of complex matrices
$X\in\mathbb{C}^{n\times n}$, and that approximating $\operatorname*{Per}%
\left(  X\right)  $\ given such an $X$ is $\mathsf{\#P}$-complete. \ The main
lemma we need to prove is simply that approximating $\left\vert
\operatorname*{Per}\left(  X\right)  \right\vert ^{2}$\ is \textit{also}
$\mathsf{\#P}$-complete. \ Next, in Section \ref{ALTPROOF}, we give a
completely different proof of Theorem \ref{warmup}. \ This proof repurposes
two existing results in quantum computation: the scheme for universal quantum
computing with \textit{adaptive} linear optics due to Knill, Laflamme, and
Milburn \cite{klm}, and the $\mathsf{PostBQP}=\mathsf{PP}$\ theorem of
Aaronson \cite{aar:pp}. \ Finally, in Section \ref{STRONGER}, we observe two
improvements to the basic result.

\subsection{Basic Result\label{BASIC}}

First, we will need a classic result of Stockmeyer \cite{stockmeyer}.

\begin{theorem}
[Stockmeyer \cite{stockmeyer}]\label{approxcount}Given a Boolean function
$f:\left\{  0,1\right\}  ^{n}\rightarrow\left\{  0,1\right\}  $, let%
\[
p=\Pr_{x\in\left\{  0,1\right\}  ^{n}}\left[  f\left(  x\right)  =1\right]
=\frac{1}{2^{n}}\sum_{x\in\left\{  0,1\right\}  ^{n}}f\left(  x\right)  .
\]
Then for all $g\geq1+\frac{1}{\operatorname*{poly}\left(  n\right)  }$, there
exists an $\mathsf{FBPP}^{\mathsf{NP}^{f}}$\ machine that approximates $p$\ to
within a multiplicative factor of $g$.
\end{theorem}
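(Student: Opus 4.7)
Write $N := |f^{-1}(1)| = 2^n p$, so that multiplicatively approximating $N$ is equivalent to multiplicatively approximating $p$. The approach is classical hashing-based approximate counting (Stockmeyer's original argument via Sipser-style hashing). For each integer $k \in \{0,1,\ldots,n\}$, fix a pairwise-independent family $\mathcal{H}_k$ of hash functions $h:\{0,1\}^n\to\{0,1\}^k$, and for $h\in\mathcal{H}_k$ define the random variable $X_h := |\{x : f(x)=1,\ h(x)=0^k\}|$. Pairwise independence yields $\mathbb{E}[X_h] = N/2^k$ and $\operatorname{Var}(X_h) \le N/2^k$.

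The crucial primitive is that, for a fixed $h$, the integer $X_h$ can be pinned down \emph{exactly} using the $\mathsf{NP}^f$ oracle: the predicate ``$X_h \ge t$'' is witnessed by a list of $t$ distinct $x_i$'s with $f(x_i)=1$ and $h(x_i)=0^k$, which is directly in $\mathsf{NP}^f$. Binary-searching on $t$ therefore computes $X_h$ using $O(\log X_h)$ oracle queries, as long as $X_h$ stays polynomially bounded.

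With this primitive in hand the algorithm has two phases. \emph{Phase 1 (locate the right $k$).} Iterate over $k\in\{0,\dots,n\}$; we seek the unique $k^\ast$ such that $N/2^{k^\ast}$ lies in a target window $[T,2T]$, where $T = \Theta(1/\varepsilon^2)$ is polynomial in $n$ and in the desired inverse multiplicative error $1/\varepsilon$ (where $g=1+\varepsilon$). We detect $k^\ast$ by drawing a small constant number of random $h \in \mathcal{H}_k$, computing $X_h$ exactly via the subroutine above, and testing whether the values cluster inside the window. Chebyshev ensures that for $k=k^\ast$ this test passes with high probability, and that for every $k$ whose expected value is either $\ll T$ or $\gg 2T$ the test fails, so $k^\ast$ is identified unambiguously. \emph{Phase 2 (estimate).} At $k=k^\ast$, draw $R = \Theta(1/\varepsilon^2)$ independent hashes $h_1,\dots,h_R$, compute each $X_{h_i}$ via the oracle subroutine, and output $\hat p := 2^{k^\ast-n}\cdot (1/R)\sum_i X_{h_i}$. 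Since the $X_{h_i}$ are i.i.d.\ with mean $N/2^{k^\ast}\in[T,2T]$ and variance at most $N/2^{k^\ast}$, Chebyshev applied to the sample mean gives multiplicative error at most $\varepsilon$ with constant probability, and a standard median-of-$O(\log 1/\delta)$-trials amplification drives the failure probability below any desired $\delta$.

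The principal obstacle is balancing the parameter $T$. It must be large enough that Chebyshev delivers the targeted $(1\pm\varepsilon)$ multiplicative concentration for both the window test in Phase 1 and the final averaging in Phase 2; yet it must stay polynomial so that the binary search that extracts $X_h$ runs in polynomial time and the $\mathsf{NP}^f$-oracle queries remain of polynomial size. Taking $T$ polynomial in $n$ and $1/\varepsilon$ satisfies both constraints simultaneously, and since every operation is implementable by a probabilistic polynomial-time machine with access to the $\mathsf{NP}^f$ oracle, the resulting procedure lives in $\mathsf{FBPP}^{\mathsf{NP}^f}$ as claimed.
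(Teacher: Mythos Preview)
The paper does not actually prove this theorem: it is stated as a classic result of Stockmeyer and simply cited. So there is no ``paper's own proof'' to compare against.

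Your sketch is the standard Sipser--Stockmeyer hashing argument and is essentially correct. Two small points worth tightening. First, in Phase~1 you say you will ``compute $X_h$ exactly'' for every $k$, but for $k$ below $k^\ast$ the expected value $N/2^k$ can be exponential, making the $\mathsf{NP}^f$ witness (a list of $t$ distinct satisfying strings) too long. The usual fix is either to iterate $k$ from $n$ downward, so $X_h$ starts small and you stop the first time it enters the window, or to replace ``compute $X_h$ exactly'' by the threshold test ``$X_h > cT$?'' for a fixed constant $c$, whose witness is always of polynomial size. Second, you should handle the edge case $N<T$ separately (then $k^\ast$ does not exist, but $N$ is already polynomially bounded and can be computed exactly with the oracle). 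Both fixes are routine, and with them the argument goes through cleanly in $\mathsf{FBPP}^{\mathsf{NP}^f}$.
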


Intuitively, Theorem \ref{approxcount}\ says that a $\mathsf{BPP}%
^{\mathsf{NP}}$\ machine can always estimate the probability $p$ that a
polynomial-time randomized algorithm accepts to within a
$1/\operatorname*{poly}\left(  n\right)  $\ multiplicative factor, even if $p$
is exponentially small. \ Note that Theorem \ref{approxcount} does
\textit{not} generalize to estimating the probability that a quantum algorithm
accepts, since the randomness is \textquotedblleft built in\textquotedblright%
\ to a quantum algorithm, and the $\mathsf{BPP}^{\mathsf{NP}}$\ machine does
not get to choose or control it.

Another interpretation of Theorem \ref{approxcount}\ is that any counting
problem that involves \textit{estimating the sum of }$2^{n}$\textit{
nonnegative real numbers}\footnote{Strictly speaking, Theorem
\ref{approxcount} talks about estimating the sum of $2^{n}$\ \textit{binary}
($\left\{  0,1\right\}  $-valued) numbers, but it is easy to generalize to
arbitrary nonnegative reals.}\ can be approximately solved in $\mathsf{BPP}%
^{\mathsf{NP}}$.

By contrast, if a counting problem involves estimating a sum of \textit{both
positive and negative numbers}---for example, if one wanted to approximate
$\operatorname*{E}_{x\in\left\{  0,1\right\}  ^{n}}\left[  f\left(  x\right)
\right]  $, for some function $f:\left\{  0,1\right\}  ^{n}\rightarrow\left\{
-1,1\right\}  $---then the situation is completely different. \ In that case,
it is easy to show that even multiplicative approximation is $\mathsf{\#P}%
$-hard, and hence unlikely to be in $\mathsf{FBPP}^{\mathsf{NP}}$.

We will show this phenomenon in the special case of the permanent. \ If $X$ is
a non-negative matrix,\ then Jerrum, Sinclair, and Vigoda \cite{jsv}\ famously
showed that one can approximate $\operatorname*{Per}\left(  X\right)  $ to
within multiplicative error $\varepsilon$ in $\operatorname*{poly}\left(
n,1/\varepsilon\right)  $\ time (which improves on Theorem \ref{approxcount}%
\ by getting rid of the $\mathsf{NP}$\ oracle). \ On the other hand, let
$X\in\mathbb{R}^{n\times n}$ be an arbitrary real matrix, with both positive
and negative entries. \ Then we will show that multiplicatively approximating
$\operatorname*{Per}\left(  X\right)  ^{2}=\left\vert \operatorname*{Per}%
\left(  X\right)  \right\vert ^{2}$\ is $\mathsf{\#P}$-hard. \ The reason why
we are interested in $\left\vert \operatorname*{Per}\left(  X\right)
\right\vert ^{2}$, rather than $\operatorname*{Per}\left(  X\right)
$\ itself, is that measurement probabilities in the noninteracting-boson model
are the absolute squares of permanents.

Our starting point is a famous result of Valiant \cite{valiant}:

\begin{theorem}
[Valiant \cite{valiant}]\label{valiantthm}The following problem is
$\mathsf{\#P}$-complete: given a matrix $X\in\left\{  0,1\right\}  ^{n\times
n}$, compute $\operatorname*{Per}\left(  X\right)  $.
\end{theorem}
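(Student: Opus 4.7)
Membership in $\mathsf{\#P}$ is immediate: interpreting $X \in \{0,1\}^{n \times n}$ as the biadjacency matrix of a bipartite graph $G$, $\operatorname{Per}(X)$ is exactly the number of perfect matchings in $G$, which a nondeterministic polynomial-time machine can enumerate (guess a permutation $\sigma \in S_n$, accept iff $x_{i,\sigma(i)}=1$ for all $i$). So the real work is $\mathsf{\#P}$-hardness. I would prove this by reducing $\mathsf{\#3SAT}$ to the computation of $\operatorname{Per}(X)$ for integer matrices, and then reducing integer matrices to $\{0,1\}$-matrices.

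For the reduction from $\mathsf{\#3SAT}$, the plan is to exploit the cycle-cover interpretation of the permanent: if $X$ is the (possibly weighted) adjacency matrix of a directed graph $D$, then $\operatorname{Per}(X) = \sum_{C} \prod_{e \in C} w(e)$, where the sum runs over cycle covers $C$ of $D$ (collections of vertex-disjoint directed cycles covering every vertex). Given a 3CNF formula $\varphi$ with variables $x_1,\ldots,x_n$ and clauses $c_1,\ldots,c_m$, I would build a weighted digraph $D_\varphi$ out of three kinds of local gadgets: (i) a \emph{variable gadget} for each $x_i$ with exactly two cycle covers, one encoding $x_i=\text{true}$ and the other $x_i=\text{false}$; (ii) a \emph{clause gadget} for each $c_j$ with a structure that contributes $0$ to the cycle-cover weight precisely when no literal in $c_j$ is satisfied; and (iii) an \emph{XOR (equality) gadget} on pairs of edges, which gives weight $0$ unless its two distinguished edges are both used or both unused, and contributes a known constant factor otherwise. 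Wiring a literal's edge in the variable gadget to the corresponding edge in the clause gadget via an XOR gadget forces consistency. The standard XOR gadget is a small matrix with entries in $\{-1,0,1,2,3\}$ whose permanent is designed to evaluate to a fixed nonzero constant on consistent configurations and $0$ otherwise; concatenating these gadgets gives $\operatorname{Per}(X_\varphi) = K^m \cdot \#\varphi$ for an explicit constant $K$, from which $\#\varphi$ is recoverable.

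The last step is to remove the negative and larger entries, converting $X_\varphi$ into a $\{0,1\}$-matrix. The plan is the standard modular trick: since the entries of $X_\varphi$ are bounded by a constant and $X_\varphi$ has size $\operatorname{poly}(n,m)$, we have $|\operatorname{Per}(X_\varphi)| \le N$ for some $N = 2^{\operatorname{poly}(n,m)}$, so it suffices to compute $\operatorname{Per}(X_\varphi) \bmod Q$ for a single modulus $Q>2N$ (or, equivalently, to recover the value via Chinese remaindering across polynomially many small primes). Working mod $Q$, each negative or large-magnitude entry $a$ can be replaced by its residue and then simulated by a small $\{0,1\}$-block whose permanent equals $a \bmod Q$, using the fact that any nonnegative integer up to $Q$ can be realized as the permanent of a $\{0,1\}$-matrix of size $O(\log Q)$ (built, e.g., from binary expansion and disjoint-union gadgets). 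This produces a polynomial-size $\{0,1\}$-matrix $X'$ with $\operatorname{Per}(X') \equiv \operatorname{Per}(X_\varphi) \pmod{Q}$, completing the reduction.

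The main obstacle is the design and verification of the XOR/equality gadget: one must check that, when composed with the variable and clause gadgets, it really does zero out inconsistent cycle covers and contribute a uniform multiplicative constant on consistent ones, with no unintended cross-terms arising from cycle covers that traverse gadgets in unexpected ways. Everything else---membership in $\mathsf{\#P}$, the cycle-cover identity, and the modular gadget expansion---is mechanical once the basic gadget algebra is in place.
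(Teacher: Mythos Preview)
The paper does not supply its own proof of this theorem; it simply cites Valiant's result and uses it as a black box. Your sketch is essentially Valiant's original argument (cycle-cover interpretation, variable/clause/XOR gadgets, then a modular reduction to $\{0,1\}$-entries), and at the level of a proof outline it is correct.

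One point of phrasing to tighten in the last step: what you need is not merely that every nonnegative integer $w \le Q$ is the permanent of \emph{some} $\{0,1\}$-matrix of size $O(\log Q)$, but that an \emph{edge of weight $w$} inside the larger digraph can be replaced by a $\{0,1\}$-gadget of size $O(\log w)$ so that the permanent of the whole graph is preserved. These are related but not identical requirements---the gadget must behave correctly both when the edge is ``used'' in a cycle cover and when it is not (the new internal vertices must still be covered, typically via self-loops). The standard doubling construction (replace an edge by two parallel length-$2$ paths through fresh self-looped vertices, iterated $O(\log w)$ times, with a small additive gadget for the low-order bits) does exactly this, so your plan goes through; just be careful to state it as an edge-replacement gadget rather than a stand-alone permanent realization.
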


We now show that $\operatorname*{Per}\left(  X\right)  ^{2}$\ is
$\mathsf{\#P}$-hard\ to approximate.

\begin{theorem}
[Hardness of Approximating $\operatorname*{Per}\left(  X\right)  ^{2}$%
]\label{approxhard}The following problem is $\mathsf{\#P}$-hard, for any
$g\in\left[  1,\operatorname*{poly}\left(  n\right)  \right]  $: given a real
matrix $X\in\mathbb{R}^{n\times n}$, approximate $\operatorname*{Per}\left(
X\right)  ^{2}$ to within a multiplicative factor of $g$.
\end{theorem}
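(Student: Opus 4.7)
The plan is to give a polynomial-time Turing reduction from Valiant's $\mathsf{\#P}$-complete problem (Theorem \ref{valiantthm}) to the approximation problem. Let $\mathcal{O}$ be an oracle that, given $Y \in \mathbb{R}^{m \times m}$, returns some $\tilde{q}$ with $\operatorname*{Per}(Y)^{2}/g \leq \tilde{q} \leq g \cdot \operatorname*{Per}(Y)^{2}$. Given a $\{0,1\}$-matrix $X$ with $P := \operatorname*{Per}(X) \in \{0, 1, \ldots, n!\}$, the goal is to compute $P$ exactly in polynomial time using $\mathcal{O}$. The first observation is that $\mathcal{O}$ decides ``$\operatorname*{Per}(Y) = 0$?'' exactly, since the defining inequality forces $\tilde{q} = 0$ whenever $\operatorname*{Per}(Y) = 0$ and $\tilde{q} > 0$ otherwise. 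So $\mathcal{O}$ doubles as a perfect zero-tester, despite the loose multiplicative slack $g$ elsewhere.

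The second ingredient I would build is a family of polynomial-size real matrices $Y_{k}$, indexed by nonnegative integers $k$, that can be computed in polynomial time from $X$ and $k$ and satisfy $\operatorname*{Per}(Y_{k}) = P - k$. One builds $Y_{k}$ by augmenting $X$ with one or two auxiliary rows and columns whose entries are chosen so that the Laplace-style expansion of the permanent causes $P$ to appear additively with coefficient $1$ and $k$ to appear additively with coefficient $-1$. The key is to engineer the augmentation so that no unwanted submatrix permanent of $X$ (such as $\operatorname*{Per}(X^{(i,j)})$) multiplies $k$; this can be achieved by attaching a small ``control'' gadget whose own permanent equals $1$ while forcing zero columns in the relevant minors, so that only the desired additive shift survives in the expansion.

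Given these two ingredients, $P$ is the unique $k \in \{0, \ldots, n!\}$ with $\mathcal{O}(Y_{k}) = 0$, and it can be located by binary search. At each step, the question ``does some $k$ in the lower half of the current interval satisfy $\mathcal{O}(Y_{k}) = 0$?'' is an $\mathsf{NP}$-question relative to $\mathcal{O}$, and $O(\log n!) = O(n \log n)$ such rounds suffice. This shows $\mathsf{\#P} \subseteq \mathsf{FP}^{\mathsf{NP}^{\mathcal{O}}}$, which is exactly the form of $\mathsf{\#P}$-hardness needed for Theorem \ref{warmup}: composing with Stockmeyer's theorem (Theorem \ref{approxcount}) yields $\mathsf{P}^{\mathsf{\#P}} \subseteq \mathsf{BPP}^{\mathsf{NP}}$ whenever $\mathcal{O}$ can itself be simulated in $\mathsf{BPP}^{\mathsf{NP}}$ via a hypothetical classical $\textsc{BosonSampling}$ algorithm.

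The main obstacle, and the technical heart of the argument, is producing the gadget in the second step: realizing the clean additive shift $\operatorname*{Per}(Y_{k}) = P - k$ without implicitly solving another $\mathsf{\#P}$-hard subpermanent problem. Most naive rank-one perturbations of $X$ produce a shift proportional to $\operatorname*{Per}(X^{(i,j)})$, which is itself $\mathsf{\#P}$-hard, so the augmentation must be designed so that the unwanted coefficients either vanish structurally (via zero columns in the relevant Laplace minors) or collapse to the constant $1$. Once this is achieved, the binary search and zero-testing arguments are routine.
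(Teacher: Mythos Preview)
Your proposal has a genuine gap. You correctly observe that the oracle is a perfect zero-tester, and you outline a plan based on constructing matrices $Y_k$ with $\operatorname{Per}(Y_k)=P-k$, but you never actually construct such $Y_k$; you explicitly label this ``the main obstacle, and the technical heart of the argument.'' Every natural attempt (rank-one perturbation of a row, bordering by one or two rows and columns, etc.) produces an additive shift of the form $r\cdot\operatorname{Per}(X^{(i,j)})$ or a linear combination of minor permanents, exactly as you note---and there is no evident way to force those minor permanents to collapse to a known constant without already knowing them. So as written the proof is incomplete: the step you identify as the crux is precisely the step you have not carried out.

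The paper's proof sidesteps this obstacle by \emph{embracing} the minor rather than eliminating it. It proceeds by induction on $n$: assuming $\operatorname{Per}(Y)$ is already known for the $(n-1)\times(n-1)$ bottom-right submatrix $Y$, it uses the identity $\operatorname{Per}(X^{[r]})=\operatorname{Per}(X)-r\operatorname{Per}(Y)$ (where $X^{[r]}$ replaces $x_{11}$ by $x_{11}-r$) and searches for the unique $r^\ast=\operatorname{Per}(X)/\operatorname{Per}(Y)$ at which $\operatorname{Per}(X^{[r^\ast]})=0$. Crucially, the search does not rely on exact zero-testing: it repeatedly samples $\mathcal{O}(X^{[\,\cdot\,]})$ on a grid of $O(g)$ points and moves to the minimizer, contracting $\mathcal{O}(X^{[r(t)]})$ geometrically; after $O(n\log n)$ rounds the residual is small enough to recover the integer $\operatorname{Per}(X)$ exactly. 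This yields a direct $\mathsf{FP}^{\mathcal{O}}$ reduction with $O(gn^2\log n)$ oracle calls, which is strictly stronger than the $\mathsf{FP}^{\mathsf{NP}^{\mathcal{O}}}$ conclusion your zero-test-plus-binary-search would give even if your gadget existed.
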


\begin{proof}
Let $\mathcal{O}$\ be an oracle that, given a matrix $M\in\mathbb{R}^{n\times
n}$, outputs a nonnegative real number $\mathcal{O}\left(  M\right)  $\ such
that%
\[
\frac{\operatorname*{Per}\left(  M\right)  ^{2}}{g}\leq\mathcal{O}\left(
M\right)  \leq g\operatorname*{Per}\left(  M\right)  ^{2}.
\]
Also, let $X=\left(  x_{ij}\right)  \in\left\{  0,1\right\}  ^{n\times n}$ be
an input matrix, which we assume for simplicity consists only of $0$s and
$1$s. Then we will show how to compute $\operatorname*{Per}\left(  X\right)
$\ exactly, in polynomial time and using $O\left(  gn^{2}\log n\right)
$\ adaptive queries to $\mathcal{O}$. \ Since $\operatorname*{Per}\left(
X\right)  $\ is $\mathsf{\#P}$-complete\ by Theorem \ref{valiantthm}, this
will immediately imply the lemma.

Since $X$ is non-negative, we can check in polynomial time whether
$\operatorname*{Per}\left(  X\right)  =0$. \ If $\operatorname*{Per}\left(
X\right)  =0$\ we are done, so assume $\operatorname*{Per}\left(  X\right)
\geq1$. \ Then there exists a permutation $\sigma$\ such that $x_{1,\sigma
\left(  1\right)  }=\cdots=x_{n,\sigma\left(  n\right)  }=1$. \ By permuting
the rows and columns, we can assume without loss of generality that
$x_{11}=\cdots=x_{nn}=1$.

Our reduction will use recursion on $n$. \ Let $Y=\left(  y_{ij}\right)  $ be
the bottom-right $\left(  n-1\right)  \times\left(  n-1\right)  $\ submatrix
of $X$. \ Then we will assume inductively that we already know
$\operatorname*{Per}\left(  Y\right)  $. \ We will use that knowledge,
together with $O\left(  gn\log n\right)  $\ queries to $\mathcal{O}$, to find
$\operatorname*{Per}\left(  X\right)  $.

Given a real number $r$, let $X^{\left[  r\right]  }\in\mathbb{R}^{n\times n}%
$\ be a matrix identical to $X$, except that the top-left entry is $x_{11}%
-r$\ instead of $x_{11}$. \ Then it is not hard to see that%
\[
\operatorname*{Per}\left(  X^{\left[  r\right]  }\right)  =\operatorname*{Per}%
\left(  X\right)  -r\operatorname*{Per}\left(  Y\right)  .
\]
Note that $y_{11}=\cdots=y_{\left(  n-1\right)  ,\left(  n-1\right)  }=1$, so
$\operatorname*{Per}\left(  Y\right)  \geq1$. \ Hence there must be a unique
value $r=r^{\ast}$\ such that $\operatorname*{Per}\left(  X^{\left[  r^{\ast
}\right]  }\right)  =0$. \ Furthermore, if we can find that $r^{\ast}$,\ then
we are done, since $\operatorname*{Per}\left(  X\right)  =r^{\ast
}\operatorname*{Per}\left(  Y\right)  $.

To find%
\[
r^{\ast}=\frac{\operatorname*{Per}\left(  X\right)  }{\operatorname*{Per}%
\left(  Y\right)  },
\]
we will use a procedure based on binary search. \ Let $r\left(  0\right)  :=0$
be our \textquotedblleft initial guess\textquotedblright;\ then we will
repeatedly improve this guess to $r\left(  1\right)  $, $r\left(  2\right)  $,
etc. \ The invariant we want to maintain is that%
\[
\mathcal{O}\left(  X^{\left[  r\left(  t+1\right)  \right]  }\right)
\leq\frac{\mathcal{O}\left(  X^{\left[  r\left(  t\right)  \right]  }\right)
}{2}%
\]
for all $t$.

To find $r\left(  t+1\right)  $ starting from $r\left(  t\right)  $: first
observe that%
\begin{align}
\left\vert r\left(  t\right)  -r^{\ast}\right\vert  &  =\frac{\left\vert
r\left(  t\right)  \operatorname*{Per}\left(  Y\right)  -\operatorname*{Per}%
\left(  X\right)  \right\vert }{\operatorname*{Per}\left(  Y\right)
}\label{distbound}\\
&  =\frac{\left\vert \operatorname*{Per}\left(  X^{\left[  r\left(  t\right)
\right]  }\right)  \right\vert }{\operatorname*{Per}\left(  Y\right)
}\nonumber\\
&  \leq\frac{\sqrt{g\cdot\mathcal{O}\left(  X^{\left[  r\left(  t\right)
\right]  }\right)  }}{\operatorname*{Per}\left(  Y\right)  },\nonumber
\end{align}
where the last line follows from $\operatorname*{Per}\left(  M\right)
^{2}/g\leq\mathcal{O}\left(  M\right)  $. \ So setting%
\[
\beta:=\frac{\sqrt{g\cdot\mathcal{O}\left(  X^{\left[  r\left(  t\right)
\right]  }\right)  }}{\operatorname*{Per}\left(  Y\right)  },
\]
we find that $r^{\ast}$\ is somewhere in the interval $I:=\left[  r\left(
t\right)  -\beta,r\left(  t\right)  +\beta\right]  $. \ Divide $I$ into $L$
equal segments (for some $L$ to be determined later), and let $s\left(
1\right)  ,\ldots,s\left(  L\right)  $\ be their left endpoints. \ Then the
procedure is to evaluate $\mathcal{O}\left(  X^{\left[  s\left(  i\right)
\right]  }\right)  $\ for each $i\in\left[  L\right]  $, and set $r\left(
t+1\right)  $\ equal to the $s\left(  i\right)  $\ for which $\mathcal{O}%
\left(  X^{\left[  s\left(  i\right)  \right]  }\right)  $\ is minimized
(breaking ties arbitrarily).

Clearly there exists an $i\in\left[  L\right]  $\ such that $\left\vert
s\left(  i\right)  -r^{\ast}\right\vert \leq\beta/L$---and for that particular
choice of $i$, we have%
\begin{align*}
\mathcal{O}\left(  X^{\left[  s\left(  i\right)  \right]  }\right)   &  \leq
g\operatorname*{Per}\left(  X^{\left[  s\left(  i\right)  \right]  }\right)
^{2}\\
&  =g\left(  \operatorname*{Per}\left(  X\right)  -s\left(  i\right)
\operatorname*{Per}\left(  Y\right)  \right)  ^{2}\\
&  =g\left(  \operatorname*{Per}\left(  X\right)  -\left(  s\left(  i\right)
-r^{\ast}\right)  \operatorname*{Per}\left(  Y\right)  -r^{\ast}%
\operatorname*{Per}\left(  Y\right)  \right)  ^{2}\\
&  =g\left(  s\left(  i\right)  -r^{\ast}\right)  ^{2}\operatorname*{Per}%
\left(  Y\right)  ^{2}\\
&  \leq g\frac{\beta^{2}}{L^{2}}\operatorname*{Per}\left(  Y\right)  ^{2}\\
&  =\frac{g^{2}}{L^{2}}\mathcal{O}\left(  X^{\left[  r\left(  t\right)
\right]  }\right)  .
\end{align*}
Therefore, so long as we choose $L\geq\sqrt{2}g$, we find that%
\[
\mathcal{O}\left(  X^{\left[  r\left(  t+1\right)  \right]  }\right)
\leq\mathcal{O}\left(  X^{\left[  s\left(  i\right)  \right]  }\right)
\leq\frac{\mathcal{O}\left(  X^{\left[  r\left(  t\right)  \right]  }\right)
}{2},
\]
which is what we wanted.

Now observe that%
\[
\mathcal{O}\left(  X^{\left[  r\left(  0\right)  \right]  }\right)
=\mathcal{O}\left(  X\right)  \leq g\operatorname*{Per}\left(  X\right)
^{2}\leq g\left(  n!\right)  ^{2}.
\]
So for some $T=O\left(  n\log n\right)  $,%
\[
\mathcal{O}\left(  X^{\left[  r\left(  T\right)  \right]  }\right)  \leq
\frac{\mathcal{O}\left(  X^{\left[  r\left(  0\right)  \right]  }\right)
}{2^{T}}\leq\frac{g\left(  n!\right)  ^{2}}{2^{T}}\ll\frac{1}{4g}.
\]
By equation (\ref{distbound}), this in turn implies that%
\[
\left\vert r\left(  T\right)  -r^{\ast}\right\vert \leq\frac{\sqrt
{g\cdot\mathcal{O}\left(  X^{\left[  r\left(  T\right)  \right]  }\right)  }%
}{\operatorname*{Per}\left(  Y\right)  }\ll\frac{1}{2\operatorname*{Per}%
\left(  Y\right)  }.
\]
But this means that we can find $r^{\ast}$\ exactly, since $r^{\ast}$ equals a
rational number $\frac{\operatorname*{Per}\left(  X\right)  }%
{\operatorname*{Per}\left(  Y\right)  }$, where\ $\operatorname*{Per}\left(
X\right)  $\ and $\operatorname*{Per}\left(  Y\right)  $\ are both positive
integers and $\operatorname*{Per}\left(  Y\right)  $\ is known.
\end{proof}

Let us remark that one can improve Theorem \ref{approxhard},\ to ensure that
the entries of $X$\ are all at most $\operatorname*{poly}\left(  n\right)
$\ in absolute value. \ We do not pursue that here, since it will not be
needed for our application.

\begin{lemma}
\label{embedlem}Let $X\in\mathbb{C}^{n\times n}$. \ Then for all $m\geq2n$ and
$\varepsilon\leq1/\left\Vert X\right\Vert $, there exists an $m\times
m$\ unitary matrix $U$ that contains $\varepsilon X$ as a submatrix.
\ Furthermore, $U$ can be computed in polynomial time given $X$.
\end{lemma}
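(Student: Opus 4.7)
The plan is to use the singular value decomposition of $X$ to produce the desired extension, in essentially the standard way one dilates a contraction to a unitary. The hypothesis $\varepsilon \leq 1/\|X\|$ is exactly what guarantees that $\varepsilon X$ is a contraction (all singular values $\leq 1$), which is the necessary and sufficient condition for an $n \times n$ matrix to appear as a block of a unitary.

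Concretely, I would first compute an SVD $X = V \Sigma W^\dagger$, where $V, W$ are $n \times n$ unitaries and $\Sigma = \operatorname{diag}(\sigma_1, \ldots, \sigma_n)$ with $\sigma_i \leq \|X\|$. Setting $\Sigma_\varepsilon := \varepsilon \Sigma$, we have $\varepsilon X = V \Sigma_\varepsilon W^\dagger$ and every diagonal entry of $\Sigma_\varepsilon$ lies in $[0,1]$. Define the $n \times n$ real diagonal matrix $\Sigma'_\varepsilon := \operatorname{diag}(\sqrt{1 - \varepsilon^2 \sigma_1^2}, \ldots, \sqrt{1 - \varepsilon^2 \sigma_n^2})$, and form the $2n \times 2n$ block matrix
\[
U_0 := \begin{pmatrix} \Sigma_\varepsilon & \Sigma'_\varepsilon \\ \Sigma'_\varepsilon & -\Sigma_\varepsilon \end{pmatrix}.
\]
A direct block computation, using the fact that $\Sigma_\varepsilon$ and $\Sigma'_\varepsilon$ are real diagonal (hence commute) and satisfy $\Sigma_\varepsilon^2 + (\Sigma'_\varepsilon)^2 = I$, shows that $U_0^\dagger U_0 = I_{2n}$. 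Now conjugate by block-diagonal unitaries to restore $V$ and $W^\dagger$ around the top-left block:
\[
U' := \begin{pmatrix} V & 0 \\ 0 & I_n \end{pmatrix} U_0 \begin{pmatrix} W^\dagger & 0 \\ 0 & I_n \end{pmatrix} = \begin{pmatrix} \varepsilon X & V \Sigma'_\varepsilon \\ \Sigma'_\varepsilon W^\dagger & -\Sigma_\varepsilon \end{pmatrix}.
\]
This $U'$ is a $2n \times 2n$ unitary containing $\varepsilon X$ as its top-left $n \times n$ submatrix. When $m > 2n$, simply set $U := U' \oplus I_{m-2n}$, which is $m \times m$ unitary and still contains $\varepsilon X$ as a submatrix.

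For the polynomial-time claim, the only nontrivial computational step is computing the SVD to sufficient bit-precision, which is standard and runs in time polynomial in $n$ and in the bit-length of the entries of $X$ (see the precision remarks in Section~\ref{PRELIM}); all remaining operations are elementary matrix arithmetic and taking square roots of reals in $[0,1]$. I do not foresee a real obstacle: the potential worry would be that if some $\sigma_i$ is so close to $1/\varepsilon$ that $\sqrt{1-\varepsilon^2 \sigma_i^2}$ cannot be computed exactly, then $U$ would only be unitary up to exponentially small error; but this is harmless, since we may slightly shrink $\varepsilon$ (well within the allowed tolerance) and then round the square roots to polynomially many bits, and afterwards apply Gram-Schmidt to produce an exactly unitary matrix that still contains a negligibly perturbed copy of $\varepsilon X$, as in the footnote discussion of approximating elements of $\mathcal{U}_{m,n}$.
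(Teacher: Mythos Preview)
Your proof is correct. Both your argument and the paper's hinge on the same elementary fact---that $\varepsilon X$ is a contraction, hence dilates to a unitary---but you implement this via the SVD-based Halmos dilation, whereas the paper uses a Cholesky factorization: writing $Y=\varepsilon X$, it observes $I-Y^{\dagger}Y\succeq 0$, takes $Z$ with $Z^{\dagger}Z=I-Y^{\dagger}Y$, and stacks $W=\binom{Y}{Z}$ to get a $2n\times n$ isometry, which it then completes to an $m\times m$ unitary. Your construction has the minor advantage of producing the full $2n\times 2n$ unitary block explicitly in one shot (no separate ``complete the isometry'' step), while the paper's Cholesky route avoids computing a full SVD. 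Computationally and conceptually the two are interchangeable, and your precision remarks at the end match the paper's standing conventions.
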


\begin{proof}
Let $Y=\varepsilon X$. \ Then it suffices to show how to construct a $2n\times
n$\ matrix $W$ whose columns are orthonormal vectors, and that contains $Y$ as
its top $n\times n$\ submatrix. \ For such a $W$ can easily be completed to an
$m\times n$ matrix whose columns are orthonormal (by filling the bottom
$m-2n$\ rows with zeroes), which can in turn be completed to an $m\times
m$\ unitary matrix in $O\left(  m^{3}\right)  $\ time.

Since $\left\Vert Y\right\Vert \leq\varepsilon\left\Vert X\right\Vert \leq1$,
we have $Y^{\dagger}Y\preceq I$\ in the semidefinite ordering. \ Hence
$I-Y^{\dagger}Y$\ is positive semidefinite. \ So $I-Y^{\dagger}Y$\ has a
Cholesky decomposition $I-Y^{\dagger}Y=Z^{\dagger}Z$, for some $Z\in
\mathbb{C}^{n\times n}$. \ Let us set $W:=%
\genfrac{(}{)}{}{0}{Y}{Z}%
$. \ Then $W^{\dagger}W=Y^{\dagger}Y+Z^{\dagger}Z=I$, so the columns of $W$
are orthonormal as desired.
\end{proof}

We are now ready to prove Theorem \ref{warmup}: that $\mathsf{P}%
^{\mathsf{\#P}}\subseteq\mathsf{BPP}^{\mathsf{NP}^{\mathcal{O}}}$ for any
exact \textsc{BosonSampling}\ oracle $\mathcal{O}$.

\begin{proof}
[Proof of Theorem \ref{warmup}]Given a matrix $X\in\mathbb{R}^{n\times n}%
$\ and a parameter $g\in\left[  1+\frac{1}{\operatorname*{poly}\left(
n\right)  },\operatorname*{poly}\left(  n\right)  \right]  $, we know from
Theorem \ref{approxhard}\ that it is $\mathsf{\#P}$-hard\ to approximate
$\operatorname*{Per}\left(  X\right)  ^{2}$\ to within a multiplicative factor
of $g$. \ So to prove the theorem, it suffices to show how to approximate
$\operatorname*{Per}\left(  X\right)  ^{2}$ in $\mathsf{FBPP}^{\mathsf{NP}%
^{\mathcal{O}}}$.

Set $m:=2n$ and $\varepsilon:=1/\left\Vert X\right\Vert \geq
2^{-\operatorname*{poly}\left(  n\right)  }$. \ Then by Lemma \ref{embedlem},
we can efficiently construct an $m\times m$\ unitary matrix $U$ with
$U_{n,n}=\varepsilon X$ as its top-left $n\times n$\ submatrix.\ \ Let $A$ be
the $m\times n$\ column-orthonormal matrix corresponding to the first $n$
columns of $U$. \ Let us feed $A$ as input to $\mathcal{O}$, and consider the
probability $p_{A}$\ that $\mathcal{O}$ outputs $1_{n}$. \ We have%
\begin{align*}
p_{A}  &  =\Pr_{r}\left[  \mathcal{O}\left(  A,r\right)  =1_{n}\right] \\
&  =\left\vert \left\langle 1_{n}|\varphi\left(  U\right)  |1_{n}\right\rangle
\right\vert ^{2}\\
&  =\left\vert \operatorname*{Per}\left(  U_{n,n}\right)  \right\vert ^{2}\\
&  =\varepsilon^{2n}\left\vert \operatorname*{Per}\left(  X\right)
\right\vert ^{2},
\end{align*}
where the third line follows from Theorem \ref{perust}. \ But by Theorem
\ref{approxcount}, we can approximate $p_{A}$\ to within a multiplicative
factor of $g$\ in $\mathsf{FBPP}^{\mathsf{NP}^{\mathcal{O}}}$. \ It follows
that we can approximate $\left\vert \operatorname*{Per}\left(  X\right)
\right\vert ^{2}=\operatorname*{Per}\left(  X\right)  ^{2}$\ in $\mathsf{FBPP}%
^{\mathsf{NP}^{\mathcal{O}}}$\ as well.
\end{proof}

The main fact that we wanted to prove is an immediate corollary of Theorem
\ref{warmup}:

\begin{corollary}
\label{exactcor}Suppose exact \textsc{BosonSampling} can be done in classical
polynomial time.\ \ Then $\mathsf{P}^{\mathsf{\#P}}=\mathsf{BPP}^{\mathsf{NP}%
}$, and hence the polynomial hierarchy collapses to the third level.
\end{corollary}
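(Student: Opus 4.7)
The plan is simply to apply Theorem \ref{warmup} with the hypothesized classical polynomial-time algorithm playing the role of the oracle $\mathcal{O}$, and then combine the resulting inclusion with Toda's theorem. Concretely, suppose $C$ is a classical polynomial-time algorithm for exact \textsc{BosonSampling}. Viewed as a map $(A,r)\mapsto C(A,r)$, where $r$ is its internal randomness supplied externally, $C$ is a deterministic polynomial-time procedure, so it qualifies as an oracle $\mathcal{O}$ of the form demanded by Theorem \ref{warmup}, and moreover $\mathcal{O}\in\mathsf{P}$.

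The first step of the proof is then to invoke Theorem \ref{warmup} to get $\mathsf{P}^{\mathsf{\#P}}\subseteq\mathsf{BPP}^{\mathsf{NP}^{\mathcal{O}}}$. The second step is to collapse the oracle: because $\mathcal{O}\in\mathsf{P}$, an $\mathsf{NP}$ machine can simulate $\mathcal{O}$-queries internally, so $\mathsf{NP}^{\mathcal{O}}=\mathsf{NP}$, and therefore $\mathsf{P}^{\mathsf{\#P}}\subseteq\mathsf{BPP}^{\mathsf{NP}}$. The third step is the reverse inclusion $\mathsf{BPP}^{\mathsf{NP}}\subseteq\mathsf{P}^{\mathsf{\#P}}$, which follows from Toda's theorem $\mathsf{PH}\subseteq\mathsf{P}^{\mathsf{\#P}}$ combined with the trivial $\mathsf{BPP}^{\mathsf{NP}}\subseteq\mathsf{PH}$. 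Chaining these gives $\mathsf{P}^{\mathsf{\#P}}=\mathsf{BPP}^{\mathsf{NP}}$.

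Finally, to deduce the collapse to the third level, I would note that $\mathsf{BPP}^{\mathsf{NP}}\subseteq\Sigma_3^{\mathsf{p}}$ by the standard Sipser--Gács--Lautemann argument relativized to $\mathsf{NP}$. Combined with the equation above and Toda's $\mathsf{PH}\subseteq\mathsf{P}^{\mathsf{\#P}}$, this yields $\mathsf{PH}\subseteq\mathsf{P}^{\mathsf{\#P}}=\mathsf{BPP}^{\mathsf{NP}}\subseteq\Sigma_3^{\mathsf{p}}$, so the polynomial hierarchy collapses to its third level.

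There is really no obstacle here; all the content lives in Theorem \ref{warmup}. The only point worth stating carefully is the conversion of the classical sampler $C$ into an oracle satisfying Definition \ref{bosonoracle}---namely that one must treat its internal coins as an explicit input string $r$, so that the surrounding $\mathsf{BPP}^{\mathsf{NP}}$ machine controls them. This is the very hypothesis emphasized in the discussion after Theorem \ref{warmup}, and it is automatically met when $\mathcal{O}$ is a classical randomized algorithm, which is the only case used here.
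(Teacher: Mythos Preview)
Your proposal is correct and follows essentially the same approach as the paper: invoke Theorem \ref{warmup} with the polynomial-time sampler as $\mathcal{O}$, absorb the $\mathsf{P}$ oracle to get $\mathsf{P}^{\mathsf{\#P}}\subseteq\mathsf{BPP}^{\mathsf{NP}}$, and then apply Toda's theorem to conclude $\mathsf{P}^{\mathsf{\#P}}=\mathsf{PH}=\Sigma_3^{\mathsf{p}}=\mathsf{BPP}^{\mathsf{NP}}$. The paper's proof is a one-liner to the same effect; your version just spells out the oracle-absorption and the Sipser--G\'acs--Lautemann step more explicitly.
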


\begin{proof}
Combining the assumption with Theorem \ref{warmup}, we get that $\mathsf{P}%
^{\mathsf{\#P}}\subseteq\mathsf{BPP}^{\mathsf{NP}}$, which by Toda's Theorem
\cite{toda} implies that $\mathsf{P}^{\mathsf{\#P}}=\mathsf{PH}=\mathsf{\Sigma
}_{\mathsf{3}}^{\mathsf{P}}=\mathsf{BPP}^{\mathsf{NP}}$.
\end{proof}

Likewise, even if exact \textsc{BosonSampling}\ can be done in $\mathsf{BPP}%
^{\mathsf{PH}}$ (that is, using an oracle for some fixed level of the
polynomial hierarchy), we still get that%
\[
\mathsf{P}^{\mathsf{\#P}}\subseteq\mathsf{BPP}^{\mathsf{NP}^{\mathsf{PH}}%
}=\mathsf{BPP}^{\mathsf{PH}}=\mathsf{PH,}%
\]
and hence $\mathsf{PH}$\ collapses.

As another application of Theorem \ref{warmup}, suppose exact
\textsc{BosonSampling}\ can be done in $\mathsf{BPP}^{\mathsf{P{}romiseBQP}}$:
that is, using an oracle for $\mathsf{BQP}$\ \textit{decision} problems.
\ Then we get the containment%
\[
\mathsf{P}^{\mathsf{\#P}}\subseteq\mathsf{BPP}^{\mathsf{NP}^{^{\mathsf{P{}%
romiseBQP}}}}.
\]
Such a containment seems unlikely (though we admit to lacking a strong
intuition here), thereby providing possible evidence for a separation between
$\mathsf{BQP}$\ sampling problems and $\mathsf{BQP}$\ decision problems.

\subsection{Alternate Proof Using KLM\label{ALTPROOF}}

Inspired by recent work of Bremner et al.\ \cite{bjs}, in this section we give
a different proof of Theorem \ref{warmup}. \ This proof makes no use of
permanents or approximate counting; instead, it invokes two previous quantum
computing results---the KLM Theorem \cite{klm} and the $\mathsf{PostBQP}%
=\mathsf{PP}$\ theorem \cite{aar:pp}---as black boxes. \ Compared to the first
proof, the second one has the advantage of being shorter and completely free
of calculations; also, it easily generalizes to many other quantum computing
models, besides noninteracting bosons. \ The disadvantage is that, to those
unfamiliar with \cite{klm,aar:pp}, the second proof gives less intuition about
why Theorem \ref{warmup} is true. \ Also, we do not know how to generalize the
second proof to say anything about the hardness of \textit{approximate}
sampling. \ For that, it seems essential to talk about the \textsc{Permanent}
or some other concrete $\mathsf{\#P}$-complete problem.

Our starting point is the KLM Theorem, which says informally that linear
optics augmented with \textit{adaptive measurements} is universal for quantum
computation. \ A bit more formally, define $\mathsf{BosonP}%
_{\operatorname*{adap}}$ to be the class of languages that are decidable in
$\mathsf{BPP}$\ (that is, classical probabilistic polynomial-time), augmented
with the ability to prepare $k$-photon states (for any $k=\operatorname*{poly}%
\left(  n\right)  $) in any of $m=\operatorname*{poly}\left(  n\right)
$\ modes; apply arbitrary optical elements to pairs of modes; measure the
photon number of any mode at any time; and condition future optical elements
and classical computations on the outcomes of the measurements. \ From Theorem
\ref{infbqp}, it is not hard to see that $\mathsf{BosonP}%
_{\operatorname*{adap}}\subseteq\mathsf{BQP}$. \ The amazing discovery of
Knill et al.\ \cite{klm}\ was that the other direction holds as well:

\begin{theorem}
[KLM Theorem \cite{klm}]\label{klmthm}$\mathsf{BosonP}_{\operatorname*{adap}%
}=\mathsf{BQP}$.
\end{theorem}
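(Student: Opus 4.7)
The inclusion $\mathsf{BosonP}_{\operatorname*{adap}} \subseteq \mathsf{BQP}$ is immediate from Theorem \ref{infbqp}, together with the standard observation that adaptive measurements can always be deferred inside a unitary quantum circuit. The plan is to prove the nontrivial direction $\mathsf{BQP} \subseteq \mathsf{BosonP}_{\operatorname*{adap}}$ by simulating an arbitrary polynomial-size quantum circuit using linear optics with adaptive photon-number measurements. The vehicle is the \emph{dual-rail encoding}: a logical qubit $\alpha\lvert 0\rangle_L + \beta\lvert 1\rangle_L$ is represented by one photon in two dedicated modes as $\alpha\lvert 1,0\rangle + \beta\lvert 0,1\rangle$, so that $k$ logical qubits use $k$ photons across $2k$ data modes (plus ancillas introduced during the simulation). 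Since $\mathsf{BQP}$ is universal for any gate set containing all single-qubit gates together with one entangling two-qubit gate, it suffices to implement those.

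Single-qubit gates are the easy step. A pair of beamsplitters and phaseshifters acting on the two rails of one logical qubit induces, via the single-photon restriction of the homomorphism $\varphi$ from Section \ref{PHYSDEF}, an arbitrary $2 \times 2$ unitary on the subspace spanned by $\lvert 1,0\rangle$ and $\lvert 0,1\rangle$; Lemma \ref{decompose} then gives an exact $O(1)$-size implementation with no measurements required. The genuine obstacle is the entangling gate, and this is precisely where adaptive measurements become indispensable: the noninteracting-boson evolution preserves the dual-rail subspace only as a tensor product of single-photon unitaries, so without measurements one cannot entangle two logical qubits at all.

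The heart of the argument, following Knill--Laflamme--Milburn, is to build controlled-$Z$ nondeterministically from a simpler gadget, the \emph{nonlinear sign-shift} $NS$, which acts on a single mode restricted to occupation numbers $0,1,2$ as $\mathrm{diag}(1,1,-1)$. The $NS$ gadget uses two ancilla modes prepared with $1$ and $0$ photons, a fixed three-mode interferometer with explicitly computable entries, and a photon-number measurement on the ancillas; on the outcome $(1,0)$, which has probability $1/4$ independent of the input, the residual state on the principal mode is exactly $NS$ applied to the input, a fact one checks by expanding both sides via Theorem \ref{perust}. Sandwiching two copies of $NS$ between Hong-Ou-Mandel beamsplitters that make two photons interfere only on the logical $\lvert 11\rangle_L$ component yields a nondeterministic $CZ$ with constant success probability.

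The main difficulty, and the step I expect to be hardest, is to boost this nondeterministic $CZ$ to a gate that succeeds with probability $1 - 1/\operatorname*{poly}(n)$ within $\mathsf{BosonP}_{\operatorname*{adap}}$. The plan is to follow KLM's offline-preparation-plus-teleportation construction: prepare ahead of time, by repeated attempts that can be retried whenever the ancilla measurements yield the wrong outcome, an entangled ancillary resource state that encodes $CZ$ up to Pauli byproducts; then apply $CZ$ on demand by a Bell-type adaptive measurement that teleports the data qubits through the resource, with the Pauli corrections absorbed into subsequent single-qubit gates. Concatenation with a short error-detecting code drives the per-gate failure probability below any inverse polynomial at polynomial overhead. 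Verifying that this compiled circuit stays within polynomial depth, width, and photon number is the delicate technical contribution of \cite{klm}, which I would invoke rather than reprove, since the subsequent \textsc{BosonSampling} applications in the present paper only need the end-to-end equality $\mathsf{BosonP}_{\operatorname*{adap}} = \mathsf{BQP}$.
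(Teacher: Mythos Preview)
The paper does not give its own proof of Theorem~\ref{klmthm}: it states the result and attributes it entirely to Knill, Laflamme, and Milburn~\cite{klm}, noting only that the easy inclusion $\mathsf{BosonP}_{\operatorname*{adap}}\subseteq\mathsf{BQP}$ follows from Theorem~\ref{infbqp}. Your outline is a faithful sketch of the KLM construction itself---dual-rail encoding, the nonlinear sign-shift gadget succeeding with probability $1/4$, the nondeterministic controlled-$Z$, and the offline preparation plus gate teleportation to achieve near-deterministic gates---so there is nothing to compare against in the paper beyond the citation. Your decision to invoke \cite{klm} for the polynomial-overhead bookkeeping is exactly what the paper does as well; indeed, in its proof sketch of the related Theorem~\ref{klmthm2} the paper explicitly writes ``We refer the reader to \cite{klm} for the details of how such gates are constructed.''
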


In the proof of Theorem \ref{klmthm}, a key step is to consider a model of
linear optics with \textit{postselected }measurements. \ This is similar to
the model with adaptive measurements described above, except that here we
\textit{guess} the outcomes of all the photon-number measurements at the very
beginning, and then only proceed with the computation if the guesses turn out
to be correct. \ In general, the resulting computation will only succeed with
exponentially-small probability, but we know when it does succeed. \ Notice
that, in this model, there is never any need to condition later computational
steps on the outcomes of measurements---since if the computation succeeds,
then we know in advance what all the measurement outcomes are anyway! \ One
consequence is that, without loss of generality, we can postpone all
measurements until the end of the computation.

Along the way to proving Theorem \ref{klmthm}, Knill et al.\ \cite{klm} showed
how to simulate any \textit{postselected} quantum computation using a
\textit{postselected} linear-optics computation.\footnote{Terhal and
DiVincenzo \cite{td} later elaborated on their result, using the term
\textquotedblleft nonadaptive quantum computation\textquotedblright\ (or
$QC_{nad}$) for what we call postselection.} \ To formalize the
\textquotedblleft Postselected KLM Theorem,\textquotedblright\ we now define
the complexity class $\mathsf{PostBosonP}$, which consists of all problems
solvable in polynomial time using linear optics with postselection.

\begin{definition}
[$\mathsf{PostBosonP}$]\label{postbosonpdef}$\mathsf{PostBosonP}$ is the class
of languages $L\subseteq\left\{  0,1\right\}  ^{\ast}$\ for which there exist
deterministic polynomial-time algorithms $\mathcal{V},\mathcal{A},\mathcal{B}%
$\ such that for all inputs $x\in\left\{  0,1\right\}  ^{N}$:

\begin{enumerate}
\item[(i)] The output\ of $\mathcal{V}$ is an $m\times n$\ matrix $V\left(
x\right)  \in\mathcal{U}_{m,n}$\ (for some $m,n=\operatorname*{poly}\left(
N\right)  $), corresponding to a linear-optical network that samples from the
probability distribution $\mathcal{D}_{V\left(  x\right)  }$.

\item[(ii)] $\Pr_{y\sim\mathcal{D}_{V\left(  x\right)  }}\left[
\mathcal{A}\left(  y\right)  \text{ accepts}\right]  >0$.

\item[(iii)] If $x\in L$\ then $\Pr_{y\sim\mathcal{D}_{V\left(  x\right)  }%
}\left[  \mathcal{B}\left(  y\right)  \text{ accepts
$\vert$
}\mathcal{A}\left(  y\right)  \text{ accepts}\right]  \geq\frac{2}{3}$.

\item[(iv)] If $x\notin L$\ then $\Pr_{y\sim\mathcal{D}_{V\left(  x\right)  }%
}\left[  \mathcal{B}\left(  y\right)  \text{ accepts
$\vert$
}\mathcal{A}\left(  y\right)  \text{ accepts}\right]  \leq\frac{1}{3}$.
\end{enumerate}
\end{definition}

In our terminology, Knill et al.\ \cite{klm}\ showed that $\mathsf{PostBosonP}%
$\ captures the full power of postselected quantum computation---in other
words, of the class $\mathsf{PostBQP}$\ defined in Section \ref{PRELIM}. \ We
now sketch a proof for completeness.

\begin{theorem}
[Postselected KLM Theorem \cite{klm}]\label{klmthm2}$\mathsf{PostBosonP}%
=\mathsf{PostBQP}$.
\end{theorem}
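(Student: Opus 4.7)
The plan is to prove both inclusions separately, with the reverse inclusion being the substantive content originally due to Knill, Laflamme, and Milburn.

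For $\mathsf{PostBosonP}\subseteq\mathsf{PostBQP}$: Given a $\mathsf{PostBosonP}$ machine that prepares a linear-optical output via the network $V(x)$, samples $y\sim\mathcal{D}_{V(x)}$, postselects on $\mathcal{A}(y)$ accepting, and then accepts iff $\mathcal{B}(y)$ accepts, I would invoke the simulation from Theorem \ref{infbqp}: a standard quantum computer can efficiently maintain the occupation numbers in binary and apply each optical element via a polynomial number of qubit gates, producing a distribution over outcomes $y$ that matches $\mathcal{D}_{V(x)}$. Postselection on the polynomial-time event $\mathcal{A}(y)$ is exactly the primitive available to a $\mathsf{PostBQP}$ machine, so the whole computation is a legitimate $\mathsf{PostBQP}$ procedure.

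For $\mathsf{PostBQP}\subseteq\mathsf{PostBosonP}$: This is the substantive direction. I would use the standard \emph{dual-rail encoding}: represent each logical qubit by two optical modes, with $|0\rangle_L=|1,0\rangle$ and $|1\rangle_L=|0,1\rangle$. Under this encoding, arbitrary single-qubit unitaries become $2\times 2$ unitary transformations on a one-photon subspace, which are realized directly by a beamsplitter together with phaseshifters via Lemma \ref{decompose}. A standard universal gate set for $\mathsf{BQP}$ can therefore be reduced to implementing a single entangling two-qubit gate, say CSIGN. This is where KLM's construction enters: they exhibit a gadget that uses only extra ancilla photons (prepared in a fixed pattern of occupied and empty modes, which is already part of the standard initial state $|1_n\rangle$), linear optics, and photon-number measurements, and that effects a CSIGN on the two encoded qubits \emph{conditioned} on a specific pattern of ancilla measurement outcomes; otherwise, the logical state is corrupted in a recognizable way.

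In the ordinary (non-postselected) KLM scheme, one needs elaborate gate teleportation to boost the per-gate success probability; but in $\mathsf{PostBosonP}$ that machinery is unnecessary, because we are already allowed to postselect on any polynomial-time event computed from the final outcomes. So I would string together the gadgets for every gate of a given $\mathsf{PostQBP}$ circuit, defer all measurements to the end (as noted in Definition \ref{postbosonpdef}, and as is automatic once adaptivity is dropped), and let $\mathcal{A}(y)$ accept iff (i) every gate gadget showed its designated ``success'' measurement pattern on the ancilla modes and (ii) the original $\mathsf{PostBQP}$ postselection event occurred on the logical modes. The verifier $\mathcal{B}(y)$ simply reads the output logical qubit from the designated dual-rail pair. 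The joint success probability is at least $c^T$ for some constant $c>0$ and polynomial number of gates $T$; this is exponentially small but strictly positive, which is all Definition \ref{postbosonpdef} demands.

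The main obstacle is really a bookkeeping one: checking that KLM's gadget and its success-pattern detector can be formulated entirely within our model of single-photon inputs plus beamsplitters, phaseshifters, and photon-number measurements, so that $\mathcal{V}$ can output an honest $A\in\mathcal{U}_{m,n}$ and $\mathcal{A}$ can be a deterministic polynomial-time predicate on $y$. This is exactly the setting treated in \cite{klm}, so at that point the proof reduces to citing their construction and observing that postselection collapses their multi-round adaptive protocol into a single postselected sample from $\mathcal{D}_{V(x)}$. Combined with the first direction, this gives $\mathsf{PostBosonP}=\mathsf{PostBQP}$.
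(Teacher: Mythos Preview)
Your proposal is correct and follows essentially the same approach as the paper's proof sketch. Both directions match: the easy inclusion invokes the qubit simulation of linear optics from Theorem~\ref{infbqp}, and the substantive inclusion cites the KLM nondeterministic gate construction, postselects on all gadgets succeeding together with the original $\mathsf{PostBQP}$ postselection event, and reads off the logical output---you simply spell out the dual-rail encoding and name CSIGN explicitly where the paper leaves those details to the reference.
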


\begin{proof}
[Proof Sketch]For $\mathsf{PostBosonP}\subseteq\mathsf{PostBQP}$, use the
procedure from Theorem \ref{infbqp}, to create an ordinary quantum circuit $C$
that simulates a given linear-optical network $U$. \ Note that the algorithms
$\mathcal{A}$ and $\mathcal{B}$ from Definition \ref{postbosonpdef}\ can
simply be \textquotedblleft folded\textquotedblright\ into $C$, so that
$\mathcal{A}\left(  y\right)  $\ accepting corresponds to the first qubit of
$C$'s output being measured to be $\left\vert 1\right\rangle $, and
$\mathcal{B}\left(  y\right)  $\ accepting corresponds to the second qubit of
$C$'s output being measured to be $\left\vert 1\right\rangle $.

The more interesting direction is $\mathsf{PostBQP}\subseteq
\mathsf{PostBosonP}$. \ To simulate $\mathsf{BQP}$\ in $\mathsf{PostBosonP}$,
the basic idea of KLM is to use \textquotedblleft nondeterministic
gates,\textquotedblright\ which consist of sequences of beamsplitters and
phaseshifters followed by postselected photon-number measurements.
\ \textit{If} the measurements return a particular outcome, \textit{then} the
effect of the beamsplitters and phaseshifters is to implement (perfectly) a
$2$-qubit gate that is known to be universal for standard quantum computation.
\ We refer the reader to \cite{klm}\ for the details of how such gates are
constructed; for now, assume we have them. \ Then for any $\mathsf{BQP}%
$\ machine $M$, it is easy to create a $\mathsf{PostBosonP}$ machine
$M^{\prime}$ that simulates $M$. \ But once we have $\mathsf{BQP}$, we also
get $\mathsf{PostBQP}$\ essentially \textquotedblleft free of
charge.\textquotedblright\ \ This is because the simulating machine
$M^{\prime}$ can postselect, not only on its nondeterministic gates working
correctly, but also (say) on $M$ reaching a final configuration whose first
qubit is $\left\vert 1\right\rangle $.
\end{proof}

We can now complete our alternative proof of Theorem \ref{warmup}, that
$\mathsf{P}^{\mathsf{\#P}}\subseteq\mathsf{BPP}^{\mathsf{NP}^{\mathcal{O}}}$
for any exact \textsc{BosonSampling}\ oracle $\mathcal{O}$.

\begin{proof}
[Proof of Theorem \ref{warmup}]Let $\mathcal{O}$\ be an exact
\textsc{BosonSampling}\ oracle. \ Then we claim that $\mathsf{PostBosonP}%
\subseteq\mathsf{PostBPP}^{\mathcal{O}}$. \ To see this, let $\mathcal{V}%
,\mathcal{A},\mathcal{B}$\ be the polynomial-time Turing machines from
Definition \ref{postbosonpdef}.\ \ Then we can create a $\mathsf{PostBPP}%
^{\mathcal{O}}$ machine that, given an input $x$ and random string $r$:

\begin{itemize}
\item[(i)] \textquotedblleft Succeeds\textquotedblright\ if $\mathcal{A}%
\left(  \mathcal{O}\left(  V\left(  x\right)  ,r\right)  \right)  $\ accepts,
and \textquotedblleft fails\textquotedblright\ otherwise.

\item[(ii)] Conditioned on succeeding, accepts if $\mathcal{B}\left(
\mathcal{O}\left(  V\left(  x\right)  ,r\right)  \right)  $\ accepts\ and
rejects otherwise.
\end{itemize}

Then%
\[
\mathsf{PP}=\mathsf{PostBQP}=\mathsf{PostBosonP}\subseteq\mathsf{PostBPP}%
^{\mathcal{O}}\subseteq\mathsf{BPP}^{\mathsf{NP}^{\mathcal{O}}},
\]
where the first equality comes from Theorem \ref{postbqpthm}\ and the second
from Theorem \ref{klmthm2}. \ Therefore $\mathsf{P}^{\mathsf{\#P}}%
=\mathsf{P}^{\mathsf{PP}}$\ is contained in $\mathsf{BPP}^{\mathsf{NP}%
^{\mathcal{O}}}$\ as well.
\end{proof}

\subsection{Strengthening the Result\label{STRONGER}}

In this section, we make two simple but interesting improvements to Theorem
\ref{warmup}.

The first improvement is this: instead of considering a whole collection of
distributions, we can give a \textit{fixed} distribution $\mathcal{D}_{n}$
(depending only on the input size $n$) that can be sampled by a boson
computer, but that cannot be efficiently sampled classically unless the
polynomial hierarchy collapses. \ This $\mathcal{D}_{n}$\ will effectively be
a \textquotedblleft complete distribution\textquotedblright\ for the
noninteracting-boson model under nondeterministic reductions. \ Let us discuss
how to construct such a $\mathcal{D}_{n}$, using the approach of Section
\ref{ALTPROOF}.

Let $p\left(  n\right)  $ be some fixed polynomial (say $n^{2}$), and let
$\mathcal{C}$\ be the set of all quantum circuits on $n$\ qubits\ with at most
$p\left(  n\right)  $\ gates (over some finite universal basis, such as
$\left\{  \text{\textsc{Hadamard}},\text{\textsc{Toffoli}}\right\}  $
\cite{shi:gate}). \ Then consider the following $\mathsf{PostBQP}$\ algorithm
$\mathcal{A}$, which takes as input a description of a circuit $C^{\ast}%
\in\mathcal{C}$. \ First, generate a uniform superposition%
\[
\left\vert \mathcal{C}\right\rangle =\frac{1}{\sqrt{\left\vert \mathcal{C}%
\right\vert }}\sum_{C\in\mathcal{C}}\left\vert C\right\rangle
\]
over descriptions of all circuits $C\in\mathcal{C}$. \ Then measure
$\left\vert \mathcal{C}\right\rangle $\ in the standard basis, and postselect
on the outcome being $\left\vert C^{\ast}\right\rangle $. \ Finally, assuming
$\left\vert C^{\ast}\right\rangle $\ was obtained, take some fixed universal
circuit $U$ with the property that%
\[
\Pr\left[  U\left(  \left\vert C\right\rangle \right)  \text{ accepts}\right]
\approx\Pr\left[  C\left(  0^{n}\right)  \text{ accepts}\right]
\]
for all $C\in\mathcal{C}$, and run $U$\ on input $\left\vert C^{\ast
}\right\rangle $. \ Now, since $\mathsf{PostBQP}=\mathsf{PostBosonP}$ by
Theorem \ref{klmthm2}, it is clear that $\mathcal{A}$\ can be
\textquotedblleft compiled\textquotedblright\ into a postselected
linear-optical network $\mathcal{A}^{\prime}$. \ Let $\mathcal{D}%
_{\mathcal{A}^{\prime}}$\ be the probability distribution sampled by
$\mathcal{A}^{\prime}$ if we ignore the postselection steps. \ Then
$\mathcal{D}_{\mathcal{A}^{\prime}}$\ is our desired universal distribution
$\mathcal{D}_{n}$.

More concretely, we claim that, if $\mathcal{D}_{n}$\ can be sampled in
$\mathsf{FBPP}$, then $\mathsf{P}^{\mathsf{\#P}}=\mathsf{PH}=\mathsf{BPP}%
^{\mathsf{NP}}$. \ To see this, let $\mathcal{O}\left(  r\right)  $ be a
polynomial-time classical algorithm that outputs a sample from $\mathcal{D}%
_{n}$, given as input a random string $r\in\left\{  0,1\right\}
^{\operatorname*{poly}\left(  n\right)  }$. \ Then, as in the proof of Theorem
\ref{warmup} in Section \ref{ALTPROOF}, we have $\mathsf{PostBosonP}%
\subseteq\mathsf{PostBPP}$. \ For let $\mathcal{V},\mathcal{A},\mathcal{B}%
$\ be the polynomial-time algorithms from Definition \ref{postbosonpdef}%
.\ \ Then we can create a $\mathsf{PostBPP}$ machine that, given an input $x$
and random string $r$:

\begin{enumerate}
\item[(1)] Postselects on $\mathcal{O}\left(  r\right)  $\ containing an
encoding of the linear-optical network $V\left(  x\right)  $.

\item[(2)] Assuming $\left\vert V\left(  x\right)  \right\rangle $\ is
observed, simulates the $\mathsf{PostBosonP}$\ algorithm: that is,
\textquotedblleft succeeds\textquotedblright\ if $\mathcal{A}\left(
\mathcal{O}\left(  r\right)  \right)  $\ accepts and fails otherwise, and
\textquotedblleft accepts\textquotedblright\ if $\mathcal{B}\left(
\mathcal{O}\left(  r\right)  \right)  $\ accepts and rejects otherwise.
\end{enumerate}

Our second improvement to Theorem \ref{warmup} weakens the physical resource
requirements needed to sample from a hard distribution. \ Recall that we
assumed our boson computer began in the \textquotedblleft standard initial
state\textquotedblright\textit{ }$\left\vert 1_{n}\right\rangle :=\left\vert
1,\ldots,1,0,\ldots,0\right\rangle $, in which the first $n$ modes were
occupied by a single boson each. \ Unfortunately, in the optical setting, it
is notoriously difficult to produce a single photon on demand (see Section
\ref{EXPER} for more about this). \ Using a standard laser, it is much easier
to produce so-called \textit{coherent states}, which have the form%
\[
\left\vert \alpha\right\rangle :=e^{-\left\vert \alpha\right\vert ^{2}/2}%
\sum_{n=0}^{\infty}\frac{\alpha^{n}}{\sqrt{n!}}\left\vert n\right\rangle
\]
for some complex number $\alpha$. \ (Here $\left\vert n\right\rangle
$\ represents a state of $n$ photons.) \ However, we now observe that the
KLM-based proof of Theorem \ref{warmup} goes through almost without change, if
the inputs are coherent states rather than individual photons. \ The reason is
that, in the $\mathsf{PostBosonP}$\ model, we can first prepare a coherent
state (say $\left\vert \alpha=1\right\rangle $), then measure it and
\textit{postselect} on getting a single photon. \ In this way, we can use
postselection to generate the standard initial state $\left\vert
1_{n}\right\rangle $, then run the rest of the computation as before.

Summarizing the improvements:

\begin{theorem}
\label{warmup2}There exists a family of distributions $\left\{  \mathcal{D}%
_{n}\right\}  _{n\geq1}$, depending only on $n$, such that:

\begin{enumerate}
\item[(i)] For all $n$, a boson computer with coherent-state inputs can sample
from $\mathcal{D}_{n}$\ in $\operatorname*{poly}\left(  n\right)  $\ time.

\item[(ii)] Let $\mathcal{O}$ be any oracle that takes as input a random
string $r$ (which $\mathcal{O}$\ uses as its only source of randomness)
together with $n$, and that outputs a sample $\mathcal{O}_{n}\left(  r\right)
$ from $\mathcal{D}_{n}$.\ \ Then $\mathsf{P}^{\mathsf{\#P}}\subseteq
\mathsf{BPP}^{\mathsf{NP}^{\mathcal{O}}}$.
\end{enumerate}
\end{theorem}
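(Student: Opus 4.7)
The plan is to combine the two observations already sketched informally after Theorem~\ref{warmup}: namely, that one can hard-code an encoding of an arbitrary $\mathsf{PostBQP}$ instance into a single \emph{universal} sampling distribution, and that coherent-state inputs suffice because single photons can be prepared via postselection. I would carry out the two parts in sequence, using Theorems~\ref{klmthm2} and \ref{postbqpthm} as black boxes.

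For part (i), I would fix a polynomial $p(n)$ and let $\mathcal{C}_n$ be the set of all quantum circuits on $n$ qubits, over some fixed universal gate set, with at most $p(n)$ gates. I would then describe a canonical $\mathsf{PostBQP}$ algorithm $\mathcal{A}_n$ that (a) prepares a uniform superposition $\lvert\mathcal{C}_n\rangle = |\mathcal{C}_n|^{-1/2}\sum_{C\in\mathcal{C}_n}|C\rangle$ over circuit descriptions, (b) applies a fixed universal simulator $U$ conditioned on the description register, and (c) outputs both the observed description $C^\ast$ and the simulation outcome. Invoking $\mathsf{PostBosonP}=\mathsf{PostBQP}$ (Theorem~\ref{klmthm2}), I compile $\mathcal{A}_n$ into a postselected linear-optical network $\mathcal{A}'_n$ acting on $\mathrm{poly}(n)$ modes, and take $\mathcal{D}_n$ to be the distribution sampled by $\mathcal{A}'_n$ when the postselection steps are \emph{ignored} (so that the network is run honestly, without conditioning). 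This is manifestly a distribution depending only on $n$ and samplable in $\mathrm{poly}(n)$ time by a standard boson computer. To handle coherent-state inputs, I would replace each standard initial photon $|1\rangle$ by a coherent state $|\alpha=1\rangle$, insert a photon-number measurement, and fold the event ``exactly one photon was observed in each of the first $\mathrm{poly}(n)$ input modes'' into the postselection register; since this event has inverse-exponential but nonzero probability, the resulting joint distribution still encodes $\mathcal{D}_n$ up to relabelling the success flag, which is all we need.

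For part (ii), let $\mathcal{O}$ be a sampler for $\mathcal{D}_n$ that uses $r$ as its only randomness. Given any $\mathsf{PostBosonP}$ language $L$ with associated machines $\mathcal{V},\mathcal{A},\mathcal{B}$ as in Definition~\ref{postbosonpdef}, I would build a $\mathsf{PostBPP}^{\mathcal{O}}$ algorithm on input $x$ that queries $\mathcal{O}(r)$, parses the output as $(\text{description of circuit }C,\text{simulation outcome }y,\text{success flags})$, and then postselects on three events: (1) $C$ equals the canonical encoding of a circuit that corresponds (via the KLM compilation) to $V(x)$, (2) the coherent-state-to-single-photon conversions all succeeded, and (3) $\mathcal{A}(y)$ accepts. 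Conditioned on all of these, it accepts iff $\mathcal{B}(y)$ accepts. By construction this machine's conditional acceptance probability matches that of the underlying $\mathsf{PostBosonP}$ computation on $x$, so $\mathsf{PostBosonP}\subseteq\mathsf{PostBPP}^{\mathcal{O}}$. Chaining with the known inclusions yields
\[
\mathsf{P}^{\#\mathsf{P}}=\mathsf{P}^{\mathsf{PP}}=\mathsf{P}^{\mathsf{PostBQP}}=\mathsf{P}^{\mathsf{PostBosonP}}\subseteq\mathsf{BPP}^{\mathsf{NP}^{\mathcal{O}}},
\]
where the first equality is standard, the second is Theorem~\ref{postbqpthm}, the third is Theorem~\ref{klmthm2}, and the last uses $\mathsf{PostBPP}\subseteq\mathsf{BPP}_{\|}^{\mathsf{NP}}$ together with relativization.

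The main obstacle I anticipate is bookkeeping rather than conceptual: I must verify that the postselection event in part (ii) --- ``the oracle's random output $\mathcal{O}(r)$ happens to encode the particular network $V(x)$ we care about, with all coherent-to-single-photon conversions succeeding'' --- occurs with probability at least $1/\exp(\mathrm{poly}(n))$, so that the $\mathsf{PostBPP}$ definition applies (which only requires strictly positive success probability). This in turn forces the universal distribution $\mathcal{D}_n$ to place enough weight on every short circuit description; a uniform superposition over $\mathcal{C}_n$ gives each $C$ probability $1/|\mathcal{C}_n|=2^{-\mathrm{poly}(n)}$, which is adequate. A secondary point to check is that the Solovay--Kitaev overhead in compiling the universal simulator $U$ does not blow up beyond $\mathrm{poly}(n)$ modes, but this is immediate from the proof of Theorem~\ref{infbqp}.
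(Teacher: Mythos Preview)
Your proposal is correct and follows essentially the same approach as the paper: construct the universal distribution via a uniform superposition over all $\mathrm{poly}(n)$-size circuits, compile to linear optics using $\mathsf{PostBosonP}=\mathsf{PostBQP}$ (Theorem~\ref{klmthm2}), handle coherent-state inputs by folding single-photon preparation into the postselection register, and then for hardness build a $\mathsf{PostBPP}^{\mathcal{O}}$ machine that postselects on the sampled description matching $V(x)$. The bookkeeping points you flag (nonzero postselection probability, polynomial mode count) are exactly the ones the paper glosses over, and your treatment of them is adequate.
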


\section{Main Result\label{MAIN}}

We now move on to prove our main result: that even \textit{approximate}
classical simulation of boson computations would have surprising complexity consequences.

\subsection{Truncations of Haar-Random Unitaries\label{EMBEDDING}}

In this section we prove a statement we will need from random matrix theory,
which seems new and might be of independent interest. \ Namely: \textit{any
}$m^{1/6}\times m^{1/6}$\textit{ submatrix of an }$m\times m$\textit{
Haar-random unitary matrix is close, in variation distance, to a matrix of
i.i.d.\ Gaussians.} \ It is easy to see that any individual \textit{entry} of
a Haar unitary matrix is approximately Gaussian. \ Thus, our result just says
that any small enough \textit{set} of entries is approximately
independent---and that here, \textquotedblleft small enough\textquotedblright%
\ can mean not only a constant number of entries, but even $m^{\Omega\left(
1\right)  }$\ of them. \ This is not surprising: it simply means that one
needs to examine a significant fraction of the entries before one
\textquotedblleft notices\textquotedblright\ the unitarity constraint.

Given $m\geq n$, recall that $\mathcal{U}_{m,n}$\ is the set of $m\times
n$\ complex matrices whose columns are orthonormal vectors, and $\mathcal{H}%
_{m,n}$\ is the Haar measure over $\mathcal{U}_{m,n}$. \ Define $\mathcal{S}%
_{m,n}$\ to be the distribution over $n\times n$\ matrices obtained by first
drawing a unitary $U$\ from $\mathcal{H}_{m,m}$, and then outputting $\sqrt
{m}U_{n,n}$\ where $U_{n,n}$\ is the top-left $n\times n$\ submatrix of $U$.
\ In other words, $\mathcal{S}_{m,n}$\ is the distribution over $n\times
n$\ truncations of $m\times m$\ Haar unitary matrices,\ where the entries have
been scaled up by a factor of $\sqrt{m}$ so that they have mean $0$ and
variance $1$. \ Also, recall that $\mathcal{G}^{n\times n}$\ is the
probability distribution over $n\times n$\ complex matrices whose entries are
independent Gaussians with mean $0$ and variance $1$. \ Then our main result
states that $\mathcal{S}_{m,n}$\ is close in variation distance to
$\mathcal{G}^{n\times n}$:

\begin{theorem}
\label{truncthm}Let $m\geq\frac{n^{5}}{\delta}\log^{2}\frac{n}{\delta}$, for
any $\delta>0$. \ Then $\left\Vert \mathcal{S}_{m,n}-\mathcal{G}^{n\times
n}\right\Vert =O\left(  \delta\right)  $.
\end{theorem}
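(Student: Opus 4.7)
The plan is to compute the density of $\mathcal{S}_{m,n}$ explicitly via a classical formula for submatrices of Haar unitaries, and then compare it pointwise to the Gaussian density on a high-probability set. By the standard density formula for a rectangular block of a Haar-random unitary, the top $n \times n$ block $Y$ of an $m \times m$ Haar unitary has density (with respect to Lebesgue measure on $\mathbb{C}^{n \times n}$) proportional to $\det(I_n - Y^\dagger Y)^{m-2n}$ on $\{Y : Y^\dagger Y \prec I_n\}$. After rescaling by $\sqrt{m}$, the density of $X \sim \mathcal{S}_{m,n}$ becomes
\[
g(X) \;=\; c_{m,n}\,\det(I_n - X^\dagger X / m)^{m-2n}
\]
on $\{\|X\| < \sqrt{m}\}$, for some normalization constant $c_{m,n}$. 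The target density is $f(X) = \pi^{-n^2}\exp(-\|X\|_F^2)$.

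Next I would fix a typical set $T = \{X : \|X\| \leq R\}$ with $R = O(\sqrt{n}\,\log(n/\delta))$, using standard tail bounds for the largest singular value of a Gaussian matrix on one side and for a submatrix of a Haar unitary on the other to get $\Pr_f[T^c],\,\Pr_g[T^c] \leq \delta$. On $T$ we have $\|X^\dagger X/m\| \leq R^2/m \ll 1$, so the series $\log\det(I-A) = -\sum_{k\geq 1}\operatorname{tr}(A^k)/k$ converges rapidly at $A = X^\dagger X/m$. Multiplying by $m-2n$ and adding $\|X\|_F^2$ (which cancels the Gaussian exponent exactly at leading order) gives
\[
\log\frac{g(X)}{f(X)} \;=\; \log\!\bigl(c_{m,n}\,\pi^{n^2}\bigr) \;+\; \frac{2n}{m}\|X\|_F^2 \;-\; \frac{m-2n}{2m^2}\operatorname{tr}\!\bigl((X^\dagger X)^2\bigr) \;-\; \cdots
\]
The crude operator-norm bounds $\|X\|_F^2 \leq nR^2$ and $\operatorname{tr}((X^\dagger X)^k) \leq nR^{2k}$ make the $k$-th correction $O(n R^{2k}/m^{k-1}\cdot m) = O(n R^4 / m)$ (geometrically decreasing in $k$), which is $O(\delta)$ under the hypothesis on $m$. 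To handle $\log(c_{m,n}\pi^{n^2})$ without computing $c_{m,n}$ directly, I use that $\int(g-f) = 0$: averaging $g/f - 1$ against $f$ and separating the integral over $T$ from its tail forces $\log(c_{m,n}\pi^{n^2})$ to equal $-\mathbb{E}_f[R(X)] + O(\delta) = O(\delta)$ where $R(X)$ denotes the sum of explicit correction terms. Hence $|1 - g/f| = O(\delta)$ uniformly on $T$, and
\[
\|f - g\|_{\mathrm{TV}} \;\leq\; \Pr\nolimits_f[T^c] + \Pr\nolimits_g[T^c] + \sup_{X \in T}\bigl|1 - g(X)/f(X)\bigr| \;=\; O(\delta).
\]

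The main obstacle I anticipate is making the normalization step fully rigorous: since the corrections $R(X)$ are random and mildly heavy-tailed, one has to verify that their expectation under $f$ is indeed $O(\delta)$ and that tail contributions from $T^c$ do not spoil the cancellation. A fallback route, if the normalization bookkeeping proves fragile, is a coupling argument: sample $G \sim \mathcal{G}^{m \times n}$, set $X := G_T$ (distributed as $\mathcal{G}^{n \times n}$) and $Y := \sqrt{m}\,G_T(G^\dagger G)^{-1/2}$ (distributed as $\mathcal{S}_{m,n}$, since $G(G^\dagger G)^{-1/2}$ is Haar on $\mathcal{U}_{m,n}$), use Wishart concentration to deduce $\|\sqrt{m}(G^\dagger G)^{-1/2} - I\|_{\mathrm{op}} = O(\sqrt{n/m})$ with high probability, and convert this pointwise proximity into a total-variation bound by a direct perturbation estimate for Gaussian densities. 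The density route is cleaner in principle because it avoids the delicate correlation between $X$ and the right-multiplier $(G^\dagger G)^{-1/2}$, at the cost of the normalization argument just described.
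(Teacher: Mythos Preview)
Your density-comparison approach is essentially the paper's: it too invokes R\'effy's explicit formula $p_S(X)=c_{m,n}\prod_i(1-\lambda_i/m)^{m-2n}$, splits into head and tail according to the largest singular value, Taylor-expands the log-ratio on the head, and pins the unknown normalizing constant indirectly from the fact that both densities integrate to $1$. Your anticipated obstacle with the normalization is unnecessary---the paper handles it cleanly by a triangle inequality (write $\widetilde p_S=\zeta p_S$ with $\zeta=\pi^{-n^2}/c_{m,n}$ and bound $|\zeta-1|\,s_{\mathrm{head}}\le\widetilde\Delta_{\mathrm{head}}+g_{\mathrm{tail}}+s_{\mathrm{tail}}$), never computing $\mathbb E_f[R(X)]$ at all.
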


The bound $m\geq\frac{n^{5}}{\delta}\log^{2}\frac{n}{\delta}$\ is almost
certainly not tight; we suspect that it can be improved (for example) to
$m=O\left(  n^{2}/\delta\right)  $. \ For our purposes, however, what is
important is simply that $m$\ is polynomial in $n$\ and $1/\delta$.

Let $p_{G},p_{S}:\mathbb{C}^{n\times n}\rightarrow\mathbb{R}^{+}$\ be the
probability density functions of $\mathcal{G}^{n\times n}$\ and $\mathcal{S}%
_{m,n}$\ respectively (for convenience, we drop the subscripts $m$\ and $n$).
\ Then for our application, we will actually need the following stronger
version of Theorem \ref{truncthm}:

\begin{theorem}
[Haar-Unitary Hiding Theorem]\label{truncthm2}Let $m\geq\frac{n^{5}}{\delta
}\log^{2}\frac{n}{\delta}$. \ Then%
\[
p_{S}\left(  X\right)  \leq\left(  1+O\left(  \delta\right)  \right)
p_{G}\left(  X\right)
\]
for all $X\in\mathbb{C}^{n\times n}$.
\end{theorem}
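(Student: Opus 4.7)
The plan is an explicit pointwise comparison of densities. First, I would invoke the classical density formula for truncated Haar unitaries (due essentially to Zyczkowski--Sommers): for $m \geq 2n$, the matrix $X = \sqrt{m}\, U_{n,n}$ with $U \sim \mathcal{H}_{m,m}$ has density
$$p_S(X) = \frac{1}{Z_{m,n}} \det\!\left(I - \tfrac{1}{m} X^\dagger X\right)^{m-2n} \cdot \mathbf{1}\!\left[X^\dagger X \preceq m I\right]$$
on $\mathbb{C}^{n\times n}$, where $Z_{m,n}$ is a normalization constant expressible as a product of gamma functions via the SVD change of variables. Since $p_G(X) = \pi^{-n^2}\, e^{-\operatorname{tr}(X^\dagger X)}$, the ratio factors as $p_S(X)/p_G(X) = R_0 \cdot R_1(X)$, where $R_0 := \pi^{n^2}/Z_{m,n}$ is independent of $X$ and $R_1(X) := \det(I - X^\dagger X/m)^{m-2n}\, e^{\operatorname{tr}(X^\dagger X)}$ carries all the $X$-dependence. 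Outside the support of $p_S$ one has $p_S = 0 \leq p_G$, so it suffices to bound the ratio inside.

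For the functional factor, let $\lambda_1, \ldots, \lambda_n \in [0,m]$ be the eigenvalues of $X^\dagger X$; then $\ln R_1(X) = \sum_i g(\lambda_i)$ with $g(\lambda) := (m-2n)\ln(1-\lambda/m) + \lambda$. Elementary calculus gives $g'(\lambda) = 1 - (m-2n)/(m-\lambda)$, which vanishes only at $\lambda = 2n$, and $g''(\lambda) < 0$, so $g$ attains its unique maximum there. Taylor-expanding $\ln(1 - 2n/m)$ yields $g(2n) = 2n^2/m + O(n^3/m^2)$, hence $\ln R_1(X) \leq n \cdot g(2n) = O(n^3/m)$ uniformly over the support. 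For the normalization, I would substitute the explicit gamma-function formula for $Z_{m,n}$ into $R_0$ and apply the refined Stirling expansion $\ln \Gamma(x) = (x-\tfrac12)\ln x - x + \tfrac12\ln 2\pi + O(1/x)$; after telescoping the gamma ratios, this yields $R_0 = 1 + O(n^3/m)$. Combining, $p_S(X)/p_G(X) \leq 1 + O(n^3/m) \leq 1 + O(\delta)$ under the hypothesis $m \geq (n^5/\delta)\log^2(n/\delta)$, with considerable slack.

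The main obstacle is bookkeeping the normalization $R_0$ cleanly: the bound on $R_1$ is a single-variable calculus exercise, but deriving $Z_{m,n}$ requires performing the SVD/Vandermonde integral (a Selberg-type beta integral) and then tracking Stirling error terms to second order. A cleaner alternative that avoids explicit gamma bookkeeping is to first establish a matching bound on $R_0^{-1} = \int R_1(X)\, p_G(X)\, dX$ using the same eigenvalue analysis as for $R_1$ together with Gaussian spectrum concentration on $\lambda_i = O(n)$; this integrated bound, combined with the pointwise bound on $R_1$ and the factorization $p_S = R_0 R_1 p_G$, pins $R_0$ down to $1 + O(\delta)$ and recovers the pointwise inequality claimed. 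The $\log^2(n/\delta)$ slack in the hypothesis presumably reflects the looseness introduced by whichever of these two routes is selected, rather than any tightness of the underlying calculation.
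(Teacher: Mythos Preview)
Your proposal is correct and closely parallels the paper's proof: the density formula, the factorization $p_S/p_G = R_0 \cdot R_1(X)$, and the single-variable calculus on $g(\lambda) = (m-2n)\ln(1-\lambda/m) + \lambda$ with maximizer at $\lambda = 2n$ yielding $\ln R_1 \leq O(n^3/m)$ are exactly what the paper does.

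For the normalization, the paper takes precisely your ``cleaner alternative,'' never touching gamma functions. It writes $\zeta = R_0^{-1}$ and bounds $|\zeta - 1|$ indirectly by splitting $\mathbb{C}^{n\times n}$ into a head $\{\lambda_{\max} \leq k\}$ and tail, then using the identity $\zeta\, s_{\mathrm{head}} = \int_{\mathrm{head}} R_1\, p_G$ together with the pointwise estimate $|R_1 - 1| \leq 4nk(n+k)/m$ on the head and crude tail bounds $g_{\mathrm{tail}}, s_{\mathrm{tail}} \leq n^2 e^{-\Theta(k/n^2)}$. One small difference: the paper does not invoke spectral concentration ($\lambda_i = O(n)$) at all, but instead bounds $\lambda_{\max}$ by the Frobenius norm $\sum_{i,j}|x_{ij}|^2$ and takes the cutoff $k = 6n^2\log(n/\delta)$; this is why the hypothesis carries the $\log^2(n/\delta)$ factor you correctly flagged as slack. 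Your gamma-function route would also work and would presumably give the sharper $m = \Omega(n^3/\delta)$, but the paper deliberately avoids the Selberg/Stirling bookkeeping.
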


Fortunately, Theorem \ref{truncthm2} will follow fairly easily from our proof
of Theorem \ref{truncthm}.

Surprisingly, Theorems \ref{truncthm} and \ref{truncthm2}\ do not seem to have
appeared in the random matrix theory literature, although truncations of Haar
unitary matrices have been studied in detail. \ In particular, Petz and
R\'{e}ffy \cite{petzreffy} showed that the truncated Haar-unitary distribution
$\mathcal{S}_{m,n}$\ converges to the Gaussian distribution, when $n$\ is
fixed and $m\rightarrow\infty$. \ (Mastrodonato and Tumulka \cite{mastrodon}
later gave an elementary proof of this fact.) \ In a followup paper, Petz and
R\'{e}ffy\ \cite{petzreffy2}\ proved a large deviation bound for the empirical
eigenvalue density of matrices drawn from $\mathcal{S}_{m,n}$ (see also
R\'{e}ffy's PhD thesis\ \cite{reffy}). \ We will use some observations from
those papers, especially an explicit formula in \cite{reffy}\ for the
probability density function of $\mathcal{S}_{m,n}$.

We now give an overview of the proof of Theorem \ref{truncthm}. \ Our goal is
to prove that%
\[
\Delta\left(  p_{G},p_{S}\right)  :=\int_{X\in\mathbb{C}^{n\times n}%
}\left\vert p_{G}\left(  X\right)  -p_{S}\left(  X\right)  \right\vert dX
\]
is small,\ where the integral (like all others in this section) is with
respect to the Lebesgue measure over the entries of $X$.

The first crucial observation is that the probability distributions
$\mathcal{G}^{n\times n}$\ and $\mathcal{S}_{m,n}$\ are both invariant under
left-multiplication or right-multiplication by a unitary matrix. \ It follows
that $p_{G}\left(  X\right)  $\ and $p_{S}\left(  X\right)  $\ both depend
only on the \textit{list of singular values} of $X$. \ For we can always write
$X=\left(  x_{ij}\right)  $ as $UDV$, where $U,V$\ are unitary and $D=\left(
d_{ij}\right)  $\ is a diagonal matrix of singular values; then $p_{G}\left(
X\right)  =p_{G}\left(  D\right)  $\ and $p_{S}\left(  X\right)  =p_{S}\left(
D\right)  $. \ Let $\lambda_{i}:=d_{ii}^{2}$\ be the square of the $i^{th}%
$\ singular value of $X$.\ \ Then from the identity%
\begin{equation}
\sum_{i,j\in\left[  n\right]  }\left\vert x_{ij}\right\vert ^{2}=\sum
_{i\in\left[  n\right]  }\lambda_{i}, \label{frobenius}%
\end{equation}
we get the following formula for $p_{G}$:%
\[
p_{G}\left(  X\right)  =\prod_{i,j\in\left[  n\right]  }\frac{1}{\pi
}e^{-\left\vert x_{ij}\right\vert ^{2}}=\frac{1}{\pi^{n^{2}}}\prod
_{i\in\left[  n\right]  }e^{-\lambda_{i}}.
\]
Also, R\'{e}ffy \cite[p. 61]{reffy} has shown that, provided $m\geq2n$, we
have%
\begin{equation}
p_{S}\left(  X\right)  =c_{m,n}\prod_{i\in\left[  n\right]  }\left(
1-\frac{\lambda_{i}}{m}\right)  ^{m-2n}I_{\lambda_{i}\leq m} \label{reffy}%
\end{equation}
for some constant $c_{m,n}$, where $I_{\lambda_{i}\leq m}$\ equals $1$ if
$\lambda_{i}\leq m$\ and $0$\ otherwise. \ Here and throughout, the
$\lambda_{i}$'s should be understood as functions $\lambda_{i}\left(
X\right)  $\ of $X$.

Let $\lambda_{\max}:=\max_{i}\lambda_{i}$\ be the greatest squared spectral
value of $X$. Then we can divide the space $\mathbb{C}^{n\times n}$\ of
matrices into two parts: the \textit{head} $R_{\operatorname*{head}}$,
consisting of matrices $X$\ such that $\lambda_{\max}\leq k$, and the
\textit{tail} $R_{\operatorname*{tail}}$, consisting of matrices $X$\ such
that $\lambda_{\max}>k$,\ for a value $k\leq\frac{m}{2n^{2}}$\ that we will
set later. \ At a high level, our strategy for upper-bounding $\Delta\left(
p_{G},p_{S}\right)  $\ will be to show that the head distributions are close
and the tail distributions are small. \ More formally, define%
\begin{align*}
g_{\operatorname*{head}}  &  :=\int_{X\in R_{\operatorname*{head}}}%
p_{G}\left(  X\right)  dX,\\
s_{\operatorname*{head}}  &  :=\int_{X\in R_{\operatorname*{head}}}%
p_{S}\left(  X\right)  dX,\\
\Delta_{\operatorname*{head}}  &  :=\int_{X\in R_{\operatorname*{head}}%
}\left\vert p_{G}\left(  X\right)  -p_{S}\left(  X\right)  \right\vert dX,
\end{align*}
and define $g_{\operatorname*{tail}}$, $s_{\operatorname*{tail}}$, and
$\Delta_{\operatorname*{tail}}$\ similarly with integrals over
$R_{\operatorname*{tail}}$. \ Note that $g_{\operatorname*{head}%
}+g_{\operatorname*{tail}}=s_{\operatorname*{head}}+s_{\operatorname*{tail}%
}=1$ by normalization. \ Also, by the triangle inequality,%
\[
\Delta\left(  p_{G},p_{S}\right)  =\Delta_{\operatorname*{head}}%
+\Delta_{\operatorname*{tail}}\leq\Delta_{\operatorname*{head}}%
+g_{\operatorname*{tail}}+s_{\operatorname*{tail}}.
\]
So to upper-bound $\Delta\left(  p_{G},p_{S}\right)  $, it suffices to
upper-bound $g_{\operatorname*{tail}}$, $s_{\operatorname*{tail}}$, and
$\Delta_{\operatorname*{head}}$\ separately, which we now proceed to do in
that order.

\begin{lemma}
\label{gtail}$g_{\operatorname*{tail}}\leq n^{2}e^{-k/n^{2}}.$
\end{lemma}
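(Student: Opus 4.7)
The quantity $g_{\operatorname*{tail}}$ is exactly $\Pr_{X\sim\mathcal{G}^{n\times n}}[\lambda_{\max}(X)>k]$, where $\lambda_{\max}(X)=\sigma_{\max}(X)^{2}=\|X\|^{2}$ is the top squared singular value. So the plan is to bound the operator norm of an i.i.d.\ complex Gaussian matrix by something easier to control, and then apply a union bound.

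The key deterministic inequality I would use is the trivial chain
\[
\|X\|^{2}\;\leq\;\|X\|_{F}^{2}\;=\;\sum_{i,j\in[n]}|x_{ij}|^{2}\;\leq\;n^{2}\max_{i,j\in[n]}|x_{ij}|^{2},
\]
which converts the spectral quantity $\lambda_{\max}$ into a maximum of $n^{2}$ scalar random variables, at the cost of a factor of $n^{2}$. Thus
\[
\Pr[\lambda_{\max}(X)>k]\;\leq\;\Pr\!\left[\max_{i,j}|x_{ij}|^{2}>k/n^{2}\right].
\]

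Next I would apply a union bound over the $n^{2}$ entries. For this, recall that $x_{ij}\sim\mathcal{N}(0,1)_{\mathbb{C}}$ means $x_{ij}$ has independent real and imaginary parts each with variance $1/2$, so $|x_{ij}|^{2}$ is a standard exponential random variable: $\Pr[|x_{ij}|^{2}>t]=e^{-t}$ for all $t\geq 0$. Therefore
\[
\Pr\!\left[\max_{i,j}|x_{ij}|^{2}>k/n^{2}\right]\;\leq\;n^{2}\,\Pr[\,|x_{11}|^{2}>k/n^{2}\,]\;=\;n^{2}e^{-k/n^{2}},
\]
which gives the claimed bound $g_{\operatorname*{tail}}\leq n^{2}e^{-k/n^{2}}$.

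I do not expect any genuine obstacle here: the argument is really just (Frobenius dominates spectral) $+$ (union bound) $+$ (exponential tail of $|x_{ij}|^{2}$). The only mild subtlety is keeping the complex-Gaussian normalization straight, since the convention in the paper is $\operatorname*{E}[|x_{ij}|^{2}]=1$, which is exactly what makes $|x_{ij}|^{2}$ an $\operatorname{Exp}(1)$ variable and makes the exponent come out to $k/n^{2}$ rather than something with a stray factor of $2$ or $1/2$. Sharper bounds on $\|X\|$ (e.g.\ $\|X\|=O(\sqrt{n})$ with high probability, from standard random matrix theory) would improve the $k/n^{2}$ to roughly $k/n$, but the loose bound above is all that is needed here and has the virtue of being completely elementary.
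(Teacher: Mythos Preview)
Your proof is correct and essentially identical to the paper's: both use $\lambda_{\max}\leq\sum_{i,j}|x_{ij}|^{2}$ (the Frobenius bound), then pigeonhole/union-bound over the $n^{2}$ entries, and finally the exact tail $\Pr[|x_{ij}|^{2}>t]=e^{-t}$ for a unit-variance complex Gaussian. The only cosmetic difference is that the paper writes the pigeonhole step as $\Pr[\sum_{i,j}|x_{ij}|^{2}>k]\leq\sum_{i,j}\Pr[|x_{ij}|^{2}>k/n^{2}]$ directly, rather than passing through $\max_{i,j}|x_{ij}|^{2}$ as you do.
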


\begin{proof}
We have%
\begin{align*}
g_{\operatorname*{tail}}  &  =\Pr_{X\sim\mathcal{G}^{n\times n}}\left[
\lambda_{\max}>k\right] \\
&  \leq\Pr_{X\sim\mathcal{G}^{n\times n}}\left[
{\textstyle\sum\nolimits_{i,j\in\left[  n\right]  }}
\left\vert x_{ij}\right\vert ^{2}>k\right] \\
&  \leq\sum_{i,j\in\left[  n\right]  }\Pr_{X\sim\mathcal{G}^{n\times n}%
}\left[  \left\vert x_{ij}\right\vert ^{2}>\frac{k}{n^{2}}\right] \\
&  =n^{2}e^{-k/n^{2}},
\end{align*}
where the second line uses the identity (\ref{frobenius}) and the third line
uses the union bound.
\end{proof}

\begin{lemma}
\label{stail}$s_{\operatorname*{tail}}\leq n^{2}e^{-k/(2n^{2})}.$
\end{lemma}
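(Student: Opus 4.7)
The plan is to mimic the structure of Lemma \ref{gtail}: use the trace identity to reduce the spectral tail to an entrywise tail, then apply a union bound together with a sharp tail estimate for one entry of $X$.

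First I would note that, exactly as in Lemma \ref{gtail}, the identity $\sum_{i,j\in[n]}|x_{ij}|^{2}=\sum_{i\in[n]}\lambda_{i}\geq \lambda_{\max}$ gives
\[
s_{\operatorname*{tail}} = \Pr_{X\sim \mathcal{S}_{m,n}}[\lambda_{\max}>k] \leq \Pr\!\left[\textstyle\sum_{i,j}|x_{ij}|^{2}>k\right]\leq n^{2}\,\Pr\!\left[|x_{11}|^{2}>\tfrac{k}{n^{2}}\right],
\]
where the last step uses the union bound together with the fact that the marginal distribution of each entry of $X$ is the same (both $\mathcal{H}_{m,m}$ and the scaling $X=\sqrt{m}\,U_{n,n}$ are invariant under permutation of rows and columns of $U$).

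Next I would compute the distribution of a single entry. Since $X=\sqrt{m}\,U_{n,n}$ with $U\sim\mathcal{H}_{m,m}$, we have $|x_{11}|^{2}=m|u_{11}|^{2}$, and a standard fact from random matrix theory says that $|u_{11}|^{2}$ is distributed as $\mathrm{Beta}(1,m-1)$, so
\[
\Pr[|u_{11}|^{2}>s] = (1-s)^{m-1}\qquad\text{for }s\in[0,1].
\]
(One can derive this directly by writing a Haar-random unit vector as $g/\|g\|$ with $g\sim\mathcal{G}^{m}$, and observing that $|g_{1}|^{2}/\|g\|^{2}\sim\mathrm{Beta}(1,m-1)$.) Taking $s=k/(mn^{2})$, which lies in $[0,1]$ because $k\leq m/(2n^{2})\leq mn^{2}$, yields
\[
\Pr\!\left[|x_{11}|^{2}>\tfrac{k}{n^{2}}\right] = \left(1-\tfrac{k}{mn^{2}}\right)^{m-1} \leq \exp\!\left(-\tfrac{(m-1)k}{mn^{2}}\right) \leq \exp\!\left(-\tfrac{k}{2n^{2}}\right),
\]
where the first inequality uses $1-x\leq e^{-x}$ and the second uses $(m-1)/m\geq 1/2$ (which holds for all $m\geq 2$, certainly satisfied under the hypothesis $m\geq n^{5}\log^{2}(n/\delta)/\delta$). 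Combining with the union bound step gives $s_{\operatorname*{tail}}\leq n^{2}e^{-k/(2n^{2})}$, as claimed.

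There is no real obstacle here; the only ingredient beyond what was used in Lemma \ref{gtail} is the Beta-distribution formula for a single entry of a Haar unitary, and if one preferred to avoid citing it, the normalized-Gaussian representation $u_{11}=g_{1}/\sqrt{|g_{1}|^{2}+\cdots+|g_{m}|^{2}}$ gives the same tail in a couple of lines. All the work is hidden in the choice of parameters that will be made later when $k$ is optimized against $\Delta_{\operatorname*{head}}$.
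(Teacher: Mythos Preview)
Your proof is correct and follows essentially the same approach as the paper: the same reduction via $\lambda_{\max}\leq\sum_{i,j}|x_{ij}|^{2}$ and union bound, the same single-entry tail formula $\Pr[|u_{11}|^{2}\geq r]=(1-r)^{m-1}$ (which the paper cites from R\'{e}ffy rather than deriving via the Beta/normalized-Gaussian representation), and the same chain of inequalities $(1-k/(mn^{2}))^{m-1}<e^{-k(m-1)/(mn^{2})}<e^{-k/(2n^{2})}$.
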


\begin{proof}
Recall that $\mathcal{H}_{m,m}$\ is the Haar measure over $m\times m$\ unitary
matrices. \ Then for a single entry (say $u_{11}$)\ of a matrix $U=\left(
u_{ij}\right)  $ drawn from $\mathcal{H}_{m,m}$,%
\[
\Pr_{U\sim\mathcal{H}_{m.m}}\left[  \left\vert u_{11}\right\vert ^{2}\geq
r\right]  =\left(  1-r\right)  ^{m-1}%
\]
for all $r\in\left[  0,1\right]  $, which can be calculated from the density
function given by R\'{e}ffy \cite{reffy}\ for the case $n=1$. \ So as in Lemma
\ref{gtail},%
\begin{align*}
s_{\operatorname*{tail}}  &  =\Pr_{X\sim\mathcal{S}_{m,n}}\left[
\lambda_{\max}>k\right] \\
&  \leq\Pr_{X\sim\mathcal{S}_{m,n}}\left[
{\textstyle\sum\nolimits_{i,j\in\left[  n\right]  }}
\left\vert x_{ij}\right\vert ^{2}>k\right] \\
&  \leq\sum_{i,j\in\left[  n\right]  }\Pr_{X\sim\mathcal{S}_{m,n}}\left[
\left\vert x_{ij}\right\vert ^{2}>\frac{k}{n^{2}}\right] \\
&  =n^{2}\Pr_{U\sim\mathcal{H}_{m,m}}\left[  \left\vert u_{11}\right\vert
^{2}>\frac{k}{mn^{2}}\right] \\
&  =n^{2}\left(  1-\frac{k}{mn^{2}}\right)  ^{m-1}\\
&  <n^{2}e^{-k\left(  1-1/m\right)  /n^{2}}\\
&  <n^{2}e^{-k/(2n^{2})}.
\end{align*}

\end{proof}

The rest of the proof is devoted to upper-bounding $\Delta
_{\operatorname*{head}}$, the distance\ between the two head distributions.
\ Recall that R\'{e}ffy's formula for the density function $p_{S}\left(
X\right)  $ (equation (\ref{reffy})) involved a multiplicative constant
$c_{m,n}$. \ Since it is difficult to compute the value of $c_{m,n}%
$\ explicitly, we will instead define%
\[
\zeta:=\frac{\left(  1/\pi\right)  ^{n^{2}}}{c_{m,n}},
\]
and consider the scaled density function%
\[
\widetilde{p}_{S}\left(  X\right)  :=\zeta\cdot p_{S}\left(  X\right)
=\frac{1}{\pi^{n^{2}}}\prod_{i\in\left[  n\right]  }\left(  1-\frac
{\lambda_{i}}{m}\right)  ^{m-2n}I_{\lambda_{i}\leq m}.
\]
We will first show that $p_{G}$\ and $\widetilde{p}_{S}$\ are close on
$R_{\operatorname*{head}}$. \ We will then deduce from that result, together
with the fact that $g_{\operatorname*{tail}}$\ and $s_{\operatorname*{tail}}%
$\ are small, that $p_{G}$\ and $p_{S}$\ must be close on
$R_{\operatorname*{head}}$, which is what we wanted to show. \ Strangely,
nowhere in this argument do we ever bound $\zeta$\ directly. \ \textit{After}
proving Theorem \ref{truncthm}, however, we will then need to go back and show
that $\zeta$\ is close to $1$, on the way to proving Theorem \ref{truncthm2}.

Let%
\begin{equation}
\widetilde{\Delta}_{\operatorname*{head}}:=\int_{X\in R_{\operatorname*{head}%
}}\left\vert p_{G}\left(  X\right)  -\widetilde{p}_{S}\left(  X\right)
\right\vert dX. \label{dheadeq}%
\end{equation}
Then our first claim is the following.

\begin{lemma}
\label{dhead}$\widetilde{\Delta}_{\operatorname*{head}}\leq\frac{4nk\left(
n+k\right)  }{m}.$
\end{lemma}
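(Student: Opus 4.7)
The plan is to compute the pointwise ratio $\widetilde{p}_S(X)/p_G(X)$ on the head region, show that this ratio is uniformly close to $1$, and then integrate against $p_G$, which has total mass $1$.

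First I would note that on $R_{\operatorname*{head}}$ we have $\lambda_{\max}\le k\le m/(2n^2)$, so every indicator $I_{\lambda_i\le m}$ equals $1$ and the $\pi^{n^2}$ normalizations match, leaving
\[
\frac{\widetilde{p}_S(X)}{p_G(X)}=\prod_{i\in[n]}e^{\lambda_i}\!\left(1-\frac{\lambda_i}{m}\right)^{\!m-2n}=\exp\!\Bigl(\sum_{i\in[n]}\bigl[\lambda_i+(m-2n)\ln(1-\lambda_i/m)\bigr]\Bigr).
\]
The task reduces to bounding this exponent. Using the Taylor expansion $\ln(1-x)=-x-x^2/2-x^3/3-\cdots$, one checks the elementary estimate $|\ln(1-x)+x|\le x^2$ for $x\in[0,1/2]$; the hypothesis $k\le m/(2n^2)$ guarantees that $\lambda_i/m\le 1/2$ on $R_{\operatorname*{head}}$, so this estimate applies with $x=\lambda_i/m$.

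From here the bound on each summand is a short calculation: writing $(m-2n)\ln(1-\lambda_i/m)+\lambda_i$ as $[(m-2n)\ln(1-\lambda_i/m)+(m-2n)\lambda_i/m]+[\lambda_i-(m-2n)\lambda_i/m]$, the first bracket is controlled by $(m-2n)(\lambda_i/m)^2\le\lambda_i^2/m$ via the Taylor estimate, and the second equals $2n\lambda_i/m$ exactly. Summing gives
\[
\Bigl|\log\tfrac{\widetilde{p}_S(X)}{p_G(X)}\Bigr|\le\sum_{i\in[n]}\frac{\lambda_i(\lambda_i+2n)}{m}\le\frac{nk(k+2n)}{m}\le\frac{2nk(n+k)}{m}.
\]
If this bound already exceeds $1$, then $4nk(n+k)/m\ge 2$ and the inequality $\widetilde{\Delta}_{\operatorname*{head}}\le\int(p_G+\widetilde{p}_S)\le 2$ is trivial; otherwise the elementary inequality $|e^t-1|\le 2|t|$ for $|t|\le 1$ upgrades the logarithmic bound to
\[
\Bigl|\tfrac{\widetilde{p}_S(X)}{p_G(X)}-1\Bigr|\le\frac{4nk(n+k)}{m}\quad\text{uniformly on }R_{\operatorname*{head}}.
\]

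Finally I would multiply through by $p_G(X)$ and integrate, using $\int_{R_{\operatorname*{head}}}p_G(X)\,dX\le 1$, which yields $\widetilde{\Delta}_{\operatorname*{head}}\le 4nk(n+k)/m$ as claimed. The main obstacle is really just careful bookkeeping of the two correction terms produced by replacing $(1-\lambda_i/m)^{m-2n}$ with $e^{-\lambda_i}$—one of order $\lambda_i^2/m$ from the Taylor tail of the logarithm, and one of order $n\lambda_i/m$ from the $-2n$ in the exponent—and checking that both are controlled by the head bound $\lambda_i\le k\le m/(2n^2)$.
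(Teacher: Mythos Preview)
Your approach is essentially the same as the paper's: compute the ratio $\widetilde{p}_S/p_G$ on $R_{\operatorname*{head}}$, bound its logarithm via the Taylor expansion of $\ln(1-\lambda_i/m)$, convert to a pointwise bound on $|1-\widetilde{p}_S/p_G|$, and integrate against $p_G$. Your splitting of each summand into a ``Taylor tail'' piece $(m-2n)[\ln(1-\lambda_i/m)+\lambda_i/m]$ and the exact ``$-2n$ in the exponent'' piece $2n\lambda_i/m$ is a slightly tidier bookkeeping than the paper's separate upper and lower estimates on $f(\lambda_i)$, but the content is the same.

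There is one small gap. In your trivial case (when the log bound exceeds $1$) you assert $\widetilde{\Delta}_{\operatorname*{head}}\le\int(p_G+\widetilde{p}_S)\le 2$, but $\int\widetilde{p}_S=\zeta$, and $\zeta\le 1$ is not known at this stage---indeed, the later lemma showing $|\zeta-1|=O(\delta)$ uses the present lemma as input, so this step is circular. The easy fix is to treat the two signs asymmetrically, as the paper does: your own decomposition already gives the one-sided upper bound $\sum_i f(\lambda_i)\le\sum_i 2n\lambda_i/m\le 2n^2k/m$, and this is $\le 1$ by the standing hypothesis $k\le m/(2n^2)$, so $e^t-1\le 2t$ applies directly on the positive side; on the negative side $1-e^{-x}\le x$ needs no smallness hypothesis. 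This eliminates the trivial case entirely and closes the argument.
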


\begin{proof}
As a first observation, when we restrict to $R_{\operatorname*{head}}$, we
have $\lambda_{i}\leq k\leq\frac{m}{2n^{2}}<m$\ for all $i\in\left[  n\right]
$ by assumption. \ So we can simplify the expression for $\widetilde{p}%
_{S}\left(  X\right)  $\ by removing the indicator variable $I_{\lambda
_{i}\leq m}$:%
\[
\widetilde{p}_{S}\left(  X\right)  =\frac{1}{\pi^{n^{2}}}\prod_{i\in\left[
n\right]  }\left(  1-\frac{\lambda_{i}}{m}\right)  ^{m-2n}.
\]
Now let us rewrite equation (\ref{dheadeq}) in the form%
\[
\widetilde{\Delta}_{\operatorname*{head}}=\int_{X\in R_{\operatorname*{head}}%
}p_{G}\left(  X\right)  \left\vert 1-\frac{\widetilde{p}_{S}\left(  X\right)
}{p_{G}\left(  X\right)  }\right\vert dX.
\]
Then plugging in the expressions for $\widetilde{p}_{S}\left(  X\right)
$\ and $p_{G}\left(  X\right)  $ respectively gives the ratio%
\begin{align*}
\frac{\widetilde{p}_{S}\left(  X\right)  }{p_{G}\left(  X\right)  }  &
=\frac{\pi^{-n^{2}}\prod_{i\in\left[  n\right]  }\left(  1-\lambda
_{i}/m\right)  ^{m-2n}}{\pi^{-n^{2}}\prod_{i\in\left[  n\right]  }%
e^{-\lambda_{i}}}\\
&  =\exp\left(  \sum_{i\in\left[  n\right]  }f\left(  \lambda_{i}\right)
\right)  ,
\end{align*}
where%
\begin{align*}
f\left(  \lambda_{i}\right)   &  =\ln\frac{\left(  1-\lambda_{i}/m\right)
^{m-2n}}{e^{-\lambda_{i}}}\\
&  =\lambda_{i}-\left(  m-2n\right)  \left(  -\ln\left(  1-\lambda
_{i}/m\right)  \right)  .
\end{align*}
Since $0\leq\lambda_{i}<m$, we may use the Taylor expansion%
\[
-\ln\left(  1-\lambda_{i}/m\right)  =\frac{\lambda_{i}}{m}+\frac{1}{2}%
\frac{\lambda_{i}^{2}}{m^{2}}+\frac{1}{3}\frac{\lambda_{i}^{3}}{m^{3}}+\cdots
\]
So we can upper-bound $f\left(  \lambda_{i}\right)  $\ by%
\begin{align*}
f\left(  \lambda_{i}\right)   &  \leq\lambda_{i}-\left(  m-2n\right)
\frac{\lambda_{i}}{m}\\
&  =\frac{2n\lambda_{i}}{m}\\
&  \leq\frac{2nk}{m},
\end{align*}
and can \textit{lower}-bound $f\left(  \lambda_{i}\right)  $ by%
\begin{align*}
f\left(  \lambda_{i}\right)   &  \geq\lambda_{i}-\left(  m-2n\right)  \left(
\frac{\lambda_{i}}{m}+\frac{1}{2}\frac{\lambda_{i}^{2}}{m^{2}}+\frac{1}%
{3}\frac{\lambda_{i}^{3}}{m^{3}}+\cdots\right) \\
&  >\lambda_{i}-\left(  m-2n\right)  \left(  \frac{\lambda_{i}}{m}%
+\frac{\lambda_{i}^{2}}{m^{2}}+\frac{\lambda_{i}^{3}}{m^{3}}+\cdots\right) \\
&  =\lambda_{i}-\frac{\left(  m-2n\right)  \lambda_{i}}{m\left(  1-\lambda
_{i}/m\right)  }\\
&  >\lambda_{i}-\frac{\lambda_{i}}{1-\lambda_{i}/m}\\
&  >-\frac{\lambda_{i}^{2}}{m-\lambda_{i}}\\
&  \geq-\frac{2k^{2}}{m}.
\end{align*}
Here the last line used the fact that $\lambda_{i}\leq k\leq\frac{m}{2n^{2}%
}<\frac{m}{2}$, since $X\in R_{\operatorname*{head}}$. \ It follows that%
\[
-\frac{2nk^{2}}{m}\leq\sum_{i\in\left[  n\right]  }f\left(  \lambda
_{i}\right)  \leq\frac{2n^{2}k}{m}.
\]
So%
\begin{align*}
\left\vert 1-\frac{\widetilde{p}_{S}\left(  X\right)  }{p_{G}\left(  X\right)
}\right\vert  &  =\left\vert 1-\exp\left(  \sum_{i\in\left[  n\right]
}f\left(  \lambda_{i}\right)  \right)  \right\vert \\
&  \leq\max\left\{  1-\exp\left(  -\frac{2nk^{2}}{m}\right)  ,\exp\left(
\frac{2n^{2}k}{m}\right)  -1\right\} \\
&  \leq\max\left\{  \frac{2nk^{2}}{m},\frac{4n^{2}k}{m}\right\} \\
&  \leq\frac{4nk\left(  n+k\right)  }{m}%
\end{align*}
where the last line used the fact that $e^{\delta}-1<2\delta$\ for all
$\delta\leq1$.

To conclude,%
\begin{align*}
\widetilde{\Delta}_{\operatorname*{head}}  &  \leq\int_{X\in
R_{\operatorname*{head}}}p_{G}\left(  X\right)  \left[  \frac{4nk\left(
n+k\right)  }{m}\right]  dX\\
&  \leq\frac{4nk\left(  n+k\right)  }{m}.
\end{align*}

\end{proof}

Combining Lemmas \ref{gtail}, \ref{stail}, \ref{dhead}, and \ref{alpha1}, and
making repeated use of the triangle inequality, we find that%
\begin{align*}
\Delta_{\operatorname*{head}}  &  =\int_{X\in R_{\operatorname*{head}}%
}\left\vert p_{G}\left(  X\right)  -p_{S}\left(  X\right)  \right\vert dX\\
&  \leq\widetilde{\Delta}_{\operatorname*{head}}+\int_{X\in
R_{\operatorname*{head}}}\left\vert \widetilde{p}_{S}\left(  X\right)
-p_{S}\left(  X\right)  \right\vert dX\\
&  =\widetilde{\Delta}_{\operatorname*{head}}+\left\vert \zeta
s_{\operatorname*{head}}-s_{\operatorname*{head}}\right\vert \\
&  \leq\widetilde{\Delta}_{\operatorname*{head}}+\left\vert \zeta
s_{\operatorname*{head}}-g_{\operatorname*{head}}\right\vert +\left\vert
g_{\operatorname*{head}}-1\right\vert +\left\vert 1-s_{\operatorname*{head}%
}\right\vert \\
&  \leq2\widetilde{\Delta}_{\operatorname*{head}}+g_{\operatorname*{tail}%
}+s_{\operatorname*{tail}}\\
&  \leq\frac{8nk\left(  n+k\right)  }{m}+n^{2}e^{-k/n^{2}}+n^{2}%
e^{-k/(2n^{2})}.
\end{align*}
Therefore%
\begin{align*}
\Delta\left(  p_{G},p_{S}\right)   &  \leq\Delta_{\operatorname*{head}%
}+g_{\operatorname*{tail}}+s_{\operatorname*{tail}}\\
&  \leq\frac{8nk\left(  n+k\right)  }{m}+2n^{2}e^{-k/n^{2}}+2n^{2}%
e^{-k/(2n^{2})}.
\end{align*}
Recalling that $m\geq\frac{n^{5}}{\delta}\log^{2}\frac{n}{\delta}$, let us now
make the choice $k:=6n^{2}\log\frac{n}{\delta}$. \ Then the constraint
$k\leq\frac{m}{2n^{2}}$\ is satisfied, and furthermore $\Delta\left(
p_{G},p_{S}\right)  =O\left(  \delta\right)  $. \ This completes the proof of
Theorem \ref{truncthm}.

The above derivation \textquotedblleft implicitly\textquotedblright\ showed
that $\zeta$\ is close to $1$. \ As a first step toward proving Theorem
\ref{truncthm2}, let\ us now make the bound on $\zeta$\ explicit.

\begin{lemma}
\label{alpha1}$\left\vert \zeta-1\right\vert =O\left(  \delta\right)  .$
\end{lemma}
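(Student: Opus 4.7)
The plan is to observe that $\zeta$ is nothing but the total mass of $\widetilde{p}_{S}$: since $p_{S}$ is a probability density, $\int p_{S}(X)\,dX = 1$, and by definition $\widetilde{p}_{S} = \zeta \cdot p_{S}$, so
\[
\zeta \;=\; \int_{\mathbb{C}^{n\times n}} \widetilde{p}_{S}(X)\,dX.
\]
Comparing against $\int p_{G}(X)\,dX = 1$ gives $|\zeta - 1| \leq \int |\widetilde{p}_{S} - p_{G}|\,dX$, and I would split this integral over $R_{\operatorname*{head}}$ and $R_{\operatorname*{tail}}$.

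On the head region, Lemma \ref{dhead} supplies directly $\int_{R_{\operatorname*{head}}} |\widetilde{p}_{S} - p_{G}|\,dX = \widetilde{\Delta}_{\operatorname*{head}} \leq 4nk(n+k)/m$, which is $O(\delta)$ for the choice $k = 6n^{2}\log(n/\delta)$ and $m \geq (n^{5}/\delta)\log^{2}(n/\delta)$ already fixed in the proof of Theorem \ref{truncthm}. For the tail region, the triangle inequality applied pointwise gives
\[
\int_{R_{\operatorname*{tail}}} |\widetilde{p}_{S} - p_{G}|\,dX \;\leq\; \zeta\, s_{\operatorname*{tail}} + g_{\operatorname*{tail}},
\]
and Lemmas \ref{gtail} and \ref{stail} with our choice of $k$ yield $g_{\operatorname*{tail}},\,s_{\operatorname*{tail}} = O(\delta)$ (indeed both are $O(\delta^{3}/n)$ or smaller).

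Combining these estimates produces the self-referential inequality $|\zeta - 1| \leq O(\delta)(1 + \zeta)$. If $\zeta \geq 1$ this rearranges to $\zeta(1 - O(\delta)) \leq 1 + O(\delta)$, giving $\zeta \leq 1 + O(\delta)$; if instead $\zeta \leq 1$, the same inequality gives $\zeta(1 + O(\delta)) \geq 1 - O(\delta)$ and hence $\zeta \geq 1 - O(\delta)$. Either way, $|\zeta - 1| = O(\delta)$.

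I do not foresee any real obstacle: the lemma is a pure bookkeeping consequence of the bounds already assembled in the proof of Theorem \ref{truncthm}. The only mild wrinkle is that controlling $\int_{R_{\operatorname*{tail}}} \widetilde{p}_{S}\,dX = \zeta\, s_{\operatorname*{tail}}$ appears to require an a priori bound on $\zeta$, but because the resulting inequality in $\zeta$ is linear with a small slope $O(\delta)$, it can be inverted trivially without any separate bound.
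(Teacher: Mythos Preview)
Your proposal is correct and is essentially the same argument as the paper's, using the same three ingredients $\widetilde{\Delta}_{\operatorname*{head}}$, $g_{\operatorname*{tail}}$, $s_{\operatorname*{tail}}$ and the same parameter choice $k=6n^{2}\log(n/\delta)$. The only cosmetic difference is that the paper restricts to the head region from the outset, obtaining $|\zeta-1|\,s_{\operatorname*{head}}\leq \widetilde{\Delta}_{\operatorname*{head}}+g_{\operatorname*{tail}}+s_{\operatorname*{tail}}$ and then dividing by $s_{\operatorname*{head}}=1-s_{\operatorname*{tail}}$, whereas you integrate over all of $\mathbb{C}^{n\times n}$ and then invert the linear inequality $|\zeta-1|\leq \widetilde{\Delta}_{\operatorname*{head}}+g_{\operatorname*{tail}}+\zeta\,s_{\operatorname*{tail}}$; unwinding either yields the identical bound.
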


\begin{proof}
We have%
\begin{align*}
\left\vert \zeta s_{\operatorname*{head}}-s_{\operatorname*{head}}\right\vert
&  \leq\left\vert \zeta s_{\operatorname*{head}}-g_{\operatorname*{head}%
}\right\vert +\left\vert g_{\operatorname*{head}}-1\right\vert +\left\vert
1-s_{\operatorname*{head}}\right\vert \\
&  =\widetilde{\Delta}_{\operatorname*{head}}+g_{\operatorname*{tail}%
}+s_{\operatorname*{tail}}\\
&  \leq\frac{4nk\left(  n+k\right)  }{m}+n^{2}e^{-k/n^{2}}+n^{2}%
e^{-k/(2n^{2})}%
\end{align*}
and%
\[
s_{\operatorname*{head}}=1-s_{\operatorname*{tail}}\geq1-n^{2}e^{-k/(2n^{2}%
)}.
\]
As before, recall that $m\geq\frac{n^{5}}{\delta}\log^{2}\frac{n}{\delta}%
$\ and set $k:=6n^{2}\log\frac{n}{\delta}$. \ Then%
\begin{align*}
\left\vert \zeta-1\right\vert  &  =\frac{\left\vert \zeta
s_{\operatorname*{head}}-s_{\operatorname*{head}}\right\vert }%
{s_{\operatorname*{head}}}\\
&  \leq\frac{4nk\left(  n+k\right)  /m+n^{2}e^{-k/n^{2}}+n^{2}e^{-k/(2n^{2})}%
}{1-n^{2}e^{-k/(2n^{2})}}\\
&  =O\left(  \delta\right)  .
\end{align*}

\end{proof}

We can now prove Theorem \ref{truncthm2}, that $p_{S}\left(  X\right)
\leq\left(  1+O\left(  \delta\right)  \right)  p_{G}\left(  X\right)  $ for
all $X\in\mathbb{C}^{n\times n}$.

\begin{proof}
[Proof of Theorem \ref{truncthm2}]Our goal is to upper-bound%
\[
C:=\max_{X\in\mathbb{C}^{n\times n}}\frac{p_{S}\left(  X\right)  }%
{p_{G}\left(  X\right)  }.
\]
Using the notation of Lemma \ref{dhead}, we can rewrite $C$ as%
\[
\frac{1}{\zeta}\max_{X\in\mathbb{C}^{n\times n}}\frac{\widetilde{p}_{S}\left(
X\right)  }{p_{G}\left(  X\right)  }=\frac{1}{\zeta}\max_{\lambda_{1}%
,\ldots,\lambda_{n}\geq0}\exp\left(  \sum_{i\in\left[  n\right]  }f\left(
\lambda_{i}\right)  \right)  ,
\]
where%
\[
f\left(  \lambda_{i}\right)  :=\lambda_{i}+\left(  m-2n\right)  \ln\left(
1-\lambda_{i}/m\right)  .
\]
By elementary calculus, the function $f\left(  \lambda\right)  $\ achieves its
maximum at $\lambda=2n$; note that this is a valid maximum since $m\geq2n$.
\ Setting $\lambda_{i}=2n$\ for all $i$ then yields%
\begin{align*}
C  &  =\frac{1}{\zeta}\exp\left(  2n^{2}+n\left(  m-2n\right)  \ln\left(
1-\frac{2n}{m}\right)  \right) \\
&  =\frac{1}{\zeta}e^{2n^{2}}\left(  1-\frac{2n}{m}\right)  ^{n\left(
m-2n\right)  }\\
&  <\frac{1}{\zeta}e^{2n^{2}}e^{-2n^{2}\left(  m-2n\right)  /m}\\
&  =\frac{1}{\zeta}e^{4n^{3}/m}\\
&  \leq\frac{1}{1-O\left(  \delta\right)  }\left(  1+O\left(  \delta\right)
\right) \\
&  =1+O\left(  \delta\right)  .
\end{align*}
Here the second-to-last line used Lemma \ref{alpha1}, together with the fact
that $m\gg\frac{4n^{3}}{\delta}$.
\end{proof}

\subsection{\label{MAINRESULT}Hardness of Approximate \textsc{BosonSampling}}

Having proved Theorem \ref{truncthm2}, we are finally ready to prove the main
result of the paper: that $\left\vert \text{\textsc{GPE}}\right\vert _{\pm
}^{2}\in\mathsf{FBPP}^{\mathsf{NP}^{\mathcal{O}}}$, where $\mathcal{O}$\ is
any approximate \textsc{BosonSampling}\ oracle. \ In other words, if there is
a fast classical algorithm for approximate \textsc{BosonSampling}, then there
is also a\ $\mathsf{BPP}^{\mathsf{NP}}$\ algorithm to\ estimate $\left\vert
\operatorname*{Per}\left(  X\right)  \right\vert ^{2}$, with high probability
for a Gaussian random matrix $X\sim\mathcal{G}^{n\times n}$.

We first need a technical lemma, which formalizes the well-known concept of
rejection sampling.

\begin{lemma}
[Rejection Sampling]\label{rejsample}Let\ $\mathcal{D}=\left\{  p_{x}\right\}
$ and $\mathcal{E}=\left\{  q_{x}\right\}  $\ be any two distributions over a
finite set $S$. \ Suppose that there exists a polynomial-time algorithm to
compute $\zeta q_{x}/p_{x}$\ given $x\in S$, where $\zeta$\ is some constant
independent of $x$ such that $\left\vert \zeta-1\right\vert \leq\delta$.
\ Suppose also that $q_{x}/p_{x}\leq1+\delta$\ for all $x\in S$. \ Then there
exists a $\mathsf{BPP}$ algorithm $\mathcal{R}$\ that takes a sample
$x\sim\mathcal{D}$\ as input, and either accepts or rejects. $\mathcal{R}%
$\ has the following properties:

\begin{enumerate}
\item[(i)] Conditioned on $\mathcal{R}$\ accepting, $x$\ is distributed
according to $\mathcal{E}$.

\item[(ii)] The probability that $\mathcal{R}$\ rejects (over both its
internal randomness and $x\sim\mathcal{D}$) is $O\left(  \delta\right)  $.
\end{enumerate}
\end{lemma}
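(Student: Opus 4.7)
The plan is to implement textbook rejection sampling, with care taken because the constant $\zeta$ is unknown but common to all $x$.

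First I would fix a comparison constant. Since $q_x/p_x \leq 1+\delta$ and $|\zeta - 1| \leq \delta$, the quantity $\alpha_x := \zeta q_x/p_x$, which is what the hypothesis lets us compute in polynomial time, satisfies $0 \leq \alpha_x \leq (1+\delta)^2$. So setting $c := (1+\delta)^2$ we have $\alpha_x/c \in [0,1]$, which is legitimate acceptance probability for every $x$. The algorithm $\mathcal{R}$ is then the obvious one: on input $x \sim \mathcal{D}$, compute $\alpha_x$, draw a uniform real (or sufficient-precision random bits) $u \in [0,1]$, accept if $u \leq \alpha_x/c$ and reject otherwise.

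Next I would verify the conditional distribution. For each $x \in S$,
\[
\Pr[\mathcal{R} \text{ accepts and outputs } x] \;=\; p_x \cdot \frac{\alpha_x}{c} \;=\; \frac{\zeta\, q_x}{c},
\]
so summing over $x$ gives $\Pr[\mathcal{R} \text{ accepts}] = \zeta/c$, and therefore
\[
\Pr[x \mid \mathcal{R} \text{ accepts}] \;=\; \frac{\zeta q_x / c}{\zeta / c} \;=\; q_x,
\]
which is exactly $\mathcal{E}$. Crucially, the unknown $\zeta$ cancels out of the conditional distribution precisely because it is a global constant independent of $x$; this is the one subtle point of the argument, and it is what forces us to use a scale-invariant test rather than trying to estimate $\zeta$.

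For the rejection-probability bound, $1 - \zeta/c \leq 1 - (1-\delta)/(1+\delta)^2 = O(\delta)$, using $|\zeta - 1| \leq \delta$ and expanding to first order in $\delta$. The only item that needs any care is making sure $\mathcal{R}$ runs in polynomial time: computing $\alpha_x$ is polynomial by hypothesis, and one only needs $O(\log(1/\delta))$ random bits to simulate the Bernoulli trial with bias $\alpha_x/c$ to within accuracy that perturbs both the output distribution and the rejection probability by at most $O(\delta)$, which is absorbed into the claimed bounds. I do not anticipate any genuine obstacle here; the entire content of the lemma is the observation that $\zeta$ is a nuisance multiplicative factor which the rejection-sampling trick hides automatically.
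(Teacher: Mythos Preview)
Your proposal is correct and matches the paper's proof essentially line for line: the paper also accepts with probability $\frac{\zeta q_x/p_x}{(1+\delta)^2}$, notes that property (i) is immediate, and computes $\Pr[\mathcal{R}\text{ rejects}] = 1 - \zeta/(1+\delta)^2 = O(\delta)$ for property (ii). Your added remarks on precision and why the unknown $\zeta$ cancels are welcome elaborations but not departures from the paper's argument.
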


\begin{proof}
$\mathcal{R}$\ works as follows: first compute $\zeta q_{x}/p_{x}$;\ then
accept with probability $\frac{\zeta q_{x}/p_{x}}{\left(  1+\delta\right)
^{2}}\leq1$. \ Property (i) is immediate. \ For property (ii),%
\begin{align*}
\Pr\left[  \mathcal{R}~\text{rejects}\right]   &  =\sum_{x\in S}p_{x}\left(
1-\frac{\zeta q_{x}/p_{x}}{\left(  1+\delta\right)  ^{2}}\right) \\
&  =\sum_{x\in S}\left(  p_{x}-\frac{\zeta q_{x}}{\left(  1+\delta\right)
^{2}}\right) \\
&  =1-\frac{\zeta}{\left(  1+\delta\right)  ^{2}}\\
&  =O\left(  \delta\right)  .
\end{align*}

\end{proof}

By combining Lemma \ref{rejsample} with Theorem \ref{truncthm2}, we now show
how it is possible to \textquotedblleft hide\textquotedblright\ a matrix
$X\sim\mathcal{G}^{n\times n}$\ of i.i.d.\ Gaussians as a random $n\times
n$\ submatrix\ of a Haar-random $m\times n$\ column-orthonormal matrix $A$,
provided $m=\Omega\left(  n^{5}\log^{2}n\right)  $. \ Our hiding procedure
does not involve any distortion of $X$. \ We believe that the hiding procedure
could be implemented in $\mathsf{BPP}$; however, we will show only that it can
be implemented in $\mathsf{BPP}^{\mathsf{NP}}$, since that is easier and
suffices for our application.

\begin{lemma}
[Hiding Lemma]\label{hidelemma}Let $m\geq\frac{n^{5}}{\delta}\log^{2}\frac
{n}{\delta}$ for some $\delta>0$. \ Then there exists a\ $\mathsf{BPP}%
^{\mathsf{NP}}$\ algorithm\ $\mathcal{A}$\ that takes as input a matrix
$X\sim\mathcal{G}^{n\times n}$, that \textquotedblleft
succeeds\textquotedblright\ with probability $1-O\left(  \delta\right)
$\ over $X$, and that, conditioned on succeeding, samples a\ matrix
$A\in\mathcal{U}_{m,n}$\ from a probability distribution $\mathcal{D}_{X}%
$,\ such that the following properties hold:

\begin{enumerate}
\item[(i)] $X/\sqrt{m}$ occurs as a uniformly-random $n\times n$\ submatrix of
$A\sim\mathcal{D}_{X}$, for every $X$\ such that $\Pr\left[  \mathcal{A}%
\left(  X\right)  ~\text{succeeds}\right]  >0$.

\item[(ii)] The distribution over $A\in\mathbb{C}^{m\times n}$\ induced by
drawing $X\sim\mathcal{G}^{n\times n}$, running $\mathcal{A}\left(  X\right)
$, and conditioning on $\mathcal{A}\left(  X\right)  $\ succeeding is simply
$\mathcal{H}_{m,n}$ (the Haar measure over $m\times n$\ column-orthonormal matrices).
\end{enumerate}
\end{lemma}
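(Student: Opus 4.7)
The plan is to combine Theorem \ref{truncthm2} with the Rejection Sampling Lemma (Lemma \ref{rejsample}). Together they will let us transform a sample $X \sim \mathcal{G}^{n\times n}$ into a sample from $\mathcal{S}_{m,n}$ at $O(\delta)$ rejection cost, after which we ``grow'' the resulting $n \times n$ matrix into a full $m \times n$ Haar-random column-orthonormal matrix by appending appropriately-distributed bottom rows and randomizing the positions of $X$'s rows.

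In detail, on input $X$ the algorithm $\mathcal{A}$ first computes the squared singular values $\lambda_1,\ldots,\lambda_n$ of $X$, and from them the ratio $\widetilde{p}_S(X)/p_G(X) = \exp\!\left(\sum_i f(\lambda_i)\right)$ appearing in the proof of Theorem \ref{truncthm2}. This ratio is fed into Lemma \ref{rejsample}, identifying $p_x \leftrightarrow p_G(X)$, $q_x \leftrightarrow p_S(X)$, and the constant there with the $\zeta$ from Section \ref{EMBEDDING}; Theorem \ref{truncthm2} and Lemma \ref{alpha1} supply the two needed hypotheses. If rejection fails, $\mathcal{A}$ outputs ``fail''; otherwise $X$ has been resampled to have distribution $\mathcal{S}_{m,n}$, and in particular every $\lambda_i \leq m$, so $I - Y^\dagger Y \succeq 0$ for $Y := X/\sqrt{m}$. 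Next, compute via Cholesky a matrix $W \in \mathbb{C}^{n \times n}$ satisfying $W^\dagger W = I - Y^\dagger Y$, draw a Haar-random $V \in \mathcal{U}_{m-n,n}$ by Gram-Schmidt on a Gaussian matrix, and set $Z := VW$. By construction $Y^\dagger Y + Z^\dagger Z = I$, so stacking $Y$ above $Z$ yields a valid element of $\mathcal{U}_{m,n}$.

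Finally, sample a uniformly random size-$n$ subset $S \subseteq [m]$ and build $A$ by placing the rows of $Y$ at the positions in $S$ (listed in sorted order) and the rows of $Z$ at the positions in $[m] \setminus S$ (likewise in sorted order). Property (i) is then immediate, and the success probability $1 - O(\delta)$ is inherited from Lemma \ref{rejsample}. For property (ii), the joint distribution of $(Y,Z)$, conditioned on $\mathcal{A}$ succeeding, matches the distribution of the top-$n$ and bottom-$(m-n)$ row blocks of a Haar-random element of $\mathcal{U}_{m,n}$; a uniformly random row permutation then yields $\mathcal{H}_{m,n}$ by the left-invariance of the Haar measure under permutation matrices. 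The main delicate step is verifying that $Z := VW$ with $V$ Haar-random on $\mathcal{U}_{m-n,n}$ really does realize the Haar-conditional distribution of the bottom rows given $Y$; this follows because the conditional distribution is invariant under left-multiplication by any $(m-n) \times (m-n)$ unitary acting on the bottom rows (a consequence of left-invariance of $\mathcal{H}_{m,m}$), and is therefore pinned down by the single constraint $Z^\dagger Z = I - Y^\dagger Y$. The resulting algorithm runs in $\mathsf{BPP} \subseteq \mathsf{BPP}^{\mathsf{NP}}$, as claimed.
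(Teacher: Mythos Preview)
Your argument is correct. The rejection-sampling step is identical to the paper's. The difference lies in the second step, where you must embed $X':=X/\sqrt{m}$ as a random $n\times n$ submatrix of a Haar-random $A\in\mathcal{U}_{m,n}$. The paper does this non-constructively: it defines $\mathcal{D}_X$ as $\mathcal{H}_{m,n}$ conditioned on the event that $X'$ occurs as a submatrix, observes that this conditioning can be implemented in $\mathsf{PostBPP}$, and then invokes $\mathsf{PostBPP}\subseteq\mathsf{BPP}^{\mathsf{NP}}$. You instead give an explicit construction: complete $Y=X'$ to an element of $\mathcal{U}_{m,n}$ by setting the remaining rows to $Z=VW$ with $W^\dagger W=I-Y^\dagger Y$ and $V\sim\mathcal{H}_{m-n,n}$, then shuffle the row positions. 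Your justification that $Z=VW$ realizes the correct conditional law---via left-$U(m-n)$-invariance of $\mathcal{H}_{m,n}$ and uniqueness of Haar measure on $\mathcal{U}_{m-n,n}$---is sound (and almost surely $I-Y^\dagger Y\succ 0$, so $W$ is invertible and the parametrization by $V$ is a bijection). Your approach thus actually yields a $\mathsf{BPP}$ algorithm rather than merely $\mathsf{BPP}^{\mathsf{NP}}$; the paper remarks that it believes such an algorithm exists but opts for the shorter postselection argument since $\mathsf{BPP}^{\mathsf{NP}}$ suffices for the application.
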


\begin{proof}
Given a sample $X\sim\mathcal{G}^{n\times n}$, the first step is to
\textquotedblleft convert\textquotedblright\ $X$\ into a sample from the
truncated Haar measure $\mathcal{S}_{m,n}$. \ To do so, we use the rejection
sampling procedure\ from Lemma \ref{rejsample}. \ By Theorem \ref{truncthm2},
we have $p_{S}\left(  X\right)  /p_{G}\left(  X\right)  \leq1+O\left(
\delta\right)  $\ for all $X\in\mathbb{C}^{n\times n}$, where $p_{S}$\ and
$p_{G}$\ are the probability density functions of $\mathcal{S}_{m,n}$\ and
$\mathcal{G}^{n\times n}$\ respectively. \ Also, letting $\zeta:=\left(
1/\pi\right)  ^{n^{2}}/c_{m,n}$\ be the constant from Section \ref{EMBEDDING},
we have%
\[
\frac{\zeta\cdot p_{S}\left(  X\right)  }{p_{G}\left(  X\right)  }%
=\frac{\widetilde{p}_{S}\left(  X\right)  }{p_{G}\left(  X\right)  }%
=\frac{\prod_{i\in\left[  n\right]  }\left(  1-\lambda_{i}/m\right)  ^{m-2n}%
}{\prod_{i\in\left[  n\right]  }e^{-\lambda_{i}}},
\]
which is clearly computable in polynomial time (to any desired precision)
given $X$. \ Finally, we saw from Lemma \ref{alpha1}\ that $\left\vert
\zeta-1\right\vert =O\left(  \delta\right)  $.

So by Lemma \ref{rejsample}, the rejection sampling procedure $\mathcal{R}%
$\ has the following properties:

\begin{enumerate}
\item[(1)] $\mathcal{R}$\ can be implemented in $\mathsf{BPP}$.

\item[(2)] $\mathcal{R}$ rejects with probability $O\left(  \delta\right)  $.

\item[(3)] Conditioned on $\mathcal{R}$ accepting, we have $X\sim
\mathcal{S}_{m,n}$.
\end{enumerate}

Now suppose $\mathcal{R}$\ accepts, and let $X^{\prime}:=X/\sqrt{m}$. \ Then
our problem reduces to embedding $X^{\prime}$\ as a random submatrix of a
sample $A$\ from $\mathcal{H}_{m,n}$. \ We do this as follows. \ Given a
matrix $A\in\mathcal{U}_{m,n}$, let $E_{X}\left(  A\right)  $\ be the event
that $X^{\prime}$\ occurs as an $n\times n$\ submatrix of $A$. \ Then let
$\mathcal{D}_{X}$\ be the distribution over $A\in\mathcal{U}_{m,n}$\ obtained
by first sampling $A$\ from $\mathcal{H}_{m,n}$, and then conditioning on
$E_{X}\left(  A\right)  $ holding. \ Note that $\mathcal{D}_{X}$\ is
well-defined, since for every $X$ in the support of $\mathcal{S}_{m,n}$, there
is \textit{some} $A\in\mathcal{U}_{m,n}$\ satisfying $E_{X}\left(  A\right)  $.

We now check that $\mathcal{D}_{X}$\ satisfies properties (i) and (ii). \ For
(i), every element in the support of $\mathcal{D}_{X}$\ contains $X^{\prime}$
as a submatrix by definition, and by symmetry, this $X^{\prime}$ occurs at a
uniformly-random location. \ For (ii), notice that we could equally well have
sampled $A\sim\mathcal{D}_{X}$\ by first sampling $X\sim\mathcal{S}_{m,n}$,
then placing $X^{\prime}$\ at a uniformly-random location within $A$, and
finally \textquotedblleft filling in\textquotedblright\ the remaining $\left(
m-n\right)  \times n$\ block of $A$\ by drawing it from $\mathcal{H}_{m,n}%
$\ conditioned on $X^{\prime}$. \ From this perspective, however, it is clear
that $A$\ is Haar-random, since $\mathcal{S}_{m,n}$\ was just a truncation of
$\mathcal{H}_{m,n}$ to begin with.

The last thing we need to show is that, given $X$ as input, we can sample from
$\mathcal{D}_{X}$ in $\mathsf{BPP}^{\mathsf{NP}}$. \ As a first step, we can
certainly sample from $\mathcal{H}_{m,n}$\ in $\mathsf{BPP}$. \ To do so, for
example, we can first generate a matrix $A\sim\mathcal{G}^{m\times n}$\ of
independent Gaussians,\ and then apply the Gram-Schmidt orthogonalization
procedure to $A$. \ Now, given a $\mathsf{BPP}$ algorithm that samples
$A\sim\mathcal{H}_{m,n}$, the remaining task is to condition on the event
$E_{X}\left(  A\right)  $. \ Given $X$\ and $A$, it is easy to check whether
$E_{X}\left(  A\right)  $\ holds. \ But this means that we can sample from the
conditional distribution $\mathcal{D}_{X}$\ in the complexity class
$\mathsf{PostBPP}$.

Composing a $\mathsf{BPP}$\ algorithm with a $\mathsf{PostBPP}$\ one yields an
algorithm that runs in $\mathsf{BP}\cdot\mathsf{PostBPP}\subseteq
\mathsf{BPP}^{\mathsf{NP}}$.
\end{proof}

The final step is to prove that, \textit{if} we had an oracle $\mathcal{O}%
$\ for approximate \textsc{BosonSampling}, then by using $\mathcal{O}$\ in
conjunction with the hiding procedure from Lemma \ref{hidelemma},\ we could
estimate $\left\vert \operatorname*{Per}\left(  X\right)  \right\vert ^{2}%
$\ in $\mathsf{BPP}^{\mathsf{NP}}$, where $X\sim\mathcal{G}^{n\times n}$\ is a
Gaussian input matrix.

To prove this theorem, we need to recall some definitions from previous
sections. \ The set of tuples $S=\left(  s_{1},\ldots,s_{m}\right)  $
satisfying $s_{1},\ldots,s_{m}\geq0$\ and $s_{1}+\cdots+s_{m}=n$ is denoted
$\Phi_{m,n}$. \ Given a matrix $A\in\mathcal{U}_{m,n}$, we denote by
$\mathcal{D}_{A}$\ the distribution over $\Phi_{m,n}$\ where each $S$\ occurs
with probability%
\[
\Pr_{\mathcal{D}_{A}}\left[  S\right]  =\frac{\left\vert \operatorname*{Per}%
\left(  A_{S}\right)  \right\vert ^{2}}{s_{1}!\cdots s_{m}!}.
\]
Also, recall that in the $\left\vert \text{\textsc{GPE}}\right\vert _{\pm}%
^{2}$\ problem, we are given an input of the form $\left\langle
X,0^{1/\varepsilon},0^{1/\delta}\right\rangle $, where $X$\ is an $n\times n$
matrix drawn from the Gaussian distribution\ $\mathcal{G}^{n\times n}$.\ \ The
goal is to approximate $\left\vert \operatorname*{Per}\left(  X\right)
\right\vert ^{2}$\ to within an additive error $\varepsilon\cdot n!$, with
probability at least $1-\delta$\ over $X$.

We now prove Theorem \ref{mainresult}, our main result. \ Let us restate the
theorem for convenience:

\begin{quotation}
\textit{Let }$\mathcal{O}$\textit{\ be any approximate }\textsc{BosonSampling}%
\textit{\ oracle. \ Then }$\left\vert \text{\textsc{GPE}}\right\vert _{\pm
}^{2}\in\mathsf{FBPP}^{\mathsf{NP}^{\mathcal{O}}}$.
\end{quotation}

\begin{proof}
[Proof of Theorem \ref{mainresult}]Let $X\sim\mathcal{G}^{n\times n}$ be an
input matrix, and let $\varepsilon,\delta>0$\ be error parameters. \ Then we
need to show how to approximate $\left\vert \operatorname*{Per}\left(
X\right)  \right\vert ^{2}$ to within an additive error $\varepsilon\cdot
n!$,\ with probability at least $1-\delta$\ over $X$, in the complexity class
$\mathsf{FBPP}^{\mathsf{NP}^{\mathcal{O}}}$. \ The running time should be
polynomial in $n$,\ $1/\varepsilon$, and $1/\delta$.

Let $m:=\frac{K}{\delta}n^{5}\log^{2}\frac{n}{\delta}$, where $K$ is a
suitably large constant. \ Also, let $X^{\prime}:=X/\sqrt{m}$\ be a scaled
version of $X$. \ Then we can state our problem equivalently as follows:
approximate%
\[
\left\vert \operatorname*{Per}\left(  X^{\prime}\right)  \right\vert
^{2}=\frac{\left\vert \operatorname*{Per}\left(  X\right)  \right\vert ^{2}%
}{m^{n}}%
\]
to within an additive error $\varepsilon\cdot n!/m^{n}$.

As a first step, Lemma \ref{hidelemma} says that in $\mathsf{BPP}%
^{\mathsf{NP}}$, and with high probability over $X^{\prime}$, we can generate
a matrix $A\in\mathcal{U}^{m\times n}$ that is exactly Haar-random, and that
contains $X^{\prime}$\ as a random $n\times n$\ submatrix. \ So certainly we
can generate such an $A$\ in $\mathsf{FBPP}^{\mathsf{NP}^{\mathcal{O}}}%
$\ (indeed, without using the oracle $\mathcal{O}$). \ Provided we chose $K$
sufficiently large, this procedure will succeed with probability at least
(say) $1-\delta/4$.

Set $\beta:=\varepsilon\delta/24$. \ Suppose we feed $\left\langle
A,0^{1/\beta},r\right\rangle $\ to the approximate \textsc{BosonSampling}%
\ oracle $\mathcal{O}$, where $r\in\left\{  0,1\right\}
^{\operatorname*{poly}\left(  m\right)  }$\ is a random string. \ Then by
definition, as $r$ is varied, $\mathcal{O}$\ returns a sample from a
probability distribution $\mathcal{D}_{A}^{\prime}$\ such that $\left\Vert
\mathcal{D}_{A}-\mathcal{D}_{A}^{\prime}\right\Vert \leq\beta$.

Let $p_{S}:=\Pr_{\mathcal{D}_{A}}\left[  S\right]  $ and $q_{S}:=\Pr
_{\mathcal{D}_{A}^{\prime}}\left[  S\right]  $ for all $S\in\Phi_{m,n}$.
\ Also, let $W\subset\left[  m\right]  $\ be the subset of $n$\ rows of $A$ in
which $X^{\prime}$\ occurs as a submatrix. \ Then we will be particularly
interested in the basis state $S^{\ast}=\left(  s_{1},\ldots,s_{m}\right)  $,
which is defined by $s_{i}=1$\ if $i\in W$\ and $s_{i}=0$\ otherwise. \ Notice
that%
\[
p_{S^{\ast}}=\frac{\left\vert \operatorname*{Per}\left(  A_{S^{\ast}}\right)
\right\vert ^{2}}{s_{1}!\cdots s_{m}!}=\left\vert \operatorname*{Per}\left(
X^{\prime}\right)  \right\vert ^{2},
\]
and that%
\[
q_{S^{\ast}}=\Pr_{\mathcal{D}_{A}^{\prime}}\left[  S^{\ast}\right]  =\Pr
_{r\in\left\{  0,1\right\}  ^{\operatorname*{poly}\left(  m\right)  }}\left[
\mathcal{O}\left(  A,0^{1/\beta},r\right)  =S^{\ast}\right]  .
\]
In other words: $p_{S^{\ast}}$\ encodes the squared permanent that we are
trying to approximate, while $q_{S^{\ast}}$\ can be approximated in
$\mathsf{FBPP}^{\mathsf{NP}^{\mathcal{O}}}$\ using Stockmeyer's approximate
counting method (Theorem \ref{approxcount}). \ Therefore, to show that with
high probability we can approximate $p_{S^{\ast}}$\ in $\mathsf{FBPP}%
^{\mathsf{NP}^{\mathcal{O}}}$, it suffices to show that $p_{S^{\ast}}$\ and
$q_{S^{\ast}}$\ are close with high probability over $X$ and $A$.

Call a basis state $S\in\Phi_{m,n}$\ \textit{collision-free} if each $s_{i}%
$\ is either $0$ or $1$. \ Let $G_{m,n}$\ be the set of collision-free $S$'s,
and notice that $S^{\ast}\in G_{m,n}$. \ From now on, we will find it
convenient to restrict attention to $G_{m,n}$.

Let $\Delta_{S}:=\left\vert p_{S}-q_{S}\right\vert $, so that%
\[
\left\Vert \mathcal{D}_{A}-\mathcal{D}_{A}^{\prime}\right\Vert =\frac{1}%
{2}\sum_{S\in\Phi_{m,n}}\Delta_{S}.
\]
Then%
\begin{align*}
\operatorname*{E}_{S\in G_{m,n}}\left[  \Delta_{S}\right]   &  \leq\frac
{\sum_{S\in\Phi_{m,n}}\Delta_{S}}{\left\vert G_{m,n}\right\vert }\\
&  =\frac{2\left\Vert \mathcal{D}_{A}-\mathcal{D}_{A}^{\prime}\right\Vert
}{\left\vert G_{m,n}\right\vert }\\
&  \leq\frac{2\beta}{\binom{m}{n}}\\
&  <3\beta\cdot\frac{n!}{m^{n}},
\end{align*}
where the last line used the fact that $m=\omega\left(  n^{2}\right)  $. \ So
by Markov's inequality, for all $k>1$,%
\[
\Pr_{S\in G_{m,n}}\left[  \Delta_{S}>3\beta k\cdot\frac{n!}{m^{n}}\right]
<\frac{1}{k}.
\]
In particular, if we set $k:=4/\delta$ and notice that $4\beta k=12\beta
/\delta=\varepsilon/2$,%
\[
\Pr_{S\in G_{m,n}}\left[  \Delta_{S}>\frac{\varepsilon}{2}\cdot\frac{n!}%
{m^{n}}\right]  <\frac{\delta}{4}.
\]

Of course, our goal is to upper-bound $\Delta_{S^{\ast}}$, not $\Delta_{S}%
$\ for a randomly-chosen $S\in G_{m,n}$. \ However, a crucial observation is
that, from the perspective of $\mathcal{O}$---which sees only $A$, and not
$S^{\ast}$\ or $X^{\prime}$---the distribution over possible values of
$S^{\ast}$\ is simply the uniform one. \ To see this, notice that instead of
sampling $X$ \textit{and then} $A$ (as in Lemma \ref{hidelemma}), we could
have equally well generated the pair $\left\langle X,A\right\rangle $\ by
first sampling $A$\ from the Haar measure $\mathcal{H}_{m,n}$, and then
setting $X:=\sqrt{m}A_{S^{\ast}}$, for $S^{\ast}$\ chosen uniformly from
$G_{m,n}$. \ It follows that seeing $A$\ gives $\mathcal{O}$\ no information
whatsoever about the identity of $S^{\ast}$. \ So even if $\mathcal{O}$\ is
trying adversarially to maximize $\Delta_{S^{\ast}}$, we still have%
\[
\Pr_{X,A}\left[  \Delta_{S^{\ast}}>\frac{\varepsilon}{2}\cdot\frac{n!}{m^{n}%
}\right]  <\frac{\delta}{4}.
\]

Now suppose we use Stockmeyer's algorithm to approximate $q_{S^{\ast}}$\ in
$\mathsf{FBPP}^{\mathsf{NP}^{\mathcal{O}}}$. \ Then by Theorem
\ref{approxcount}, for all $\alpha>0$, we can obtain an estimate
$\widetilde{q}_{S^{\ast}}$\ such that%
\[
\Pr\left[  \left\vert \widetilde{q}_{S^{\ast}}-q_{S^{\ast}}\right\vert
>\alpha\cdot q_{S^{\ast}}\right]  <\frac{1}{2^{m}},
\]
in time polynomial in $m$\ and $1/\alpha$. \ Note that%
\[
\operatorname*{E}_{S\in G_{m,n}}\left[  q_{S}\right]  \leq\frac{1}{\left\vert
G_{m,n}\right\vert }=\frac{1}{\binom{m}{n}}<2\frac{n!}{m^{n}},
\]
so%
\[
\Pr_{S\in G_{m,n}}\left[  q_{S}>2k\cdot\frac{n!}{m^{n}}\right]  <\frac{1}{k}%
\]
for all $k>1$\ by Markov's inequality, so%
\[
\Pr_{X,A}\left[  q_{S^{\ast}}>2k\cdot\frac{n!}{m^{n}}\right]  <\frac{1}{k}%
\]
by the same symmetry principle used previously for $\Delta_{S^{\ast}}$.

Let us now make the choice $\alpha:=\varepsilon\delta/16$\ and $k:=4/\delta$.
\ Then putting everything together and applying the union bound,%
\begin{align*}
\Pr\left[  \left\vert \widetilde{q}_{S^{\ast}}-p_{S^{\ast}}\right\vert
>\varepsilon\cdot\frac{n!}{m^{n}}\right]   &  \leq\Pr\left[  \left\vert
\widetilde{q}_{S^{\ast}}-q_{S^{\ast}}\right\vert >\frac{\varepsilon}{2}%
\cdot\frac{n!}{m^{n}}\right]  +\Pr\left[  \left\vert q_{S^{\ast}}-p_{S^{\ast}%
}\right\vert >\frac{\varepsilon}{2}\cdot\frac{n!}{m^{n}}\right] \\
&  \leq\Pr\left[  q_{S^{\ast}}>2k\cdot\frac{n!}{m^{n}}\right]  +\Pr\left[
\left\vert \widetilde{q}_{S^{\ast}}-q_{S^{\ast}}\right\vert >\alpha\cdot
q_{S^{\ast}}\right]  +\Pr\left[  \Delta_{S^{\ast}}>\frac{\varepsilon}{2}%
\cdot\frac{n!}{m^{n}}\right] \\
&  <\frac{1}{k}+\frac{1}{2^{m}}+\frac{\delta}{4}\\
&  =\frac{\delta}{2}+\frac{1}{2^{m}},
\end{align*}
where the probabilities are over $X$\ and $A$ as well as the internal
randomness used by the approximate counting procedure. \ So, including the
probability that the algorithm $\mathcal{A}$\ from Lemma \ref{hidelemma}%
\ fails, the total probability that our $\mathsf{FBPP}^{\mathsf{NP}%
^{\mathcal{O}}}$\ machine fails to output a good enough approximation to
$p_{S^{\ast}}=\left\vert \operatorname*{Per}\left(  X^{\prime}\right)
\right\vert ^{2}$\ is at most%
\[
\frac{\delta}{4}+\left(  \frac{\delta}{2}+\frac{1}{2^{m}}\right)  <\delta,
\]
as desired. \ This completes the proof.
\end{proof}

\subsection{Implications\label{IMPLIC}}

In this section, we harvest\ some implications of Theorem \ref{mainresult} for
quantum complexity theory. \ First, if a fast classical algorithm for
\textsc{BosonSampling} exists, then it would have a surprising consequence for
the classical complexity of the $\left\vert \text{\textsc{GPE}}\right\vert
_{\pm}^{2}$\ problem.

\begin{corollary}
\label{cor0}Suppose \textsc{BosonSampling}$\in\mathsf{SampP}$.\ \ Then
$\left\vert \text{\textsc{GPE}}\right\vert _{\pm}^{2}\in\mathsf{FBPP}%
^{\mathsf{NP}}$. \ Indeed, even if \textsc{BosonSampling}$\in\mathsf{SampP}%
^{\mathsf{PH}}$, then $\left\vert \text{\textsc{GPE}}\right\vert _{\pm}^{2}%
\in\mathsf{FBPP}^{\mathsf{PH}}$.
\end{corollary}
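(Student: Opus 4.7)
The plan is to apply Theorem \ref{mainresult} with the classical sampler playing the role of the approximate \textsc{BosonSampling} oracle $\mathcal{O}$. First I would unpack the hypothesis $\textsc{BosonSampling}\in\mathsf{SampP}$ into a concrete probabilistic polynomial-time machine $C$ that, given $\langle A,0^{1/\varepsilon}\rangle$, samples from some $\mathcal{D}'_A$ with $\lVert\mathcal{D}'_A-\mathcal{D}_A\rVert\leq\varepsilon$. The key observation is that $C$, with its random tape $r$ promoted to an explicit input, becomes a deterministic polynomial-time procedure, and therefore fits Definition \ref{bosonoracle} of an approximate \textsc{BosonSampling} oracle. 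This is precisely why Theorem \ref{mainresult} was phrased in terms of oracles whose only randomness is an explicit input string: it is exactly what lets a classical simulator be plugged in as $\mathcal{O}$ with no loss.

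Next I would invoke Theorem \ref{mainresult} to conclude $|\textsc{GPE}|_{\pm}^{2}\in\mathsf{FBPP}^{\mathsf{NP}^{\mathcal{O}}}$. The final step is to collapse the oracle: because $\mathcal{O}$ is itself computable in deterministic polynomial time once $r$ is treated as part of the input, an $\mathsf{NP}$ machine can simulate it in-line, giving $\mathsf{NP}^{\mathcal{O}}\subseteq\mathsf{NP}$ and hence $|\textsc{GPE}|_{\pm}^{2}\in\mathsf{FBPP}^{\mathsf{NP}}$, which is the first assertion.

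For the second assertion, I would carry out the identical argument starting from a $\mathsf{SampP}^{\mathsf{PH}}$ sampler $C$. Promoting the random tape again produces an approximate \textsc{BosonSampling} oracle $\mathcal{O}\in\mathsf{P}^{\mathsf{PH}}$, so Theorem \ref{mainresult} yields
\[
|\textsc{GPE}|_{\pm}^{2}\in\mathsf{FBPP}^{\mathsf{NP}^{\mathcal{O}}}\subseteq\mathsf{FBPP}^{\mathsf{NP}^{\mathsf{PH}}}=\mathsf{FBPP}^{\mathsf{PH}},
\]
using the standard identity $\mathsf{NP}^{\mathsf{PH}}=\mathsf{PH}$.

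Since Theorem \ref{mainresult} does all of the real work, there is no serious obstacle here; the only subtle point worth being explicit about is the reason the argument genuinely requires $C$ to be classical (so that one can absorb $C$'s randomness into its input and let an $\mathsf{NP}$ oracle drive the simulation deterministically). This is exactly the caveat flagged after the statement of Theorem \ref{warmup}, and it is the reason one cannot naively substitute, say, a boson computer itself for $C$.
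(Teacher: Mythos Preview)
Your proposal is correct and matches the paper's intended argument: the paper treats this corollary as an immediate consequence of Theorem~\ref{mainresult} without giving a separate proof, and your write-up simply spells out the routine details (promoting the sampler's random tape to an input so that it fits Definition~\ref{bosonoracle}, then absorbing the polynomial-time oracle into the $\mathsf{NP}$ or $\mathsf{PH}$ machine). Your remark about why the sampler must be classical is exactly the point the paper emphasizes after Theorem~\ref{warmup}.
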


However, we would also like evidence that a boson computer can solve
\textit{search} problems that are intractable classically. \ Fortunately, by
using Theorem \ref{samprel}---the \textquotedblleft Sampling/Searching
Equivalence Theorem\textquotedblright---we can obtain such evidence in a
completely automatic\ way. \ In particular, combining Corollary \ref{cor0}
with Theorem \ref{samprel}\ yields the following conclusion.

\begin{corollary}
\label{corb0}There exists a search problem $R\in\mathsf{BosonFP}\ $such that
$\left\vert \text{\textsc{GPE}}\right\vert _{\pm}^{2}\in\mathsf{FBPP}%
^{\mathsf{NP}^{\mathcal{O}}}$ for all computable oracles $\mathcal{O}$\ that
solve $R$. \ So in particular, if $\mathsf{BosonFP}\subseteq\mathsf{FBPP}$
(that is, all search problems solvable by a boson computer are also solvable
classically), then $\left\vert \text{\textsc{GPE}}\right\vert _{\pm}^{2}%
\in\mathsf{FBPP}^{\mathsf{NP}}$.
\end{corollary}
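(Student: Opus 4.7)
The plan is to combine Theorem \ref{mainresult} (or rather its consequence Corollary \ref{cor0}) with the Sampling/Searching Equivalence Theorem (Theorem \ref{samprel}), applied to the \textsc{BosonSampling} problem itself. The whole argument is essentially a routine composition of the two theorems, so I do not expect any serious obstacle beyond bookkeeping about how oracles relativize.

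First I would invoke Theorem \ref{samprel} with the approximate sampling problem $S = \textsc{BosonSampling}$ to obtain an \emph{equivalent} search problem $R := R_{\textsc{BosonSampling}}$. Property (i) of Theorem \ref{samprel} applied to any approximate \textsc{BosonSampling} oracle $\mathcal{O}$ gives $R \in \mathsf{FBPP}^{\mathcal{O}}$; quantifying over all such $\mathcal{O}$ yields $R \in \mathsf{BosonFP}_{\varepsilon}$, and then Theorem \ref{infbqp}, which asserts $\mathsf{BosonFP}_{\varepsilon} = \mathsf{BosonFP}$, gives $R \in \mathsf{BosonFP}$. This disposes of the first clause of the corollary.

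Next, suppose $\mathcal{O}$ is any computable oracle that solves $R$, i.e.\ given $\langle x, 0^{1/\delta}\rangle$ it returns an element of $B_x$ with probability at least $1 - \delta$. Then property (ii) of Theorem \ref{samprel} gives $\textsc{BosonSampling} \in \mathsf{SampP}^{\mathcal{O}}$. At this point I would apply the relativized version of Corollary \ref{cor0}: its proof is simply Theorem \ref{mainresult} wrapped around Stockmeyer's approximate counting, and nothing in that proof prevents the simulating $\mathsf{SampP}$ machine from making further queries to an auxiliary oracle. Relativizing cleanly in $\mathcal{O}$, the hypothesis $\textsc{BosonSampling} \in \mathsf{SampP}^{\mathcal{O}}$ yields $\lvert\textsc{GPE}\rvert^2_{\pm} \in \mathsf{FBPP}^{\mathsf{NP}^{\mathcal{O}}}$, which is the main claim.

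Finally, for the \emph{in particular} clause, if $\mathsf{BosonFP} \subseteq \mathsf{FBPP}$ then in particular the search problem $R$ constructed above lies in $\mathsf{FBPP}$, so there is a probabilistic polynomial-time classical algorithm $M$ solving $R$. Using $M$ in place of $\mathcal{O}$ in the previous paragraph and absorbing the $\mathsf{BPP}$ machine into the outer $\mathsf{FBPP}^{\mathsf{NP}}$ computation (via $\mathsf{FBPP}^{\mathsf{NP}^{\mathsf{BPP}}} = \mathsf{FBPP}^{\mathsf{NP}}$) gives $\lvert\textsc{GPE}\rvert^2_{\pm} \in \mathsf{FBPP}^{\mathsf{NP}}$, as desired. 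The only mildly delicate point to check carefully is that the reduction supplied by Theorem \ref{samprel}(ii) produces a sampler whose randomness is a string $r$ taken from its input, which is exactly the interface required by Corollary \ref{cor0}; everything else is routine.
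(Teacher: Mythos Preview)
Your proposal is correct and follows exactly the route the paper indicates: the paper simply states that Corollary~\ref{corb0} follows by ``combining Corollary~\ref{cor0} with Theorem~\ref{samprel}'' and gives no further details, so you have correctly unpacked that combination. Your handling of the bookkeeping---showing $R\in\mathsf{BosonFP}$ via part~(i), then using part~(ii) to convert any oracle for $R$ into a $\mathsf{SampP}$-relative sampler, then invoking the relativized form of Theorem~\ref{mainresult}/Corollary~\ref{cor0}---is precisely what was intended.
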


Recall from Theorem \ref{infbqp}\ that $\mathsf{BosonFP}\subseteq
\mathsf{FBQP}$: that is, linear-optics computers can be simulated efficiently
by \textquotedblleft ordinary\textquotedblright\ quantum computers. \ Thus,
Corollary \ref{corb0}\ implies in particular that, if $\mathsf{FBPP=FBQP}$,
then $\left\vert \text{\textsc{GPE}}\right\vert _{\pm}^{2}\in\mathsf{FBPP}%
^{\mathsf{NP}}$. \ Or in other words: if $\left\vert \text{\textsc{GPE}%
}\right\vert _{\pm}^{2}$\ is $\mathsf{\#P}$-hard, then $\mathsf{FBPP}$ cannot
equal $\mathsf{FBQP}$, unless $\mathsf{P}^{\#\mathsf{P}}=\mathsf{BPP}%
^{\mathsf{NP}}$\ and the polynomial hierarchy collapses. \ This would arguably
be our strongest evidence to date against the Extended Church-Turing Thesis.

In Sections \ref{GPE2GPE}, \ref{DISTPER}, and \ref{HARDPER}, we initiate a
program aimed at proving $\left\vert \text{\textsc{GPE}}\right\vert _{\pm}%
^{2}$\ is $\mathsf{\#P}$-hard.

\section{Experimental Prospects\label{EXPER}}

Our main goal in this paper was to define and study a \textit{theoretical}
model of quantum computing with noninteracting bosons. \ There are several
ways to motivate this model other than practical realizability: for
example,\ it abstracts a basic class of physical systems, it leads to
interesting new complexity classes between $\mathsf{BPP}$\ and $\mathsf{BQP}%
$,\ and it helped us provide evidence that quantum mechanics \textit{in
general} is hard to simulate classically. \ (In other words, even if we only
cared about \textquotedblleft standard\textquotedblright\ quantum computing,
we would not know how to prove results like Theorem \ref{mainresult}\ without
using linear optics as a proof tool.)

Clearly, though, a major motivation for our results is that they raise the
possibility of actually \textit{building} a scalable linear-optics computer,
and using it to solve the \textsc{BosonSampling}\ problem. \ By doing this,
one could hope to give evidence that nontrivial quantum computation is
possible, \textit{without} having to solve all the technological problems of
building a universal quantum computer. \ In other words, one could see our
results as suggesting a new path to testing the Extended Church-Turing Thesis,
which might be more experimentally accessible than alternative paths.

A serious discussion of implementation issues\ is outside the scope of this
paper. \ Here, though, we offer some preliminary observations\ that emerged
from our discussions with quantum optics experts. \ These observations concern
both the challenges of performing a \textsc{BosonSampling}\ experiment, and
the implications of such an experiment for complexity theory.

\subsection{The Generalized Hong-Ou-Mandel Dip\label{EPROPOSAL}}

From a physics standpoint, the experiment that we are asking for is
essentially a generalization of the \textit{Hong-Ou-Mandel dip} \cite{hom} to
three or more photons. \ The Hong-Ou-Mandel dip (see Figure \ref{homfig}) is a
well-known effect in quantum optics whereby two identical photons, which were
initially in different modes, become \textit{correlated} after passing through
a beamsplitter that applies the Hadamard transformation.%
\begin{figure}[ptb]%
\centering
\includegraphics[
height=2.0583in,
width=2.1707in
]%
{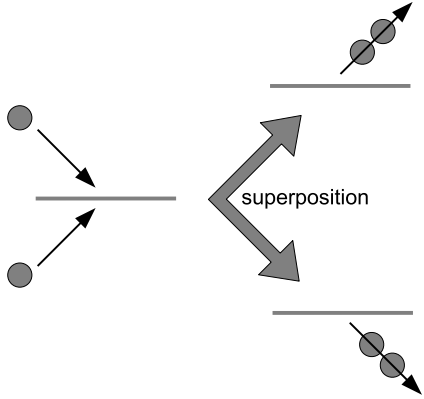}%
\caption{The Hong-Ou-Mandel dip.}%
\label{homfig}%
\end{figure}
More formally, the basis state $\left\vert 1,1\right\rangle $\ evolves to%
\[
\frac{\left\vert 2,0\right\rangle -\left\vert 0,2\right\rangle }{\sqrt{2}},
\]
so that a subsequent measurement reveals either both photons in the first mode
or else both photons in the second mode. \ This behavior is exactly what one
would predict from the model in Section \ref{MODEL}, in which $n$-photon
transition amplitudes are given by the permanents of $n\times n$ matrices.
\ More concretely, the amplitude of the basis state $\left\vert
1,1\right\rangle $\ \textquotedblleft dips\textquotedblright\ to $0$ because%
\[
\operatorname*{Per}\left(
\begin{array}
[c]{cc}%
\frac{1}{\sqrt{2}} & \frac{1}{\sqrt{2}}\\
\frac{1}{\sqrt{2}} & -\frac{1}{\sqrt{2}}%
\end{array}
\right)  =0,
\]
and hence there is destructive interference between the two paths mapping
$\left\vert 1,1\right\rangle $\ to itself.

Our challenge to experimentalists is to \textit{confirm directly that the
quantum-mechanical formula for }$n$\textit{-boson transition amplitudes in
terms of }$n\times n$\textit{\ permanents given in Section \ref{PERMDEF}%
,\ namely}%
\begin{equation}
\left\langle S|\varphi\left(  U\right)  |T\right\rangle =\frac
{\operatorname*{Per}\left(  U_{S,T}\right)  }{\sqrt{s_{1}!\cdots s_{m}%
!t_{1}!\cdots t_{m}!}}, \label{perform}%
\end{equation}
\textit{continues to hold for large values of }$n$\textit{. \ }In other words,
demonstrate a Hong-Ou-Mandel interference pattern involving as many identical
bosons as possible (though even $3$ or $4$ bosons would be of interest here).

The point of such an experiment would be to produce evidence that a
linear-optical network can indeed solve the \textsc{BosonSampling}\ problem in
a scalable way---and that therefore, no polynomial-time classical algorithm
can sample the observed distribution over photon numbers (modulo our
conjectures about the computational complexity of the permanent).

Admittedly, since complexity theory deals only with \textit{asymptotic}
statements, no finite experiment can answer the relevant questions
definitively. \ That is, even if formula (\ref{perform})\ were confirmed in
the case of $30$ identical bosons, a true-believer in the Extended
Church-Turing Thesis could always maintain that the formula would break down
for $31$ bosons, and so on. \ Thus, the goal here is simply to collect enough
evidence, for large enough $n$, that the ECT becomes less tenable as a
scientific hypothesis.

Of course, one should not choose $n$ \textit{so} large that a classical
computer cannot even efficiently \textit{verify} that the formula
(\ref{perform})\ holds! \ It is important to understand this difference
between the \textsc{BosonSampling}\ problem on the one hand, and $\mathsf{NP}%
$\ problems\ such as \textsc{Factoring} on the other. \ Unlike with
\textsc{Factoring}, there does not seem to be any \textit{witness} for
\textsc{BosonSampling} that a classical computer can efficiently verify, much
less a witness that a boson computer can produce.\footnote{Indeed, given
a\ matrix $X\in\mathbb{C}^{n\times n}$, there \textit{cannot} in general be an
$\mathsf{NP}$\ witness proving the value of $\operatorname*{Per}\left(
X\right)  $, unless $\mathsf{P}^{\#\mathsf{P}}=\mathsf{P}^{\mathsf{NP}}$ and
the polynomial hierarchy collapses. \ On the other hand, this argument does
not rule out an \textit{interactive protocol} with a $\mathsf{BPP}$\ verifier
and a \textsc{BosonSampling}\ prover. \ Whether any such protocol exists for
verifying statements not in $\mathsf{BPP}$ is an extremely interesting open
problem.} \ This means that, when $n$ is very large (say, more than $100$),
even if a linear-optics device is correctly solving \textsc{BosonSampling},
there might be no feasible way to prove this without presupposing the truth of
the physical laws being tested! \ Thus, for experimental purposes, the most
useful values of $n$ are presumably those for which a classical computer has
some difficulty computing an $n\times n$\ permanent, but can nevertheless do
so in order to confirm the results. \ We estimate this range as $10\leq
n\leq50$.

But how exactly should one verify formula (\ref{perform})? \ One approach
would be to perform full quantum state tomography on the output state of a
linear-optical network, or at least to characterize the distribution over
photon numbers. \ However, this approach would require a number of
experimental runs that grows exponentially with $n$, and is probably not needed.

Instead, given a system with $n$ identical photons and $m\geq n$\ modes, one
could do something like the following:

\begin{enumerate}
\item[(1)] Prepare the \textquotedblleft standard initial
state\textquotedblright\ $\left\vert 1_{n}\right\rangle $, in which modes
$1,\ldots,n$ are occupied with a single photon each and modes $n+1,\ldots
,m$\ are unoccupied.

\item[(2)] By passing the photons through a suitable network of beamsplitters
and phaseshifters, apply an $m\times m$\ mode-mixing unitary transformation
$U$. \ This maps the state $\left\vert 1_{n}\right\rangle $\ to $\varphi
\left(  U\right)  \left\vert 1_{n}\right\rangle $, where $\varphi\left(
U\right)  $\ is the induced action of $U$ on $n$-photon states.

\item[(3)] For each mode $i\in\left[  m\right]  $, measure the number of
photons $s_{i}$\ in the $i^{th}$ mode. \ This collapses the state
$\varphi\left(  U\right)  \left\vert 1_{n}\right\rangle $\ to some $\left\vert
S\right\rangle =\left\vert s_{1},\ldots,s_{m}\right\rangle $, where
$s_{1},\ldots,s_{m}$\ are nonnegative integers summing to $n$.

\item[(4)] Using a classical computer, calculate $\left\vert
\operatorname*{Per}\left(  U_{1_{n},S}\right)  \right\vert ^{2}/s_{1}!\cdots
s_{m}!$, the theoretical probability of observing the basis state $\left\vert
S\right\rangle $.

\item[(5)] Repeat steps (1) to (4), for a number of repetitions that scales
polynomially with $n$\ and $m$.

\item[(6)] Plot the empirical frequency of $\left\vert \operatorname*{Per}%
\left(  U_{1_{n},S}\right)  \right\vert ^{2}/s_{1}!\cdots s_{m}!>x$ for all
$x\in\left[  0,1\right]  $, with particular focus on the range $x\approx
1/\binom{m+n-1}{n}$. \ Check for agreement with the frequencies predicted by
quantum mechanics (which can again be calculated using a classical computer,
either deterministically or via Monte Carlo simulation).
\end{enumerate}

The procedure above does not prove that the final state is $\varphi\left(
U\right)  \left\vert 1_{n}\right\rangle $. \ However, it at least checks that
the basis states $\left\vert S\right\rangle $\ with large values of
$\left\vert \operatorname*{Per}\left(  U_{1_{n},S}\right)  \right\vert ^{2}$
are more likely to be observed than those with small values of $\left\vert
\operatorname*{Per}\left(  U_{1_{n},S}\right)  \right\vert ^{2}$, in the
manner predicted by formula (\ref{perform}).

\subsection{Physical Resource Requirements\label{ISSUES}}

We now make some miscellaneous remarks about the physical resource
requirements for our experiment.\bigskip

\textbf{Platform.} \ The obvious\ platform for our proposed experiment is
linear optics. \ However, one could also do the experiment (for example) in a
solid-state system, using bosonic excitations. \ What is essential is just
that the excitations behave as \textit{indistinguishable }bosons when they are
far apart. \ In other words, the amplitude for $n$ excitations to transition
from one basis state to another must be given by the permanent of an $n\times
n$\ matrix of transition amplitudes for the individual excitations. \ On the
other hand, the more general formula (\ref{perform}) need not hold; that is,
it is acceptable for the bosonic approximation to break down for processes
that involve multiple excitations in the same mode. \ (The reason is that the
events that most interest us do not involve collisions anyway.)\bigskip

\textbf{Initial state.} \ In our experiment, the initial state would ideally
consist of at most \textit{one photon per mode}: that is, single-photon Fock
states. \ This is already a nontrivial requirement, since a standard laser
outputs not Fock states but \textit{coherent states}, which have the form%
\[
\left\vert \alpha\right\rangle =e^{-\left\vert \alpha\right\vert ^{2}/2}%
\sum_{n=0}^{\infty}\frac{\alpha^{n}}{\sqrt{n!}}\left\vert n\right\rangle
\]
for some $\alpha\in\mathbb{C}$. \ (In other words, sometimes there are zero
photons, sometimes one, sometimes two, etc., with the number of photons
following a Poisson distribution.) \ Fortunately, the task of building
reliable single-photon sources is an extremely well-known one in quantum
optics \cite{lounis}, and the technology to generate single-photon Fock states
has been steadily improving over the past decade.

Still, one can ask whether any analogue of our computational hardness results
goes through, if the inputs are coherent states rather than Fock states.
\ Bartlett and Sanders \cite{bartlettsanders} have shown that, if the inputs
to a linear-optical network are coherent states, \textit{and} the measurements
are so-called \textit{homodyne}\ (or more generally \textit{Gaussian})
measurements, then the probability distribution over measurement outcomes can
be sampled in classical polynomial time. \ Intuitively, in this case the
photons behave like classical waves, so there is no possibility of a
superpolynomial quantum speedup.

On the other hand, if we have coherent-state inputs and measurements in the
\textit{photon-number} basis, then the situation is more complicated. \ As
pointed out in Section \ref{STRONGER}, in this case Theorem \ref{warmup} still
holds: using postselection, one can prove that \textit{exact} classical
simulation of the linear-optics experiment would collapse the polynomial
hierarchy. \ However, we do not know whether \textit{approximate} classical
simulation would already have surprising complexity consequences in this
case.\bigskip

\textbf{Measurements.} \ For our experiment, it is desirable to have an array
of $m$\ photodetectors, which reliably measure the number of photons $s_{i}$
in each mode $i\in\left[  m\right]  $. \ However, it would also suffice to use
detectors that only measure whether each $s_{i}$ is zero or nonzero. \ This is
because our hardness results talk only about basis states $\left\vert
S\right\rangle =\left\vert s_{1},\ldots,s_{m}\right\rangle $\ that are
\textit{collision-free}, meaning that $s_{i}\in\left\{  0,1\right\}  $\ for
all $i\in\left[  m\right]  $. \ \ Thus, one could simply postselect on the
runs in which exactly $n$ of the $m$ detectors record a photon, in which case
one knows that $s_{i}=1$\ for the corresponding modes $i$, while $s_{i}=0$ for
the remaining $m-n$\ modes. \ (In Appendix \ref{BIRTHDAY}, we will prove a
\textquotedblleft Boson Birthday Bound,\textquotedblright\ which shows that as
long as $m$ is sufficiently large and the mode-mixing unitary $U$ is
Haar-random, this postselection step succeeds with probability close to $1$.
\ Intuitively, if $m$ is large enough, then collision-free basis states are
the overwhelming majority.)

What might \textit{not} suffice are so-called Gaussian\ measurements. \ As
mentioned earlier, if the measurements are Gaussian \textit{and} the inputs
are coherent states, then Bartlett and Sanders \cite{bartlettsanders} showed
that no superpolynomial quantum speedup is possible. \ We do not know what the
situation is if the measurements are Gaussian and the inputs are single-photon
Fock states.

Like single-photon sources, photodetectors have improved dramatically over the
past decade, but of course no detector will be 100\% efficient.\footnote{Here
the \textquotedblleft efficiency\textquotedblright\ of a photodetector refers
to the probability of its detecting a photon that is present.} \ As we discuss
later, the higher the photodetector efficiencies, the less need there is for
\textit{postselection}, and therefore, the more easily one can scale to larger
numbers of photons.\bigskip

\textbf{Number of photons }$n$\textbf{.} \ An obvious question is how many
photons are needed for our experiment. \ The short answer is simply
\textquotedblleft the more, the better!\textquotedblright\ \ The goal of the
experiment is to confirm that, for \textit{every} positive integer $n$, the
transition amplitudes for $n$ identical bosons are given by $n\times
n$\ permanents, as quantum mechanics predicts. \ So the larger the $n$, the
stronger the evidence for this claim, and the greater the strain on any
competing interpretation.

At present, it seems fair to say that our experiment has already been done for
$n=2$\ (this is the Hong-Ou-Mandel dip \cite{hom}). \ However, we are not
aware of any experiment directly testing formula (\ref{perform})\ even for
$n=3$. \ Experimentalists we consulted expressed the view that this is mostly
just a matter of insufficient motivation before now, and that the $n=3$\ and
even $n=4$\ cases ought to be feasible with current technology.

Of course, the most interesting regime for computer science is the one where
$n$ is large enough that a classical computer would have difficulty computing
an $n\times n$ permanent. \ The best known classical algorithm for the
permanent, \textit{Ryser's algorithm}, uses about $2^{n+1}n^{2}$%
\ floating-point operations. \ If $n=10$, then this is about $200,000$
operations; if $n=20$, it is about $800$\ million; if $n=30$, it is about
$2$\ trillion. \ In any of these cases, it would be exciting to perform a
linear-optics experiment that \textquotedblleft
almost-instantly\textquotedblright\ sampled from a distribution in which the
probabilities were given by $n\times n$\ permanents.\bigskip

\textbf{Number of modes }$m$\textbf{.} \ Another important question is how
many \textit{modes} are needed for our experiment. \ We showed in Theorem
\ref{mainresult}\ that it suffices to use $m=O\left(  \frac{1}{\delta}%
n^{5}\log^{2}\frac{n}{\delta}\right)  $ modes, which is polynomial in $n$
but\ impractical. \ We strongly believe that an improved analysis could yield
$m=O\left(  n^{2}\right)  $. \ On the other hand, by the birthday paradox, we
cannot have \textit{fewer} than $m=\Omega\left(  n^{2}\right)  $\ modes, if we
want the state $\varphi\left(  U\right)  \left\vert 1_{n}\right\rangle $\ to
be dominated by \textit{collision-free}\ photon configurations\ (meaning those
containing at most one\ photon per mode).

Unfortunately, a quadratic number of modes might still be difficult to arrange
in practice. \ So the question arises: what would happen if we ran our
experiment with a \textit{linear} number of modes, $m=O\left(  n\right)  $?
\ In that case, almost every basis state would contain collisions, so our
formal argument for the classical hardness of approximate
\textsc{BosonSampling}, based on Conjectures \ref{pacc}\ and \ref{pgc}, would
no longer apply. \ On the other hand, we suspect it would still be
\textit{true} that sampling is classically hard! \ Giving a formal argument
for the hardness of approximate \textsc{BosonSampling}, with $n$\ photons and
$m=O\left(  n\right)  $\ modes, is an important technical challenge that we leave.

In the meantime, if the goal of one's experiment is just to verify that the
permanent formula (\ref{perform}) remains correct for large values of $n$,
then large numbers of photon collisions are presumably acceptable. \ In this
case, it should suffice to set $m\approx n$, or possibly even $m\ll n$ (though
note that it is easy to give a classical simulation algorithm that runs in
$n^{O\left(  m\right)  }$\ time).\bigskip

\textbf{Choice of unitary transformation }$U$\textbf{.} \ One could look for
an $n$-photon Hong-Ou-Mandel dip using \textit{any} unitary transformation $U$
that produces nontrivial interference among $n$ of the $m$\ modes. \ However,
some choices of $U$ are more interesting than others. \ The prescription
suggested by our results is to choose $U$\ \textit{randomly}, according to the
Haar measure over $m\times m$\ unitaries. \ Once $U$ is chosen, one can then
\textquotedblleft hardwire\textquotedblright\ a network of beamsplitters and
phaseshifters that produces $U$.

There are at least three reasons why using a Haar-random $U$ seems like a good idea:

\begin{enumerate}
\item[(1)] Theorem \ref{truncthm} showed that\ any sufficiently small
submatrix of a Haar-random unitary matrix $U$ is close to a matrix of
i.i.d.\ Gaussians. \ This extremely useful fact is what let us prove Theorem
\ref{mainresult}, which relates the hardness of approximate
\textsc{BosonSampling}\ to the hardness of more \textquotedblleft
natural\textquotedblright\ problems that have nothing to do with unitary matrices.

\item[(2)] Setting aside our results, the Haar measure is the unique
rotationally-invariant measure over unitaries. \ This makes it an obvious
choice, if the goal is to avoid any \textquotedblleft special
structure\textquotedblright\ that might make the \textsc{BosonSampling}%
\ problem easy.

\item[(3)] In the linear-optics model, one simple way to apply a Haar-random
$m\times m$\ unitary matrix $U$ is via a network of $\operatorname*{poly}%
\left(  m\right)  $ \textit{randomly-chosen} beamsplitters and
phaseshifters.\bigskip
\end{enumerate}

\textbf{Optical elements.} \ One might worry about the \textit{number} of
beamsplitters and phaseshifters needed to implement an arbitrary $m\times
m$\ unitary transformation $U$, or a Haar-random $U$ in particular. \ And
indeed, the upper bound of Reck et al.\ \cite{rzbb}\ (Lemma \ref{decompose})
shows only that $O\left(  m^{2}\right)  $\ beamsplitters and phaseshifters
suffice to implement any unitary, and this is easily seen to be tight by a
dimension argument. Unfortunately, a network of $\thicksim m^{2}$\ optical
elements might already strain the limits of practicality, especially if $m$
has been chosen to be quadratically larger than $n$.

Happily, Section \ref{DEPTH} will show how to reduce the number of optical
elements from $O\left(  m^{2}\right)  $\ to $O\left(  mn\right)  $, by
exploiting a simple observation: namely, \textit{we only care about the
optical network's behavior on the first }$n$\textit{ modes, since the standard
initial state }$\left\vert 1_{n}\right\rangle $\textit{ has no photons in the
remaining }$m-n$\textit{\ modes anyway.} \ Section \ref{DEPTH}\ will also show
how to \textquotedblleft parallelize\textquotedblright\ the resulting optical
network, so that the $O\left(  mn\right)  $\ beamsplitters and phaseshifters
are arranged into only\ $O\left(  n\log m\right)  $ layers.

Whether one can parallelize linear-optics computations still further, and
whether one can sample from hard distributions using even fewer optical
elements (say, $O\left(  m\log m\right)  $), are interesting topics for future
work.\bigskip

\textbf{Error.} \ There are many sources of error in our experiment;
understanding and controlling the errors is perhaps the central challenge an
experimentalist will face. \ At the most obvious level:

\begin{enumerate}
\item[(1)] Generation of single-photon Fock states will not be perfectly reliable.

\item[(2)] The beamsplitters and phaseshifters will not induce exactly the
desired unitary transformations.

\item[(3)] Each photon will have some probability of \textquotedblleft getting
lost along the way.\textquotedblright

\item[(4)] The photodetectors will not have perfect efficiency.

\item[(5)] If the lengths of the optical fibers are not well-calibrated, or
the single-photon sources are not synchronized, or there is vibration, etc.,
then the photons will generally arrive at the photodetectors at different times.
\end{enumerate}

If (5) occurs, then the photons effectively become \textit{distinguishable},
and the amplitudes will no longer correspond to $n\times n$\ permanents. \ So
then how well-synchronized do the photons need to be? \ To answer this
question, recall that each photon is actually a Gaussian wavepacket in the
position basis, rather than a localized point. \ For formula (\ref{perform})
to hold, what is necessary is that the photons arrive at the photodetectors
within a short enough time interval that their wavepackets have large pairwise overlaps.

The fundamental worry is that, as we increase the number of photons $n$, the
probability of a successful run of the experiment might decrease like $c^{-n}%
$. \ In practice, experimentalists usually deal with such behavior by
\textit{postselecting} on the successful runs. \ In our context, that could
mean (for example) that we only count the runs in which $n$ detectors register
a photon simultaneously, even if such runs are exponentially unlikely. \ We
expect that any realistic implementation of our experiment would involve at
least some postselection. \ However, if the eventual goal is to scale to large
values of $n$, then any need to postselect on an event with probability
$c^{-n}$\ presents an obvious barrier. \ Indeed, from an asymptotic
perspective, this sort of postselection defeats the entire purpose of using a
quantum computer rather than a classical computer.

For this reason, while even a heavily-postselected Hong-Ou-Mandel dip with
(say) $n=3$, $4$, or $5$\ photons would be interesting, our real hope is that
it will ultimately be possible to scale our experiment to interestingly large
values of $n$, while maintaining a total error that is closer to $0$ than to
$1$. \ However, supposing this turns out to be possible, one can still ask:
\textit{how close to }$0$\textit{ does the error need to be?}

Unfortunately, just like with the question of how many photons are needed, it
is difficult to give a direct answer, because of the reliance of our results
on asymptotics. \ What Theorem \ref{mainresult}\ shows is that, \textit{if}
one can scale the \textsc{BosonSampling}\ experiment to $n$ photons and error
$\delta$\ in total variation distance, using an amount of \textquotedblleft
experimental effort\textquotedblright\ that scales polynomially with both
$n$\ and $1/\delta$, then modulo our complexity conjectures, the Extended
Church-Turing Thesis is false. \ The trouble is that no finite experiment can
ever prove (or disprove) the claim that scaling to $n$ photons and error
$\delta$\ takes $\operatorname*{poly}\left(  n,1/\delta\right)  $%
\ experimental effort. \ One can, however, build a circumstantial case for
this claim---by increasing $n$, decreasing $\delta$, and making it clear that,
with reasonable effort, one could have increased $n$ and decreased $\delta
$\ still further.

One challenge we leave is to prove a computational hardness result that works
for a \textit{fixed} (say, constant) error $\delta$, rather than treating
$1/\delta$\ as an input parameter to the sampling algorithm along with $n$.
\ A second challenge is whether any nontrivial error-correction is possible
within the noninteracting-boson model. \ In standard quantum computing, the
famous Threshold Theorem \cite{ab,klz} asserts that there exists a constant
$\tau>0$ such that, even if each qubit fails with independent probability
$\tau$ at each time step, one can still \textquotedblleft correct errors
faster than they happen,\textquotedblright\ and thereby perform an arbitrarily
long quantum computation. \ In principle, the Threshold Theorem could be
applied to our experiment, to deal with all the sources of error listed above.
\ The issue is that, if we have the physical resources available for
fault-tolerant quantum computing, then perhaps we ought to forget about
\textsc{BosonSampling}, and simply run a universal quantum computation! \ What
we want, ideally, is a way to reduce the error in our experiment,
\textit{without} giving up on the implementation advantages that make the
experiment attractive in the first place.

\subsection{Reducing the Size and Depth of Optical Networks\label{DEPTH}}

In this section, we discuss how best to realize an $m\times m$\ unitary
transformation $U$, acting on the initial state $\left\vert 1_{n}\right\rangle
$, as a product of beamsplitters and phaseshifters. \ If we implement $U$\ in
the \textquotedblleft obvious\textquotedblright\ way---by appealing to Lemma
\ref{decompose}---then the number of optical elements and the depth will both
be $O\left(  m^{2}\right)  $. \ However, we can obtain a significant
improvement by noticing that our goal is just to apply \textit{some} unitary
transformation $\widetilde{U}$\ such that $\varphi(\widetilde{U})\left\vert
1_{n}\right\rangle =\varphi\left(  U\right)  \left\vert 1_{n}\right\rangle $:
we do not care about the behavior on $\widetilde{U}$\ on inputs other than
$\left\vert 1_{n}\right\rangle $. \ This yields a network in which the number
of optical elements and the depth are both $O\left(  mn\right)  $.

The following theorem shows that we can reduce the depth further, to $O\left(
n\log m\right)  $, by exploiting parallelization.

\begin{theorem}
[Parallelization of Linear-Optics Circuits]\label{depthreduce}Given any
$m\times m$\ unitary operation $U$, one can map the initial state $\left\vert
1_{n}\right\rangle $\ to $\varphi\left(  U\right)  \left\vert 1_{n}%
\right\rangle $\ using a linear-optical network of depth $O\left(  n\log
m\right)  $, consisting of $O\left(  mn\right)  $\ beamsplitters and phaseshifters.
\end{theorem}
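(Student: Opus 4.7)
The plan is to sharpen Lemma~\ref{decompose} by exploiting the fact that the initial state $\left|1_n\right\rangle$ has support only on the first $n$ modes, so the output $\varphi(U)\left|1_n\right\rangle$ depends only on the first $n$ columns of $U$. Formally, if $\tilde{U}$ is any $m\times m$ unitary whose first $n$ columns agree with those of $U$, then $\varphi(\tilde{U})\left|1_n\right\rangle = \varphi(U)\left|1_n\right\rangle$. Thus it suffices to construct a linear-optical network that realizes \emph{some} such $\tilde{U}$, and we are free to choose the remaining $m-n$ columns arbitrarily (subject to overall unitarity).

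To obtain the size bound $O(mn)$, I would run a QR-style reduction on the $m\times n$ column-orthonormal matrix $A$ formed by the first $n$ columns of $U$. Specifically, for $k=1,2,\ldots,n$ in order, apply a sequence of beamsplitters acting only on modes $\{k,k+1,\ldots,m\}$ that zero out every entry of $A$'s $k$-th column below the $k$-th row, leaving the column equal to $e_k$. Because $A$ has orthonormal columns, no separate phase correction is needed: once the lower entries are zeroed out, the surviving top entry automatically has unit modulus and can be normalized to $1$ by a single phaseshifter. Crucially, each operation for stage $k$ is supported on modes $\geq k$, so it preserves the canonical basis vectors already occupying columns $1,\ldots,k-1$. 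Counting gives $\sum_{k=1}^{n}(m-k+1) = O(mn)$ optical elements. Reversing the sequence and inverting each element yields a network that starts from the identity and outputs a $\tilde{U}$ with the correct first $n$ columns.

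For the depth bound $O(n\log m)$, I would parallelize the reduction of each individual column by a binary-tree merge. To transform $(a_k,a_{k+1},\ldots,a_m)$ into $(\sqrt{|a_k|^2+\cdots+|a_m|^2},0,\ldots,0)$: in layer~$1$, pair up consecutive entries $(a_k,a_{k+1}),(a_{k+2},a_{k+3}),\ldots$ and apply to each disjoint pair a two-mode unitary that merges the two amplitudes into the pair's first slot; in layer~$2$, recurse on the merged entries; after $\lceil\log_2(m-k+1)\rceil$ layers, all of the column's mass is concentrated at mode $k$. Since the beamsplitters within a single layer act on disjoint pairs of modes, they commute and can be executed simultaneously, so column $k$ uses depth $O(\log m)$. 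Doing the columns in sequence $k=1,\ldots,n$ therefore gives total depth $O(n\log m)$, with overall gate count still $O(mn)$.

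The main thing that needs care is verifying that the intra-layer parallelization is legitimate, i.e.\ that beamsplitters acting on disjoint pairs of modes commute and that their composite action is precisely the product of the individual $2\times 2$ unitaries; this follows directly from the block-diagonal form of each optical element in Lemma~\ref{decompose}. One should also check that reversing the reduction (to build $\tilde{U}$ from the identity rather than reducing $U$ to it) preserves both the size and the depth, which is immediate because the reversal of a sequence of layers of disjoint two-mode unitaries is another such sequence of the same cardinality and layer count. The harder question of whether the depth can be pushed below $O(n\log m)$ by interleaving work across columns is not needed here, and I would flag it as an open problem.
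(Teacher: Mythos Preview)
Your approach is correct and, if anything, slightly more economical than the paper's. The paper introduces $n$ extra ancilla modes (working in $\mathbb{C}^{m+n}$), first swaps the $n$ photons into those ancillas via $S_n\cdots S_1$, and then for each $i\in[n]$ applies a conjugation $V_i S_i V_i^\dagger$, where $V_i$ is a depth-$O(\log m)$ ``binary doubling'' circuit sending $|e_i\rangle$ to $|\psi_i\rangle:=U|e_i\rangle$ and $S_i$ swaps mode $i$ with ancilla mode $m+i$; correctness is a three-case check that $V_j S_j V_j^\dagger$ fixes $|e_{m+i}\rangle$ for $j<i$ and fixes $|\psi_i\rangle$ for $j>i$, the latter relying on orthogonality of the $|\psi_i\rangle$. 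Your QR-style reduction avoids the ancillas and uses the same orthogonality more directly: after columns $1,\ldots,k-1$ have been rotated to $e_1,\ldots,e_{k-1}$, column $k$ is orthogonal to them and hence supported only on modes $\ge k$, so the binary-tree merge confined to those modes finishes the job without touching earlier columns. (You should make this support claim explicit---you currently invoke orthonormality only for the unit-modulus remark, but it is also what guarantees that operations on modes $\ge k$ suffice to reduce column $k$ to $e_k$.) The two constructions are essentially dual: the paper spreads a point mass out to $|\psi_i\rangle$, you collapse a column down to a point mass. What your route buys is no ancilla modes and no separate three-case verification; what the paper's route buys is a slightly more modular description, since each $V_i$ depends only on $U$, whereas your stage-$k$ rotations depend on the outcome of all prior stages.
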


\begin{proof}
We will consider a linear-optics system with $m+n$\ modes. \ Let%
\[
V=\left(
\begin{array}
[c]{cc}%
U & 0\\
0 & I
\end{array}
\right)
\]
be a unitary transformation that acts as $U$\ on the first $m$ modes, and as
the identity on the remaining $n$ modes. \ Then our goal will be to map
$\left\vert 1_{n}\right\rangle $ to $\varphi\left(  V\right)  \left\vert
1_{n}\right\rangle $.

Let $\left\vert e_{i}\right\rangle $\ be the basis state that consists of a
single photon in mode $i$, and no photons in the remaining $m+n-1$\ modes.
\ Also, let $\left\vert \psi_{i}\right\rangle =V\left\vert e_{i}\right\rangle
$. \ Then it clearly suffices to implement \textit{some} unitary
transformation $\widetilde{V}$\ that maps $\left\vert e_{i}\right\rangle $\ to
$\left\vert \psi_{i}\right\rangle $\ for all $i\in\left[  n\right]  $---for
then $\varphi(\widetilde{V})\left\vert 1_{n}\right\rangle =\varphi\left(
V\right)  \left\vert 1_{n}\right\rangle $ by linearity.

Our first claim is that, for each $i\in\left[  n\right]  $
\textit{individually}, there exists a unitary transformation $V_{i}$\ that
maps $\left\vert e_{i}\right\rangle $\ to $\left\vert \psi_{i}\right\rangle $,
and that can be implemented by a linear-optical network of depth $\log
_{2}m+O\left(  1\right)  $ with $O\left(  m\right)  $\ optical elements. \ To
implement $V_{i}$, we use a binary doubling strategy: first map $\left\vert
e_{i}\right\rangle $\ to a superposition of the first two modes,%
\[
\left\vert z_{1}\right\rangle =\alpha_{1}\left\vert e_{1}\right\rangle
+\alpha_{2}\left\vert e_{2}\right\rangle .
\]
Then, by using two beamsplitters in parallel, map the above state $\left\vert
z_{1}\right\rangle $\ to a superposition of the first four modes,%
\[
\left\vert z_{2}\right\rangle =\alpha_{1}\left\vert e_{1}\right\rangle
+\alpha_{2}\left\vert e_{2}\right\rangle +\alpha_{3}\left\vert e_{3}%
\right\rangle +\alpha_{4}\left\vert e_{4}\right\rangle .
\]
Next, by using four beamsplitters in parallel, map $\left\vert z_{2}%
\right\rangle $ to a superposition $\left\vert z_{3}\right\rangle $\ of the
first eight modes, and so on until $\left\vert \psi_{i}\right\rangle $\ is
reached. \ It is clear that the total depth required is $\log_{2}m+O\left(
1\right)  $, while the number of optical elements required is $O\left(
m\right)  $. \ This proves the claim.

Now let $S_{i}$\ be a unitary transformation that swaps modes $i$ and $m+i$,
and that acts as the identity on the remaining $m+n-2$\ modes. \ Then we will
implement $\widetilde{V}$ as follows:%
\[
\widetilde{V}=V_{n}S_{n}V_{n}^{\dagger}\cdot\cdots\cdot V_{2}S_{2}%
V_{2}^{\dagger}\cdot V_{1}S_{1}V_{1}^{\dagger}\cdot S_{n}\cdots S_{1}.
\]
In other words: first swap modes $1,\ldots,n$\ with modes $m+1,\ldots,m+n$.
\ Then, for all $i:=1$\ to $n$, apply $V_{i}S_{i}V_{i}^{\dagger}$.

Since each $S_{i}$ involves only one optical element, while each $V_{i}$ and
$V_{i}^{\dagger}$ involves $O\left(  m\right)  $\ optical elements\ and
$O\left(  \log m\right)  $\ depth, it is clear that we can implement
$\widetilde{V}$\ using a linear-optical network of depth $O\left(  n\log
m\right)  $\ with $O\left(  mn\right)  $\ optical elements.

To prove the theorem, we need to verify that $\widetilde{V}\left\vert
e_{i}\right\rangle =\left\vert \psi_{i}\right\rangle $ for all $i\in\left[
n\right]  $. \ We do so in three steps. \ First, notice that for all
$i\in\left[  n\right]  $,%
\begin{align*}
V_{i}S_{i}V_{i}^{\dagger}\left(  S_{i}\left\vert e_{i}\right\rangle \right)
&  =V_{i}S_{i}V_{i}^{\dagger}\left\vert e_{m+i}\right\rangle \\
&  =V_{i}S_{i}\left\vert e_{m+i}\right\rangle \\
&  =V_{i}\left\vert e_{i}\right\rangle \\
&  =\left\vert \psi_{i}\right\rangle .
\end{align*}
where the second line follows since $V_{i}^{\dagger}$\ acts only on the first
$m$ modes.

Second, for all $i,j\in\left[  n\right]  $\ with $i\neq j$,%
\[
V_{j}S_{j}V_{j}^{\dagger}\left\vert e_{m+i}\right\rangle =\left\vert
e_{m+i}\right\rangle ,
\]
since $V_{j}$\ and $S_{j}$\ both act as the identity on $\left\vert
e_{m+i}\right\rangle $.

Third, notice that $\left\langle \psi_{i}|\psi_{j}\right\rangle =0$ for all
$i\neq j$, since $\left\vert \psi_{i}\right\rangle $\ and $\left\vert \psi
_{j}\right\rangle $\ correspond to two different columns of the unitary matrix
$U$. \ Since unitaries preserve inner product, this means that $V_{j}%
^{\dagger}\left\vert \psi_{i}\right\rangle $\ is also orthogonal to
$V_{j}^{\dagger}\left\vert \psi_{j}\right\rangle =V_{j}^{\dagger}%
V_{j}\left\vert e_{j}\right\rangle =\left\vert e_{j}\right\rangle $: in other
words, the state $V_{j}^{\dagger}\left\vert \psi_{i}\right\rangle $\ has no
support on the $j^{th}$\ mode.\ \ It follows that $S_{j}$\ acts as the
identity on $V_{j}^{\dagger}\left\vert \psi_{i}\right\rangle $---and
therefore, for all $i,j\in\left[  n\right]  $\ with $i\neq j$, we have%
\[
V_{j}S_{j}V_{j}^{\dagger}\left\vert \psi_{i}\right\rangle =V_{j}V_{j}%
^{\dagger}\left\vert \psi_{i}\right\rangle =\left\vert \psi_{i}\right\rangle
.
\]
Summarizing, we find that for all $i\in\left[  n\right]  $:

\begin{itemize}
\item $V_{i}S_{i}V_{i}^{\dagger}$\ maps $\left\vert e_{m+i}\right\rangle $\ to
$\left\vert \psi_{i}\right\rangle $.

\item $V_{j}S_{j}V_{j}^{\dagger}$\ maps $\left\vert e_{m+i}\right\rangle $\ to
itself for all $j<i$.

\item $V_{j}S_{j}V_{j}^{\dagger}$\ maps $\left\vert \psi_{i}\right\rangle
$\ to itself for all $j>i$.
\end{itemize}

We conclude that $\widetilde{V}\left\vert e_{i}\right\rangle =V_{i}S_{i}%
V_{i}^{\dagger}\left\vert e_{m+i}\right\rangle =\left\vert \psi_{i}%
\right\rangle $ for all $i\in\left[  n\right]  $. \ This proves the theorem.
\end{proof}

\section{Reducing \textsc{GPE}$_{\times}$\ to $\left\vert \text{\textsc{GPE}%
}\right\vert _{\pm}^{2}$\label{GPE2GPE}}

The goal of this section is to prove Theorem \ref{decompthm}: that, assuming
Conjecture \ref{pacc} (the Permanent Anti-Concentration Conjecture), the
\textsc{GPE}$_{\times}$\ and $\left\vert \text{\textsc{GPE}}\right\vert _{\pm
}^{2}$\ problems are polynomial-time equivalent. \ Or in words: if we can
additively estimate $\left\vert \operatorname*{Per}\left(  X\right)
\right\vert ^{2}$\ with high probability over a Gaussian matrix $X\sim
\mathcal{G}^{n\times n}$, then we can also multiplicatively estimate
$\operatorname*{Per}\left(  X\right)  $\ with high probability over a Gaussian
matrix $X$.

Given as input a matrix $X\thicksim\mathcal{N}\left(  0,1\right)
_{\mathbb{C}}^{n\times n}$ of i.i.d.\ Gaussians, together with error bounds
$\varepsilon,\delta>0$, recall that the \textsc{GPE}$_{\times}$\ problem
(Problem \ref{gpe*}) asks us to estimate $\operatorname*{Per}\left(  X\right)
$ to within error $\pm\varepsilon\cdot\left\vert \operatorname*{Per}\left(
X\right)  \right\vert $, with probability at least $1-\delta$\ over $X$, in
$\operatorname*{poly}\left(  n,1/\varepsilon,1/\delta\right)  $\ time.
\ Meanwhile, the $\left\vert \text{\textsc{GPE}}\right\vert _{\pm}^{2}%
$\ problem\ (Problem \ref{gpe2+}) asks us to estimate $\left\vert
\operatorname*{Per}\left(  X\right)  \right\vert ^{2}$ to within error
$\pm\varepsilon\cdot n!$, with probability at least $1-\delta$\ over $X$, in
$\operatorname*{poly}\left(  n,1/\varepsilon,1/\delta\right)  $\ time. \ It is
easy to give a reduction from $\left\vert \text{\textsc{GPE}}\right\vert
_{\pm}^{2}$\ to \textsc{GPE}$_{\times}$. \ The hard direction, and the one
that requires Conjecture \ref{pacc}, is to reduce \textsc{GPE}$_{\times}$\ to
$\left\vert \text{\textsc{GPE}}\right\vert _{\pm}^{2}$.

While technical, this reduction is essential for establishing the connection
we want between

\begin{enumerate}
\item[(1)] Theorem \ref{mainresult} (our main result), which relates the
classical hardness of \textsc{BosonSampling} to $\left\vert \text{\textsc{GPE}%
}\right\vert _{\pm}^{2}$, and

\item[(2)] Conjecture \ref{pgc}\ (the Permanent-of-Gaussians Conjecture),
which asserts that the \textsc{Gaussian Permanent Estimation} problem is
$\mathsf{\#P}$-hard, in the more \textquotedblleft natural\textquotedblright%
\ setting of multiplicative rather than additive estimation, and
$\operatorname*{Per}\left(  X\right)  $\ rather than $\left\vert
\operatorname*{Per}\left(  X\right)  \right\vert ^{2}$.
\end{enumerate}

Besides \textsc{GPE}$_{\times}$\ and\ $\left\vert \text{\textsc{GPE}%
}\right\vert _{\pm}^{2}$, one could of course also define two
\textquotedblleft hybrid\textquotedblright\ problems: \textsc{GPE}$_{\pm}%
$\ (additive estimation of $\operatorname*{Per}\left(  X\right)  $), and
$\left\vert \text{\textsc{GPE}}\right\vert _{\times}^{2}$\ (multiplicative
estimation of $\left\vert \operatorname*{Per}\left(  X\right)  \right\vert
^{2}$). \ Mercifully, we will not need to make explicit use of these hybrid
problems. Indeed, assuming Conjecture \ref{pacc}, they will simply become
equivalent to \textsc{GPE}$_{\times}$\ and $\left\vert \text{\textsc{GPE}%
}\right\vert _{\pm}^{2}$\ as a byproduct.

Let us start by proving the easy direction of the equivalence between
\textsc{GPE}$_{\times}$\ and\ $\left\vert \text{\textsc{GPE}}\right\vert
_{\pm}^{2}$. \ This direction does not rely on any unproved conjectures.

\begin{lemma}
\label{implications}$\left\vert \text{\textsc{GPE}}\right\vert _{\pm}^{2}$ is
polynomial-time reducible to \textsc{GPE}$_{\times}$.
\end{lemma}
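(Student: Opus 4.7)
The plan is to give a direct black-box reduction: given an input $X \sim \mathcal{G}^{n\times n}$ for $\left\vert \text{\textsc{GPE}}\right\vert^{2}_{\pm}$, simply feed $X$ to the \textsc{GPE}$_\times$ oracle (with suitably chosen error parameters $\varepsilon', \delta'$), receive an output $z \in \mathbb{C}$ with the guarantee $|z - \operatorname*{Per}(X)| \le \varepsilon' |\operatorname*{Per}(X)|$, and return $|z|^2$ as the estimate of $|\operatorname*{Per}(X)|^2$. No conjecture is needed for this direction.

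To analyze correctness, first I would convert the multiplicative guarantee on $z$ into a multiplicative guarantee on $|z|^2$: writing $|z|^2 = |\operatorname*{Per}(X)|^2 + (|z|^2 - |\operatorname*{Per}(X)|^2)$ and using $||z|^2 - |\operatorname*{Per}(X)|^2| \le (2\varepsilon' + \varepsilon'^2)|\operatorname*{Per}(X)|^2$, which follows from $||z| - |\operatorname*{Per}(X)|| \le \varepsilon'|\operatorname*{Per}(X)|$ by the reverse triangle inequality. Next I would convert this multiplicative error into an additive error measured in units of $n!$. The key quantitative fact here is the identity $\operatorname*{E}_{X \sim \mathcal{G}^{n\times n}}[|\operatorname*{Per}(X)|^2] = n!$, which follows immediately from expanding the permanent as a sum over permutations $\sigma \in S_n$ and using independence of the Gaussian entries (only the diagonal $\sigma = \tau$ terms survive, each contributing $1$).

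Armed with $\operatorname*{E}[|\operatorname*{Per}(X)|^2] = n!$, Markov's inequality gives that $|\operatorname*{Per}(X)|^2 \le n!/\eta$ with probability at least $1 - \eta$ over $X$, for any $\eta > 0$. Thus, by the union bound, with probability at least $1 - \delta' - \eta$ we simultaneously have both the oracle succeeding and $|\operatorname*{Per}(X)|^2 \le n!/\eta$, in which case
\[
\bigl||z|^2 - |\operatorname*{Per}(X)|^2\bigr| \le (2\varepsilon' + \varepsilon'^2)\,\frac{n!}{\eta}.
\]
To solve the given $\left\vert \text{\textsc{GPE}}\right\vert^{2}_{\pm}$ instance with parameters $(\varepsilon, \delta)$, I set $\eta := \delta/2$ and choose $\varepsilon'$ so that $(2\varepsilon' + \varepsilon'^2)/\eta \le \varepsilon$, e.g.\ $\varepsilon' := \varepsilon \delta / 8$, and set the oracle's failure probability to $\delta' := \delta/2$. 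All parameters remain polynomially bounded in $n$, $1/\varepsilon$, and $1/\delta$, so the reduction runs in the required time.

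There is essentially no obstacle here: the reduction is one call to the oracle plus squaring, and the analysis only requires (i) reverse triangle inequality, (ii) the elementary moment computation $\operatorname*{E}[|\operatorname*{Per}(X)|^2] = n!$, and (iii) Markov's inequality. The only mild subtlety is making sure that the $X$ distribution fed to the \textsc{GPE}$_\times$ oracle agrees with the distribution it expects, which is immediate since both problems are defined with the same input distribution $\mathcal{G}^{n\times n}$. This asymmetry -- triviality in this direction versus the need for the Permanent Anti-Concentration Conjecture in the other direction -- is exactly what one would expect, since converting additive to multiplicative error requires a lower bound on $|\operatorname*{Per}(X)|$, whereas the converse only needs an upper bound (supplied by Markov).
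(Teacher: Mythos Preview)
Your proposal is correct and follows essentially the same approach as the paper's proof: call the \textsc{GPE}$_\times$ oracle once, square the output, convert the multiplicative error on $z$ into a multiplicative error on $|z|^2$, and then use $\operatorname*{E}[|\operatorname*{Per}(X)|^2]=n!$ with Markov's inequality to convert to additive error in units of $n!$. The only differences are cosmetic (the paper bounds $\varepsilon(2+\varepsilon)\le 3\varepsilon$ rather than your $2\varepsilon'+\varepsilon'^2$, and chooses slightly different constants for the parameters).
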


\begin{proof}
Suppose we have a polynomial-time algorithm $M$ that, given $\left\langle
X,0^{1/\varepsilon},0^{1/\delta}\right\rangle $, outputs a good multiplicative
approximation to $\operatorname*{Per}\left(  X\right)  $---that is, a $z$ such
that
\[
\left\vert z-\operatorname*{Per}\left(  X\right)  \right\vert \leq
\varepsilon\left\vert \operatorname*{Per}\left(  X\right)  \right\vert
\]
---with probability at least $1-\delta$\ over $X\sim\mathcal{G}^{n\times n}$.
\ Then certainly $\left\vert z\right\vert ^{2}$\ is a good multiplicative
approximation to $\left\vert \operatorname*{Per}\left(  X\right)  \right\vert
^{2}$:%
\begin{align*}
\left\vert \left\vert z\right\vert ^{2}-\left\vert \operatorname*{Per}\left(
X\right)  \right\vert ^{2}\right\vert  &  =\left\vert \left\vert z\right\vert
-\left\vert \operatorname*{Per}\left(  X\right)  \right\vert \right\vert
\left(  \left\vert z\right\vert +\left\vert \operatorname*{Per}\left(
X\right)  \right\vert \right) \\
&  \leq\varepsilon\left(  2+\varepsilon\right)  \left\vert \operatorname*{Per}%
\left(  X\right)  \right\vert ^{2}\\
&  \leq3\varepsilon\left\vert \operatorname*{Per}\left(  X\right)  \right\vert
^{2}.
\end{align*}
We claim that $\left\vert z\right\vert ^{2}$\ is also a good \textit{additive}
approximation to $\left\vert \operatorname*{Per}\left(  X\right)  \right\vert
^{2}$, with high probability over $X$. \ For by Markov's inequality,%
\[
\Pr_{X}\left[  \left\vert \operatorname*{Per}\left(  X\right)  \right\vert
^{2}>k\cdot n!\right]  <\frac{1}{k}.
\]
So by the union bound,%
\begin{align*}
\Pr_{X}\left[  \left\vert \left\vert z\right\vert ^{2}-\left\vert
\operatorname*{Per}\left(  X\right)  \right\vert ^{2}\right\vert >\varepsilon
k\cdot n!\right]   &  \leq\Pr_{X}\left[  \left\vert \left\vert z\right\vert
^{2}-\left\vert \operatorname*{Per}\left(  X\right)  \right\vert
^{2}\right\vert >3\varepsilon\left\vert \operatorname*{Per}\left(  X\right)
\right\vert ^{2}\right]  +\Pr_{X}\left[  3\varepsilon\left\vert
\operatorname*{Per}\left(  X\right)  \right\vert ^{2}>\varepsilon k\cdot
n!\right] \\
&  \leq\delta+\frac{3}{k}.
\end{align*}
Thus, we can achieve any desired additive error bounds $\left(  \varepsilon
^{\prime},\delta^{\prime}\right)  $ by (for example) setting $\varepsilon
:=\varepsilon^{\prime}\delta^{\prime}/6$, $\delta:=\delta^{\prime}/2$, and
$k:=6/\delta^{\prime}$, so that $\varepsilon k=\varepsilon^{\prime}$\ and
$\delta+\frac{3}{k}\leq\delta^{\prime}$. \ Clearly this increases $M$'s
running time by at most a polynomial factor.
\end{proof}

We now prove that, assuming the Permanent Anti-Concentration Conjecture,
approximating $\left\vert \operatorname*{Per}\left(  X\right)  \right\vert
^{2}$\ for a Gaussian random matrix $X\sim\mathcal{G}^{n\times n}$\ is as hard
as approximating $\operatorname*{Per}\left(  X\right)  $ itself. \ This result
can be seen as an average-case analogue of Theorem \ref{approxhard}.\ \ To
prove it, we need to give a reduction that estimates the phase
$\operatorname*{Per}\left(  X\right)  /\left\vert \operatorname*{Per}\left(
X\right)  \right\vert $\ of a permanent $\operatorname*{Per}\left(  X\right)
$, given only the ability to estimate $\left\vert \operatorname*{Per}\left(
X\right)  \right\vert $ (for most Gaussian matrices $X$). \ As in the proof of
Theorem \ref{approxhard},\ our reduction proceeds by induction on $n$: we
assume the ability to estimate $\operatorname*{Per}\left(  Y\right)  $\ for a
certain $\left(  n-1\right)  \times\left(  n-1\right)  $\ submatrix $Y$ of
$X$, and then use that (together with estimates of $\left\vert
\operatorname*{Per}\left(  X^{\prime}\right)  \right\vert $\ for various
$n\times n$\ matrices $X^{\prime}$) to estimate $\operatorname*{Per}\left(
X\right)  $. \ Unfortunately, the reduction and its analysis are more
complicated than in Theorem \ref{approxhard},\ since in this case, we can only
assume that our oracle\ estimates $\left\vert \operatorname*{Per}\left(
X\right)  \right\vert ^{2}$\ with high probability if $X$ \textquotedblleft
looks like\textquotedblright\ a Gaussian matrix. \ This rules out the adaptive
reduction of Theorem \ref{approxhard}, which even starting with a Gaussian
matrix $X$, would vary the top-left entry so as to produce new matrices
$X^{\prime}$\ that look nothing like Gaussian matrices. \ Instead, we will use
a \textit{nonadaptive} reduction, which in turn necessitates a more delicate
error analysis, as well as an appeal to Conjecture \ref{pacc}.

To do the error analysis, we first need a technical lemma about the numerical
stability of \textit{triangulation}. \ By triangulation, we simply mean a
procedure that determines a point $x\in\mathbb{R}^{d}$, given the Euclidean
distances $\Delta\left(  x,y_{i}\right)  $\ between $x$\ and $d+1$\ fixed
points $y_{1},\ldots,y_{d+1}\in\mathbb{R}^{d}$\ that are in general position.
\ So for example, the $d=3$\ case corresponds to how a GPS receiver would
calculate its position given its distances to four satellites. \ We will be
interested in the $d=2$\ case, which corresponds to calculating an unknown
complex number $x=\operatorname*{Per}\left(  X\right)  \in\mathbb{C}$\ given
the squared Euclidean distances $\left\vert x-y_{1}\right\vert ^{2},\left\vert
x-y_{2}\right\vert ^{2},\left\vert x-y_{3}\right\vert ^{2}$, for some
$y_{1},y_{2},y_{3}\in\mathbb{C}$ that are in general position. \ The question
that interests us is this:

\begin{quotation}
\noindent\textit{Suppose our estimates of the squared distances }$\left\vert
x-y_{1}\right\vert ^{2},\left\vert x-y_{2}\right\vert ^{2},\left\vert
x-y_{3}\right\vert ^{2}$\textit{\ are noisy, and our estimates of the points
}$y_{1},y_{2},y_{3}$\textit{\ are also noisy. \ How much noise does that
induce in our resulting estimate of }$x$\textit{?}
\end{quotation}

The following lemma answers that question, in the special case where $y_{1}%
=0$, $y_{2}=w$, $y_{3}=iw$ for some complex number $w$.

\begin{lemma}
[Stability of Triangulation]\label{triang}Let $z=re^{i\theta}\in\mathbb{C}$ be
a hidden complex number that we are trying to estimate, and\ let $w=ce^{i\tau
}\in\mathbb{C}$\ be a second \textquotedblleft reference\textquotedblright%
\ number ($\,r,c>0$, $\theta,\tau\in\left(  -\pi,\pi\right]  $). \ For some
known constant $\lambda>0$, let%
\begin{align*}
R  &  :=\left\vert z\right\vert ^{2}=r^{2},\\
S  &  :=\left\vert z-\lambda w\right\vert ^{2}=r^{2}+\lambda^{2}c^{2}-2\lambda
rc\cos\left(  \theta-\tau\right)  ,\\
T  &  :=\left\vert z-i\lambda w\right\vert ^{2}=r^{2}+\lambda^{2}%
c^{2}-2\lambda rc\sin\left(  \theta-\tau\right)  ,\\
C  &  :=\left\vert w\right\vert ^{2}=c^{2}.
\end{align*}
Suppose we are given approximations $\widetilde{R},\widetilde{S}%
,\widetilde{T},\widetilde{C},\widetilde{\tau}$\ to $R,S,T,C,\tau
$\ respectively, such that%
\begin{align*}
\left\vert \widetilde{R}-R\right\vert ,\left\vert \widetilde{S}-S\right\vert
,\left\vert \widetilde{T}-T\right\vert  &  <\varepsilon\lambda^{2}C,\\
\left\vert \widetilde{C}-C\right\vert  &  <\varepsilon C.
\end{align*}
Suppose also that $\varepsilon\leq\frac{1}{10}\min\left\{  1,\frac{R}%
{\lambda^{2}C}\right\}  $. \ Then the approximation%
\[
\widetilde{\theta}:=\widetilde{\tau}+\operatorname*{sgn}\left(  \widetilde{R}%
+\widetilde{C}-\widetilde{T}\right)  \arccos\left(  \frac{\widetilde{R}%
+\widetilde{C}-\widetilde{S}}{2\sqrt{\widetilde{R}\widetilde{C}}}\right)
\bigskip
\]
satisfies%
\[
\left\vert \widetilde{\theta}-\theta\right\vert \operatorname{mod}2\pi
\leq\left\vert \widetilde{\tau}-\tau\right\vert +1.37\sqrt{\varepsilon}\left(
\lambda\sqrt{\frac{C}{R}}+1\right)  .
\]

\end{lemma}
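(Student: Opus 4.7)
My plan is to reduce the triangulation to recovering $\phi := \theta-\tau$ from noisy estimates of $\cos\phi$ and $\sin\phi$, and then to bound the resulting angle-error using the nonuniform modulus of continuity of $\arccos$. Applying the law of cosines to the two triangles with vertex sets $\{0,z,\lambda w\}$ and $\{0,z,i\lambda w\}$ yields the exact identities
\begin{align*}
\cos\phi &= \frac{R+\lambda^{2}C-S}{2\lambda\sqrt{RC}}, &
\sin\phi &= \frac{R+\lambda^{2}C-T}{2\lambda\sqrt{RC}},
\end{align*}
so the expression inside the $\arccos$ in the statement of the lemma is (up to the placement of the $\lambda$ factors) precisely the quantity used to recover $\cos\phi$, while the leading $\operatorname*{sgn}$ prefactor recovers $\operatorname*{sgn}\sin\phi$. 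The first task is thus to bound the perturbation incurred in this quotient when the exact $R,S,T,C$ are replaced by $\widetilde R,\widetilde S,\widetilde T,\widetilde C$.

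For this I use the standard quotient-perturbation identity
\[
\widetilde\alpha-\alpha
\;=\; \widetilde\alpha\cdot\frac{D-\widetilde D}{\widetilde D}
\;+\;\frac{\widetilde N-N}{D},
\]
with $N := R+\lambda^{2}C-S$ and $D := 2\lambda\sqrt{RC}$. The numerator error $|\widetilde N-N|$ is $O(\varepsilon\lambda^{2}C)$ directly from the hypotheses. The denominator error is handled via $|\sqrt{\widetilde R\widetilde C}-\sqrt{RC}|\lesssim |RC-\widetilde R\widetilde C|/\sqrt{RC}$, combined with the product expansion $RC-\widetilde R\widetilde C = C(R-\widetilde R)+\widetilde R(C-\widetilde C)$. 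Dividing by $|D|=2\lambda\sqrt{RC}$, the two contributions add up to an estimate of the form
\[
|\widetilde\alpha-\alpha| \;\leq\; C_{0}\,\varepsilon\,\bigl(1+\lambda\sqrt{C/R}\bigr)^{2},
\]
and the analogous bound holds for the sine-estimate obtained from $T$. The hypothesis $\varepsilon \leq \tfrac{1}{10}\min\{1,R/(\lambda^{2}C)\}$ is exactly what one needs so that this bound is strictly less than $1$, forcing $\widetilde\alpha\in[-1,1]$ and making $\arccos\widetilde\alpha$ well-defined.

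The next step converts the cosine-error into an angle-error, and is where the $\sqrt{\varepsilon}$ in the conclusion is born. The elementary inequality $|\arccos y - \arccos x|\leq \sqrt{2|y-x|}$, valid for $x,y\in[-1,1]$ and coming from $\arccos(1-\eta)\approx\sqrt{2\eta}$, yields $|\arccos\widetilde\alpha - |\phi|| \leq \sqrt{2C_{0}\varepsilon}\cdot(1+\lambda\sqrt{C/R})$. The sign prefactor $\operatorname*{sgn}(\widetilde R+\widetilde C-\widetilde T)$ then selects the correct half-plane for $\phi$: when $|\sin\phi|$ is appreciably larger than the sine-perturbation bound the sign is stable, and when $|\sin\phi|$ is smaller than that bound the $\arccos$-error is already $\Omega(\sqrt{\varepsilon})$, so a wrong-sign choice perturbs the recovered angle by at most another $O(\sqrt{\varepsilon})$ term that can be absorbed into the same constant. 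Finally I invoke $\widetilde\theta-\theta = (\widetilde\tau-\tau) + (\operatorname*{sgn}(\cdots)\arccos(\cdots)-\phi)$ and the triangle inequality (mod $2\pi$) to conclude.

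The main obstacle I anticipate is sharpening the constant to the advertised $1.37$: the crude chain above produces $\sqrt{2}\approx 1.414$ in front of $\sqrt{\varepsilon}$, plus additive $O(\varepsilon)$ remainders. Closing this last gap will require a slightly tighter Taylor expansion of $\arccos$ near its endpoints, together with careful use of the $\varepsilon\leq 1/10$ hypothesis to fold the higher-order residuals into the leading $\sqrt{\varepsilon}(\lambda\sqrt{C/R}+1)$ term rather than letting them contribute a separate additive error.
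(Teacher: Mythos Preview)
Your plan is essentially the same as the paper's proof: recover $\phi=\theta-\tau$ via the law-of-cosines identities for $\cos\phi$ and $\sin\phi$, bound the perturbation in the quotient, convert to an angle error via an $\arccos$ continuity inequality, and split into the two cases ``$|\sin\phi|$ large (sign stable)'' versus ``$|\sin\phi|$ small (both arccos values already small)''. The paper's version differs only in bookkeeping: it first sets $\lambda=1$ and normalizes everything by $C$ (writing $\alpha=R/C$, $\beta=S/C$, $\gamma=T/C$, $\chi=\widetilde C/C$, so all perturbations become $<\varepsilon$), and in the sign-stable case it uses the sharper one-sided inequality $\arccos(x-\varepsilon)-\arccos x\le 1.5\sqrt{\varepsilon}$ rather than your symmetric $\sqrt{2|y-x|}$ bound; this, together with folding the denominator perturbation carefully, is precisely what pushes the constant from $\sqrt 2$ down to $1.37$, confirming your final remark.
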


\begin{proof}
Without loss of generality, we can set $\lambda:=1$; the result for general
$\lambda>0$\ then follows by replacing $w$\ with $\lambda w$ and $C$\ with
$\lambda^{2}C$.

Let $\alpha:=R/C$, $\beta:=S/C$, and $\gamma:=T/C$, and note that $\alpha
\geq2\varepsilon$. \ Observe that%
\begin{align*}
\cos\left(  \theta-\tau\right)   &  =\frac{R+C-S}{2\sqrt{RC}}=\frac
{\alpha+1-\beta}{2\sqrt{\alpha}},\\
\sin\left(  \theta-\tau\right)   &  =\frac{R+C-T}{2\sqrt{RC}}=\frac
{\alpha+1-\gamma}{2\sqrt{\alpha}}.
\end{align*}
So we can write%
\[
\theta=\tau+b\arccos\left(  \frac{\alpha+1-\beta}{2\sqrt{\alpha}}\right)
\]
where $b\in\left\{  -1,1\right\}  $\ is a sign term given by%
\[
b:=\operatorname*{sgn}\left(  \theta-\tau\right)  =\operatorname*{sgn}\left(
\sin\left(  \theta-\tau\right)  \right)  =\operatorname*{sgn}\left(
\alpha+1-\gamma\right)  .
\]
Now let $\widetilde{\alpha}:=\widetilde{R}/C$, $\widetilde{\beta
}:=\widetilde{S}/C$, $\widetilde{\gamma}:=\widetilde{T}/C$, and $\chi
:=\widetilde{C}/C$. \ Note that $\left\vert \widetilde{\alpha}-\alpha
\right\vert ,\left\vert \widetilde{\beta}-\beta\right\vert ,\left\vert
\widetilde{\gamma}-\gamma\right\vert ,\left\vert \chi-1\right\vert
<\varepsilon$. \ Let%
\begin{align*}
\widetilde{b}  &  :=\operatorname*{sgn}\left(  \widetilde{\alpha}%
+\chi-\widetilde{\gamma}\right)  ,\\
\widetilde{\theta}  &  :=\widetilde{\tau}+\widetilde{b}\arccos\left(
\frac{\widetilde{\alpha}+\chi-\widetilde{\beta}}{2\sqrt{\widetilde{\alpha}%
\chi}}\right)  .
\end{align*}

We now consider two cases. \ First suppose $\left\vert \alpha+1-\gamma
\right\vert \leq3\varepsilon$. \ Then $\left\vert 2\sqrt{\alpha}\sin\left(
\theta-\tau\right)  \right\vert \leq3\varepsilon$, which implies%
\[
\sin^{2}\left(  \theta-\tau\right)  \leq\frac{9\varepsilon^{2}}{4\alpha}.
\]
Likewise, we have%
\begin{align*}
\left\vert 2\sqrt{\widetilde{\alpha}\lambda}\sin\left(  \widetilde{\theta
}-\widetilde{\tau}\right)  \right\vert  &  =\left\vert \widetilde{\alpha}%
+\chi-\widetilde{\gamma}\right\vert \\
&  \leq\left\vert \alpha+1-\gamma\right\vert +\left\vert \widetilde{\alpha
}-\alpha\right\vert +\left\vert \chi-1\right\vert +\left\vert
\widetilde{\gamma}-\gamma\right\vert \\
&  \leq6\varepsilon
\end{align*}
and hence%
\[
\sin^{2}\left(  \widetilde{\theta}-\widetilde{\tau}\right)  \leq\frac{\left(
6\varepsilon\right)  ^{2}}{\left(  2\sqrt{\widetilde{\alpha}\chi}\right)
^{2}}\leq\frac{9\varepsilon^{2}}{\left(  \alpha-\varepsilon\right)  \left(
1-\varepsilon\right)  }.
\]
So if we write%
\begin{align*}
\theta &  =\tau+b\arccos\left(  \cos\left(  \theta-\tau\right)  \right)  ,\\
\widetilde{\theta}  &  =\widetilde{\tau}+\widetilde{b}\arccos\left(
\cos\left(  \widetilde{\theta}-\widetilde{\tau}\right)  \right)  ,
\end{align*}
we find that%
\begin{align*}
\left\vert \widetilde{\theta}-\theta\right\vert -\left\vert \widetilde{\tau
}-\tau\right\vert  &  \leq\left\vert \arccos\left(  \cos\left(  \theta
-\tau\right)  \right)  \right\vert +\left\vert \arccos\left(  \cos\left(
\widetilde{\theta}-\widetilde{\tau}\right)  \right)  \right\vert \\
&  \leq\arccos\sqrt{1-\frac{9\varepsilon^{2}}{4\alpha}}+\arccos\sqrt
{1-\frac{9\varepsilon^{2}}{\left(  \alpha-\varepsilon\right)  \left(
1-\varepsilon\right)  }}\\
&  =\arcsin\frac{3\varepsilon}{2\sqrt{\alpha}}+\arcsin\frac{3\varepsilon
}{\sqrt{\left(  \alpha-\varepsilon\right)  \left(  1-\varepsilon\right)  }}\\
&  \leq1.1\left(  \frac{3\varepsilon}{2\sqrt{\alpha}}+\frac{3\varepsilon
}{\sqrt{\left(  \alpha-\varepsilon\right)  \left(  1-\varepsilon\right)  }%
}\right) \\
&  \leq5.32\frac{\varepsilon}{\sqrt{\alpha}}.
\end{align*}
Here the last two lines used the fact that $\varepsilon\leq\frac{1}{10}%
\min\left\{  1,\alpha\right\}  $, together with the inequality $\arcsin
x\leq1.1x$\ for small enough $x$.

Next suppose $\left\vert \alpha+1-\gamma\right\vert >3\varepsilon$. \ Then by
the triangle inequality,%
\[
\left\vert \left\vert \widetilde{\alpha}+\lambda-\widetilde{\gamma}\right\vert
-\left\vert \alpha+1-\gamma\right\vert \right\vert \leq\left\vert
\widetilde{\alpha}-\alpha\right\vert +\left\vert \widetilde{\gamma}%
-\gamma\right\vert +\left\vert \chi-1\right\vert \leq3\varepsilon,
\]
which implies that\ $\operatorname*{sgn}\left(  \widetilde{\alpha}%
+\chi-\widetilde{\gamma}\right)  =\operatorname*{sgn}\left(  \alpha
+1-\gamma\right)  $ and hence $\widetilde{b}=b$. \ So%
\begin{align*}
\left\vert \widetilde{\theta}-\theta\right\vert -\left\vert \widetilde{\tau
}-\tau\right\vert  &  \leq\left\vert \arccos\left(  \frac{\widetilde{\alpha
}+\chi-\widetilde{\beta}}{2\sqrt{\widetilde{\alpha}\chi}}\right)
-\arccos\left(  \frac{\alpha+1-\beta}{2\sqrt{\alpha}}\right)  \right\vert \\
&  \leq\arccos\left(  \frac{\alpha+1-\beta-3\varepsilon}{2\sqrt
{\widetilde{\alpha}\chi}}\right)  -\arccos\left(  \frac{\alpha+1-\beta}%
{2\sqrt{\alpha}}\right) \\
&  \leq\frac{3}{2}\sqrt{\frac{3\varepsilon}{2\sqrt{\widetilde{\alpha}\chi}%
}+\left\vert \alpha+1-\beta\right\vert \left\vert \frac{1}{2\sqrt{\alpha}%
}-\frac{1}{2\sqrt{\widetilde{\alpha}\chi}}\right\vert }\\
&  \leq\frac{3}{2}\sqrt{\frac{3\varepsilon}{2\sqrt{\left(  \alpha
-\varepsilon\right)  \left(  1-\varepsilon\right)  }}+2\sqrt{\alpha}\left\vert
\frac{1}{2\sqrt{\alpha}}-\frac{1}{2\sqrt{\widetilde{\alpha}\chi}}\right\vert
}\\
&  \leq\frac{3}{2}\sqrt{\frac{3\varepsilon}{2\sqrt{\left(  0.9\alpha\right)
\left(  0.9\right)  }}+\left\vert 1-\sqrt{\frac{\alpha}{\widetilde{\alpha}%
\chi}}\right\vert }\\
&  \leq\frac{3}{2}\sqrt{\frac{5\varepsilon}{3\sqrt{\alpha}}+\left(
\sqrt{\frac{\alpha}{\left(  \alpha-\varepsilon\right)  \left(  1-\varepsilon
\right)  }}-1\right)  }\\
&  \leq\frac{3}{2}\sqrt{\frac{5\varepsilon}{3\sqrt{\alpha}}+\frac{1}{2}\left(
\frac{\alpha}{\left(  \alpha-\varepsilon\right)  \left(  1-\varepsilon\right)
}-1\right)  }\\
&  \leq\frac{3}{2}\sqrt{\varepsilon}\sqrt{\frac{5}{3\sqrt{\alpha}}%
+\frac{\left(  1+\alpha\right)  }{2\left(  0.9\alpha\right)  \left(
0.9\right)  }}\\
&  \leq\frac{3}{2}\sqrt{\varepsilon}\sqrt{\frac{5}{6}}\sqrt{\frac{1}{\alpha
}+\frac{2}{\sqrt{\alpha}}+1}\\
&  \leq1.37\sqrt{\varepsilon}\left(  \frac{1}{\sqrt{\alpha}}+1\right)  .
\end{align*}
Here the second line used the monotonicity of the $\arccos$\ function, the
third line used the inequality%
\[
\arccos\left(  x-\varepsilon\right)  -\arccos x\leq1.5\sqrt{\varepsilon}%
\]
for $\varepsilon\leq\frac{1}{2}$, and the fifth and ninth lines used the fact
that $\varepsilon\leq\min\left\{  \frac{1}{10},\frac{\alpha}{10}\right\}  $.
\ Combining the two cases, we have%
\[
\left\vert \widetilde{\theta}-\theta\right\vert \leq\left\vert \widetilde{\tau
}-\tau\right\vert +\max\left\{  5.32\frac{\varepsilon}{\sqrt{\alpha}%
},1.37\sqrt{\varepsilon}\left(  \frac{1}{\sqrt{\alpha}}+1\right)  \right\}  .
\]
Using the fact that $\varepsilon\leq\min\left\{  \frac{1}{10},\frac{\alpha
}{10}\right\}  $, one can check that the second item in the maximum is always
greater. \ Therefore%
\[
\left\vert \widetilde{\theta}-\theta\right\vert \leq\left\vert \widetilde{\tau
}-\tau\right\vert +1.37\sqrt{\varepsilon}\left(  \frac{1}{\sqrt{\alpha}%
}+1\right)  =\left\vert \widetilde{\tau}-\tau\right\vert +1.37\sqrt
{\varepsilon}\left(  \sqrt{\frac{C}{R}}+1\right)
\]
as claimed.
\end{proof}

We will also need a lemma about the autocorrelation of the Gaussian
distribution, which will be reused in Section \ref{HARDPER}.

\begin{lemma}
[Autocorrelation of Gaussian Distribution]\label{autocor}Consider the
distributions%
\begin{align*}
\mathcal{D}_{1}  &  =\mathcal{N}\left(  0,\left(  1-\varepsilon\right)
^{2}\right)  _{\mathbb{C}}^{N},\\
\mathcal{D}_{2}  &  =\prod_{i=1}^{N}\mathcal{N}\left(  v_{i},1\right)
_{\mathbb{C}}%
\end{align*}
for some vector $v\in\mathbb{C}^{N}$. \ We have%
\begin{align*}
\left\Vert \mathcal{D}_{1}-\mathcal{G}^{N}\right\Vert  &  \leq2N\varepsilon,\\
\left\Vert \mathcal{D}_{2}-\mathcal{G}^{N}\right\Vert  &  \leq\left\Vert
v\right\Vert _{2}.
\end{align*}

\end{lemma}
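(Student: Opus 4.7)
The plan is to prove the two bounds separately, since they concern quite different perturbations (rescaling versus translation) of the Gaussian. Both ultimately reduce to one-dimensional computations by exploiting basic properties of the complex Gaussian density, namely the factorization of product measures and the rotational invariance of $\mathcal{G}^N$.

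For the first bound, I would use the general fact that total variation distance is subadditive across independent coordinates: for product measures, $\|P_1 \otimes \cdots \otimes P_N - Q_1 \otimes \cdots \otimes Q_N\| \leq \sum_i \|P_i - Q_i\|$ (provable by a standard hybrid argument). Thus it suffices to show that the one-coordinate distance $\|\mathcal{N}(0,(1-\varepsilon)^2)_{\mathbb{C}} - \mathcal{G}\|$ is at most $2\varepsilon$. Both densities are rotationally symmetric in the complex plane, depending only on $r = |z|^2$, so the set where one exceeds the other is a disk $\{|z|^2 < R^*\}$ for an explicit threshold $R^*$. Since $|z|^2$ is exponentially distributed under $\mathcal{G}$ (with rate $1$) and under $\mathcal{N}(0,(1-\varepsilon)^2)_{\mathbb{C}}$ (with rate $1/(1-\varepsilon)^2$), the TV distance telescopes into $e^{-R^*} - e^{-R^*/(1-\varepsilon)^2}$, which one simplifies using the definition of $R^*$ to obtain $e^{-R^*}(1-(1-\varepsilon)^2) \leq 2\varepsilon - \varepsilon^2 \leq 2\varepsilon$.

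For the second bound, the key observation is that the standard complex Gaussian $\mathcal{G}^N$ is invariant under any unitary rotation of $\mathbb{C}^N$, and TV distance is preserved under measurable bijections. Hence, by a unitary change of coordinates that aligns $v$ with the first axis, I can replace $\mathcal{D}_2$ by the distribution $\mathcal{N}(\|v\|_2,1)_{\mathbb{C}} \otimes \mathcal{G}^{N-1}$ without changing its distance from $\mathcal{G}^N$. The $N-1$ remaining coordinates agree with $\mathcal{G}$, so the whole problem collapses to a one-complex-dimensional estimate, namely $\|\mathcal{N}(a,1)_{\mathbb{C}} - \mathcal{G}\|$ with $a = \|v\|_2$. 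Writing $z = x + iy$, the imaginary parts match exactly, so the problem reduces further to the $1$-D real TV between $\mathcal{N}(a,1/2)$ and $\mathcal{N}(0,1/2)$. By the usual crossing-point computation, this equals $2\Phi(a/\sqrt{2}) - 1 = \mathrm{erf}(a/2)$. The elementary bound $\mathrm{erf}(x) \leq \tfrac{2}{\sqrt{\pi}} x$ then gives $\mathrm{erf}(\|v\|_2/2) \leq \|v\|_2/\sqrt{\pi} \leq \|v\|_2$, which is the required inequality.

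There is no real obstacle to this proof; the entire argument is routine manipulation of explicit Gaussian densities. The only things to watch out for are the variance convention of the complex Gaussian (so that $\mathrm{Re}(z),\mathrm{Im}(z)$ each have variance $1/2$ under $\mathcal{G}$, not $1$), and the factor-of-two relationship between the TV distance and the $L^1$ difference of densities. Neither changes the stated bounds.
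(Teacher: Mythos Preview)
Your proposal is correct and follows essentially the same approach as the paper: subadditivity of total variation across independent coordinates for the scaling bound, and rotational invariance of the Gaussian to collapse the shift bound to one real dimension. The only cosmetic difference is that the paper splits each complex coordinate into two real coordinates at the outset (working in $\mathbb{R}^{2N}$ throughout), whereas you stay in $\mathbb{C}^N$ and split real/imaginary parts only at the end; your explicit computations via the exponential distribution of $|z|^2$ and the $\mathrm{erf}$ formula are in fact more detailed than the paper's, which dispatches both one-dimensional integrals as ``straightforward estimates.''
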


\begin{proof}
It will be helpful to think of each complex coordinate as two real
coordinates, in which case $\mathcal{G}^{N}=\mathcal{N}\left(  0,1/2\right)
_{\mathbb{R}}^{2N}$ and $v$ is a vector in $\mathbb{R}^{2N}$.

For the first part, we have%
\begin{align*}
\left\Vert \mathcal{D}_{1}-\mathcal{G}^{N}\right\Vert  &  \leq2N\left\Vert
\mathcal{N}\left(  0,\frac{\left(  1-\varepsilon\right)  ^{2}}{2}\right)
_{\mathbb{R}}-\mathcal{N}\left(  0,\frac{1}{2}\right)  _{\mathbb{R}%
}\right\Vert \\
&  =\frac{N}{\sqrt{\pi}}\int_{-\infty}^{\infty}\left\vert e^{-x^{2}/\left(
1-\varepsilon\right)  ^{2}}-e^{-x^{2}}\right\vert dx\\
&  \leq2N\varepsilon
\end{align*}
where the first line follows from the triangle inequality and the last line
from straightforward estimates.

For the second part, by the rotational invariance of the Gaussian
distribution, the variation distance is unaffected if we replace $v$\ by any
other vector with the same $2$-norm. \ So let $v:=\left(  \ell,0,\ldots
,0\right)  $ where $\ell=\left\Vert v\right\Vert _{2}$. \ Then%
\begin{align*}
\left\Vert \mathcal{D}_{2}-\mathcal{G}^{N}\right\Vert  &  =\frac{1}{2}%
\int_{x_{1},\ldots,x_{2N}=-\infty}^{\infty}\left\vert \frac{e^{-\left(
x_{1}-\ell\right)  ^{2}}}{\sqrt{\pi}}\frac{e^{-x_{2}^{2}}}{\sqrt{\pi}}%
\cdots\frac{e^{-x_{2N}^{2}}}{\sqrt{\pi}}-\frac{e^{-x_{1}^{2}}}{\sqrt{\pi}%
}\frac{e^{-x_{2}^{2}}}{\sqrt{\pi}}\cdots\frac{e^{-x_{2N}^{2}}}{\sqrt{\pi}%
}\right\vert dx_{1}\cdots dx_{2N}\\
&  =\frac{1}{2\sqrt{\pi}}\int_{-\infty}^{\infty}\left\vert e^{-\left(
x-\ell\right)  ^{2}}-e^{-x^{2}}\right\vert dx\\
&  \leq\ell,
\end{align*}
where the last line follows from straightforward estimates.
\end{proof}

Using Lemmas \ref{triang} and \ref{autocor}, we can now complete the proof of
Theorem \ref{decompthm}: that assuming Conjecture \ref{pacc}\ (the Permanent
Anti-Concentration Conjecture), the \textsc{GPE}$_{\times}$\ and $\left\vert
\text{\textsc{GPE}}\right\vert _{\pm}^{2}$\ problems are polynomial-time equivalent.

\begin{proof}
[Proof of Theorem \ref{decompthm}]Lemma \ref{implications}\ already gave an
unconditional reduction from $\left\vert \text{\textsc{GPE}}\right\vert _{\pm
}^{2}$ to \textsc{GPE}$_{\times}$. \ So it suffices to give a reduction from
\textsc{GPE}$_{\times}$\ to $\left\vert \text{\textsc{GPE}}\right\vert _{\pm
}^{2}$, assuming the Permanent Anti-Concentration Conjecture.

Throughout the proof, we will fix an $N\times N$ input matrix $X=\left(
x_{ij}\right)  \in\mathbb{C}^{N\times N}$, which we think of as sampled from
the Gaussian distribution $\mathcal{G}^{N\times N}$. \ Probabilities will
always be with respect to $X\sim\mathcal{G}^{N\times N}$. \ For convenience,
we will often assume that \textquotedblleft bad events\textquotedblright%
\ (i.e., estimates of various quantities outside the desired error bounds)
simply do not occur; then, at the end, we will use the union bound to show
that the assumption was justified.

The \textsc{GPE}$_{\times}$ problem can be stated as follows.\ \ Given\ the
input $\left\langle X,0^{1/\varepsilon},0^{1/\delta}\right\rangle $ for some
$\varepsilon,\delta>0$, output a complex number $z\in\mathbb{C}$ such that%
\[
\left\vert z-\operatorname*{Per}\left(  X\right)  \right\vert \leq
\varepsilon\left\vert \operatorname*{Per}\left(  X\right)  \right\vert ,
\]
with success probability at least $1-\delta$ over $X$, in time
$\operatorname*{poly}\left(  N,1/\varepsilon,1/\delta\right)  $.

Let $\mathcal{O}$\ be an oracle that solves $\left\vert \text{\textsc{GPE}%
}\right\vert _{\pm}^{2}$. \ That is, given an input $\left\langle
A,0^{1/\epsilon},0^{1/\Delta}\right\rangle $\ where $A$\ is an $n\times n$
complex matrix, $\mathcal{O}$ outputs a nonnegative real number $\mathcal{O}%
\left(  \left\langle A,0^{1/\epsilon},0^{1/\Delta}\right\rangle \right)
$\ such that%
\[
\Pr_{A\sim\mathcal{G}^{n\times n}}\left[  \left\vert \mathcal{O}\left(
\left\langle A,0^{1/\epsilon},0^{1/\Delta}\right\rangle \right)  -\left\vert
\operatorname*{Per}\left(  A\right)  \right\vert ^{2}\right\vert \leq
\epsilon\left\vert \operatorname*{Per}\left(  A\right)  \right\vert
^{2}\right]  \geq1-\Delta.
\]
Then assuming Conjecture \ref{pacc}, we will show how to solve the
\textsc{GPE}$_{\times}$ instance $\left\langle X,0^{1/\varepsilon}%
,0^{1/\delta}\right\rangle $\ in time $\operatorname*{poly}\left(
N,1/\varepsilon,1/\delta\right)  $, with the help of $3N$\ nonadaptive queries
to $\mathcal{O}$.

Let $R=\left\vert \operatorname*{Per}\left(  X\right)  \right\vert ^{2}$.
\ Then by simply calling $\mathcal{O}$ on the input matrix $X$, we can obtain
a good approximation $\widetilde{R}$\ to $R$, such that (say) $\left\vert
\widetilde{R}-R\right\vert \leq\varepsilon R/10$. \ Therefore, our problem
reduces to estimating the \textit{phase} $\theta=\operatorname*{Per}\left(
X\right)  /\left\vert \operatorname*{Per}\left(  X\right)  \right\vert $. \ In
other words, we need to give a procedure that returns an approximation
$\widetilde{\theta}$\ to $\theta$\ such that (say) $\left\vert
\widetilde{\theta}-\theta\right\vert \leq0.9\varepsilon$, and does so with
high probability. \ (Here and throughout, it is understood that all
differences between angles are $\operatorname{mod}2\pi$.)

For all $n\in\left[  N\right]  $, let $X_{n}$\ be the bottom-right $n\times
n$\ submatrix of $X$ (thus $X_{N}=X$). \ A crucial observation is that, since
$X$ is a sample from $\mathcal{G}^{N\times N}$, each $X_{n}$ can be thought of
as a sample from $\mathcal{G}^{n\times n}$.

As in Theorem \ref{approxhard}, given a complex number $w$ and a matrix
$A=\left(  a_{ij}\right)  $, let $A^{\left[  w\right]  }$\ be the matrix that
is identical to $A$, except that its top-left entry equals $a_{11}-w$\ instead
of $a_{11}$. \ Then for any $n$ and $w$, we can think of the matrix
$X_{n}^{\left[  w\right]  }$\ as having been drawn from a distribution
$\mathcal{D}_{n}^{\left[  w\right]  }$ that is identical to $\mathcal{G}%
^{n\times n}$, except that the top-left entry is distributed according to
$\mathcal{N}\left(  -w,1\right)  _{\mathbb{C}}$\ rather than $\mathcal{G}$.
\ Recall that by Lemma \ref{autocor}, the variation distance between
$\mathcal{D}_{n}^{\left[  w\right]  }$\ and $\mathcal{G}^{n\times n}%
$\ satisfies%
\[
\left\Vert \mathcal{D}_{n}^{\left[  w\right]  }-\mathcal{G}^{n\times
n}\right\Vert \leq\left\vert w\right\vert .
\]

Let $\lambda>0$ be a parameter to be determined later. \ Then for each
$n\in\left[  N\right]  $, we will be interested in two specific $n\times
n$\ matrices besides $X_{n}$, namely $X_{n}^{\left[  \lambda\right]  }$\ and
$X_{n}^{\left[  i\lambda\right]  }$. \ Similarly to Theorem \ref{approxhard},
our reduction will be based on the identities%
\begin{align*}
\operatorname*{Per}\left(  X_{n}^{\left[  \lambda\right]  }\right)   &
=\operatorname*{Per}\left(  X_{n}\right)  -\lambda\operatorname*{Per}\left(
X_{n-1}\right)  ,\\
\operatorname*{Per}\left(  X_{n}^{\left[  i\lambda\right]  }\right)   &
=\operatorname*{Per}\left(  X_{n}\right)  -i\lambda\operatorname*{Per}\left(
X_{n-1}\right)  .
\end{align*}
More concretely, let%
\begin{align*}
R_{n}  &  :=\left\vert \operatorname*{Per}\left(  X_{n}\right)  \right\vert
^{2},\\
\theta_{n}  &  :=\frac{\operatorname*{Per}\left(  X_{n}\right)  }{\left\vert
\operatorname*{Per}\left(  X_{n}\right)  \right\vert },\\
S_{n}  &  :=\left\vert \operatorname*{Per}\left(  X_{n}^{\left[
\lambda\right]  }\right)  \right\vert ^{2}=\left\vert \operatorname*{Per}%
\left(  X_{n}\right)  -\lambda\operatorname*{Per}\left(  X_{n-1}\right)
\right\vert ^{2},\\
T_{n}  &  :=\left\vert \operatorname*{Per}\left(  X_{n}^{\left[
i\lambda\right]  }\right)  \right\vert ^{2}=\left\vert \operatorname*{Per}%
\left(  X_{n}\right)  -i\lambda\operatorname*{Per}\left(  X_{n-1}\right)
\right\vert ^{2}.
\end{align*}
Then some simple algebra---identical to what appeared in Lemma \ref{triang}%
---yields the identity%
\[
\theta_{n}=\theta_{n-1}+\operatorname*{sgn}\left(  R_{n}+R_{n-1}-T_{n}\right)
\arccos\left(  \frac{R_{n}+R_{n-1}-S_{n}}{2\sqrt{R_{n}R_{n-1}}}\right)
\bigskip
\]
for all $n\geq2$. \ \textquotedblleft Unravelling\textquotedblright\ this
recursive identity, we obtain a useful formula for $\theta=\theta_{N}%
=\frac{\operatorname*{Per}\left(  X\right)  }{\left\vert \operatorname*{Per}%
\left(  X\right)  \right\vert }$:%
\[
\theta=\frac{x_{NN}}{\left\vert x_{NN}\right\vert }+\sum_{n=2}^{N}\xi_{n}%
\]
where%
\[
\xi_{n}:=\operatorname*{sgn}\left(  R_{n}+R_{n-1}-T_{n}\right)  \arccos\left(
\frac{R_{n}+R_{n-1}-S_{n}}{2\sqrt{R_{n}R_{n-1}}}\right)  \bigskip.
\]
Our procedure to approximate $\theta$\ will simply consist of evaluating the
above expression for all $n\geq2$, but using estimates $\widetilde{R}%
_{n},\widetilde{S}_{n},\widetilde{T}_{n}$\ produced by the oracle
$\mathcal{O}$\ in place of the true values $R_{n},S_{n},T_{n}$.

In more detail, let $\widetilde{R}_{1}:=\left\vert x_{NN}\right\vert ^{2}$,
and for all $n\geq2$, let%
\begin{align*}
\widetilde{R}_{n}  &  :=\mathcal{O}\left(  \left\langle X_{n},0^{1/\epsilon
},0^{1/\Delta}\right\rangle \right)  ,\\
\widetilde{S}_{n}  &  :=\mathcal{O}\left(  \left\langle X_{n}^{\left[
\lambda\right]  },0^{1/\epsilon},0^{1/\Delta}\right\rangle \right)  ,\\
\widetilde{T}_{n}  &  :=\mathcal{O}\left(  \left\langle X_{n}^{\left[
i\lambda\right]  },0^{1/\epsilon},0^{1/\Delta}\right\rangle \right)  ,
\end{align*}
where $\epsilon,\Delta>1/\operatorname*{poly}\left(  N\right)  $ are
parameters to be determined later. \ Then our procedure for approximating
$\theta$\ is to return%
\[
\widetilde{\theta}:=\frac{x_{NN}}{\left\vert x_{NN}\right\vert }+\sum
_{n=2}^{N}\widetilde{\xi}_{n},
\]
where%
\[
\widetilde{\xi}_{n}:=\operatorname*{sgn}\left(  \widetilde{R}_{n}%
+\widetilde{R}_{n-1}-\widetilde{T}_{n}\right)  \arccos\left(  \frac
{\widetilde{R}_{n}+\widetilde{R}_{n-1}-\widetilde{S}_{n}}{2\sqrt
{\widetilde{R}_{n}\widetilde{R}_{n-1}}}\right)  \bigskip.
\]
Clearly this procedure runs in polynomial time and makes at most
$3N$\ nonadaptive calls to $\mathcal{O}$.

We now upper-bound the error $\left\vert \widetilde{\theta}-\theta\right\vert
$\ incurred\ in the approximation. \ Since%
\[
\left\vert \widetilde{\theta}-\theta\right\vert \leq\sum_{n=2}^{N}\left\vert
\widetilde{\xi}_{n}-\xi_{n}\right\vert ,
\]
it suffices to upper-bound $\left\vert \widetilde{\xi}_{n}-\xi_{n}\right\vert
$\ for each $n$. \ By the definition of $\mathcal{O}$, for all $n\in\left[
N\right]  $\ we have%
\begin{align*}
\Pr\left[  \left\vert \widetilde{R}_{n}-R_{n}\right\vert \leq\epsilon
R_{n}\right]   &  \geq1-\Delta,\\
\Pr\left[  \left\vert \widetilde{S}_{n}-S_{n}\right\vert \leq\epsilon
S_{n}\right]   &  \geq1-\Delta-\left\Vert \mathcal{D}_{n}^{\left[
\lambda\right]  }-\mathcal{G}^{n\times n}\right\Vert \\
&  \geq1-\Delta-\lambda,\\
\Pr\left[  \left\vert \widetilde{T}_{n}-T_{n}\right\vert \leq\epsilon
T_{n}\right]   &  \geq1-\Delta-\left\Vert \mathcal{D}_{n}^{\left[
i\lambda\right]  }-\mathcal{G}^{n\times n}\right\Vert \\
&  \geq1-\Delta-\lambda.
\end{align*}
Also, let $p\left(  n,1/\beta\right)  $ be a polynomial such that%
\[
\Pr_{A\sim\mathcal{G}^{n\times n}}\left[  \left\vert \operatorname*{Per}%
\left(  A\right)  \right\vert ^{2}\geq\frac{n!}{p\left(  n,1/\beta\right)
}\right]  \geq1-\beta
\]
for all $n$ and $\beta>0$; such a $p$ is guaranteed to exist by Conjecture
\ref{pacc}. \ It will later be convenient to assume $p$ is monotone. \ Then%
\[
\Pr\left[  R_{n}\geq\frac{n!}{p\left(  n,1/\beta\right)  }\right]  \geq1-\beta
\]
In the other direction, for all $0<\kappa<1$ Markov's inequality gives us%
\begin{align*}
\Pr\left[  R_{n}\leq\frac{n!}{\kappa}\right]   &  \geq1-\kappa,\\
\Pr\left[  S_{n}\leq\frac{n!}{\kappa}\right]   &  \geq1-\kappa-\lambda,\\
\Pr\left[  T_{n}\leq\frac{n!}{\kappa}\right]   &  \geq1-\kappa-\lambda,
\end{align*}
where we have again used the fact that $S_{n},T_{n}$\ are random variables
with variation distance at most $\lambda$\ from $R_{n}$. \ Now think of
$\beta,\kappa>1/\operatorname*{poly}\left(  N\right)  $ as parameters to be
determined later, and suppose that \textit{all seven} of the events listed
above hold, for all $n\in\left[  N\right]  $. \ In that case,%
\begin{align*}
\left\vert \widetilde{R}_{n}-R_{n}\right\vert  &  \leq\epsilon R_{n}\\
&  \leq\epsilon\frac{n!}{\kappa}\\
&  =\epsilon\frac{R_{n-1}n}{\kappa}\frac{\left(  n-1\right)  !}{R_{n-1}}\\
&  \leq\epsilon\frac{R_{n-1}n}{\kappa}p\left(  n-1,1/\beta\right) \\
&  \leq\epsilon\frac{R_{n-1}N}{\kappa}p\left(  N,1/\beta\right) \\
&  =\frac{\epsilon N\cdot p\left(  N,1/\beta\right)  }{\kappa\lambda^{2}%
}\lambda^{2}R_{n-1}%
\end{align*}
and likewise%
\[
\left\vert \widetilde{S}_{n}-S_{n}\right\vert ,\left\vert \widetilde{T}%
_{n}-T_{n}\right\vert \leq\frac{\epsilon N\cdot p\left(  N,1/\beta\right)
}{\kappa\lambda^{2}}\lambda^{2}R_{n-1}.
\]
Plugging the above bounds into Lemma \ref{triang}, we find that, if there are
no \textquotedblleft bad events,\textquotedblright\ then noisy triangulation
returns an estimate $\widetilde{\xi}_{n}$\ of $\xi_{n}$\ such that%
\begin{align*}
\left\vert \widetilde{\xi}_{n}-\xi_{n}\right\vert  &  \leq1.37\sqrt
{\frac{\epsilon N\cdot p\left(  N,1/\beta\right)  }{\kappa\lambda^{2}}}\left(
\lambda\sqrt{\frac{R_{n-1}}{R_{n}}}+1\right) \\
&  \leq1.37\sqrt{\frac{\epsilon N\cdot p\left(  N,1/\beta\right)  }%
{\kappa\lambda^{2}}}\left(  \lambda\sqrt{\frac{\left(  n-1\right)  !/\kappa
}{n!/p\left(  N,1/\beta\right)  }}+1\right) \\
&  \leq1.37\sqrt{\epsilon}\left(  \frac{p\left(  N,1/\beta\right)  }{\kappa
}+\frac{\sqrt{N}\sqrt{p\left(  N,1/\beta\right)  }}{\lambda\sqrt{\kappa}%
}\right)  .
\end{align*}
We now upper-bound the probability of a bad event. \ Taking the union bound
over all $n\in\left[  N\right]  $\ and all seven possible bad events, we find
that the total probability that the procedure fails is at most%
\[
p_{\operatorname*{FAIL}}:=\left(  3\Delta+3\kappa+4\lambda+\beta\right)  N.
\]
Thus, let us now make the choices $\Delta,\kappa:=\frac{\delta}{12N}$,
$\lambda:=\frac{\delta}{16N}$, and $\beta:=\frac{\delta}{4N}$, so that
$p_{\operatorname*{FAIL}}\leq\delta$ as desired. \ Let us also make the choice%
\[
\epsilon:=\frac{\varepsilon^{2}\delta^{3}}{7120N^{6}p\left(  N,4N/\delta
\right)  ^{2}}.
\]
Then%
\begin{align*}
\left\vert \widetilde{\theta}-\theta\right\vert  &  \leq\sum_{n=2}%
^{N}\left\vert \widetilde{\xi}_{n}-\xi_{n}\right\vert \\
&  \leq1.37\sqrt{\epsilon}\left(  \frac{12N\cdot p\left(  N,4N/\delta\right)
}{\delta}+\frac{32\sqrt{3}N^{2}\sqrt{p\left(  N,4N/\delta\right)  }}%
{\delta^{3/2}}\right)  N\\
&  \leq\frac{9\varepsilon}{10}%
\end{align*}
as desired. \ Furthermore, if none of the bad events happen, then we get
\textquotedblleft for free\textquotedblright\ that%
\[
\left\vert \widetilde{R}-R\right\vert =\left\vert \widetilde{R}_{N}%
-R_{N}\right\vert \leq\epsilon R_{N}\leq\frac{\varepsilon R}{10}.
\]
So letting $r:=\sqrt{R}$\ and $\widetilde{r}:=\sqrt{\widetilde{R}}$, by the
triangle inequality we have%
\begin{align*}
\left\vert \widetilde{r}e^{i\widetilde{\theta}}-re^{i\theta}\right\vert  &
\leq\left\vert \widetilde{r}-r\right\vert +r\sqrt{2-2\cos\left(
\widetilde{\theta}-\theta\right)  }\\
&  \leq\frac{\left\vert \widetilde{R}-R\right\vert }{\widetilde{r}%
+r}+r\left\vert \widetilde{\theta}-\theta\right\vert \\
&  \leq\frac{\varepsilon R}{10r}+r\frac{9\varepsilon}{10}\\
&  =\varepsilon r\\
&  =\varepsilon\left\vert \operatorname*{Per}\left(  X\right)  \right\vert ,
\end{align*}
and hence we have successfully approximated $\operatorname*{Per}\left(
X\right)  =re^{i\theta}$.
\end{proof}

\section{The Distribution of Gaussian Permanents\label{DISTPER}}

In this section, we seek an understanding of the distribution over
$\operatorname*{Per}\left(  X\right)  $, where $X\sim\mathcal{G}^{n\times n}%
$\ is a matrix of i.i.d.\ Gaussians. \ Here, recall that $\mathcal{G}%
=\mathcal{N}\left(  0,1\right)  _{\mathbb{C}}$\ is the standard complex normal
distribution, though one suspects that most issues would be similar with
$\mathcal{N}\left(  0,1\right)  _{\mathbb{R}}$, or possibly even the uniform
distribution over $\left\{  -1,1\right\}  $. \ As explained in Section
\ref{APPROXCASE}, the reason why we focus on the complex Gaussian ensemble
$\mathcal{G}^{n\times n}$\ is simply that, as shown by Theorem \ref{truncthm},
the Gaussian ensemble arises naturally when we consider truncations of
Haar-random unitary matrices.

Our goal is to give evidence in favor of Conjecture \ref{pacc}, the
\textit{Permanent Anti-Concentration Conjecture} (PACC). \ This is the
conjecture that, if $X\sim\mathcal{G}^{n\times n}$\ is Gaussian, then
$\operatorname*{Per}\left(  X\right)  $ is \textquotedblleft not \textit{too}
concentrated around $0$\textquotedblright: a $1-1/\operatorname*{poly}\left(
n\right)  $\ fraction of its probability mass is greater than $\sqrt
{n!}/\operatorname*{poly}\left(  n\right)  $\ in absolute value, $\sqrt{n!}%
$\ being the standard deviation. More formally, there exists a polynomial $p$
such that for all $n$ and $\delta>0$,%
\[
\Pr_{X\sim\mathcal{G}^{n\times n}}\left[  \left\vert \operatorname*{Per}%
\left(  X\right)  \right\vert <\frac{\sqrt{n!}}{p\left(  n,1/\delta\right)
}\right]  <\delta.
\]
An equivalent formulation is that there exist constants $C,D$\ and $\beta
>0$\ such that for all $n$ and $\varepsilon>0$,%
\[
\Pr_{X\sim\mathcal{G}^{n\times n}}\left[  \left\vert \operatorname*{Per}%
\left(  X\right)  \right\vert <\varepsilon\sqrt{n!}\right]  <Cn^{D}%
\varepsilon^{\beta}.
\]

Conjecture \ref{pacc} has two applications to strengthening the conclusions of
this paper. \ First, it lets us \textit{multiplicatively} estimate
$\operatorname*{Per}\left(  X\right)  $ (that is, solve the \textsc{GPE}%
$_{\times}$\ problem), assuming only that we can \textit{additively} estimate
$\operatorname*{Per}\left(  X\right)  $ (that is, solve the \textsc{GPE}%
$_{\pm}$\ problem). \ Indeed, if Conjecture \ref{pacc}\ holds, then as pointed
out in Lemma \ref{implications}, additive and multiplicative estimation become
equivalent for this problem. \ Second, as shown by Theorem \ref{decompthm},
Conjecture \ref{pacc} lets us estimate $\operatorname*{Per}\left(  X\right)
$\ itself, assuming we can estimate\ $\left\vert \operatorname*{Per}\left(
X\right)  \right\vert ^{2}$. \ The bottom line is that, if Conjecture
\ref{pacc}\ holds, then we can base our conclusions about the hardness of
approximate \textsc{BosonSampling}\ on the natural conjecture that
\textsc{GPE}$_{\times}$\ is $\mathsf{\#P}$-hard, rather than the
relatively-contrived conjecture that $\left\vert \text{\textsc{GPE}%
}\right\vert _{\pm}^{2}$\ is $\mathsf{\#P}$-hard.

At a less formal level, we believe proving Conjecture \ref{pacc} might also
provide intuition essential to proving the \textquotedblleft
bigger\textquotedblright\ conjecture, that these problems are $\mathsf{\#P}%
$-hard\ in the first place.

The closest result to Conjecture \ref{pacc}\ that we know of comes from a 2009
paper of Tao and Vu \cite{taovu:perm}. \ These authors show the following:

\begin{theorem}
[Tao-Vu \cite{taovu:perm}]\label{taovuthm}For all $\varepsilon>0$ and
sufficiently large $n$,%
\[
\Pr_{X\in\left\{  -1,1\right\}  ^{n\times n}}\left[  \left\vert
\operatorname*{Per}\left(  X\right)  \right\vert <\frac{\sqrt{n!}%
}{n^{\varepsilon n}}\right]  <\frac{1}{n^{0.1}}.
\]

\end{theorem}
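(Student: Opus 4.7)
The plan is to prove the bound by combining a cofactor expansion of the permanent with Littlewood-Offord-type anti-concentration inequalities applied row by row.

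First, I would expand $\operatorname{Per}(X) = \sum_{j=1}^{n} x_{1j} M_{1j}$ along the first row, where $M_{1j}$ denotes the $(1,j)$-minor (a permanent of an $(n-1) \times (n-1)$ submatrix). Conditioning on rows $2, \ldots, n$ fixes the minors, so $\operatorname{Per}(X)$ becomes a random signed sum $\sum_{j} \epsilon_j M_{1j}$ with $\epsilon_j$ independent uniform $\pm 1$. The Erd\H{o}s--Littlewood--Offord--Hal\'asz inequality then bounds the concentration of this sum in terms of how many $|M_{1j}|$ exceed a given threshold: if at least a constant fraction of the $|M_{1j}|$ are of size $\Omega(\sqrt{(n-1)!})$, then $\Pr\bigl[|\operatorname{Per}(X)| < t\bigr]$ is small for $t$ somewhat below $\sqrt{(n-1)!}$.

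Second, I would establish the required lower bounds on typical minor sizes by a second-moment argument together with induction. Since for a $\pm 1$ matrix $\mathbb{E}[|M_{1j}|^{2}] = (n-1)!$, a Paley--Zygmund / fourth-moment estimate shows that each $|M_{1j}|$ individually is at least $\sqrt{(n-1)!}/n^{O(1)}$ with non-trivial probability. Chaining these single-minor bounds through the joint distribution would show that typically many $M_{1j}$ are simultaneously of the required magnitude. Aggregating the loss factors $n^{O(\varepsilon)}$ contributed at each of the $n$ rows in an inductive argument produces the $n^{\varepsilon n}$ slack in the final bound, while $1/n^{0.1}$ absorbs the cumulative failure probabilities from Paley--Zygmund and Littlewood--Offord at each stage. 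The normalizing constants work out because a factor of $\sqrt{n}$ is lost when passing from $\sqrt{(n-1)!}$ to $\sqrt{n!}$, which is easily absorbed into the $n^{\varepsilon}$-slack per level.

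The main obstacle is the strong dependence between the minors $M_{11}, \ldots, M_{1n}$: they all share the same $n-1$ rows of $X$ and differ only in which column has been deleted. This rules out treating them as independent, and makes it delicate to conclude that many $|M_{1j}|$ are simultaneously large, since correlations could conspire to make them all small at once. The cleanest fix is the inverse Littlewood--Offord machinery of Tao--Vu: rather than demanding that many minors be large, one argues contrapositively that if $\operatorname{Per}(X)$ were unusually concentrated near zero, then the vector $(M_{1j})_{j \in [n]}$ would have to exhibit a rigid additive-combinatorial structure (approximately embedding into a low-rank generalized arithmetic progression). One then shows, by swapping one of the conditioned rows and re-applying anti-concentration, that this structure is so atypical over the random choice of rows $2, \ldots, n$ that it occurs with probability at most $1/n^{0.1}$, yielding the claim.
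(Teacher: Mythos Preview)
The paper does not prove this theorem at all: Theorem~\ref{taovuthm} is stated as a cited result of Tao and Vu \cite{taovu:perm}, with no proof given in the paper. So there is no ``paper's own proof'' to compare your proposal against.

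That said, your sketch is a faithful outline of the actual Tao--Vu argument in \cite{taovu:perm}: they do expand the permanent along a row, treat it as a random $\pm 1$ sum of cofactors, and the heart of their work is precisely the issue you identify---controlling the joint behavior of the dependent cofactors via inverse Littlewood--Offord theory. Your description of the obstacle (that the $M_{1j}$ share $n-1$ rows and are far from independent) and of the fix (showing that concentration would force the cofactor vector into a rigid additive structure, then ruling out that structure by randomizing over an auxiliary row) is accurate. The $n^{\varepsilon n}$ slack and the $n^{-0.1}$ failure probability indeed arise, as you say, from accumulating polynomial losses over $n$ inductive steps. So while the paper itself offers nothing to compare to, your proposal correctly captures the strategy of the original reference.
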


Alas, Theorem \ref{taovuthm} falls short of what we need in two respects.
\ First, it only upper-bounds the probability that $\left\vert
\operatorname*{Per}\left(  X\right)  \right\vert <\sqrt{n!}/n^{\varepsilon n}%
$, whereas we need to upper-bound the probability that $\left\vert
\operatorname*{Per}\left(  X\right)  \right\vert <\sqrt{n!}%
/\operatorname*{poly}\left(  n\right)  $. \ Second, the upper bound obtained
is $1/n^{0.1}$ (and Tao and Vu say that their technique seems to hit a barrier
at $1/\sqrt{n}$), whereas we need an upper bound of $1/\operatorname*{poly}%
\left(  n\right)  $. \ A more minor problem is that Theorem \ref{taovuthm}
applies to Bernoulli random matrices, not Gaussian ones. \ Of course, the
Gaussian case might be \textit{easier} rather than harder.

In the rest of the section, we will give three pieces of evidence for
Conjecture \ref{pacc}. \ The first, in Section \ref{NUMERICS}, is that it is
supported numerically. \ The second, in Section \ref{DET}, is that the
analogous statement holds with the \textit{determinant} instead of the
permanent. \ The proof of this result makes essential use of geometric
properties of the determinant, which is why we do not know how to extend it to
the permanent. \ On the other hand, Godsil and Gutman \cite{godsilgutman}
observed that, for all matrices $X=\left(  x_{ij}\right)  $,%
\[
\operatorname*{Per}\left(  X\right)  =\operatorname*{E}\left[
\operatorname*{Det}\left(
\begin{array}
[c]{ccc}%
\pm\sqrt{x_{11}} & \cdots & \pm\sqrt{x_{1n}}\\
\vdots & \ddots & \vdots\\
\pm\sqrt{x_{n1}} & \cdots & \pm\sqrt{x_{nn}}%
\end{array}
\right)  ^{2}\right]  ,
\]
where the expectation is over all $2^{n^{2}}$\ ways of assigning $+$'s and
$-$'s\ to the entries. \ Because of this fact, together with our numerical
data, we suspect that the story for the permanent may be similar to that for
the determinant. \ The third piece of evidence is\ that a weaker form of
Conjecture \ref{pacc} holds: basically, $\left\vert \operatorname*{Per}\left(
X\right)  \right\vert $\ has at least a $\Omega\left(  1/n\right)
$\ probability of being $\Omega\left(  \sqrt{n!}\right)  $. \ We prove this by
calculating the fourth moment of $\operatorname*{Per}\left(  X\right)  $.
\ Unfortunately, extending the calculation to higher moments seems difficult.

Before going further, let us make some elementary remarks about the
distribution over $\operatorname*{Per}\left(  X\right)  $\ for $X\sim
\mathcal{G}^{n\times n}$. \ By symmetry, clearly $\operatorname*{E}\left[
\operatorname*{Per}\left(  X\right)  \right]  =0$. \ The second moment is also
easy to calculate:%
\begin{align*}
\operatorname*{E}\left[  \left\vert \operatorname*{Per}\left(  X\right)
\right\vert ^{2}\right]   &  =\operatorname*{E}\left[  \sum_{\sigma,\tau\in
S_{n}}\prod_{i=1}^{n}x_{i,\sigma\left(  i\right)  }\overline{x}_{i,\tau\left(
i\right)  }\right] \\
&  =\operatorname*{E}\left[  \sum_{\sigma\in S_{n}}\prod_{i=1}^{n}\left\vert
x_{i,\sigma\left(  i\right)  }\right\vert ^{2}\right] \\
&  =\sum_{\sigma\in S_{n}}\prod_{i=1}^{n}\operatorname*{E}\left[  \left\vert
x_{i,\sigma\left(  i\right)  }\right\vert ^{2}\right] \\
&  =n!.
\end{align*}
We will often find it convenient to work with the normalized random variable%
\[
P_{n}:=\frac{\left\vert \operatorname*{Per}\left(  X\right)  \right\vert ^{2}%
}{n!},
\]
so that $\operatorname*{E}\left[  P_{n}\right]  =1$.

\subsection{Numerical Data\label{NUMERICS}}

Figure \ref{pccfig}\ shows the numerically-computed probability density
function of $P_{n}$ when $n=6$. \ For comparison, we have also plotted the pdf
of $D_{n}:=\left\vert \operatorname*{Det}\left(  X\right)  \right\vert
^{2}/n!$.%
\begin{figure}[ptb]%
\centering
\includegraphics[
trim=0.800258in 5.124947in 0.704292in 2.210240in,
height=2.0167in,
width=4.0197in
]%
{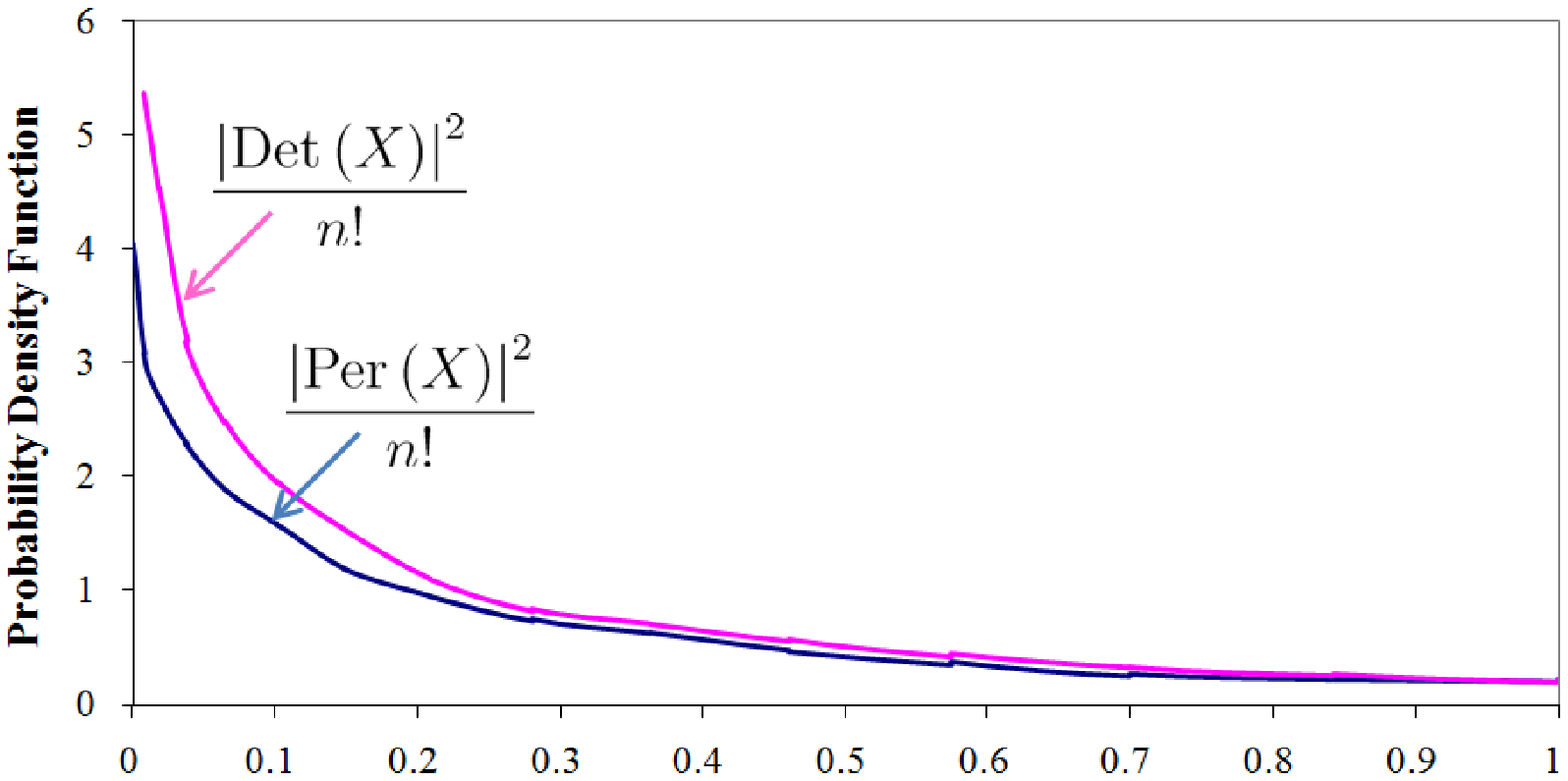}%
\caption{Probability density functions of the random variables $D_{n}%
=\left\vert \operatorname*{Det}\left(  X\right)  \right\vert ^{2}/n!$\ and
$P_{n}=\left\vert \operatorname*{Per}\left(  X\right)  \right\vert ^{2}/n!$,
where $X\sim\mathcal{G}^{n\times n}$ is a complex Gaussian random matrix, in
the case $n=6$. \ Note that $\operatorname*{E}\left[  D_{n}\right]
=\operatorname*{E}\left[  P_{n}\right]  =1$. \ As $n$ increases, the bends on
the left become steeper. \ We do not know whether the pdfs diverge at the
origin.}%
\label{pccfig}%
\end{figure}

The numerical evidence up to $n=10$ is strongly consistent with Conjecture
\ref{pacc}. \ Indeed, from the data it seems likely that for all $0\leq
\beta<2$, there exist constants $C,D$\ such that for all $n$\ and
$\varepsilon>0$,%
\[
\Pr_{X\sim\mathcal{G}^{n\times n}}\left[  \left\vert \operatorname*{Per}%
\left(  X\right)  \right\vert <\varepsilon\sqrt{n!}\right]  <Cn^{D}%
\varepsilon^{\beta},
\]
and perhaps the above even holds when $\beta=2$.

\subsection{The Analogue for Determinants\label{DET}}

We prove the following theorem, which at least settles Conjecture
\ref{pacc}\ with the determinant in place of the permanent:

\begin{theorem}
[Determinant Anti-Concentration Theorem]\label{detthm}For all $0\leq\beta<2$,
there exists a constant $C_{\beta}$\ such that for all $n$\ and $\varepsilon
>0$,%
\[
\Pr_{X\sim\mathcal{G}^{n\times n}}\left[  \left\vert \operatorname*{Det}%
\left(  X\right)  \right\vert <\varepsilon\sqrt{n!}\right]  <C_{\beta}%
n^{\beta\left(  \beta+2\right)  /8}\varepsilon^{\beta}.
\]

\end{theorem}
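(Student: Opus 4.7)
The plan is to leverage a well-known algebraic fact about complex Gaussian matrices that has no counterpart for the permanent: when $X \sim \mathcal{G}^{n \times n}$, the random variable $|\operatorname*{Det}(X)|^2$ factors into a product of independent Gamma random variables. Specifically, via the Bartlett (Cholesky) decomposition of the complex Wishart matrix $X^{\dagger}X$, I would establish that
$$|\operatorname*{Det}(X)|^2 \;\stackrel{d}{=}\; \prod_{k=1}^{n} G_k,$$
where $G_1,\ldots,G_n$ are mutually independent and $G_k \sim \mathrm{Gamma}(k,1)$. (Sanity check: $\mathbb{E}[\prod_k G_k] = \prod_k k = n!$, matching the direct calculation of $\mathbb{E}|\operatorname*{Det}(X)|^2$.) Establishing this factorization, either by citing the Wishart literature or by directly combining QR decomposition with the unitary invariance of $\mathcal{G}^{n\times n}$, is the first step.

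Given the product structure, the second step is a negative-moment Markov argument. For any $s \in (0,1)$,
$$\Pr\!\left[|\operatorname*{Det}(X)|^2 < \varepsilon^2 n!\right] \;\leq\; \varepsilon^{2s}(n!)^{s}\, \mathbb{E}\!\left[|\operatorname*{Det}(X)|^{-2s}\right] \;=\; \varepsilon^{2s} \prod_{k=1}^{n} k^{s} \cdot \frac{\Gamma(k-s)}{\Gamma(k)},$$
where I used independence together with the standard identity $\mathbb{E}[G_k^{-s}] = \Gamma(k-s)/\Gamma(k)$, valid precisely when $s < k$, hence for every $k \geq 1$ as long as $s < 1$.

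Third, I would estimate the remaining product by Stirling's expansion:
$$\log \frac{\Gamma(k-s)}{\Gamma(k)} \;=\; -s\log k + \frac{s(s+1)}{2k} + O\!\left(\frac{1}{k^2}\right),$$
so $\log[k^{s}\Gamma(k-s)/\Gamma(k)] = \tfrac{s(s+1)}{2k} + O(1/k^2)$. Summing over $k = 1,\ldots,n$ yields $\tfrac{s(s+1)}{2}\log n + O_s(1)$, so the product is $\Theta_s(n^{s(s+1)/2})$. Setting $s = \beta/2$ gives $s(s+1)/2 = \beta(\beta+2)/8$, producing the claimed bound
$$\Pr\!\left[|\operatorname*{Det}(X)| < \varepsilon\sqrt{n!}\right] \;\leq\; C_{\beta}\, n^{\beta(\beta+2)/8}\,\varepsilon^{\beta},$$
with $C_{\beta}$ absorbing the $s$-dependent Stirling constant (which blows up as $\beta \uparrow 2$, explaining the strict inequality $\beta < 2$ in the hypothesis).

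Technically, nothing in this plan is delicate; once the product decomposition is in hand, the computation is essentially mechanical. The real \emph{conceptual} obstacle—and the reason the paper proves only the determinant case—is Step 1: no analogous factorization of $|\operatorname*{Per}(X)|^2$ into independent random variables is known, because the unitary invariance and multilinear-alternating structure underlying the Bartlett decomposition are determinant-specific. Finding any substitute for this product structure (or, alternatively, a suitable bound on the negative moments $\mathbb{E}[|\operatorname*{Per}(X)|^{-2s}]$) is precisely what blocks a direct adaptation of this proof to the Permanent Anti-Concentration Conjecture.
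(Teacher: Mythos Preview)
Your proposal is correct and follows essentially the same route as the paper: the paper's Lemma~\ref{chisquared} is exactly your product decomposition (what the paper calls a ``$\chi^2$ with $k$ degrees of freedom'' has density $e^{-x}x^{k-1}/\Gamma(k)$, i.e.\ $\mathrm{Gamma}(k,1)$), Lemma~\ref{detmoments} is your moment identity $\mathbb{E}[G_k^{-s}]=\Gamma(k-s)/\Gamma(k)$, Lemma~\ref{stirling} is your Stirling estimate giving exponent $s(s+1)/2=\alpha(\alpha-1)/2$ with $\alpha=-s$, and the final Markov step with $s=\beta/2$ is identical. Your write-up is in fact a bit more streamlined than the paper's, but there is no substantive difference in the argument.
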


We leave as an open problem whether Theorem \ref{detthm}\ holds when $\beta=2$.

Compared to the permanent, a lot is known about the determinants of Gaussian
matrices. \ In particular, Girko \cite{girko} (see also Costello and Vu
\cite[Appendix A]{cv}) have shown that%
\[
\frac{\ln\left\vert \operatorname*{Det}\left(  X\right)  \right\vert -\ln
\sqrt{\left(  n-1\right)  !}}{\sqrt{\frac{\ln n}{2}}}%
\]
converges weakly to the normal distribution $\mathcal{N}\left(  0,1\right)
_{\mathbb{R}}$. \ Unfortunately, weak convergence is not enough to imply
Theorem \ref{detthm}, so we will have to do some more work. \ Indeed, we will
find that the probability density function of $\left\vert \operatorname*{Det}%
\left(  X\right)  \right\vert ^{2}$, in the critical regime where $\left\vert
\operatorname*{Det}\left(  X\right)  \right\vert ^{2}\approx0$, is
\textit{different} than one might guess from the above formula.

The key fact about $\operatorname*{Det}\left(  X\right)  $\ that we will use
is that we can compute its moments \textit{exactly}---even the fractional and
inverse moments. \ To do so, we use the following beautiful characterization,
which can be found (for example) in Costello and Vu \cite{cv}.

\begin{lemma}
[\cite{cv}]\label{chisquared}Let $X\sim\mathcal{G}^{n\times n}$\ be a complex
Gaussian random matrix. \ Then $\left\vert \operatorname*{Det}\left(
X\right)  \right\vert ^{2}$\ has the same distribution as%
\[
\prod_{i=1}^{n}\left(  \sum_{j=1}^{i}\left\vert \xi_{ij}\right\vert
^{2}\right)
\]
where the $\xi_{ij}$'s are independent $\mathcal{N}\left(  0,1\right)
_{\mathbb{C}}$\ Gaussians. \ (In other words, $\left\vert \operatorname*{Det}%
\left(  X\right)  \right\vert ^{2}$\ is distributed as $T_{1}\cdots T_{n}%
$,\ where each $T_{k}$\ is an independent $\chi^{2}$\ random variable with $k$
degrees of freedom.)
\end{lemma}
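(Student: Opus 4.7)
} My plan is to prove the distributional identity directly by performing Gram-Schmidt orthogonalization on the columns of $X$ and exploiting the unitary invariance of the complex Gaussian measure. Write $X = QR$ for the QR decomposition, where $Q \in \mathcal{U}_{n,n}$ has orthonormal columns and $R = (r_{ij})$ is upper triangular with real nonnegative diagonal entries. Since $|\det(Q)| = 1$, we have
\[
\left|\operatorname{Det}(X)\right|^2 \;=\; \left|\operatorname{Det}(R)\right|^2 \;=\; \prod_{i=1}^{n} r_{ii}^2.
\]
So the lemma reduces to identifying the joint distribution of $(r_{11}^2, \ldots, r_{nn}^2)$.

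The key step will be to build $R$ column by column and show, via induction on $i$, that conditioned on the first $i-1$ columns $X_1, \ldots, X_{i-1}$ of $X$ (equivalently, on the first $i-1$ columns of $Q$), the quantity $r_{ii}^2$ is distributed as $\sum_{j=1}^{n-i+1} |\xi_{ij}|^2$ for i.i.d.\ $\xi_{ij} \sim \mathcal{N}(0,1)_{\mathbb{C}}$, and is \emph{independent} of everything seen so far. The mechanism is rotational invariance: the $i$-th column $X_i \sim \mathcal{N}(0,I)_{\mathbb{C}}^n$ is independent of $X_1, \ldots, X_{i-1}$, and $r_{ii}^2 = \|P_i X_i\|^2$ where $P_i$ is the orthogonal projector onto the $(n-i+1)$-dimensional orthogonal complement of $\operatorname{span}(X_1, \ldots, X_{i-1})$. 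Because $\mathcal{N}(0,I)_{\mathbb{C}}^n$ is invariant under unitaries, $P_i X_i$, expressed in any orthonormal basis of that complement, has the law of an i.i.d.\ complex Gaussian vector of dimension $n-i+1$, independent of the conditioning. Summing the squared magnitudes of its coordinates gives the claimed representation.

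Multiplying together the independent factors, reindexing $k := n-i+1$, and renaming the dummy Gaussians, we obtain
\[
\left|\operatorname{Det}(X)\right|^2 \;\stackrel{d}{=}\; \prod_{k=1}^{n} \sum_{j=1}^{k} |\xi_{kj}|^2,
\]
which is exactly the identity stated. The parenthetical $\chi^2$ reformulation then follows because $\sum_{j=1}^{k} |\xi_{kj}|^2$, a sum of $k$ squared magnitudes of standard complex Gaussians, is by definition a (complex) $\chi^2$ variable with $k$ degrees of freedom in the convention used by \cite{cv}.

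The only subtlety I expect to have to handle carefully is the independence claim across different $i$: one must argue that after conditioning on $X_1, \ldots, X_{i-1}$, the random variable $r_{ii}^2$ depends only on the component of $X_i$ orthogonal to their span, and that this component is independent of the components used to form later $r_{jj}$ with $j > i$. This is immediate once one notes that the Gram-Schmidt process uses disjoint pieces of the independent coordinates of $X_i$ in the rotated basis adapted to $\operatorname{span}(X_1, \ldots, X_{i-1})^\perp$, but it deserves an explicit line. No deep random-matrix machinery is required; the proof is essentially a careful bookkeeping exercise built on unitary invariance.
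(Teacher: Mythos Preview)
Your proposal is correct and matches the approach the paper itself sketches: the paper omits the proof but remarks that it ``uses the interpretation of the determinant as the volume of a parallelepiped, together with the spherical symmetry of the Gaussian distribution,'' which is precisely your QR/Gram--Schmidt argument (the $r_{ii}$ are the successive heights, and unitary invariance is spherical symmetry). Your handling of the independence of the $r_{ii}^2$ via conditioning on $X_1,\ldots,X_{i-1}$ is the standard route and is fine once stated carefully.
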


The proof of Lemma \ref{chisquared}\ (which we omit) uses the interpretation
of the determinant as the volume of a parallelepiped, together with the
spherical symmetry of the Gaussian distribution. \ 

As with the permanent, it will be convenient to work with the normalized
random variable%
\[
D_{n}:=\frac{\left\vert \operatorname*{Det}\left(  X\right)  \right\vert ^{2}%
}{n!},
\]
so that $\operatorname*{E}\left[  D_{n}\right]  =1$. \ Using Lemma
\ref{chisquared}, we now calculate the moments of $D_{n}$.

\begin{lemma}
\label{detmoments}For all real numbers $\alpha>-1$,%
\[
\operatorname*{E}\left[  D_{n}^{\alpha}\right]  =\frac{1}{\left(  n!\right)
^{\alpha}}\prod_{k=1}^{n}\frac{\Gamma\left(  k+\alpha\right)  }{\Gamma\left(
k\right)  }.
\]
(If $\alpha\leq-1$\ then $\operatorname*{E}\left[  D_{n}^{\alpha}\right]
=\infty$.)
\end{lemma}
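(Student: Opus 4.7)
The plan is to apply Lemma \ref{chisquared} directly and then compute moments of Gamma random variables. By Lemma \ref{chisquared}, we can write $|\operatorname{Det}(X)|^2 = T_1 T_2 \cdots T_n$ where $T_k = \sum_{j=1}^k |\xi_{kj}|^2$ is a sum of $k$ independent squared complex Gaussians, and the $T_k$'s are mutually independent. Since each $|\xi_{kj}|^2$ is an exponential random variable with mean $1$, the sum $T_k$ follows a Gamma$(k,1)$ distribution, with density $\frac{t^{k-1} e^{-t}}{\Gamma(k)}$ on $t \geq 0$.

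First I will compute, for each $k \geq 1$ and each real $\alpha > -k$,
\[
\operatorname{E}[T_k^\alpha] = \int_0^\infty t^\alpha \frac{t^{k-1}e^{-t}}{\Gamma(k)}\, dt = \frac{1}{\Gamma(k)}\int_0^\infty t^{(k+\alpha)-1} e^{-t}\, dt = \frac{\Gamma(k+\alpha)}{\Gamma(k)},
\]
which is the standard Gamma-function integral. Next, using independence of $T_1,\ldots,T_n$ and the assumption $\alpha > -1$ (so that $\alpha > -k$ for every $k \geq 1$, making each factor finite), I will multiply:
\[
\operatorname{E}\bigl[|\operatorname{Det}(X)|^{2\alpha}\bigr] = \prod_{k=1}^n \operatorname{E}[T_k^\alpha] = \prod_{k=1}^n \frac{\Gamma(k+\alpha)}{\Gamma(k)}.
\]
Dividing by the deterministic factor $(n!)^\alpha$ then gives the claimed formula for $\operatorname{E}[D_n^\alpha]$.

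Finally, for the second claim, when $\alpha \leq -1$ the $k=1$ factor $\operatorname{E}[T_1^\alpha] = \int_0^\infty t^\alpha e^{-t}\, dt$ diverges at the origin (since $t^\alpha$ is not integrable on $(0,1]$ when $\alpha \leq -1$), and since the remaining factors are positive and independent of $T_1$, this forces $\operatorname{E}[D_n^\alpha] = \infty$.

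There is no real obstacle here: the whole argument is a one-line application of Lemma \ref{chisquared} together with the definition of the Gamma function. The only thing to be careful about is the convergence condition $\alpha > -1$, which is exactly what is needed to keep the smallest-index factor $T_1$ integrable.
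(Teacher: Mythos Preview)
Your proof is correct and follows essentially the same route as the paper: apply Lemma \ref{chisquared}, factor the expectation by independence, and evaluate each $\operatorname{E}[T_k^\alpha]$ via the Gamma integral, with the $k=1$ term forcing divergence when $\alpha\le -1$. Your identification of the law of $T_k$ as Gamma$(k,1)$ is in fact slightly more precise than the paper's wording (the paper calls it a ``$\chi^2$ with $k$ degrees of freedom'' but writes down the Gamma$(k,1)$ density, which is the correct one for sums of squared complex standard Gaussians).
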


\begin{proof}
By Lemma \ref{chisquared},%
\begin{align*}
\operatorname*{E}\left[  D_{n}^{\alpha}\right]   &  =\frac{1}{\left(
n!\right)  ^{\alpha}}\operatorname*{E}\left[  T_{1}^{\alpha}\cdots
T_{n}^{\alpha}\right] \\
&  =\frac{1}{\left(  n!\right)  ^{\alpha}}\prod_{k=1}^{n}\operatorname*{E}%
\left[  T_{k}^{\alpha}\right]  ,
\end{align*}
where each $T_{k}$\ is an independent $\chi^{2}$\ random variable with $k$
degrees of freedom. \ Now, $T_{k}$\ has probability density function%
\[
f\left(  x\right)  =\frac{e^{-x}x^{k-1}}{\Gamma\left(  k\right)  }%
\]
for $x\geq0$. \ So%
\begin{align*}
\operatorname*{E}\left[  T_{k}^{\alpha}\right]   &  =\frac{1}{\Gamma\left(
k\right)  }\int_{0}^{\infty}e^{-x}x^{k+\alpha-1}dx\\
&  =\frac{\Gamma\left(  k+\alpha\right)  }{\Gamma\left(  k\right)  }%
\end{align*}
as long as $k+\alpha>0$. \ (If $k+\alpha\leq0$, as can happen if $\alpha
\leq-1$,\ then the above integral diverges.)
\end{proof}

As a sample application of Lemma \ref{detmoments}, if $\alpha$\ is a positive
integer then we get%
\[
\operatorname*{E}\left[  D_{n}^{\alpha}\right]  =\prod_{i=1}^{\alpha-1}%
\binom{n+i}{i}=\Theta\left(  n^{\alpha\left(  \alpha-1\right)  /2}\right)  .
\]
For our application, though, we are interested in the dependence of
$\operatorname*{E}\left[  D_{n}^{\alpha}\right]  $ on $n$ when $\alpha$\ is
\textit{not} necessarily a positive integer. \ The next lemma shows that the
asymptotic behavior above generalizes to negative and fractional $\alpha$.

\begin{lemma}
\label{stirling}For all real numbers $\alpha>-1$, there exists a positive
constant $C_{\alpha}$\ such that%
\[
\lim_{n\rightarrow\infty}\frac{\operatorname*{E}\left[  D_{n}^{\alpha}\right]
}{n^{\alpha\left(  \alpha-1\right)  /2}}=C_{\alpha}.
\]

\end{lemma}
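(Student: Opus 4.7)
The plan is to apply standard Gamma-function asymptotics to the closed form from Lemma \ref{detmoments}, which gives
\[
\mathbb{E}[D_n^\alpha] = \frac{1}{(n!)^\alpha} \prod_{k=1}^n \frac{\Gamma(k+\alpha)}{\Gamma(k)}.
\]
The key idea is that $\prod_{k=1}^n k^\alpha = (n!)^\alpha$, so after normalizing we are really studying the product
\[
\prod_{k=1}^n \frac{\Gamma(k+\alpha)}{\Gamma(k)\,k^\alpha}.
\]

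First I would invoke the standard asymptotic expansion of the Gamma ratio (derivable from Stirling's formula or Binet's second formula):
\[
\frac{\Gamma(k+\alpha)}{\Gamma(k)} = k^\alpha \left(1 + \frac{\alpha(\alpha-1)}{2k} + O_\alpha\!\left(\frac{1}{k^2}\right)\right) \quad \text{as } k \to \infty,
\]
where the implicit constant depends on $\alpha$ but not on $k$. Since $\alpha > -1$, the individual factors $\Gamma(k+\alpha)/\Gamma(k)$ are finite and positive for every $k \geq 1$, so the finite-$k$ portion of the product contributes only a harmless positive multiplicative constant that can be absorbed into $C_\alpha$.

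Next I would take logarithms and apply the expansion for $k$ exceeding some fixed $k_0$:
\[
\log \prod_{k=k_0}^n \frac{\Gamma(k+\alpha)}{\Gamma(k)\,k^\alpha} = \sum_{k=k_0}^n \left[\frac{\alpha(\alpha-1)}{2k} + O_\alpha\!\left(\frac{1}{k^2}\right)\right] = \frac{\alpha(\alpha-1)}{2} \ln n + B_\alpha + o(1),
\]
where I have used the harmonic sum $\sum_{k=1}^n 1/k = \ln n + \gamma + O(1/n)$ and the fact that $\sum 1/k^2$ converges, so the $O(1/k^2)$ tail is a finite constant. Exponentiating, the normalized product tends to $e^{B_\alpha} \cdot n^{\alpha(\alpha-1)/2}(1+o(1))$, and multiplying by the contributions from $k < k_0$ yields
\[
\prod_{k=1}^n \frac{\Gamma(k+\alpha)}{\Gamma(k)} = (n!)^\alpha \cdot C_\alpha \cdot n^{\alpha(\alpha-1)/2}(1+o(1))
\]
for some positive $C_\alpha$. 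Dividing by $(n!)^\alpha$ gives the claim.

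There is no serious obstacle here; the only mild subtlety is ensuring that the asymptotic expansion's error term is uniform and summable over $k$, which is why I would separate a small initial segment $k < k_0$ (handled as a finite product of strictly positive numbers) from the tail, where the Stirling-type bound applies cleanly. The positivity of $C_\alpha$ follows because every factor in the product is strictly positive and the telescoping log-sum produces a finite limit.
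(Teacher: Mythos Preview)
Your proposal is correct and takes essentially the same approach as the paper: both start from the closed form in Lemma~\ref{detmoments}, take logarithms of the product $\prod_k \Gamma(k+\alpha)/(\Gamma(k)k^\alpha)$, apply Stirling-type asymptotics to extract the leading $\tfrac{\alpha(\alpha-1)}{2}\ln n$ term, and absorb the remaining convergent pieces into a finite positive constant. The only cosmetic difference is that the paper expands each factor via Stirling's formula by hand and names three explicit error constants $H_\alpha,J_\alpha,L_\alpha$, whereas you invoke the packaged gamma-ratio expansion $\Gamma(k+\alpha)/\Gamma(k)=k^\alpha\bigl(1+\tfrac{\alpha(\alpha-1)}{2k}+O_\alpha(k^{-2})\bigr)$ directly.
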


\begin{proof}
Let us write%
\[
\operatorname*{E}\left[  D_{n}^{\alpha}\right]  =\frac{\Gamma\left(
1+\alpha\right)  }{n^{\alpha}}\prod_{k=1}^{n-1}\frac{\Gamma\left(
k+\alpha+1\right)  }{k^{\alpha}\Gamma\left(  k+1\right)  }.
\]
Then by Stirling's approximation,%
\begin{align*}
\ln\prod_{k=1}^{n-1}\frac{\Gamma\left(  k+\alpha+1\right)  }{k^{\alpha}%
\Gamma\left(  k+1\right)  }  &  =\sum_{k=1}^{n-1}\left(  \ln\frac
{\Gamma\left(  k+\alpha+1\right)  }{\Gamma\left(  k+1\right)  }-\alpha\ln
k\right) \\
&  =H_{\alpha}+o\left(  1\right)  +\sum_{k=1}^{n-1}\left(  \ln\left(
\frac{\sqrt{2\pi\left(  k+\alpha\right)  }\left(  \frac{k+\alpha}{e}\right)
^{k+\alpha}}{\sqrt{2\pi k}\left(  \frac{k}{e}\right)  ^{k}}\right)  -\alpha\ln
k\right) \\
&  =H_{\alpha}+o\left(  1\right)  +\sum_{k=1}^{n-1}\left(  \left(
k+\alpha+\frac{1}{2}\right)  \ln\left(  \frac{k+\alpha}{k}\right)
-\alpha\right) \\
&  =H_{\alpha}+J_{\alpha}+o\left(  1\right)  +\sum_{k=1}^{n-1}\left(  \left(
k+\alpha+\frac{1}{2}\right)  \left(  \frac{\alpha}{k}-\frac{\alpha^{2}}%
{2k^{2}}\right)  -\alpha\right) \\
&  =H_{\alpha}+J_{\alpha}+o\left(  1\right)  +\sum_{k=1}^{n-1}\left(
\frac{\alpha\left(  \alpha+1\right)  }{2k}-\frac{\alpha^{2}\left(
2\alpha+1\right)  }{4k^{2}}\right) \\
&  =H_{\alpha}+J_{\alpha}+L_{\alpha}+o\left(  1\right)  +\frac{\alpha\left(
\alpha+1\right)  }{2}\ln n.
\end{align*}
In the above, $H_{\alpha}$, $J_{\alpha}$, and $L_{\alpha}$ are finite error
terms that depend only on $\alpha$ (and not $n$):%
\begin{align*}
H_{\alpha}  &  =\sum_{k=1}^{\infty}\ln\left(  \frac{\Gamma\left(
k+\alpha+1\right)  }{\Gamma\left(  k+1\right)  }\frac{\sqrt{k}\left(  \frac
{k}{e}\right)  ^{k}}{\sqrt{k+\alpha}\left(  \frac{k+\alpha}{e}\right)
^{k+\alpha}}\right)  ,\\
J_{\alpha}  &  =\sum_{k=1}^{\infty}\left(  k+\alpha+\frac{1}{2}\right)
\left(  \ln\left(  \frac{k+\alpha}{k}\right)  -\left(  \frac{\alpha}{k}%
-\frac{\alpha^{2}}{2k^{2}}\right)  \right)  ,\\
L_{\alpha}  &  =\frac{\alpha\left(  \alpha+1\right)  }{2}\left(
\lim_{n\rightarrow\infty}\sum_{k=1}^{\infty}\frac{1}{k}-\ln n\right)
-\sum_{k=1}^{\infty}\frac{\alpha^{2}\left(  2\alpha+1\right)  }{4k^{2}}\\
&  =\frac{\alpha\left(  \alpha+1\right)  \gamma}{2}-\frac{\alpha^{2}\left(
2\alpha+1\right)  \pi^{2}}{24k^{2}},
\end{align*}
where $\gamma\approx0.577$\ is the Euler-Mascheroni constant. \ The $o\left(
1\right)  $'s represent additional error terms that go to $0$
as\ $n\rightarrow\infty$. \ Hence%
\[
\prod_{k=1}^{n-1}\frac{\Gamma\left(  k+\alpha+1\right)  }{k^{\alpha}%
\Gamma\left(  k+1\right)  }=e^{H_{\alpha}+J_{\alpha}+L_{\alpha}+o\left(
1\right)  }n^{\alpha\left(  \alpha+1\right)  /2}%
\]
and%
\begin{align*}
\lim_{n\rightarrow\infty}\frac{\operatorname*{E}\left[  D_{n}^{\alpha}\right]
}{n^{\alpha\left(  \alpha-1\right)  /2}}  &  =\lim_{n\rightarrow\infty}\left(
\frac{1}{n^{\alpha\left(  \alpha-1\right)  /2}}\cdot\frac{\Gamma\left(
1+\alpha\right)  }{n^{\alpha}}e^{H_{\alpha}+J_{\alpha}+L_{\alpha}+o\left(
1\right)  }n^{\alpha\left(  \alpha+1\right)  /2}\right) \\
&  =\Gamma\left(  1+\alpha\right)  e^{H_{\alpha}+J_{\alpha}+L_{\alpha}},
\end{align*}
which is a positive constant $C_{\alpha}$\ depending on $\alpha$.
\end{proof}

We can now complete the proof of Theorem \ref{detthm}.

\begin{proof}
[Proof of Theorem \ref{detthm}]Let $\alpha:=-\beta/2$. \ Then by Markov's
inequality, for all $\varepsilon>0$\ we have%
\begin{align*}
\operatorname*{E}\left[  D_{n}^{\alpha}\right]   &  =\operatorname*{E}\left[
\left(  \frac{\sqrt{n!}}{\left\vert \operatorname*{Det}\left(  X\right)
\right\vert }\right)  ^{\beta}\right] \\
&  \geq\Pr_{X\sim\mathcal{G}^{n\times n}}\left[  \left\vert
\operatorname*{Det}\left(  X\right)  \right\vert <\varepsilon\sqrt{n!}\right]
\cdot\frac{1}{\varepsilon^{\beta}}.
\end{align*}
Hence%
\begin{align*}
\Pr_{X\sim\mathcal{G}^{n\times n}}\left[  \left\vert \operatorname*{Det}%
\left(  X\right)  \right\vert <\varepsilon\sqrt{n!}\right]   &  \leq
\operatorname*{E}\left[  D_{n}^{\alpha}\right]  \cdot\varepsilon^{\beta}\\
&  <C_{\alpha}n^{\alpha\left(  \alpha-1\right)  /2}\varepsilon^{\beta}\\
&  =C_{\beta}^{\prime}n^{\beta\left(  \beta+2\right)  /8}\varepsilon^{\beta}%
\end{align*}
for some positive constants $C_{\alpha},C_{\beta}^{\prime}$\ depending only on
$\alpha$\ and $\beta$\ respectively.
\end{proof}

\subsection{Weak Version of the PACC\label{WEAKPACC}}

We prove the following theorem about concentration of Gaussian permanents.

\begin{theorem}
[Weak Anti-Concentration of the Permanent]\label{weakpacc}For all $\alpha<1$,%
\[
\Pr_{X\sim\mathcal{G}^{n\times n}}\left[  \left\vert \operatorname*{Per}%
\left(  X\right)  \right\vert ^{2}\geq\alpha\cdot n!\right]  >\frac{\left(
1-\alpha\right)  ^{2}}{n+1}.
\]

\end{theorem}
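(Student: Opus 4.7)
The plan is to apply the Paley--Zygmund inequality to the non-negative random variable $Z := \left\vert \operatorname{Per}(X)\right\vert^2$. Paley--Zygmund states that for any non-negative $Z$ with finite second moment and $\alpha \in [0,1]$,
\[
\Pr\left[Z > \alpha\, \mathbb{E}[Z]\right] \geq (1-\alpha)^2 \frac{(\mathbb{E}[Z])^2}{\mathbb{E}[Z^2]}.
\]
We already know from the moment computation at the start of Section \ref{DISTPER} that $\mathbb{E}[Z] = n!$, so it suffices to prove $\mathbb{E}[Z^2] = \mathbb{E}[\left\vert\operatorname{Per}(X)\right\vert^4] \leq (n+1)(n!)^2$. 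Substituting then yields $\Pr[\left\vert\operatorname{Per}(X)\right\vert^2 \geq \alpha\, n!] > (1-\alpha)^2/(n+1)$, with strict inequality because $Z$ is not a constant multiple of any indicator (so Cauchy--Schwarz is strict in the standard proof of Paley--Zygmund).

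To compute the fourth moment, I would write $\left\vert\operatorname{Per}(X)\right\vert^4 = \operatorname{Per}(X)^2\, \overline{\operatorname{Per}(X)}^2$ and expand it as a quadruple sum
\[
\mathbb{E}[\left\vert\operatorname{Per}(X)\right\vert^4] = \sum_{\sigma_1,\sigma_2,\tau_1,\tau_2 \in S_n} \prod_{i=1}^n \mathbb{E}\!\left[x_{i,\sigma_1(i)}\, x_{i,\sigma_2(i)}\, \overline{x}_{i,\tau_1(i)}\, \overline{x}_{i,\tau_2(i)}\right].
\]
Since the entries are i.i.d.\ standard complex Gaussians with $\mathbb{E}[x^a \overline{x}^b] = 0$ unless $a=b$, a term survives exactly when, for every row $i$, the multiset $\{\sigma_1(i), \sigma_2(i)\}$ equals $\{\tau_1(i), \tau_2(i)\}$. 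Using $\mathbb{E}[|x|^2]=1$ and $\mathbb{E}[|x|^4]=2$, each surviving term contributes $2^{|\operatorname{Fix}(\sigma_2\sigma_1^{-1})|}$.

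The combinatorial heart of the argument, which I expect to be the main obstacle, is counting for fixed $\sigma_1,\sigma_2$ how many pairs $(\tau_1,\tau_2)$ survive. Setting $\pi := \sigma_2\sigma_1^{-1}$ and writing $D := \{i : \sigma_1(i) \neq \sigma_2(i)\}$, a surviving $(\tau_1,\tau_2)$ corresponds to a ``choice function'' $c : D \to \{1,2\}$ with $\tau_1(i) = \sigma_{c(i)}(i)$. The delicate part is that bijectivity of $\tau_1$ forces $c$ to be constant on each non-trivial cycle of $\pi$: within a cycle, any ``switch'' between $\sigma_1$ and $\sigma_2$ propagates and ultimately creates a collision. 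A careful cycle-by-cycle analysis then shows the number of valid $(\tau_1,\tau_2)$ is exactly $2^{c(\pi)}$, where $c(\pi)$ counts non-trivial cycles. Since $|\operatorname{Fix}(\pi)| + c(\pi) = \operatorname{cyc}(\pi)$, the total number of cycles of $\pi$, and since summing over $\sigma_1$ for fixed $\pi$ contributes a factor of $n!$, this yields
\[
\mathbb{E}[\left\vert\operatorname{Per}(X)\right\vert^4] = n! \sum_{\pi \in S_n} 2^{\operatorname{cyc}(\pi)}.
\]

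The proof then concludes by invoking the classical generating-function identity $\sum_{\pi \in S_n} z^{\operatorname{cyc}(\pi)} = z(z+1)(z+2)\cdots(z+n-1)$, which at $z=2$ evaluates to $(n+1)!$. Hence $\mathbb{E}[\left\vert\operatorname{Per}(X)\right\vert^4] = n! \cdot (n+1)! = (n+1)(n!)^2$, and Paley--Zygmund gives the claimed bound. The formula $\mathbb{E}[\left\vert\operatorname{Per}(X)\right\vert^4] = (n+1)(n!)^2$ can be spot-checked for small $n$ (e.g., $n=1$ gives $2$ and $n=2$ gives $12$) to sanity-check the cycle counting.
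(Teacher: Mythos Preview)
Your proposal is correct and follows essentially the same route as the paper: the paper proves $\operatorname*{E}[P_n^2]=n+1$ (its Lemma \ref{fourthmoment}) via exactly your expansion and cycle-counting argument, invokes the identity $\operatorname*{E}_{\sigma}[2^{\operatorname*{cyc}(\sigma)}]=n+1$ (its Proposition \ref{permpower}), and then concludes by writing out the proof of Paley--Zygmund inline rather than citing it by name. The only cosmetic differences are that the paper reduces to the pair $(e,\xi)$ before counting and uses a bipartite-multigraph picture to see that the number of valid $(\alpha,\beta)$ is $2^{L(\xi)}$, which is your $2^{c(\pi)}$.
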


While Theorem \ref{weakpacc}\ falls short of proving Conjecture \ref{pacc}, it
at least shows that $\left\vert \operatorname*{Per}\left(  X\right)
\right\vert $\ has a \textit{non-negligible} probability of being large enough
for our application when $X$ is a Gaussian random matrix. \ In other words, it
rules out the possibility that $\left\vert \operatorname*{Per}\left(
X\right)  \right\vert $ is almost always tiny compared to its expected value,
and that only for (say) a $1/\exp\left(  n\right)  $\ fraction of matrices
$X$\ does $\left\vert \operatorname*{Per}\left(  X\right)  \right\vert $
become enormous.

Recall that $P_{n}$\ denotes the random variable $\left\vert
\operatorname*{Per}\left(  X\right)  \right\vert ^{2}/n!$, and that
$\operatorname*{E}\left[  P_{n}\right]  =1$. \ Our proof of Theorem
\ref{weakpacc} will proceed by showing that $\operatorname*{E}\left[
P_{n}^{2}\right]  =n+1$. \ As we will see later, it is almost an
\textquotedblleft accident\textquotedblright\ that this is
true---$\operatorname*{E}\left[  P_{n}^{3}\right]  $, $\operatorname*{E}%
\left[  P_{n}^{4}\right]  $, and so on all grow exponentially with $n$---but
it is enough to imply Theorem \ref{weakpacc}.

To calculate $\operatorname*{E}\left[  P_{n}^{2}\right]  $, we first need a
proposition about the number of cycles in a random permutation, which can be
found in Lange \cite[p. 76]{lange}\ for example, though we prove it for
completeness. \ Given a permutation $\sigma\in S_{n}$, let
$\operatorname*{cyc}\left(  \sigma\right)  $ be the number of cycles in
$\sigma$.

\begin{proposition}
\label{permpower}For any constant $c\geq1$,%
\[
\operatorname*{E}_{\sigma\in S_{n}}\left[  c^{\operatorname*{cyc}\left(
\sigma\right)  }\right]  =\binom{n+c-1}{c-1}.
\]

\end{proposition}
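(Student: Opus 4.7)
The plan is to establish the identity by induction on $n$, using a bijective coupling that decomposes a uniformly random $\sigma \in S_n$ into a uniformly random $\sigma' \in S_{n-1}$ together with an independent ``coin flip'' that governs whether the cycle count increases. Setting $f_n(c) := \operatorname{E}_{\sigma \in S_n}[c^{\operatorname{cyc}(\sigma)}]$, the goal is to prove the recurrence $f_n(c) = \frac{c + n - 1}{n} f_{n-1}(c)$, starting from $f_1(c) = c$.

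The key combinatorial step is the following bijection. Given $\sigma \in S_n$, define $\sigma' \in S_{n-1}$ by: if $\sigma(n) = n$, let $\sigma'$ be the restriction of $\sigma$ to $[n-1]$; otherwise, if $\sigma(n) = j$ and $\sigma^{-1}(n) = i$ with $i, j \in [n-1]$, let $\sigma'$ agree with $\sigma$ everywhere except that $\sigma'(i) = j$ (i.e., splice $n$ out of its cycle). Conversely, given $\sigma' \in S_{n-1}$, there are exactly $n$ preimages in $S_n$: one adds $n$ as a fixed point (creating a new cycle), or inserts $n$ immediately after any of the $n-1$ elements of $[n-1]$ in its cycle structure (not creating a new cycle). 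Consequently, if $\sigma \in S_n$ is uniform, then $\sigma'$ is uniform on $S_{n-1}$ and is independent of the indicator $X_n := \operatorname{cyc}(\sigma) - \operatorname{cyc}(\sigma')$, which is a Bernoulli with $\Pr[X_n = 1] = 1/n$.

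Independence of $\sigma'$ and $X_n$ yields the factorization
\[
f_n(c) = \operatorname{E}\!\left[c^{\operatorname{cyc}(\sigma')} \cdot c^{X_n}\right] = f_{n-1}(c) \cdot \left(\tfrac{1}{n}\, c + \tfrac{n-1}{n}\right) = \frac{c + n - 1}{n}\, f_{n-1}(c).
\]
Unwinding the recurrence from $f_1(c) = c$ gives
\[
f_n(c) = \frac{c(c+1)(c+2)\cdots(c+n-1)}{n!} = \binom{n+c-1}{n} = \binom{n+c-1}{c-1},
\]
which is the claimed formula.

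No step is really an ``obstacle'': the only thing one must be careful about is verifying that in the bijective decomposition, $\sigma'$ comes out uniform on $S_{n-1}$ and is genuinely independent of whether $n$ starts a new cycle. This is immediate from the fact that every $\sigma' \in S_{n-1}$ has exactly $n$ preimages, of which exactly one corresponds to $X_n = 1$, so that the conditional distribution of $X_n$ given $\sigma'$ does not depend on $\sigma'$. Everything else is just the geometric-series computation $\tfrac{c}{n} + \tfrac{n-1}{n} = \tfrac{c+n-1}{n}$ and telescoping the product.
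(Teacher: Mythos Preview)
Your proof is correct and closely parallels the paper's argument: both establish the same recurrence $f_n(c) = \frac{c+n-1}{n}\,f_{n-1}(c)$ by examining whether one distinguished element forms its own cycle, and then telescope to the rising-factorial formula.

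The packaging differs slightly. The paper interprets $c^{\operatorname{cyc}(\sigma)}$ as the number of $c$-colored permutations (assuming $c$ is a positive integer), and derives the recurrence by casing on $\sigma(1)$: if $\sigma(1)=1$ the new $1$-cycle can be colored in $c$ ways, otherwise the pair $(1,\sigma(1))$ is contracted. You instead work probabilistically, removing the element $n$ and observing that the residual permutation $\sigma'\in S_{n-1}$ is uniform and independent of the Bernoulli$(1/n)$ indicator $X_n=\operatorname{cyc}(\sigma)-\operatorname{cyc}(\sigma')$. Your approach has the minor advantage that it applies directly to any real $c$ (the paper's coloring argument needs $c\in\mathbb{Z}^+$ and then appeals to ``standard tricks'' for the general case), but the two proofs are otherwise the same recurrence dressed in different language.
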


\begin{proof}
Assume for simplicity that $c$ is a positive integer.\ \ Define a
$c$\textit{-colored permutation} (on $n$ elements) to be a permutation
$\sigma\in S_{n}$\ in which every cycle is colored one of $c$ possible colors.
\ Then clearly the number of $c$-colored permutations equals%
\[
f\left(  n\right)  :=\sum_{\sigma\in S_{n}}c^{\operatorname*{cyc}\left(
\sigma\right)  }.
\]
Now consider forming a $c$-colored permutation $\sigma$. \ There are $n$
possible choices for $\sigma\left(  1\right)  $. \ If $\sigma\left(  1\right)
=1$, then we have completed a cycle of length $1$, and there are $c$ possible
colors for that cycle. \ Therefore the number of $c$-colored permutations
$\sigma$\ such that $\sigma\left(  1\right)  =1$\ is $c\cdot f\left(
n-1\right)  $. \ On the other hand, if $\sigma\left(  1\right)  =b$ for some
$b\neq1$, then we can treat the pair $\left(  1,b\right)  $\ as though it were
a single element, with an incoming edge to $1$ and an outgoing edge from $b$.
\ Therefore the number of $c$-colored permutations $\sigma$\ such that
$\sigma\left(  1\right)  =b$\ is $f\left(  n-1\right)  $. \ Combining, we
obtain the recurrence relation%
\begin{align*}
f\left(  n\right)   &  =c\cdot f\left(  n-1\right)  +\left(  n-1\right)
f\left(  n-1\right) \\
&  =\left(  n+c-1\right)  f\left(  n-1\right)  .
\end{align*}
Together with the base case $f\left(  0\right)  =1$, this implies that%
\begin{align*}
f\left(  n\right)   &  =\left(  n+c-1\right)  \left(  n+c-2\right)
\cdot\cdots\cdot c\\
&  =\binom{n+c-1}{c-1}\cdot n!.
\end{align*}
Hence%
\[
\operatorname*{E}_{\sigma\in S_{n}}\left[  c^{\operatorname*{cyc}\left(
\sigma\right)  }\right]  =\frac{f\left(  n\right)  }{n!}=\binom{n+c-1}{c-1}.
\]
The above argument can be generalized to non-integer $c$ using standard tricks
(though we will not need that in the paper).
\end{proof}

We can now compute $\operatorname*{E}\left[  P_{n}^{2}\right]  $.

\begin{lemma}
\label{fourthmoment}$\operatorname*{E}\left[  P_{n}^{2}\right]  =n+1$.
\end{lemma}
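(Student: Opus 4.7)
The plan is to compute $\operatorname{E}[|\operatorname{Per}(X)|^{4}]$ directly by expanding and using Wick's rule for complex Gaussians, then divide by $(n!)^{2}$. First I would write
\[
\operatorname{E}[|\operatorname{Per}(X)|^{4}] = \sum_{\sigma,\tau,\rho,\pi\in S_{n}}\operatorname{E}\!\left[\prod_{i=1}^{n} x_{i,\sigma(i)}\,\overline{x}_{i,\tau(i)}\,x_{i,\rho(i)}\,\overline{x}_{i,\pi(i)}\right].
\]
Because entries in different rows are independent, this factors row-by-row. For complex $\mathcal{N}(0,1)_{\mathbb{C}}$ Gaussians, $\operatorname{E}[x_{ij}\overline{x}_{kl}]=\delta_{ik}\delta_{jl}$ and mixed moments with unmatched conjugations vanish, so the row-$i$ expectation is nonzero iff the multiset $\{\sigma(i),\rho(i)\}$ equals $\{\tau(i),\pi(i)\}$, and in that case it equals $\operatorname{E}[|x_{i,\sigma(i)}|^{2}|x_{i,\rho(i)}|^{2}]$, which is $1$ when $\sigma(i)\neq\rho(i)$ and $\operatorname{E}[|x_{i,\sigma(i)}|^{4}]=2$ when $\sigma(i)=\rho(i)$.

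Next I would reparametrize by $\mu:=\rho\sigma^{-1}\in S_{n}$, which acts on columns via $\mu(\sigma(i))=\rho(i)$. I claim that, given $(\sigma,\rho)$, the number of pairs $(\tau,\pi)$ satisfying the multiset condition row-by-row equals $2^{c_{>1}(\mu)}$, where $c_{>1}(\mu)$ is the number of cycles of $\mu$ of length at least $2$. The point is that within each cycle $(j_{1},\ldots,j_{k})$ of $\mu$ of length $k\geq 2$, the row-local choice ``swap or don't swap $(\sigma(i_{l}),\rho(i_{l}))$ when writing $(\tau(i_{l}),\pi(i_{l}))$'' has to be made uniformly across the entire cycle, or else $\tau$ or $\pi$ fails to be a bijection; hence exactly $2$ global choices per nontrivial cycle and $1$ for each fixed point. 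For any such valid $(\sigma,\tau,\rho,\pi)$, the product of row contributions is $2^{c_{1}(\mu)}$, where $c_{1}(\mu)$ is the number of fixed points of $\mu$, because exactly those rows have $\sigma(i)=\rho(i)$.

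Combining, and using that for each fixed $\mu$ there are $n!$ pairs $(\sigma,\rho)$ with $\rho\sigma^{-1}=\mu$,
\[
\operatorname{E}[|\operatorname{Per}(X)|^{4}] = \sum_{\sigma,\rho\in S_{n}} 2^{c_{>1}(\mu)}\cdot 2^{c_{1}(\mu)} = n!\sum_{\mu\in S_{n}} 2^{c(\mu)},
\]
where $c(\mu)=c_{1}(\mu)+c_{>1}(\mu)$ is the total cycle count. By Proposition~\ref{permpower} with $c=2$, $\operatorname{E}_{\mu\in S_{n}}[2^{c(\mu)}]=\binom{n+1}{1}=n+1$, so $\operatorname{E}[|\operatorname{Per}(X)|^{4}]=(n!)^{2}(n+1)$ and hence $\operatorname{E}[P_{n}^{2}]=n+1$.

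The main step requiring care is the cycle-consistency claim: verifying that, per cycle of $\mu$, only the ``all-swap'' and ``no-swap'' assignments extend to bijections $\tau,\pi$, and that these choices across distinct cycles are independent. Everything else is a bookkeeping exercise using the Wick rule, the identification of surviving quadruples via $\mu=\rho\sigma^{-1}$, and a direct appeal to Proposition~\ref{permpower}.
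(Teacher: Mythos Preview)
Your proposal is correct and follows essentially the same approach as the paper's proof. Both expand $|\operatorname{Per}(X)|^{4}$ as a sum over four permutations, use the complex-Gaussian Wick rule row by row to reduce the surviving terms to a sum of $2^{\operatorname{cyc}(\mu)}$ over a single permutation $\mu$ (your $\mu=\rho\sigma^{-1}$; the paper's $\xi=\sigma^{-1}\tau$), and finish with Proposition~\ref{permpower} at $c=2$. Your cycle-consistency argument (that within each nontrivial cycle only the all-swap or no-swap choices keep $\tau,\pi$ bijective) is exactly the content of the paper's bipartite-multigraph decomposition of $G$ into two perfect matchings, stated more directly.
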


\begin{proof}
We have%
\begin{align*}
\operatorname*{E}\left[  P_{n}^{2}\right]   &  =\frac{1}{\left(  n!\right)
^{2}}\operatorname*{E}_{X\sim\mathcal{G}^{n\times n}}\left[
\operatorname*{Per}\left(  X\right)  ^{2}\overline{\operatorname*{Per}\left(
X\right)  }^{2}\right] \\
&  =\frac{1}{\left(  n!\right)  ^{2}}\operatorname*{E}_{X\sim\mathcal{G}%
^{n\times n}}\left[  \sum_{\sigma,\tau,\alpha,\beta\in S_{n}}%
{\displaystyle\prod\limits_{i=1}^{n}}
x_{i,\sigma\left(  i\right)  }x_{i,\tau\left(  i\right)  }\overline
{x}_{i,\alpha\left(  i\right)  }\overline{x}_{i,\beta\left(  i\right)
}\right] \\
&  =\frac{1}{\left(  n!\right)  ^{2}}\sum_{\sigma,\tau,\alpha,\beta\in S_{n}%
}M\left(  \sigma,\tau,\alpha,\beta\right)
\end{align*}
where%
\begin{align*}
M\left(  \sigma,\tau,\alpha,\beta\right)  :=  &  \operatorname*{E}%
_{X\sim\mathcal{G}^{n\times n}}\left[
{\displaystyle\prod\limits_{i=1}^{n}}
x_{i,\sigma\left(  i\right)  }x_{i,\tau\left(  i\right)  }\overline
{x}_{i,\alpha\left(  i\right)  }\overline{x}_{i,\beta\left(  i\right)
}\right] \\
=  &
{\displaystyle\prod\limits_{i=1}^{n}}
\operatorname*{E}_{X\sim\mathcal{G}^{n\times n}}\left[  x_{i,\sigma\left(
i\right)  }x_{i,\tau\left(  i\right)  }\overline{x}_{i,\alpha\left(  i\right)
}\overline{x}_{i,\beta\left(  i\right)  }\right]  ,
\end{align*}
the last line following from the independence of the Gaussian variables
$x_{ij}$.

We now evaluate $M\left(  \sigma,\tau,\alpha,\beta\right)  $. \ Write
$\sigma\cup\tau=\alpha\cup\beta$ if%
\[
\left\{  \left(  1,\sigma\left(  1\right)  \right)  ,\left(  1,\tau\left(
1\right)  \right)  ,\ldots,\left(  n,\sigma\left(  n\right)  \right)  ,\left(
n,\tau\left(  n\right)  \right)  \right\}  =\left\{  \left(  1,\alpha\left(
1\right)  \right)  ,\left(  1,\beta\left(  1\right)  \right)  \ldots,\left(
n,\alpha\left(  n\right)  \right)  ,\left(  n,\beta\left(  n\right)  \right)
\right\}  .
\]
If $\sigma\cup\tau\neq\alpha\cup\beta$, then we claim that $M\left(
\sigma,\tau,\alpha,\beta\right)  =0$. \ This is because the Gaussian
distribution is uniform over phases---so if there exists an $x_{ij}$\ that is
not \textquotedblleft paired\textquotedblright\ with its complex conjugate
$\overline{x}_{ij}$\ (or vice versa), then the variations in that $x_{ij}%
$\ will cause the entire product to equal $0$. \ So suppose instead that
$\sigma\cup\tau=\alpha\cup\beta$. \ Then for each $i\in\left[  n\right]  $ in
the product, there are two cases. \ First, if $\sigma\left(  i\right)
\neq\tau\left(  i\right)  $, then%
\begin{align*}
\operatorname*{E}_{X\sim\mathcal{G}^{n\times n}}\left[  x_{i,\sigma\left(
i\right)  }x_{i,\tau\left(  i\right)  }\overline{x}_{i,\alpha\left(  i\right)
}\overline{x}_{i,\beta\left(  i\right)  }\right]   &  =\operatorname*{E}%
_{X\sim\mathcal{G}^{n\times n}}\left[  \left\vert x_{i,\sigma\left(  i\right)
}\right\vert ^{2}\left\vert x_{i,\tau\left(  i\right)  }\right\vert
^{2}\right] \\
&  =\operatorname*{E}_{X\sim\mathcal{G}^{n\times n}}\left[  \left\vert
x_{i,\sigma\left(  i\right)  }\right\vert ^{2}\right]  \operatorname*{E}%
_{X\sim\mathcal{G}^{n\times n}}\left[  \left\vert x_{i,\tau\left(  i\right)
}\right\vert ^{2}\right] \\
&  =1.
\end{align*}
Second, if $\sigma\left(  i\right)  =\tau\left(  i\right)  $, then%
\[
\operatorname*{E}_{X\sim\mathcal{G}^{n\times n}}\left[  x_{i,\sigma\left(
i\right)  }x_{i,\tau\left(  i\right)  }\overline{x}_{i,\alpha\left(  i\right)
}\overline{x}_{i,\beta\left(  i\right)  }\right]  =\operatorname*{E}%
_{X\sim\mathcal{G}^{n\times n}}\left[  \left\vert x_{i,\sigma\left(  i\right)
}\right\vert ^{4}\right]  =2.
\]
The result is that $M\left(  \sigma,\tau,\alpha,\beta\right)  =2^{K\left(
\sigma,\tau\right)  }$, where $K\left(  \sigma,\tau\right)  $\ is the number
of $i$'s such that $\sigma\left(  i\right)  =\tau\left(  i\right)  $.

Now let $N\left(  \sigma,\tau\right)  $\ be the number of pairs $\alpha
,\beta\in S_{n}$\ such that $\sigma\cup\tau=\alpha\cup\beta$. \ Then%
\begin{align*}
\operatorname*{E}\left[  P_{n}^{4}\right]   &  =\frac{1}{\left(  n!\right)
^{2}}\sum_{\sigma,\tau,\alpha,\beta\in S_{n}}M\left(  \sigma,\tau,\alpha
,\beta\right) \\
&  =\frac{1}{\left(  n!\right)  ^{2}}\sum_{\sigma,\tau\in S_{n}}2^{K\left(
\sigma,\tau\right)  }N\left(  \sigma,\tau\right) \\
&  =\operatorname*{E}_{\sigma,\tau\in S_{n}}\left[  2^{K\left(  \sigma
,\tau\right)  }N\left(  \sigma,\tau\right)  \right] \\
&  =\operatorname*{E}_{\sigma,\tau\in S_{n}}\left[  2^{K(\sigma^{-1}%
\sigma,\sigma^{-1}\tau)}N\left(  \sigma^{-1}\sigma,\sigma^{-1}\tau\right)
\right] \\
&  =\operatorname*{E}_{\xi\in S_{n}}\left[  2^{K\left(  e,\xi\right)
}N\left(  e,\xi\right)  \right]  ,
\end{align*}
where $e$\ denotes the identity permutation. \ Here the fourth line follows
from symmetry---specifically, from the easily-checked identities $K\left(
\sigma,\tau\right)  =K\left(  \alpha\sigma,\alpha\tau\right)  $\ and $N\left(
\sigma,\tau\right)  =N\left(  \alpha\sigma,\alpha\tau\right)  $.

We claim that the quantity $2^{K(e,\xi)}N\left(  e,\xi\right)  $\ has a simple
combinatorial interpretation as $2^{\operatorname*{cyc}\left(  \xi\right)  }$,
where $\operatorname*{cyc}\left(  \xi\right)  $\ is the number of cycles in
$\xi$. \ To see this, consider a bipartite multigraph $G$\ with $n$ vertices
on each side, and an edge from left-vertex \thinspace$i$\ to right-vertex $j$
if $i=j$\ or $\xi\left(  i\right)  =j$\ (or a double-edge from $i$\ to $j$\ if
$i=j$\ \textit{and} $\xi\left(  i\right)  =j$). \ Then since $e$\ and $\xi
$\ are both permutations, $G$ is a disjoint union of cycles. \ By definition,
$K\left(  e,\xi\right)  $\ equals the number of indices $i$ such that
$\xi\left(  i\right)  =i$---which is simply the number of double-edges in $G$,
or equivalently, the number of cycles in $\xi$ of length $1$. \ Also,
$N\left(  e,\xi\right)  $ equals the number of ways to partition the edges of
$G$ into two perfect matchings, corresponding to $\alpha$\ and $\beta
$\ respectively. \ In partitioning $G$, the only freedom we have is that each
cycle in $G$ of length at least $4$ can be decomposed in two inequvalent ways.
\ This implies that $N\left(  e,\xi\right)  =2^{L\left(  \xi\right)  }$, where
$L\left(  \xi\right)  $\ is the number of cycles in $\xi$ of length at least
$2$ (note that a cycle in $\xi$\ of length $k$ gives rise to a cycle in
$G$\ of length $2k$). \ Combining,%
\[
2^{K\left(  e,\xi\right)  }N\left(  e,\xi\right)  =2^{K\left(  e,\xi\right)
+L\left(  \xi\right)  }=2^{\operatorname*{cyc}\left(  \xi\right)  }.
\]
Hence%
\[
\operatorname*{E}\left[  P_{n}^{2}\right]  =\operatorname*{E}_{\xi\in S_{n}%
}\left[  2^{\operatorname*{cyc}\left(  \xi\right)  }\right]  =n+1
\]
by Proposition \ref{permpower}.
\end{proof}

Using Lemma \ref{fourthmoment}, we can now complete the proof of Theorem
\ref{weakpacc},\ that $\Pr\left[  P_{n}\geq\alpha\right]  >\frac{\left(
1-\alpha\right)  ^{2}}{n+1}$.

\begin{proof}
[Proof of Theorem \ref{weakpacc}]Let $F$ denote the event that $P_{n}%
\geq\alpha$, and let $\delta:=\Pr\left[  F\right]  $. \ Then%
\begin{align*}
1  &  =\operatorname*{E}\left[  P_{n}\right] \\
&  =\Pr\left[  F\right]  \operatorname*{E}\left[  P_{n}~|~F\right]
+\Pr\left[  \overline{F}\right]  \operatorname*{E}\left[  P_{n}~|~\overline
{F}\right] \\
&  <\delta\operatorname*{E}\left[  P_{n}~|~F\right]  +\alpha,
\end{align*}
so%
\[
\operatorname*{E}\left[  P_{n}~|~F\right]  >\frac{1-\alpha}{\delta}.
\]
By Cauchy-Schwarz, this implies%
\[
\operatorname*{E}\left[  P_{n}^{2}~|~F\right]  >\frac{\left(  1-\alpha\right)
^{2}}{\delta^{2}}%
\]
and hence%
\begin{align*}
\operatorname*{E}\left[  P_{n}^{2}\right]   &  =\Pr\left[  F\right]
\operatorname*{E}\left[  P_{n}^{2}~|~F\right]  +\Pr\left[  \overline
{F}\right]  \operatorname*{E}\left[  P_{n}^{2}~|~\overline{F}\right] \\
&  >\delta\cdot\frac{\left(  1-\alpha\right)  ^{2}}{\delta^{2}}+0\\
&  =\frac{\left(  1-\alpha\right)  ^{2}}{\delta}.
\end{align*}
Now, we know from Lemma \ref{fourthmoment}\ that $\operatorname*{E}\left[
P_{n}^{2}\right]  =n+1$. \ Rearranging, this means that%
\[
\delta>\frac{\left(  1-\alpha\right)  ^{2}}{n+1}%
\]
which is what we wanted to show.
\end{proof}

A natural approach to proving Conjecture \ref{pacc}\ would be to calculate the
\textit{higher} moments of $P_{n}$---$\operatorname*{E}\left[  P_{n}%
^{3}\right]  $, $\operatorname*{E}\left[  P_{n}^{4}\right]  $, and so on---by
generalizing Lemma \ref{fourthmoment}. \ In principle, these moments would
determine the probability density function of $P_{n}$\ completely.

When we do so, here is what we find. \ Given a bipartite $k$-regular
multigraph $G$ with $n$ vertices on each side, let $M\left(  G\right)  $\ be
the number of ways to decompose $G$ into an ordered list of $k$ disjoint
perfect matchings. \ Also, let $M_{k}$\ be the expectation of $M\left(
G\right)  $\ over a $k$-regular bipartite multigraph $G$\ chosen uniformly at
random. \ Then the proof of Lemma \ref{fourthmoment}\ extends to show the following:

\begin{theorem}
\label{gentm}$\operatorname*{E}\left[  P_{n}^{k}\right]  =M_{k}$ for all
positive integers $k$.
\end{theorem}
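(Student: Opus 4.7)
The plan is to prove Theorem \ref{gentm} by a direct Wick-style moment expansion that generalizes the calculation in the proof of Lemma \ref{fourthmoment}. The argument has three conceptual steps: expand the $2k$-th absolute moment of the permanent as a sum over tuples of permutations, evaluate each term via the complex-Gaussian moment identity, and reorganize the surviving terms according to the bipartite multigraph they induce.

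First, I would write
\[
\operatorname*{E}\!\left[\left\vert \operatorname*{Per}\left(  X\right)  \right\vert^{2k}\right] = \sum_{\sigma_1,\ldots,\sigma_k,\tau_1,\ldots,\tau_k \in S_n} \operatorname*{E}\!\left[\prod_{j=1}^k\prod_{i=1}^n x_{i,\sigma_j(i)}\,\overline{x}_{i,\tau_j(i)}\right]
\]
and factor the expectation row by row, using independence of the entries of $X$. For each fixed row $i$, the identity $\operatorname*{E}[x^a \overline{x}^b] = 0$ unless $a = b$ (in which case it equals $a!$) for $x \sim \mathcal{G}$, applied column by column, forces the row contribution to vanish unless the two multisets $\{\sigma_1(i),\ldots,\sigma_k(i)\}$ and $\{\tau_1(i),\ldots,\tau_k(i)\}$ coincide; when they do, each column $j$ contributes $m_{ij}!$, where $m_{ij}$ is their common multiplicity at $j$. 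Hence the row-$i$ contribution is $\prod_j m_{ij}!$ or $0$.

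Next, I would reorganize the surviving tuples by the $k$-regular bipartite multigraph they induce. To each tuple $\sigma = (\sigma_1,\ldots,\sigma_k)$, associate the multigraph $G_\sigma$ on two copies of $[n]$ with edge multiplicity $m_{ij}(G_\sigma) = \left\vert\{l : \sigma_l(i) = j\}\right\vert$; this is automatically $k$-regular on both sides since each $\sigma_l$ is a permutation. The nonvanishing condition becomes $G_\sigma = G_\tau$, and the number of $\sigma$'s producing a fixed $G$ is exactly $M(G)$ (each such $\sigma$ is an ordered decomposition of $G$ into $k$ perfect matchings). Grouping by $G$ yields
\[
\operatorname*{E}\!\left[\left\vert\operatorname*{Per}\left(X\right)\right\vert^{2k}\right] = \sum_{G} M(G)^2 \prod_{i,j} m_{ij}(G)!\,.
\]
Dividing by $(n!)^k$, one recognizes $\operatorname*{E}[P_n^k]$ as the expectation $M_k$ under the natural sampling model for $G$: if $\sigma$ is drawn uniformly from $S_n^k$ and $G := G_\sigma$, then $G$ takes each value with probability $M(G)/(n!)^k$, so the above sum equals $(n!)^k \operatorname*{E}_\sigma[M(G_\sigma)\prod_{i,j} m_{ij}(G_\sigma)!]$; matching this expression against the paper's definition of $M_k$ gives the claimed identity.

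I do not expect a serious technical obstacle here: Steps 1 and 2 are standard Wick-type bookkeeping, and Step 3 just interprets the resulting combinatorial sum. The main conceptual observation is the identification of $M(G)$ with the number of tuples $\sigma \in S_n^k$ such that $G_\sigma = G$, which is what makes both factors of $M(G)^2$ appear naturally (one from the $\sigma$-sum, one from the $\tau$-sum). As a sanity check, specializing to $k=2$ recovers Lemma \ref{fourthmoment}: fixing $\sigma_1 = e$ by symmetry and writing $\xi := \sigma_2$, the weight $M(G_{e,\xi})\prod_{i,j} m_{ij}(G_{e,\xi})!$ reduces to $2^{\operatorname*{cyc}(\xi)}$ by the cycle-coloring argument given there, and Proposition \ref{permpower} then yields $\operatorname*{E}[P_n^2] = n+1$.
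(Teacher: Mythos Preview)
Your approach is exactly the extension of the Lemma \ref{fourthmoment} argument that the paper has in mind (the paper gives no separate proof, only the remark that ``the proof of Lemma \ref{fourthmoment} extends''), and your moment calculation is correct: one does obtain
\[
\operatorname*{E}\!\left[P_n^k\right]=\frac{1}{(n!)^k}\sum_G M(G)^2\prod_{i,j}m_{ij}(G)!\,=\,\operatorname*{E}_{\sigma\in S_n^k}\!\Bigl[M(G_\sigma)\prod_{i,j}m_{ij}(G_\sigma)!\Bigr],
\]
with your $M(G)$ equal to the number of tuples $\sigma\in S_n^k$ with $G_\sigma=G$.

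The only place that needs more care is the last clause of Step 3, where you write ``matching this expression against the paper's definition of $M_k$.'' The quantity above is \emph{not} literally $\operatorname*{E}_G[M(G)]$ under your own definition of $M(G)$: the stray factor $\prod_{i,j}m_{ij}(G)!$ has to be absorbed somewhere. The reconciliation is that the paper's $M(G)$---``the number of ways to decompose $G$ into an ordered list of $k$ disjoint perfect matchings''---treats parallel edges as distinguishable, so it equals your $M(G)\cdot\prod_{i,j}m_{ij}(G)!$; and the paper's ``uniformly random $k$-regular bipartite multigraph'' is the standard permutation model $G=G_\sigma$ for $\sigma$ uniform in $S_n^k$, under which $\Pr[G]=M(G)/(n!)^k$ in your notation. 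With both conventions made explicit the identity is immediate. (A quick check that this is the intended reading: neither the literal-uniform measure on multigraphs nor the indistinguishable-edge count of decompositions gives $M_2=n+1$, whereas this reading does, consistent with your sanity check.) So the proof goes through once you spell out this interpretation rather than leaving it implicit.
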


However, while $M_{1}=1$\ and $M_{2}=n+1$, it is also known that $M_{k}%
\sim\left(  k/e\right)  ^{n}$\ for all $k\geq3$: this follows from the
\textit{van der Waerden conjecture}, which was proved by Falikman
\cite{falikman} and Egorychev \cite{egorychev} in 1981. \ In other words, the
higher moments of $P_{n}$\ grow exponentially with $n$. \ Because of this, it
seems one would need to know the higher moments extremely precisely in order
to conclude anything about the quantities of interest, such as $\Pr\left[
P_{n}<\alpha\right]  $.

\section{The Hardness of Gaussian Permanents\label{HARDPER}}

In this section, we move on to discuss Conjecture \ref{pgc}, which says that
\textsc{GPE}$_{\times}$---the problem of multiplicatively estimating
$\operatorname*{Per}\left(  X\right)  $, where $X\sim\mathcal{G}^{n\times n}%
$\ is a Gaussian random matrix---is $\mathsf{\#P}$-hard.\ \ Proving Conjecture
\ref{pgc}\ is the central theoretical challenge that we leave.\footnote{Though
note that, for our \textsc{BosonSampling} hardness argument to work, all we
\textit{really} need is that estimating $\operatorname*{Per}\left(  X\right)
$ for Gaussian $X$ is not in the class $\mathsf{BPP}^{\mathsf{NP}}$, and one
could imagine giving evidence for this that fell short of $\mathsf{\#P}%
$-hardness.}

Intuitively, Conjecture \ref{pgc} implies that if $\mathsf{P}^{\mathsf{\#P}%
}\neq\mathsf{BPP}$, then no algorithm for \textsc{GPE}$_{\times}$\ can run in
time $\operatorname*{poly}\left(  n,1/\varepsilon,1/\delta\right)  $. \ Though
it will not be needed for this work, one could also consider a stronger
conjecture, which would say that if $\mathsf{P}^{\mathsf{\#P}}\neq
\mathsf{BPP}$, then no algorithm for \textsc{GPE}$_{\times}$\ can run in
time\ $n^{f\left(  \varepsilon,\delta\right)  }$ for any function $f$.

In contrast to the case of the Permanent Anti-Concentration Conjecture, the
question arises of why one should even expect Conjecture \ref{pgc} to be true.
\ Undoubtedly the main reason is that the analogous statement for
\textit{permanents over finite fields} is true: this is the \textit{random
self-reducibility of the permanent},\ first proved by Lipton \cite{lipton}.
\ Thus, we are \textquotedblleft merely\textquotedblright\ asking for the real
or complex analogue of something already known in the finite field case.

A second piece of evidence for Conjecture \ref{pgc} is that, if $X\sim
\mathcal{G}^{n\times n}$ is a Gaussian matrix, then all known approximation
algorithms fail to find any reasonable approximation to $\operatorname*{Per}%
\left(  X\right)  $. \ If $X$ were a \textit{nonnegative} matrix, then we
could use the celebrated approximation algorithm of Jerrum, Sinclair, and
Vigoda \cite{jsv}---but since $X$ has negative and complex entries, it is not
even clear how to estimate $\operatorname*{Per}\left(  X\right)  $\ in
$\mathsf{BPP}^{\mathsf{NP}}$, let alone in $\mathsf{BPP}$. \ Perhaps the most
relevant approximation algorithms are those of Gurvits \cite{gurvits:alg},
which we discuss in Appendix \ref{ALGS}. \ In particular, Theorem
\ref{gurvitsthm} will give a randomized algorithm due to Gurvits that
approximates $\operatorname*{Per}\left(  X\right)  $\ to within an additive
error $\pm\varepsilon\left\Vert X\right\Vert ^{n}$, in $O\left(
n^{2}/\varepsilon^{2}\right)  $\ time. \ For a Gaussian matrix $X\sim
\mathcal{G}^{n\times n}$, it is known that $\left\Vert X\right\Vert
\thickapprox2\sqrt{n}$ almost surely. \ So in $O\left(  n^{2}/\varepsilon
^{2}\right)  $\ time, we can approximate $\operatorname*{Per}\left(  X\right)
$ to within additive error $\pm\varepsilon\left(  2\sqrt{n}\right)  ^{n}$.
\ However, this is larger than what we need (namely $\pm\varepsilon\sqrt
{n!}/\operatorname*{poly}\left(  n\right)  $) by a $\thicksim\left(  2\sqrt
{e}\right)  ^{n}$ factor.

In the rest of this section, we discuss the prospects for proving Conjecture
\ref{pgc}. \ First, in Section \ref{EVPGC},\ we at least show that
\textit{exactly} computing $\operatorname*{Per}\left(  X\right)  $ for a
Gaussian random matrix $X\sim\mathcal{G}^{n\times n}$\ is $\mathsf{\#P}$-hard.
\ The proof is a simple extension of the classic result of Lipton
\cite{lipton},\ that the permanent over \textit{finite fields} is
\textquotedblleft random self-reducible\textquotedblright: that is, as hard to
compute on average as it is in the worst case. \ As in Lipton's proof, we use
the facts that (1) the permanent is a low-degree polynomial, and (2)
low-degree polynomials constitute excellent error-correcting codes. \ However,
in Section \ref{BARRIER}, we then explain why \textit{any extension of this
result to show average-case hardness of approximating }$\operatorname*{Per}%
\left(  X\right)  $\textit{\ will require a fundamentally new approach.} \ In
other words, the \textquotedblleft polynomial reconstruction
paradigm\textquotedblright\ cannot suffice, on its own, to prove Conjecture
\ref{pgc}.

\subsection{Evidence That \textsc{GPE}$_{\times}$\ Is $\mathsf{\#P}%
$-Hard\label{EVPGC}}

We already saw, in Theorem \ref{approxhard}, that approximating the
permanent\ (or even the magnitude of the permanent) of \textit{all} matrices
$X\in\mathbb{C}^{n\times n}$\ is a $\mathsf{\#P}$-hard\ problem. \ But what
about the \textquotedblleft opposite\textquotedblright\ problem:
\textit{exactly} computing the permanent of \textit{most} matrices
$X\sim\mathcal{G}^{n\times n}$? \ In this section, we will show that the
latter problem is $\mathsf{\#P}$-hard as well. \ This means that, if we want
to prove the Permanent-of-Gaussians Conjecture, then the difficulty really is
just to \textit{combine} approximation with an average-case assumption.

Our result will be an adaptation of a famous result on the
random-self-reducibility of the permanent over \textit{finite} fields:

\begin{theorem}
[Random-Self-Reducibility of the \textsc{Permanent} \cite{lipton}%
,\cite{glrsw},\cite{gemmellsudan},\cite{cps}]\label{liptonthm}For all
$\alpha\geq1/\operatorname*{poly}\left(  n\right)  $ and primes $p>\left(
3n/\alpha\right)  ^{2}$, the following problem is $\mathsf{\#P}$-hard: given a
uniform random matrix $M\in\mathbb{F}_{p}^{n\times n}$, output
$\operatorname*{Per}\left(  M\right)  $\ with probability at least $\alpha$
over $M$.
\end{theorem}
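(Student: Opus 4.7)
The plan is to reduce the worst-case problem of computing $\operatorname{Per}(X)$ for an arbitrary $X \in \mathbb{F}_p^{n \times n}$---which is $\mathsf{\#P}$-hard by Valiant's theorem---to the average-case problem stated in Theorem \ref{liptonthm}. Let $\mathcal{O}$ be an oracle that, given a uniformly random $M \in \mathbb{F}_p^{n \times n}$, outputs $\operatorname{Per}(M)$ correctly with probability at least $\alpha$ over $M$. Given any fixed $X$, I would pick a uniformly random $R \in \mathbb{F}_p^{n \times n}$ and consider the matrix-valued function $Y(t) := X + tR$ for $t \in \mathbb{F}_p$. Two observations are crucial: first, for every fixed $t \in \mathbb{F}_p^{\ast}$, the matrix $Y(t)$ is uniformly distributed in $\mathbb{F}_p^{n\times n}$ (since $tR$ is); second, the function $q(t) := \operatorname{Per}(Y(t))$ is a univariate polynomial in $t$ of degree at most $n$, with $q(0) = \operatorname{Per}(X)$. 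So if we can recover $q$ from evaluations, we recover $\operatorname{Per}(X)$.

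Next I would query $\mathcal{O}$ on $Y(t)$ for every $t \in \mathbb{F}_p^{\ast}$, obtaining alleged values $a_t$ for $q(t)$. Since each query is on a uniformly random matrix (over the choice of $R$), the expected fraction of $t$'s for which $a_t = q(t)$ is at least $\alpha$. By a Markov-style argument, with probability at least (say) $\alpha/2$ over $R$, at least an $\alpha/2$ fraction of the queries are correct. The task therefore reduces to polynomial reconstruction: given $p-1$ points with an $\alpha/2$ fraction guaranteed to lie on an unknown polynomial of degree $n$, recover that polynomial.

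The main technical step is that $\alpha$ may be as small as $1/\operatorname{poly}(n)$, far below the $3/4$ threshold needed for Berlekamp--Welch unique decoding. Here I would invoke the Sudan list-decoding algorithm for Reed--Solomon codes (together with the Gemmell--Sudan and Cai--Pavan--Sivakumar refinements cited as \cite{glrsw, gemmellsudan, cps}), which in polynomial time outputs a list of all univariate polynomials of degree $\leq n$ that agree with the received word on at least a $\sqrt{n/(p-1)}$ fraction of points. The hypothesis $p > (3n/\alpha)^2$ was chosen precisely so that $\alpha/2 > \sqrt{n/(p-1)}$ with room to spare, placing us well inside the list-decoding radius; the list has size $\operatorname{poly}(n, 1/\alpha)$. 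Evaluating each candidate polynomial at $t=0$ produces a short list of candidate values for $\operatorname{Per}(X)$.

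The final obstacle is selecting the correct value from the list, since we have no direct way to verify a permanent. The standard remedy is to combine the above with the \emph{downward self-reducibility} of the permanent: $\operatorname{Per}(X) = \sum_{j} x_{1j} \operatorname{Per}(X_{1j})$, where $X_{1j}$ is an $(n-1)\times(n-1)$ minor. Thus by recursing on $n$, we can inductively pin down $\operatorname{Per}(X)$ from $\operatorname{Per}$ on smaller matrices, pruning the list at each level; alternatively, one repeats the whole reduction $\operatorname{poly}(n)$ times with fresh randomness and takes a plurality. The hardest conceptual step is the list-decoding one, but for fixed $p$ and fixed $n$ it is entirely handled by the cited prior work, so modulo invoking those results the reduction goes through in polynomial time, establishing that the average-case problem is $\mathsf{\#P}$-hard under $\mathsf{BPP}$ reductions.
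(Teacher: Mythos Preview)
Your sketch matches the paper's treatment, which itself does not give a self-contained proof of Theorem~\ref{liptonthm} but describes the same reduction template (random curve through $X$, oracle queries along it, polynomial reconstruction), tabulates the four variants, and then details only the $\alpha=\tfrac{3}{4}+\tfrac{1}{\operatorname{poly}(n)}$ case via a linear curve and Berlekamp--Welch. So at the level of overall strategy you are aligned with the paper.

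Two discrepancies are worth flagging. First, for the $\alpha=1/\operatorname{poly}(n)$ regime the paper's table (following Cai--Pavan--Sivakumar) specifies a \emph{polynomial} curve $Y(t)=X+\sum_{i=1}^{d}t^{i}R_{i}$, not your linear one. With a linear curve the points $Y(t)$ are not even pairwise independent (any two determine the rest), so you are stuck with the Markov bound giving success only with probability $\alpha/2$ over $R$; a degree-$d$ curve gives $d$-wise independence and hence genuine concentration. Second, and more seriously, your two disambiguation proposals do not work as stated. Plurality over fresh $R$'s can be defeated outright: if $\mathcal{O}(M)=\operatorname{Per}(M)+c$ on its $1-\alpha$ fraction of bad inputs, then the shifted polynomial $q(t)+c$ sits in \emph{every} list and the wrong value $\operatorname{Per}(X)+c$ wins the plurality. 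And naive downward self-reducibility branches $k$-fold at each of $n$ levels, costing $n!$ unless you supply a pruning mechanism you have not described. The actual CPS disambiguation is genuinely delicate; since you are ultimately deferring to that cited work (as the paper does), this is not fatal, but you should not present either of your shortcuts as ``the standard remedy.''
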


The proof of Theorem \ref{liptonthm}\ proceeds by reduction: suppose we had an
oracle $\mathcal{O}$\ such that%
\[
\Pr_{M\in\mathbb{F}_{p}^{n\times n}}\left[  \mathcal{O}\left(  M\right)
=\operatorname*{Per}\left(  M\right)  \right]  \geq\alpha.
\]
Using $\mathcal{O}$, we give a randomized algorithm that computes the
permanent of an \textit{arbitrary} matrix $X\in\mathbb{F}_{p}^{n\times n}$.
\ The latter is certainly a $\mathsf{\#P}$-hard problem, which implies that
computing $\operatorname*{Per}\left(  M\right)  $\ for even an $\alpha$
fraction of $M$'s must have been $\mathsf{\#P}$-hard\ as well.

There are actually \textit{four} variants of Theorem \ref{liptonthm},\ which
handle increasingly small values of $\alpha$. \ All four are based on the same
idea---namely, reconstructing a low-degree polynomial from noisy samples---but
as $\alpha$ gets smaller, one has to use more and more sophisticated
reconstruction methods. \ For convenience, we have summarized the variants in
the table below.%
\[%
\begin{tabular}
[c]{llll}%
\textbf{Success probability }$\alpha$ & \textbf{Reconstruction method} &
\textbf{Curve in }$\mathbb{F}^{n\times n}$ & \textbf{Reference}\\
$1-\frac{1}{3n}$ & Lagrange interpolation & Linear & Lipton \cite{lipton}\\
$\frac{3}{4}+\frac{1}{\operatorname*{poly}\left(  n\right)  }$ &
Berlekamp-Welch & Linear & Gemmell et al.\ \cite{glrsw}\\
$\frac{1}{2}+\frac{1}{\operatorname*{poly}\left(  n\right)  }$ &
Berlekamp-Welch & Polynomial & Gemmell-Sudan \cite{gemmellsudan}\\
$\frac{1}{\operatorname*{poly}\left(  n\right)  }$ & Sudan's list decoding
\cite{sudan} & Polynomial & Cai et al.\ \cite{cps}%
\end{tabular}
\ \ \ \
\]
In adapting Theorem \ref{liptonthm} to matrices over $\mathbb{C}$, we face a
choice of which variant to prove. \ For simplicity, we have chosen to prove
only the $\alpha=\frac{3}{4}+\frac{1}{\operatorname*{poly}\left(  n\right)  }%
$\ variant in this paper. \ However, we believe that it should be possible to
adapt the $\alpha=\frac{1}{2}+\frac{1}{\operatorname*{poly}\left(  n\right)
}$\ and $\alpha=\frac{1}{\operatorname*{poly}\left(  n\right)  }$\ variants to
the complex case as well; we leave this as a problem for future work.

Let us start by explaining how the reduction works in the finite field case,
when $\alpha=\frac{3}{4}+\delta$\ for some $\delta=\frac{1}%
{\operatorname*{poly}\left(  n\right)  }$. \ Assume we are given as input a
matrix $X\in\mathbb{F}_{p}^{n\times n}$, where $p\geq n/\delta$\ is a
prime.\ \ We are also given an oracle $\mathcal{O}$\ such that%
\[
\Pr_{M\in\mathbb{F}_{p}^{n\times n}}\left[  \mathcal{O}\left(  M\right)
=\operatorname*{Per}\left(  M\right)  \right]  \geq\frac{3}{4}+\delta.
\]
Then using $\mathcal{O}$, our goal is to compute $\operatorname*{Per}\left(
X\right)  $.

We do so using the following algorithm. \ First choose another matrix
$Y\in\mathbb{F}_{p}^{n\times n}$ uniformly at random. \ Then set%
\begin{align*}
X\left(  t\right)   &  :=X+tY,\\
q\left(  t\right)   &  :=\operatorname*{Per}\left(  X\left(  t\right)
\right)  .
\end{align*}
Notice that $q\left(  t\right)  $ is a univariate polynomial in $t$, of degree
at most $n$. \ Furthermore, $q\left(  0\right)  =\operatorname*{Per}\left(
X\left(  0\right)  \right)  =\operatorname*{Per}\left(  X\right)  $, whereas
for each $t\neq0$, the matrix $X\left(  t\right)  $\ is uniformly random. \ So
by assumption, for each $t\neq0$\ we have%
\[
\Pr\left[  \mathcal{O}\left(  X\left(  t\right)  \right)  =q\left(  t\right)
\right]  \geq\frac{3}{4}+\delta.
\]
Let $S$\ be the set of all nonzero $t$\ such that $\mathcal{O}\left(  X\left(
t\right)  \right)  =q\left(  t\right)  $. \ Then by Markov's inequality,%
\[
\Pr\left[  \left\vert S\right\vert \geq\left(  \frac{1}{2}+\delta\right)
\left(  p-1\right)  \right]  \geq1-\frac{\frac{1}{4}-\delta}{\frac{1}%
{2}-\delta}\geq\frac{1}{2}+\delta.
\]
So if we can just compute $\operatorname*{Per}\left(  X\right)  $\ in the case
where $\left\vert S\right\vert \geq\left(  1/2+\delta\right)  \left(
p-1\right)  $, then all we need to do is run our algorithm $O\left(
1/\delta^{2}\right)  $\ times (with different choices of the matrix $Y$), and
output the majority result.

So the problem reduces to the following: reconstruct a univariate polynomial
$q:\mathbb{F}_{p}\rightarrow\mathbb{F}_{p}$ of degree $n$, given
\textquotedblleft sample data\textquotedblright\ $\mathcal{O}\left(  X\left(
1\right)  \right)  ,\ldots,\mathcal{O}\left(  X\left(  p-1\right)  \right)  $
that satisfies $q\left(  t\right)  =\mathcal{O}\left(  X\left(  t\right)
\right)  $\ for at least a $\frac{1}{2}+\delta$\ fraction of $t$'s.
\ Fortunately, we can solve that problem efficiently using the well-known
\textit{Berlekamp-Welch algorithm}:

\begin{theorem}
[Berlekamp-Welch Algorithm]\label{bwalg}Let $q$\ be a univariate polynomial of
degree $d$, over any field $\mathbb{F}$. \ Suppose we are given $m$ pairs of
$\mathbb{F}$-elements $\left(  x_{1},y_{1}\right)  ,\ldots,\left(  x_{m}%
,y_{m}\right)  $\ (with the $x_{i}$'s\ all distinct), and are promised that
$y_{i}=q\left(  x_{i}\right)  $\ for more than $\frac{m+d}{2}$\ values of $i$.
\ Then there is a deterministic algorithm to reconstruct $q$, using
$\operatorname*{poly}\left(  n,d\right)  $\ field operations.
\end{theorem}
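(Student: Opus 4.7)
The plan is to follow the standard Berlekamp--Welch strategy, which reduces polynomial reconstruction from noisy evaluations to solving a linear system. Let $e := \lfloor (m-d-1)/2 \rfloor$, so that by hypothesis the number of erroneous pairs (those with $y_i \neq q(x_i)$) is strictly less than $(m-d)/2 \le e + 1$, hence at most $e$. I will look for a pair of polynomials $(E, N)$ over $\mathbb{F}$ with $\deg E \le e$, $E \not\equiv 0$, and $\deg N \le e + d$, satisfying the $m$ linear conditions
\[
N(x_i) \;=\; y_i \, E(x_i) \qquad \text{for all } i \in [m].
\]
Since the $y_i$'s and $x_i$'s are fixed, these are $m$ homogeneous linear equations in the $2e + d + 2$ unknown coefficients of $E$ and $N$, so I can set one coefficient of $E$ (e.g.\ force a chosen monomial to have nonzero coefficient) and solve the remaining affine system by Gaussian elimination in $\operatorname{poly}(m,d)$ field operations.

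The first key claim to establish is \emph{existence} of a valid solution. I take $E^{*}(x) := \prod_{i \in B}(x - x_i)$, where $B \subseteq [m]$ is the set of indices where $y_i \neq q(x_i)$; then $|B| \le e$, so $\deg E^{*} \le e$. Setting $N^{*}(x) := q(x) E^{*}(x)$ gives $\deg N^{*} \le e + d$, and for every $i$: either $i \in B$ (both sides vanish) or $i \notin B$ (in which case $y_i = q(x_i)$, so $N^{*}(x_i) = q(x_i) E^{*}(x_i) = y_i E^{*}(x_i)$). Hence the linear system is satisfiable, and Gaussian elimination will return some nonzero solution $(E, N)$.

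The second key claim is that \emph{any} nonzero solution yields $q$ via $q = N/E$. Suppose $(E_1, N_1)$ and $(E_2, N_2)$ both satisfy the system. Consider $P(x) := N_1(x) E_2(x) - N_2(x) E_1(x)$, a polynomial of degree at most $2e + d$. For every $i \in [m]$, substituting gives $P(x_i) = y_i E_1(x_i) E_2(x_i) - y_i E_2(x_i) E_1(x_i) = 0$. Since $m > 2e + d$ by the hypothesis, $P \equiv 0$, so $N_1/E_1 = N_2/E_2$ as rational functions. Applying this with $(E_2, N_2) = (E^{*}, N^{*})$ shows $N/E = q$, which in particular means $E$ divides $N$ in $\mathbb{F}[x]$. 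Thus polynomial division recovers $q$ exactly, and the whole procedure runs in $\operatorname{poly}(m, d)$ field operations.

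I do not expect a serious obstacle here: the main subtlety is merely checking that the strict inequality $m > 2e + d$ holds under the hypothesis $y_i = q(x_i)$ for more than $(m+d)/2$ values of $i$, so that the uniqueness-up-to-scaling argument goes through, and that the linear system is genuinely solvable after fixing a normalization of $E$ (one can always force the leading coefficient of $E$ of some fixed degree $\le e$ to be $1$; since the existence argument allows $\deg E^{*} \le e$, at least one such normalization is consistent). Everything else is routine linear algebra and polynomial arithmetic.
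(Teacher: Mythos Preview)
Your proof is correct and is precisely the standard Berlekamp--Welch argument. The paper does not actually prove this theorem; it simply cites it as the ``well-known Berlekamp--Welch algorithm'' and uses it as a black box in the proofs of Theorems~\ref{liptonthm} and~\ref{exactpgc}. So there is nothing to compare against, and your write-up fills in exactly what the paper omits. One cosmetic remark: the theorem statement in the paper has a typo ($\operatorname{poly}(n,d)$ should be $\operatorname{poly}(m,d)$), and you have correctly written $\operatorname{poly}(m,d)$ throughout.
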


Theorem \ref{bwalg} applies to our scenario provided $p$ is large enough (say,
at least $n/\delta$). \ Once we have the polynomial $q$, we then simply
evaluate it at $0$\ to obtain $q\left(  0\right)  =\operatorname*{Per}\left(
X\right)  $.

The above argument shows that it is $\mathsf{\#P}$-hard to compute the
permanent of a \textquotedblleft random\textquotedblright\ matrix---but only
over a sufficiently-large \textit{finite} field $\mathbb{F}$, and with respect
to the uniform distribution over matrices. \ By contrast, what if $\mathbb{F}%
$\ is the field of complex numbers, and the distribution over matrices is the
Gaussian distribution, $\mathcal{G}^{n\times n}$?

In that case, one can check that the entire argument still goes through,
\textit{except} for the part where we asserted that the matrix $X\left(
t\right)  $\ was uniformly random.\ \ In the Gaussian case, it is easy enough
to arrange that $X\left(  t\right)  \sim\mathcal{G}^{n\times n}$\ for some
\textit{fixed} $t\neq0$,\ but we can no longer ensure that $X\left(  t\right)
\sim\mathcal{G}^{n\times n}$\ for \textit{all} $t\neq0$ simultaneously.
\ Indeed, $X\left(  t\right)  $\ becomes arbitrarily close to the input matrix
$X\left(  0\right)  =X$ as $t\rightarrow0$. \ Fortunately, we can deal with
that problem by means of Lemma \ref{autocor}, which implies that, if the
matrix $M\in\mathbb{C}^{n\times n}$\ is sampled from $\mathcal{G}^{n\times n}%
$\ and if $E$ is a small shift, then $M+E$\ is nearly indistinguishable from a
sample from $\mathcal{G}^{n\times n}$. \ Using Lemma \ref{autocor}, we now
adapt Theorem \ref{liptonthm} to the complex case.

\begin{theorem}
[Random Self-Reducibility of Gaussian Permanent]\label{exactpgc}For all
$\delta\geq1/\operatorname*{poly}\left(  n\right)  $, the following problem is
$\mathsf{\#P}$-hard. \ Given an $n\times n$\ matrix $M$\ drawn from
$\mathcal{G}^{n\times n}$, output $\operatorname*{Per}\left(  M\right)
$\ with probability at least $\frac{3}{4}+\delta$\ over $M$.
\end{theorem}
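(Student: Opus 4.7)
The plan is to extend Lipton's \cite{lipton} random self-reducibility proof from $\mathbb{F}_p$ to $\mathbb{C}$ with the Gaussian distribution, using Lemma \ref{autocor} to replace the fact (crucial in the finite field case) that a shifted uniform variable is exactly uniform. I will reduce from computing $\operatorname{Per}(X)$ for an arbitrary 0/1 matrix $X \in \{0,1\}^{n\times n}$, which is $\mathsf{\#P}$-hard by Theorem \ref{valiantthm}. Let $\mathcal{O}$ be an oracle satisfying $\Pr_{M\sim \mathcal{G}^{n\times n}}[\mathcal{O}(M) = \operatorname{Per}(M)] \geq 3/4 + \delta$.

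Given $X$, sample $Y \sim \mathcal{G}^{n\times n}$ and consider the one-parameter family
\[
X(t) := Y + tX,\qquad q(t) := \operatorname{Per}(X(t)).
\]
Then $q$ is a univariate polynomial in $t$ of degree at most $n$, and its leading coefficient is exactly $\operatorname{Per}(X)$; so recovering $q$ suffices. For each fixed $t$, the matrix $X(t)$ is distributed entrywise as $\mathcal{N}(t\,x_{ij}, 1)_{\mathbb{C}}$, so by the shift half of Lemma \ref{autocor},
\[
\bigl\|\mathcal{L}(X(t)) - \mathcal{G}^{n\times n}\bigr\|_{TV} \;\le\; t\,\|X\|_F \;\le\; tn.
\]
Choosing $t$ in the range $(0,\eta]$ with $\eta := \delta/(4n)$ therefore guarantees, for every such $t$, that $\mathcal{O}(X(t)) = q(t)$ with probability at least $3/4 + 3\delta/4$ over $Y$.

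Next, I pick $m := 10n/\delta$ distinct nonzero points $t_1,\dots,t_m \in (0,\eta]$ and query $\mathcal{O}$ on each $X(t_i)$ to obtain purported values $y_i$. Although the events $\{\mathcal{O}(X(t_i)) = q(t_i)\}$ are correlated through the shared $Y$, the expected number $F$ of failures satisfies $\mathbb{E}[F] \le m(1/4 - 3\delta/4)$, so Markov's inequality gives $\Pr[F \ge (m-n)/2] \le 1/2 - \Omega(\delta)$. Conditioned on $F < (m-n)/2$, more than $(m+n)/2$ of the pairs $(t_i, y_i)$ satisfy $y_i = q(t_i)$, which is precisely the error tolerance required by the Berlekamp--Welch algorithm (Theorem \ref{bwalg}). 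Since Berlekamp--Welch is purely algebraic and works over any field, I apply it over $\mathbb{C}$ to reconstruct $q$ in polynomial time, read off the coefficient of $t^n$, and return it as $\operatorname{Per}(X)$. Standard majority-amplification over $O(1/\delta^2)$ independent draws of $Y$ then boosts the success probability to $2/3$, yielding a $\mathsf{BPP}^{\mathcal{O}}$ algorithm for the $\mathsf{\#P}$-hard problem of computing the permanent of a 0/1 matrix.

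The main conceptual obstacle, relative to Lipton's proof, is that in the finite-field setting $X + tY$ is \emph{exactly} uniform for every $t \neq 0$, so the oracle's success probability transfers to every point of the curve with no loss. In the Gaussian setting, closeness is only in total variation, which forces me to confine the interpolation points $t_i$ to a $\operatorname{poly}(n,1/\delta)$-small neighborhood of the origin so that the TV error does not overwhelm the $\delta$ advantage. The only other point requiring care is precision: Berlekamp--Welch over $\mathbb{C}$ must be carried out with the $\operatorname{poly}(n)$-bit arithmetic adopted in Section \ref{PRELIM}, but since $\mathcal{O}$ is assumed to return $\operatorname{Per}$ exactly (up to the paper's standing precision convention) and the interpolation is performed on polynomially separated nodes, this is routine. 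Extending the argument to cover also $\alpha = 1/2 + 1/\operatorname{poly}(n)$ and $\alpha = 1/\operatorname{poly}(n)$---i.e., adapting the Gemmell--Sudan \cite{gemmellsudan} and Cai--Pavan--Sivakumar \cite{cps} variants via polynomial curves in $\mathbb{C}^{n\times n}$ and Sudan's list decoder \cite{sudan}---appears feasible and is left for future work.
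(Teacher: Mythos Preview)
Your proof is correct and follows essentially the same approach as the paper: a Lipton-style random self-reduction via a degree-$n$ curve through $\mathbb{C}^{n\times n}$, using Lemma~\ref{autocor} to bound the variation distance between the sample points and $\mathcal{G}^{n\times n}$, Berlekamp--Welch (Theorem~\ref{bwalg}) to recover the polynomial from noisy evaluations, and Markov plus majority amplification to boost the success probability. The only cosmetic difference is your parametrization $X(t)=Y+tX$ (which keeps the variance fixed and reads off the leading coefficient) versus the paper's $X(t)=(1-t)Y+tX$ (which evaluates at $t=1$ and requires both halves of Lemma~\ref{autocor}); your version is, if anything, marginally cleaner.
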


\begin{proof}
Let $X=\left(  x_{ij}\right)  \in\left\{  0,1\right\}  ^{n\times n}$ be an
arbitrary $0/1$ matrix. \ We will show how to compute $\operatorname*{Per}%
\left(  X\right)  $ in probabilistic polynomial time, given access to an
oracle $\mathcal{O}$\ such that%
\[
\Pr_{M\sim\mathcal{G}^{n\times n}}\left[  \mathcal{O}\left(  M\right)
=\operatorname*{Per}\left(  M\right)  \right]  \geq\frac{3}{4}+\delta.
\]
Clearly this suffices to prove the theorem.

The first step is to choose a matrix $Y\in\mathbb{C}^{n\times n}$ from the
Gaussian distribution $\mathcal{G}^{n\times n}$. \ Then define%
\[
X\left(  t\right)  :=\left(  1-t\right)  Y+tX,
\]
so that $X\left(  0\right)  =Y$ and $X\left(  1\right)  =X$. \ Next define
\[
q\left(  t\right)  :=\operatorname*{Per}\left(  X\left(  t\right)  \right)  ,
\]
so that $q\left(  t\right)  $ is a univariate polynomial in $t$ of degree at
most $n$, and $q\left(  1\right)  =\operatorname*{Per}\left(  X\left(
1\right)  \right)  =\operatorname*{Per}\left(  X\right)  $.

Now let $L:=\left\lceil n/\delta\right\rceil $\ and $\varepsilon:=\frac
{\delta}{\left(  4n^{2}+2n\right)  L}$. \ For each $\ell\in\left[  L\right]
$, call the oracle $\mathcal{O}$\ on input matrix $X\left(  \varepsilon
\ell\right)  $. \ Then, using the Berlekamp-Welch algorithm (Theorem
\ref{bwalg}), attempt to find a degree-$n$ polynomial $q^{\prime}%
:\mathbb{C}\rightarrow\mathbb{C}$ such that%
\[
q^{\prime}\left(  \varepsilon\ell\right)  =\mathcal{O}\left(  X\left(
\varepsilon\ell\right)  \right)
\]
for at least a $\frac{3}{4}+\delta$\ fraction of $\ell\in\left[  L\right]  $.
\ If no such $q^{\prime}$ is found, then fail; otherwise, output $q^{\prime
}\left(  1\right)  $\ as the\ guessed value of $\operatorname*{Per}\left(
X\right)  $.

We claim that the above algorithm succeeds (that is, outputs $q^{\prime
}\left(  1\right)  =\operatorname*{Per}\left(  X\right)  $) with probability
at least $\frac{1}{2}+\frac{\delta}{2}$ over $Y$. \ Provided that holds, it is
clear that the success probability can be boosted to (say) $2/3$, by simply
repeating the algorithm $O\left(  1/\delta^{2}\right)  $\ times with different
choices of $Y$\ and then outputting the majority result.

To prove the claim, note that for each $\ell\in\left[  L\right]  $, one can
think of the matrix $X\left(  \varepsilon\ell\right)  $\ as having been drawn
from the distribution%
\[
\mathcal{D}_{\ell}:=\prod_{i,j=1}^{n}\mathcal{N}\left(  \varepsilon\ell
a_{ij},\left(  1-\varepsilon\ell\right)  ^{2}\right)  _{\mathbb{C}}.
\]
Let%
\[
\mathcal{D}_{\ell}^{\prime}:=\prod_{i,j=1}^{n}\mathcal{N}\left(
\varepsilon\ell a_{ij},1\right)  _{\mathbb{C}}%
\]
Then by the triangle inequality together with Lemma \ref{autocor},%
\begin{align*}
\left\Vert \mathcal{D}_{\ell}-\mathcal{G}^{n\times n}\right\Vert  &
\leq\left\Vert \mathcal{D}_{\ell}-\mathcal{D}_{\ell}^{\prime}\right\Vert
+\left\Vert \mathcal{D}_{\ell}^{\prime}-\mathcal{G}^{n\times n}\right\Vert \\
&  \leq2n^{2}\varepsilon\ell+\sqrt{n^{2}\left(  \varepsilon\ell\right)  ^{2}%
}\\
&  \leq\left(  2n^{2}+n\right)  \varepsilon L\\
&  \leq\frac{\delta}{2}.
\end{align*}
Hence%
\begin{align*}
\Pr\left[  \mathcal{O}\left(  X\left(  \varepsilon\ell\right)  \right)
=q\left(  \varepsilon\ell\right)  \right]   &  \geq\frac{3}{4}+\delta
-\left\Vert \mathcal{D}_{\ell}-\mathcal{N}\left(  0,1\right)  _{\mathbb{C}%
}^{n\times n}\right\Vert \\
&  \geq\frac{3}{4}+\frac{\delta}{2}.
\end{align*}
Now let $S$\ be the set of all $\ell\in\left[  L\right]  $\ such that
$\mathcal{O}\left(  X\left(  \varepsilon\ell\right)  \right)  =q\left(
\varepsilon\ell\right)  $. \ Then by Markov's inequality,%
\[
\Pr\left[  \left\vert S\right\vert \geq\left(  \frac{1}{2}+\frac{\delta}%
{2}\right)  L\right]  \geq1-\frac{\frac{1}{4}-\frac{\delta}{2}}{\frac{1}%
{2}-\frac{\delta}{2}}\geq\frac{1}{2}+\frac{\delta}{2}.
\]
Furthermore, suppose $\left\vert S\right\vert \geq\left(  \frac{1}{2}%
+\frac{\delta}{2}\right)  L$. \ Then by Theorem \ref{bwalg}, the
Berlekamp-Welch algorithm will succeed; that is, its output polynomial
$q^{\prime}$\ will be equal to $q$. \ This proves the claim and hence the lemma.
\end{proof}

As mentioned before, we conjecture that it is possible to improve Theorem
\ref{exactpgc}, to show that it is $\mathsf{\#P}$-hard even to compute the
permanent of an $\alpha=\frac{1}{\operatorname*{poly}\left(  n\right)  }%
$\ fraction of matrices $X$\ drawn from the Gaussian distribution
$\mathcal{G}^{n\times n}$.

Let us mention two other interesting improvements that one can make to Theorem
\ref{exactpgc}. \ First, one can easily modify the proof to show that not just
$\operatorname*{Per}\left(  X\right)  $, but also $\left\vert
\operatorname*{Per}\left(  X\right)  \right\vert ^{2}$,\ is as hard to compute
for $X$ drawn from the Gaussian distribution $\mathcal{G}^{n\times n}$\ as it
is in the worst case. \ For this, one simply needs to observe that, just as
$\operatorname*{Per}\left(  X\right)  $\ is a degree-$n$ polynomial in the
entries of $X$, so $\left\vert \operatorname*{Per}\left(  X\right)
\right\vert ^{2}$\ is a degree-$2n$ polynomial in the entries of $X$ together
with their complex conjugates (or alternatively, in the real and imaginary
parts of the entries). \ The rest of the proof goes through as before. \ Since
$\left\vert \operatorname*{Per}\left(  X\right)  \right\vert ^{2}$\ is
$\mathsf{\#P}$-hard to compute in the worst case\ by Theorem \ref{approxhard},
it follows that $\left\vert \operatorname*{Per}\left(  X\right)  \right\vert
^{2}$\ is $\mathsf{\#P}$-hard to compute\ for $X$ drawn from the Gaussian
distribution as well.

Second, in the proof of Theorem \ref{exactpgc}, one can relax the requirement
that the oracle $\mathcal{O}$\ computes $\operatorname*{Per}\left(  X\right)
$\ \textit{exactly} with high probability over $X\sim\mathcal{G}^{n\times n}$,
and merely require that%
\[
\Pr_{X\sim\mathcal{G}^{n\times n}}\left[  \left\vert \mathcal{O}\left(
X\right)  -\operatorname*{Per}\left(  X\right)  \right\vert \leq2^{-q\left(
n\right)  }\right]  \geq\frac{3}{4}+\frac{1}{\operatorname*{poly}\left(
n\right)  },
\]
for some sufficiently large polynomial $q$. \ To do so, one can appeal to the
following lemma of Paturi.

\begin{lemma}
[Paturi \cite{paturi}; see also Buhrman et al.\ \cite{bcwz}]\label{paturilem}%
Let $p:\mathbb{R}\rightarrow\mathbb{R}$\ be a real polynomial of degree $d$,
and suppose $\left\vert p\left(  x\right)  \right\vert \leq\delta$\ for all
$\left\vert x\right\vert \leq\varepsilon$. \ Then $\left\vert p\left(
1\right)  \right\vert \leq\delta e^{2d\left(  1+1/\varepsilon\right)  }$.
\end{lemma}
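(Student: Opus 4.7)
The plan is to prove Paturi's lemma by reducing it to the classical extremal property of Chebyshev polynomials. First I would dispose of the trivial case $\varepsilon \geq 1$, where $1 \in [-\varepsilon,\varepsilon]$ forces $|p(1)| \leq \delta$ directly, well below the claimed bound. So assume $\varepsilon < 1$, and rescale by defining $q(y) := p(\varepsilon y)/\delta$. Then $q$ is a real polynomial of degree at most $d$, satisfies $|q(y)| \leq 1$ on $[-1,1]$, and obeys $q(1/\varepsilon) = p(1)/\delta$. The task reduces to bounding $|q(1/\varepsilon)|$.

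The key tool is the classical extremal property of the Chebyshev polynomial $T_d$ of the first kind: among all real polynomials of degree at most $d$ whose absolute value is bounded by $1$ on $[-1,1]$, $T_d$ has the largest absolute value at every point outside $[-1,1]$. Invoking this at the point $1/\varepsilon > 1$ gives $|q(1/\varepsilon)| \leq |T_d(1/\varepsilon)|$, so $|p(1)| \leq \delta \cdot |T_d(1/\varepsilon)|$.

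It then remains to bound $T_d(1/\varepsilon)$ by $e^{2d(1+1/\varepsilon)}$. For this I would use the hyperbolic representation $T_d(\cosh \theta) = \cosh(d\theta)$, valid for $\theta \geq 0$. Setting $\theta = \operatorname{arccosh}(1/\varepsilon)$ yields
\[
T_d(1/\varepsilon) = \cosh\bigl(d \operatorname{arccosh}(1/\varepsilon)\bigr) \leq e^{d \operatorname{arccosh}(1/\varepsilon)},
\]
and since $\operatorname{arccosh}(z) = \ln(z + \sqrt{z^2 - 1}) \leq \ln(2z)$ for $z \geq 1$, we obtain $T_d(1/\varepsilon) \leq (2/\varepsilon)^d$. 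Combining this with the elementary estimates $2 \leq e^2$ and $\ln(1/\varepsilon) \leq 1/\varepsilon - 1 < 1/\varepsilon$ (the latter from $\ln x \leq x - 1$ for $x>0$) gives $(2/\varepsilon)^d \leq e^{2d} \cdot e^{d/\varepsilon} \leq e^{2d(1 + 1/\varepsilon)}$, as required.

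There is no real obstacle here: the lemma is essentially a packaging of the Chebyshev extremal property, and the only work is tracking constants to land on the particular exponent $2d(1+1/\varepsilon)$. The only judgment call is whether to invoke the extremal property as a black box or to reprove it in place via Lagrange interpolation at Chebyshev nodes; since the paper only needs the qualitative bound, I would take the cleaner black-box route and cite the extremal property.
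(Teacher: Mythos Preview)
Your proof is correct. The paper does not actually supply a proof of this lemma; it simply cites Paturi and Buhrman et al.\ and invokes the result as a black box. Your argument via the Chebyshev extremal property is the standard route to such polynomial-extrapolation bounds and is exactly what one would expect the cited references to contain. The reduction by rescaling, the appeal to $|q(1/\varepsilon)| \le T_d(1/\varepsilon)$, and the estimate $T_d(1/\varepsilon) \le (2/\varepsilon)^d \le e^{2d(1+1/\varepsilon)}$ are all clean and correct; there is nothing to compare against in the paper itself.
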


From this perspective, the whole challenge in proving the
Permanent-of-Gaussians Conjecture is to replace the $2^{-q\left(  n\right)  }%
$\ approximation error with $1/q\left(  n\right)  $.

Combining, we obtain the following theorem, whose detailed proof we omit.

\begin{theorem}
\label{exactpgc2}There exists a polynomial $p$ for which the following problem
is $\mathsf{\#P}$-hard, for all $\delta\geq1/\operatorname*{poly}\left(
n\right)  $. \ Given an $n\times n$\ matrix $X$\ drawn from $\mathcal{G}%
^{n\times n}$, output a real number $y$ such that $\left\vert y-\left\vert
\operatorname*{Per}\left(  X\right)  \right\vert ^{2}\right\vert
\leq2^{-p\left(  n,1/\delta\right)  }$\ with probability at least $\frac{3}%
{4}+\delta$\ over $X$.
\end{theorem}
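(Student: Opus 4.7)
The plan is to combine three ingredients already present in the paper: (a) the worst-case $\mathsf{\#P}$-hardness of computing $|\operatorname{Per}(X)|^{2}$ for arbitrary real $X$ (Theorem \ref{approxhard}), (b) the random self-reducibility machinery of Theorem \ref{exactpgc}, and (c) Paturi's Lemma (Lemma \ref{paturilem}), which lifts exponentially accurate pointwise estimates of a low-degree polynomial into useful extrapolations to a distant point.

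Given a worst-case $X\in\{0,1\}^{n\times n}$, I would draw $Y\sim\mathcal{G}^{n\times n}$ and define the line $X(t):=(1-t)Y+tX$. The key observation is that $q(t):=|\operatorname{Per}(X(t))|^{2}=\operatorname{Per}(X(t))\,\overline{\operatorname{Per}(X(t))}$ is a univariate real polynomial in $t$ of degree at most $2n$, because $\overline{\operatorname{Per}(X(t))}=\operatorname{Per}(\overline{X(t)})$ and the entries of $X(t)$ are linear in $t$. We have $q(1)=|\operatorname{Per}(X)|^{2}$, whose worst-case computation is already $\mathsf{\#P}$-hard. For small $t=\varepsilon\ell$, the matrix $X(\varepsilon\ell)$ is near-Gaussian: by Lemma \ref{autocor}, its distribution has total variation distance $O(n^{2}\varepsilon\ell)$ from $\mathcal{G}^{n\times n}$. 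Choosing $\varepsilon L\leq \delta/\operatorname{poly}(n)$ ensures that for each of the $L$ query points, the oracle approximates $q(\varepsilon\ell)$ to within additive error $2^{-p(n,1/\delta)}$ with probability at least $\tfrac{3}{4}+\tfrac{\delta}{2}$ (after absorbing the small TV shift into $\delta$). A Markov argument, exactly as in Theorem \ref{exactpgc}, then guarantees that with probability at least $\tfrac{1}{2}+\tfrac{\delta}{2}$ over $Y$, at least a $(\tfrac{1}{2}+\delta)$-fraction of the $L$ samples are good.

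Next, I would run a noise-tolerant version of Berlekamp-Welch (Theorem \ref{bwalg}) on the $L=\Theta(n/\delta)$ noisy samples to identify the good subset and fit a polynomial $\widetilde{q}$ of degree $2n$. Standard Lagrange-interpolation error bounds then give $|\widetilde{q}(t)-q(t)|\leq 2^{-p(n,1/\delta)}\cdot\operatorname{poly}(n,1/\varepsilon)$ uniformly on $[0,\varepsilon L]$. Finally, applying Paturi's Lemma to the degree-$2n$ polynomial $\widetilde{q}-q$ (with the small interval taken to be $[0,\varepsilon L]$) yields
\[
|\widetilde{q}(1)-q(1)|\;\leq\;2^{-p(n,1/\delta)}\cdot\operatorname{poly}(n,1/\varepsilon)\cdot e^{O(n/(\varepsilon L))}.
\]
Taking $p(n,1/\delta)$ to be a sufficiently large fixed polynomial in $n$ and $1/\delta$ drives this bound below, say, $2^{-\Omega(n)}$, which is more than good enough to recover $|\operatorname{Per}(X)|^{2}$ exactly after rounding (since $|\operatorname{Per}(X)|^{2}$ is a nonnegative integer for our $0/1$-valued $X$). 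Standard amplification via $O(1/\delta^{2})$ independent choices of $Y$ then boosts the overall success probability to $2/3$, yielding the desired worst-case-to-average-case reduction.

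The main obstacle is making the "noise-tolerant Berlekamp-Welch" step precise: the classical algorithm decodes exact evaluations, whereas here the non-corrupted samples carry a tiny additive error $2^{-p(n,1/\delta)}$. The clean way around this is to argue that, for $p$ large enough, the additive noise on good samples is far below any threshold that would combinatorially misclassify them against the badly-corrupted ones, so the usual Berlekamp-Welch decoder still identifies the correct support; interpolation through any $2n+1$ of those samples then produces a $\widetilde{q}$ exponentially close to $q$ throughout the sample interval. Balancing $\varepsilon$, $L$, and $p$ so that the Paturi blow-up $e^{O(n/(\varepsilon L))}$ remains small while $X(\varepsilon\ell)$ stays near-Gaussian is delicate parameter-pushing but routine. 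Combined with Theorem \ref{approxhard}, this yields the claimed $\mathsf{\#P}$-hardness.
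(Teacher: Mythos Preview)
Your proposal is correct and takes essentially the same approach as the paper, which omits a detailed proof but explicitly says Theorem~\ref{exactpgc2} follows by combining the random self-reducibility argument of Theorem~\ref{exactpgc} (applied to the degree-$2n$ real polynomial $t\mapsto|\operatorname{Per}(X(t))|^{2}$) with Paturi's Lemma~\ref{paturilem}. The noise-tolerant Berlekamp--Welch issue you flag is real but is likewise glossed over in the paper; it can be handled by solving the Berlekamp--Welch linear system approximately and controlling the perturbation, with all blowup factors absorbed into a sufficiently large polynomial $p$.
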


As a final observation, it is easy to find \textit{some} efficiently samplable
distribution $\mathcal{D}$\ over matrices $X\in\mathbb{C}^{n\times n}$, such
that estimating $\operatorname*{Per}\left(  X\right)  $\ or $\left\vert
\operatorname*{Per}\left(  X\right)  \right\vert ^{2}$ for most $X\sim
\mathcal{D}$\ is a $\mathsf{\#P}$-hard problem. \ To do so, simply start with
any problem that is known to be $\mathsf{\#P}$-hard\ on average: for example,
computing $\operatorname*{Per}\left(  M\right)  $\ for most matrices
$M\in\mathbb{F}_{p}^{n\times n}$\ over a \textit{finite} field $\mathbb{F}%
_{p}$. \ Next, use Theorem \ref{approxhard}\ to reduce the computation of
$\operatorname*{Per}\left(  M\right)  $\ (for a uniform random $M$) to the
estimation of $\left\vert \operatorname*{Per}\left(  X_{1}\right)  \right\vert
^{2},\ldots,\left\vert \operatorname*{Per}\left(  X_{m}\right)  \right\vert
^{2}$, for various matrices $X_{1},\ldots,X_{m}\in\mathbb{C}^{n\times n}$.
\ Finally, output a random $X_{i}$\ as one's sample from $\mathcal{D}$.
\ Clearly, if one could estimate $\left\vert \operatorname*{Per}\left(
X\right)  \right\vert ^{2}$\ for a $1-1/\operatorname*{poly}\left(  n\right)
$\ fraction of $X\sim\mathcal{D}$, one could also compute $\operatorname*{Per}%
\left(  M\right)  $\ for a $1-1/\operatorname*{poly}\left(  n\right)
$\ fraction of\ $M\in\mathbb{F}_{p}^{n\times n}$, and thereby solve a
$\mathsf{\#P}$-hard problem. \ Because of this, we see that the challenge is
\textquotedblleft merely\textquotedblright\ how to prove average-case
$\mathsf{\#P}$-hardness, in the specific case where the distribution
$\mathcal{D}$\ over matrices that interests us is the Gaussian distribution
$\mathcal{G}^{n\times n}$ (or more generally, some other \textquotedblleft
nice\textquotedblright\ or \textquotedblleft uniform-looking\textquotedblright\ distribution).

\subsection{The Barrier to Proving the PGC\label{BARRIER}}

In this section, we identify a significant barrier to proving Conjecture
\ref{pgc}, and explain why a new approach seems needed.

As Section \ref{EVPGC}\ discussed, all existing proofs of the
worst-case/average-case equivalence of the \textsc{Permanent}\ are based on
\textit{low-degree polynomial interpolation}. \ More concretely, given a
matrix $X\in\mathbb{F}^{n\times n}$ for which we want to compute
$\operatorname*{Per}\left(  X\right)  $, we first choose a random low-degree
curve $X\left(  t\right)  $\ through $\mathbb{F}^{n\times n}$\ satisfying
$X\left(  0\right)  =X$. \ We then choose nonzero points $t_{1},\ldots
,t_{m}\in\mathbb{R}$, and compute or approximate $\operatorname*{Per}\left(
X\left(  t_{i}\right)  \right)  $ for all $i\in\left[  m\right]  $, using the
assumption that the \textsc{Permanent}\ is easy on average. \ Finally, using
the fact that $q\left(  t\right)  :=\operatorname*{Per}\left(  X\left(
t\right)  \right)  $\ is a low-degree polynomial in $t$, we perform polynomial
interpolation on the noisy estimates%
\[
y_{1}\approx q\left(  t_{1}\right)  ,\ldots,y_{m}\approx q\left(
t_{m}\right)  ,
\]
in order to obtain an estimate of the worst-case permanent $q\left(  0\right)
=\operatorname*{Per}\left(  X\left(  0\right)  \right)  =\operatorname*{Per}%
\left(  X\right)  $.

The above approach is a very general one, with different instantiations
depending on the base field $\mathbb{F}$, the fraction of $X$'s for which we
can compute $\operatorname*{Per}\left(  X\right)  $,\ and so forth.
\ Nevertheless, we claim that, assuming the Permanent Anti-Concentration
Conjecture,\textit{ the usual polynomial interpolation approach cannot
possibly work to prove Conjecture \ref{pgc}.} \ Let us see why this is the case.

Let $X\in\mathbb{C}^{n\times n}$\ be a matrix where every entry has absolute
value at most $1$. \ Then certainly it is a $\mathsf{\#P}$-hard\ problem to
approximate $\operatorname*{Per}\left(  X\right)  $\ multiplicatively\ (as
shown by Theorem \ref{approxhard}, for example). \ Our goal is to reduce the
approximation of $\operatorname*{Per}\left(  X\right)  $ to the approximation
of $\operatorname*{Per}\left(  X_{1}\right)  ,\ldots,\operatorname*{Per}%
\left(  X_{m}\right)  $, for some matrices $X_{1},\ldots,X_{m}$\ that are
drawn from the Gaussian distribution $\mathcal{G}^{n\times n}$ or something
close to it.

Recall from Section \ref{DISTPER}\ that%
\[
\operatorname*{E}_{X\sim\mathcal{G}^{n\times n}}\left[  \left\vert
\operatorname*{Per}\left(  X\right)  \right\vert ^{2}\right]  =n!,
\]
which combined with Markov's inequality yields%
\begin{equation}
\Pr_{X\sim\mathcal{G}^{n\times n}}\left[  \left\vert \operatorname*{Per}%
\left(  X\right)  \right\vert >k\sqrt{n!}\right]  <\frac{1}{k^{2}}
\label{mineq}%
\end{equation}
for all $k>1$. \ But this already points to a problem: $\left\vert
\operatorname*{Per}\left(  X\right)  \right\vert $\textit{\ could, in general,
be larger than }$\left\vert \operatorname*{Per}\left(  X_{1}\right)
\right\vert ,\ldots,\left\vert \operatorname*{Per}\left(  X_{m}\right)
\right\vert $\textit{\ by an exponential factor}. \ Specifically, $\left\vert
\operatorname*{Per}\left(  X\right)  \right\vert $\ could be as large as
$n!$\ (for example, if $A$ is the all-$1$'s matrix). \ By contrast,
$\left\vert \operatorname*{Per}\left(  X_{1}\right)  \right\vert
,\ldots,\left\vert \operatorname*{Per}\left(  X_{m}\right)  \right\vert
$\ will typically be $O(\sqrt{n!})$ by equation (\ref{mineq}). \ And yet, from
constant-factor approximations to $\operatorname*{Per}\left(  X_{1}\right)
,\ldots,\operatorname*{Per}\left(  X_{m}\right)  $, we are supposed to recover
a constant-factor approximation to $\operatorname*{Per}\left(  X\right)  $,
\textit{even in the case that }$\left\vert \operatorname*{Per}\left(
X\right)  \right\vert $\textit{\ is much smaller than }$n!$\ (say, $\left\vert
\operatorname*{Per}\left(  X\right)  \right\vert \approx\sqrt{n!}$).

Why is this a problem? \ Because polynomial interpolation is linear with
respect to additive errors. \ And therefore, even modest errors in estimating
$\operatorname*{Per}\left(  X_{1}\right)  ,\ldots,\operatorname*{Per}\left(
X_{m}\right)  $\ could cause a large error in estimating $\operatorname*{Per}%
\left(  X\right)  $.

To see this concretely, let $X$\ be the $n\times n$\ all-$1$'s matrix, and
$X\left(  t\right)  $\ be a randomly-chosen curve through $\mathbb{C}^{n\times
n}$\ that satisfies $X\left(  0\right)  =X$. \ Also, let $t_{1},\ldots
,t_{m}\in\mathbb{R}$\ be nonzero points such that, as we vary $X$, each
$X\left(  t_{i}\right)  $\ is close to a Gaussian random matrix $X\sim
\mathcal{G}^{n\times n}$. \ (We need not assume that the $X\left(
t_{i}\right)  $'s are independent.) \ Finally, let $q_{0}\left(  t\right)
:=\operatorname*{Per}\left(  X\left(  t\right)  \right)  $. \ Then

\begin{enumerate}
\item[(i)] $\left\vert q_{0}\left(  t_{1}\right)  \right\vert ,\ldots
,\left\vert q_{0}\left(  t_{m}\right)  \right\vert $\ are each at most
$n^{O\left(  1\right)  }\sqrt{n!}$ with high probability over the choice of
$X$, but

\item[(ii)] $\left\vert q_{0}\left(  0\right)  \right\vert =\left\vert
\operatorname*{Per}\left(  X\left(  0\right)  \right)  \right\vert =\left\vert
\operatorname*{Per}\left(  X\right)  \right\vert =n!$.
\end{enumerate}

Here (i) holds by our assumption that each $X\left(  t_{i}\right)  $\ is close
to Gaussian, together with equation (\ref{mineq}).

All we need to retain from this is that a polynomial $q_{0}$\ with properties
(i) and (ii) \textit{exists}, within whatever class of polynomials is relevant
for our interpolation problem.

Now, suppose that instead of choosing $X$ to be the all-$1$'s matrix, we had
chosen an$\ X$ such that $\left\vert \operatorname*{Per}\left(  A\right)
\right\vert \leq\sqrt{n!}$. \ Then as before, we could choose a random curve
$X\left(  t\right)  $\ such that $X\left(  0\right)  =X$ and\ $X\left(
t_{1}\right)  ,\ldots,X\left(  t_{m}\right)  $\ are approximately Gaussian,
for some fixed interpolation points $t_{1},\ldots,t_{m}\in\mathbb{R}$. \ Then
letting $q\left(  t\right)  :=\operatorname*{Per}\left(  X\left(  t\right)
\right)  $, we would have

\begin{enumerate}
\item[(i)] $\left\vert q\left(  t_{1}\right)  \right\vert ,\ldots,\left\vert
q\left(  t_{m}\right)  \right\vert $ are each at least $\sqrt{n!}/n^{O\left(
1\right)  }$ with high probability over the choice of $X$, and

\item[(ii)] $\left\vert q\left(  0\right)  \right\vert =\left\vert
\operatorname*{Per}\left(  X\left(  0\right)  \right)  \right\vert =\left\vert
\operatorname*{Per}\left(  X\right)  \right\vert \leq\sqrt{n!}$.
\end{enumerate}

Here (i) holds by our assumption that each $X\left(  t_{i}\right)  $\ is close
to Gaussian, together with Conjecture \ref{pacc}\ (the Permanent
Anti-Concentration Conjecture).

Now define a new polynomial%
\[
\widetilde{q}\left(  t\right)  :=q\left(  t\right)  +\gamma q_{0}\left(
t\right)  ,
\]
where, say, $\left\vert \gamma\right\vert =2^{-n}$. \ Then for all
$i\in\left[  m\right]  $, the difference%
\[
\left\vert \widetilde{q}\left(  t_{i}\right)  -q\left(  t_{i}\right)
\right\vert =\left\vert \gamma q_{0}\left(  t_{i}\right)  \right\vert
\leq\frac{n^{O\left(  1\right)  }}{2^{n}}\sqrt{n!},
\]
is negligible compared to $\sqrt{n!}$. \ This means that \textit{it is
impossible to distinguish the two polynomials }$\widetilde{q}$\textit{\ and
}$q$\textit{, given their approximate values at the points }$t_{1}%
,\ldots,t_{m}$. \ And yet the two polynomials have completely different
behavior at the point $0$: by assumption $\left\vert q\left(  0\right)
\right\vert \leq\sqrt{n!}$, but%
\begin{align*}
\left\vert \widetilde{q}\left(  0\right)  \right\vert  &  \geq\left\vert
\gamma q_{0}\left(  0\right)  \right\vert -\left\vert q\left(  0\right)
\right\vert \\
&  \geq\frac{n!}{2^{n}}-\sqrt{n!}.
\end{align*}
We conclude that it is impossible, given only the approximate values of the
polynomial $q\left(  t\right)  :=\operatorname*{Per}\left(  X\left(  t\right)
\right)  $ at the points $t_{1},\ldots,t_{m}$, to deduce its approximate value
at $0$. \ And therefore, assuming the PACC, the usual polynomial interpolation
approach cannot suffice for proving Conjecture \ref{pgc}.

Nevertheless, we speculate that there \textit{is} a worst-case/average-case
reduction for approximating the permanents of Gaussian random matrices, and
that the barrier we have identified merely represents a limitation of current
techniques. \ So for example, perhaps one can do interpolation using a
\textit{restricted class} of low-degree polynomials, such as polynomials with
an upper bound on their coefficients. \ To evade the barrier, what seems to be
crucial is that the restricted class of polynomials one uses not be closed
under addition.

Of course, the above argument relied on the Permanent Anti-Concentration
Conjecture, so one conceivable way around the barrier would be if the PACC
were false. \ However, in that case, the results of Section \ref{GPE2GPE}
would fail: that is, we would not know how to use the hardness of
\textsc{GPE}$_{\times}$ to deduce the hardness of $\left\vert
\text{\textsc{GPE}}\right\vert _{\pm}^{2}$ that we need for our application.

\section{Open Problems\label{OPEN}}

The most exciting challenge we leave is to \textit{do} the experiments
discussed in Section \ref{EXPER}, whether in linear optics or in other
physical systems that contain excitations that behave as identical bosons.
\ If successful, such experiments have the potential to provide the strongest
evidence to date for violation of the Extended Church-Turing Thesis in nature.

We now list a few theoretical open problems.

\begin{enumerate}
\item[(1)] The most obvious problem is to prove Conjecture \ref{pgc} (the
Permanent-of-Gaussians Conjecture): that approximating the permanent of a
matrix of i.i.d.\ Gaussian entries is $\mathsf{\#P}$-hard. \ Failing that, can
we prove $\mathsf{\#P}$-hardness\ for \textit{any} problem with a similar
\textquotedblleft flavor\textquotedblright\ (roughly speaking, an average-case
approximate counting problem over $\mathbb{R}$\ or $\mathbb{C}$)? \ Can we at
least find evidence that such a problem is not in $\mathsf{BPP}^{\mathsf{NP}}$?

\item[(2)] Another obvious problem is to prove Conjecture \ref{pacc}\ (the
Permanent Anti-Concentration Conjecture), that $\left\vert \operatorname*{Per}%
\left(  X\right)  \right\vert $\ almost always exceeds $\sqrt{n!}%
/\operatorname*{poly}\left(  n\right)  $ for Gaussian random matrices
$X\thicksim\mathcal{N}\left(  0,1\right)  _{\mathbb{C}}^{n\times n}$.
\ Failing that, \textit{any} progress on understanding the distribution of
$\operatorname*{Per}\left(  X\right)  $\ for Gaussian $X$\ would be interesting.

\item[(3)] Can we reduce the number of modes needed for our linear-optics
experiment, perhaps from $O\left(  n^{2}\right)  $\ to $O\left(  n\right)  $?

\item[(4)] How does the noninteracting-boson model relate to other models of
computation that are believed to be intermediate between $\mathsf{BPP}$\ and
$\mathsf{BQP}$? \ To give one concrete question, can every boson computation
be simulated by a qubit-based quantum circuit of logarithmic depth?

\item[(5)] Using quantum fault-tolerance techniques, can one decrease the
effective error in our experiment to $1/\exp\left(  n\right)  $---thereby
obviating the need for the mathematical work we do in this paper to handle
$1/\operatorname*{poly}\left(  n\right)  $\ error in variation distance?
\ Note that, \textit{if} one had the resources for universal quantum
computation, then one could easily combine our experiment with standard
fault-tolerance schemes,\ which are known to push the effective error down to
$1/\exp\left(  n\right)  $\ using $\operatorname*{poly}\left(  n\right)
$\ computational overhead. \ So the interesting question is whether one can
make our experiment fault-tolerant using \textit{fewer} resources than are
needed for universal quantum computing---and in particular, whether one can do
so using linear optics alone.

\item[(6)] Can we give evidence against not merely an FPTAS (Fully Polynomial
Time Approximation Scheme) for the \textsc{BosonSampling} problem, but an
approximate sampling algorithm that works for some \textit{fixed}\ error
$\varepsilon>1/\operatorname*{poly}\left(  n\right)  $?

\item[(7)] For what other interesting quantum systems, besides linear optics,
do analogues of our hardness results hold? \ As mentioned in Section
\ref{RELATED}, the beautiful work of Bremner, Jozsa, and Shepherd \cite{bjs}
shows that exact simulation of \textquotedblleft commuting\ quantum
computations\textquotedblright\ in classical polynomial time would collapse
the polynomial hierarchy. \ What can we say about \textit{approximate}
classical simulation of their model?

\item[(8)] In this work, we showed that unlikely complexity consequences would
follow if classical computers could simulate quantum computers on all
\textit{sampling} or \textit{search} problems: that is, that $\mathsf{SampP}%
=\mathsf{SampBQP}$\ or $\mathsf{FBPP}=\mathsf{FBQP}$. \ An obvious question
that remains is, what about \textit{decision} problems? \ Can we derive some
unlikely collapse of classical complexity classes from the assumption that
$\mathsf{P}=\mathsf{BQP}$\ or $\mathsf{P{}romiseP}=\mathsf{P{}romiseBQP}$?

\item[(9)] Is there any plausible candidate for a \textit{decision} problem
that is efficiently solvable by a boson computer, but not by a classical computer?

\item[(10)] As discussed in Section \ref{EXPER}, it is not obvious how to
convince a skeptic that a quantum computer is really solving the
\textsc{BosonSampling}\ problem in a scalable way. \ This is because, unlike
with (say) \textsc{Factoring}, neither \textsc{BosonSampling}\ nor any related
problem seems to be in $\mathsf{NP}$. \ How much can we do to remedy this?
\ For example, can a prover with a \textsc{BosonSampling} oracle prove any
nontrivial statements to a $\mathsf{BPP}$\ verifier\ via an interactive protocol?

\item[(11)] Is there a polynomial-time classical algorithm to sample from a
probability distribution $\mathcal{D}^{\prime}$\ that \textit{cannot be
efficiently distinguished} from the distribution $\mathcal{D}$ sampled by a
boson computer?
\end{enumerate}

\section{Acknowledgments}

We thank Andy Drucker, Oded Goldreich, Aram Harrow, Matt Hastings, Greg
Kuperberg, Masoud Mohseni, Terry Rudolph, Barry Sanders, Madhu Sudan, Terry
Tao, Barbara Terhal, Lev Vaidman, Leslie Valiant, and Avi Wigderson for
helpful discussions. \ We especially thank Leonid Gurvits for explaining his
polynomial formalism\ and for allowing us to include several of his results in
Appendix \ref{ALGS}, and Mick Bremner and Richard Jozsa for discussions of
their work \cite{bjs}.

\bibliographystyle{plain}
\bibliography{thesis}

\begin{thebibliography}{10}

\bibitem{aar:bf}
S.~Aaronson.
\newblock Algorithms for {B}oolean function query properties.
\newblock {\em SIAM J. Comput.}, 32(5):1140--1157, 2003.

\bibitem{aar:pp}
S.~Aaronson.
\newblock Quantum computing, postselection, and probabilistic polynomial-time.
\newblock {\em Proc. Roy. Soc. London}, A461(2063):3473--3482, 2005.
\newblock quant-ph/0412187.

\bibitem{aar:ph}
S.~Aaronson.
\newblock {BQP} and the polynomial hierarchy.
\newblock In {\em Proc. ACM STOC}, 2010.
\newblock arXiv:0910.4698.

\bibitem{aar:samp}
S.~Aaronson.
\newblock The equivalence of sampling and searching.
\newblock arXiv:1009.5104, ECCC TR10-128, 2010.

\bibitem{ag}
S.~Aaronson and D.~Gottesman.
\newblock Improved simulation of stabilizer circuits.
\newblock {\em Phys. Rev. A}, 70(052328), 2004.
\newblock quant-ph/0406196.

\bibitem{al:fermi}
D.~S. Abrams and S.~Lloyd.
\newblock Simulation of many-body {F}ermi systems on a universal quantum
  computer.
\newblock {\em Phys. Rev. Lett.}, 79:2586--2589, 1997.
\newblock quant-ph/9703054.

\bibitem{ab}
D.~Aharonov and M.~Ben-Or.
\newblock Fault-tolerant quantum computation with constant error.
\newblock In {\em Proc. ACM STOC}, pages 176--188, 1997.
\newblock quant-ph/9906129.

\bibitem{bartlettsanders}
S.~D. Bartlett and B.~C. Sanders.
\newblock Requirement for quantum computation.
\newblock {\em Journal of Modern Optics}, 50:2331--2340, 2003.
\newblock quant-ph/0302125.

\bibitem{bv}
E.~Bernstein and U.~Vazirani.
\newblock Quantum complexity theory.
\newblock {\em SIAM J. Comput.}, 26(5):1411--1473, 1997.
\newblock First appeared in ACM STOC 1993.

\bibitem{bjs}
M.~Bremner, R.~Jozsa, and D.~Shepherd.
\newblock Classical simulation of commuting quantum computations implies
  collapse of the polynomial hierarchy.
\newblock {\em Proc. Roy. Soc. London}, 2010.
\newblock To appear. arXiv:1005.1407.

\bibitem{bcwz}
H.~Buhrman, R.~Cleve, R.~de Wolf, and Ch. Zalka.
\newblock Bounds for small-error and zero-error quantum algorithms.
\newblock In {\em Proc. IEEE FOCS}, pages 358--368, 1999.
\newblock cs.CC/9904019.

\bibitem{cps}
J.-Y. Cai, A.~Pavan, and D.~Sivakumar.
\newblock On the hardness of permanent.
\newblock In {\em Proc. Intl. Symp. on Theoretical Aspects of Computer Science
  (STACS)}, pages 90--99, 1999.

\bibitem{caianiello}
E.~R. Caianiello.
\newblock On quantum field theory, 1: explicit solution of {D}yson's equation
  in electrodynamics without use of {F}eynman graphs.
\newblock {\em Nuovo Cimento}, 10:1634--1652, 1953.

\bibitem{ceperley}
D.~M. Ceperley.
\newblock An overview of quantum {M}onte {C}arlo methods.
\newblock {\em Reviews in Mineralogy and Geochemistry}, 71(1):129--135, 2010.

\bibitem{cw}
R.~Cleve and J.~Watrous.
\newblock Fast parallel circuits for the quantum {F}ourier transform.
\newblock In {\em Proc. IEEE FOCS}, pages 526--536, 2000.
\newblock quant-ph/0006004.

\bibitem{cv}
K.~P. Costello and V.~H. Vu.
\newblock Concentration of random determinants and permanent estimators.
\newblock {\em SIAM J. Discrete Math}, 23(3).

\bibitem{dgp}
C.~Daskalakis, P.~W. Goldberg, and C.~H. Papadimitriou.
\newblock The complexity of computing a {N}ash equilibrium.
\newblock {\em Commun. ACM}, 52(2):89--97, 2009.
\newblock Earlier version in Proceedings of STOC'2006.

\bibitem{egorychev}
G.~P. Egorychev.
\newblock Proof of the van der {W}aerden conjecture for permanents.
\newblock {\em Sibirsk. Mat. Zh.}, 22(6):65--71, 1981.
\newblock English translation in Siberian Math. J. 22, pp. 854-859, 1981.

\bibitem{falikman}
D.~I. Falikman.
\newblock Proof of the van der {W}aerden conjecture regarding the permanent of
  a doubly stochastic matrix.
\newblock {\em Mat. Zametki}, 29:931--938, 1981.
\newblock English translation in Math. Notes 29, pp. 475-479, 1981.

\bibitem{feffumans}
B.~Fefferman and C.~Umans.
\newblock Pseudorandom generators and the {BQP} vs. {PH} problem.
\newblock http://www.cs.caltech.edu/\symbol{126}umans/papers/FU10.pdf, 2010.

\bibitem{fghp}
S.~Fenner, F.~Green, S.~Homer, and R.~Pruim.
\newblock Determining acceptance possibility for a quantum computation is hard
  for the polynomial hierarchy.
\newblock {\em Proc. Roy. Soc. London}, A455:3953--3966, 1999.
\newblock quant-ph/9812056.

\bibitem{feynman:qc}
R.~P. Feynman.
\newblock Simulating physics with computers.
\newblock {\em Int. J. Theoretical Physics}, 21(6-7):467--488, 1982.

\bibitem{glrsw}
P.~Gemmell, R.~Lipton, R.~Rubinfeld, M.~Sudan, and A.~Wigderson.
\newblock Self-testing/correcting for polynomials and for approximate
  functions.
\newblock In {\em Proc. ACM STOC}, pages 32--42, 1991.

\bibitem{gemmellsudan}
P.~Gemmell and M.~Sudan.
\newblock Highly resilient correctors for polynomials.
\newblock {\em Inform. Proc. Lett.}, 43:169--174, 1992.

\bibitem{girko}
V.~L. Girko.
\newblock A refinement of the {C}entral {L}imit {T}heorem for random
  determinants.
\newblock {\em Teor. Veroyatnost. i Primenen}, 42:63--73, 1997.
\newblock Translation in Theory Probab. Appl 42 (1998), 121-129.

\bibitem{godsilgutman}
C.~D. Godsil and I.~Gutman.
\newblock On the matching polynomial of a graph.
\newblock In {\em Algebraic Methods in Graph Theory I-II}, pages 67--83. North
  Holland, 1981.

\bibitem{gurvits:alg}
L.~Gurvits.
\newblock On the complexity of mixed discriminants and related problems.
\newblock In {\em Mathematical Foundations of Computer Science}, pages
  447--458, 2005.

\bibitem{hht}
Y.~Han, L.~Hemaspaandra, and T.~Thierauf.
\newblock Threshold computation and cryptographic security.
\newblock {\em SIAM J. Comput.}, 26(1):59--78, 1997.

\bibitem{hom}
C.~K. Hong, Z.~Y. Ou, and L.~Mandel.
\newblock Measurement of subpicosecond time intervals between two photons by
  interference.
\newblock {\em Phys. Rev. Lett.}, 59(18):2044--2046, 1987.

\bibitem{jsv}
M.~Jerrum, A.~Sinclair, and E.~Vigoda.
\newblock A polynomial-time approximation algorithm for the permanent of a
  matrix with non-negative entries.
\newblock {\em J. ACM}, 51(4):671--697, 2004.
\newblock Earlier version in STOC'2001.

\bibitem{jordan}
S.~P. Jordan.
\newblock Permutational quantum computing.
\newblock {\em Quantum Information and Computation}, 10(5/6):470--497, 2010.
\newblock arXiv:0906.2508.

\bibitem{khot}
S.~Khot.
\newblock On the {U}nique {G}ames {C}onjecture.
\newblock In {\em Proc. IEEE Conference on Computational Complexity}, pages
  99--121, 2010.

\bibitem{knill:matchgate}
E.~Knill.
\newblock Fermionic linear optics and matchgates.
\newblock quant-ph/0108033, 2001.

\bibitem{knilaflamme}
E.~Knill and R.~Laflamme.
\newblock Power of one bit of quantum information.
\newblock {\em Phys. Rev. Lett.}, 81(25):5672--5675, 1998.
\newblock quant-ph/9802037.

\bibitem{klm}
E.~Knill, R.~Laflamme, and G.~J. Milburn.
\newblock A scheme for efficient quantum computation with linear optics.
\newblock {\em Nature}, 409:46--52, 2001.
\newblock See also quant-ph/0006088.

\bibitem{klz}
E.~Knill, R.~Laflamme, and W.~Zurek.
\newblock Resilient quantum computation.
\newblock {\em Science}, 279:342--345, 1998.
\newblock quant-ph/9702058.

\bibitem{lange}
K.~Lange.
\newblock {\em Applied Probability}.
\newblock Springer, 2003.

\bibitem{limbeige}
Y.~L. Lim and A.~Beige.
\newblock Generalized {H}ong-{O}u-{M}andel experiments with bosons and
  fermions.
\newblock {\em New J. Phys.}, 7(155), 2005.
\newblock quant-ph/0505034.

\bibitem{lipton}
R.~J. Lipton.
\newblock New directions in testing.
\newblock In {\em Distributed Computing and Cryptography}, pages 191--202. AMS,
  1991.

\bibitem{lounis}
B.~Lounis and M.~Orrit.
\newblock Single-photon sources.
\newblock {\em Reports on Progress in Physics}, 68(5), 2005.

\bibitem{mastrodon}
C.~Mastrodonato and R.~Tumulka.
\newblock Elementary proof for asymptotics of large {H}aar-distributed unitary
  matrices.
\newblock {\em Letters in Mathematical Physics}, 82(1):51--59, 2007.
\newblock arXiv:0705.3146.

\bibitem{nc}
M.~Nielsen and I.~Chuang.
\newblock {\em Quantum Computation and Quantum Information}.
\newblock Cambridge University Press, 2000.

\bibitem{paturi}
R.~Paturi.
\newblock On the degree of polynomials that approximate symmetric {B}oolean
  functions.
\newblock In {\em Proc. ACM STOC}, pages 468--474, 1992.

\bibitem{petzreffy}
D.~Petz and J.~R\'{e}ffy.
\newblock On asymptotics of large {H}aar distributed unitary matrices.
\newblock {\em Periodica Mathematica Hungarica}, 49(1):103--117, 2004.
\newblock arXiv:math/0310338.

\bibitem{petzreffy2}
D.~Petz and J.~R\'{e}ffy.
\newblock Large deviation theorem for empirical eigenvalue distribution of
  truncated {H}aar unitary matrices.
\newblock {\em Prob. Theory and Related Fields}, 133(2):175--189, 2005.
\newblock arXiv:math/0409552.

\bibitem{rzbb}
M.~Reck, A.~Zeilinger, H.~J. Bernstein, and P.~Bertani.
\newblock Experimental realization of any discrete unitary operator.
\newblock {\em Phys. Rev. Lett.}, 73(1):58--61, 1994.

\bibitem{reffy}
J.~R\'{e}ffy.
\newblock {\em Asymptotics of random unitaries}.
\newblock PhD thesis, Budapest University of Technology and Economics, 2005.
\newblock http://www.math.bme.hu/\symbol{126}reffyj/disszer.pdf.

\bibitem{rudolph}
T.~Rudolph.
\newblock A simple encoding of a quantum circuit amplitude as a matrix
  permanent.
\newblock arXiv:0909.3005, 2009.

\bibitem{scheel}
S.~Scheel.
\newblock Permanents in linear optical networks.
\newblock quant-ph/0406127, 2004.

\bibitem{shepherd}
D.~Shepherd and M.~J. Bremner.
\newblock Temporally unstructured quantum computation.
\newblock {\em Proc. Roy. Soc. London}, A465(2105):1413--1439, 2009.
\newblock arXiv:0809.0847.

\bibitem{shi:gate}
Y.~Shi.
\newblock Both {T}offoli and controlled-{NOT} need little help to do universal
  quantum computation.
\newblock {\em Quantum Information and Computation}, 3(1):84--92, 2002.
\newblock quant-ph/0205115.

\bibitem{shor}
P.~W. Shor.
\newblock Polynomial-time algorithms for prime factorization and discrete
  logarithms on a quantum computer.
\newblock {\em SIAM J. Comput.}, 26(5):1484--1509, 1997.
\newblock Earlier version in IEEE FOCS 1994. quant-ph/9508027.

\bibitem{simon}
D.~Simon.
\newblock On the power of quantum computation.
\newblock In {\em Proc. IEEE FOCS}, pages 116--123, 1994.

\bibitem{hastad:book}
J.~H\aa stad.
\newblock {\em Computational Limitations for Small Depth Circuits}.
\newblock MIT Press, 1987.

\bibitem{stockmeyer}
L.~J. Stockmeyer.
\newblock The complexity of approximate counting.
\newblock In {\em Proc. ACM STOC}, pages 118--126, 1983.

\bibitem{sudan}
M.~Sudan.
\newblock Maximum likelihood decoding of {R}eed-{S}olomon codes.
\newblock In {\em Proc. IEEE FOCS}, pages 164--172, 1996.

\bibitem{taovu:perm}
T.~Tao and V.~Vu.
\newblock On the permanent of random {B}ernoulli matrices.
\newblock {\em Advances in Mathematics}, 220(3):657--669, 2009.
\newblock arXiv:0804.2362.

\bibitem{td:fermion}
B.~M. Terhal and D.~P. DiVincenzo.
\newblock Classical simulation of noninteracting-fermion quantum circuits.
\newblock {\em Phys. Rev. A}, 65(032325), 2002.
\newblock quant-ph/0108010.

\bibitem{td}
B.~M. Terhal and D.~P. DiVincenzo.
\newblock Adaptive quantum computation, constant-depth circuits and
  {A}rthur-{M}erlin games.
\newblock {\em Quantum Information and Computation}, 4(2):134--145, 2004.
\newblock quant-ph/0205133.

\bibitem{toda}
S.~Toda.
\newblock {PP} is as hard as the polynomial-time hierarchy.
\newblock {\em SIAM J. Comput.}, 20(5):865--877, 1991.

\bibitem{troyanskytishby}
L.~Troyansky and N.~Tishby.
\newblock Permanent uncertainty: On the quantum evaluation of the determinant
  and the permanent of a matrix.
\newblock In {\em Proceedings of PhysComp}, 1996.

\bibitem{valiant}
L.~G. Valiant.
\newblock The complexity of computing the permanent.
\newblock {\em Theoretical Comput. Sci.}, 8(2):189--201, 1979.

\bibitem{valiant:qc}
L.~G. Valiant.
\newblock Quantum circuits that can be simulated classically in polynomial
  time.
\newblock {\em SIAM J. Comput.}, 31(4):1229--1254, 2002.
\newblock Earlier version in STOC'2001.

\bibitem{vsbysc}
L.~Vandersypen, M.~Steffen, G.~Breyta, C.~S. Yannoni, M.~H. Sherwood, and I.~L.
  Chuang.
\newblock Experimental realization of {S}hor's quantum factoring algorithm
  using nuclear magnetic resonance.
\newblock {\em Nature}, 414:883--887, 2001.
\newblock quant-ph/0112176.

\end{thebibliography}

\section{Appendix: Positive Results for Simulation of Linear
Optics\label{ALGS}}

In this appendix, we present two results of Gurvits, both of which give
surprising classical polynomial-time algorithms for computing certain
properties of linear-optical networks. \ The first result, which appeared in
\cite{gurvits:alg},\ gives an efficient randomized algorithm to approximate
the permanent of a (sub)unitary matrix with $\pm1/\operatorname*{poly}\left(
n\right)  $ additive error, and as a consequence, to estimate final amplitudes
such as $\left\langle 1_{n}\right\vert \varphi\left(  U\right)  \left\vert
1_{n}\right\rangle =\operatorname*{Per}\left(  U_{n,n}\right)  $ with
$\pm1/\operatorname*{poly}\left(  n\right)  $ additive error, given any
linear-optical network $U$. \ This ability is of limited use in practice,
since $\left\langle 1_{n}\right\vert \varphi\left(  U\right)  \left\vert
1_{n}\right\rangle $\ will be exponentially small for most choices of $U$ (in
which case, $0$ is also a good additive estimate!). \ On the other hand, we
certainly do not know how to do anything similar for general, qubit-based
quantum circuits---indeed, if we could, then $\mathsf{BQP}$\ would equal
$\mathsf{BPP}$.

Gurvits's second result (unpublished) gives a way to compute the marginal
distribution over photon numbers for any $k$ modes, deterministically and in
$n^{O\left(  k\right)  }$\ time. \ Again, this is perfectly consistent with
our hardness conjectures, since if one wanted to \textit{sample} from the
distribution over photon numbers (or compute a final probability such as
$\left\vert \left\langle 1_{n}\right\vert \varphi\left(  U\right)  \left\vert
1_{n}\right\rangle \right\vert ^{2}$), one would need to take $k\geq n$.

To prove Gurvits's first result, our starting point will be the following
identity of Ryser, which is also used for computing the permanent of an
$n\times n$\ matrix\ in $O\left(  2^{n}n^{2}\right)  $\ time.

\begin{lemma}
[Ryser's Formula]\label{ryser}For all $V\in\mathbb{C}^{n\times n}$,%
\[
\operatorname*{Per}\left(  V\right)  =\operatorname*{E}_{x_{1},\ldots,x_{n}%
\in\left\{  -1,1\right\}  }\left[  x_{1}\cdots x_{n}%
{\displaystyle\prod\limits_{i=1}^{n}}
\left(  v_{i1}x_{1}+\cdots+v_{in}x_{n}\right)  \right]  .
\]

\end{lemma}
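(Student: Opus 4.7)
The plan is to prove Ryser's formula by a direct expansion of the product of linear forms, followed by taking the expectation over the signs $x_1,\ldots,x_n$ and arguing that only the \emph{permutation} terms survive. No clever tricks or inclusion-exclusion are needed here; this is essentially a calculation guided by the parity of exponents.

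First, I would expand the product $\prod_{i=1}^n(v_{i1}x_1+\cdots+v_{in}x_n)$ by choosing, for each $i\in[n]$, one term $v_{i,\sigma(i)}x_{\sigma(i)}$ from the $i$-th factor. This produces a sum indexed by \emph{arbitrary functions} $\sigma:[n]\to[n]$ (not just permutations):
\[
\prod_{i=1}^n\bigl(v_{i1}x_1+\cdots+v_{in}x_n\bigr)=\sum_{\sigma:[n]\to[n]}\Bigl(\prod_{i=1}^n v_{i,\sigma(i)}\Bigr)\,x_1^{a_1(\sigma)}\cdots x_n^{a_n(\sigma)},
\]
where $a_j(\sigma):=|\{i:\sigma(i)=j\}|$ counts the multiplicity of $j$ in the image of $\sigma$. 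Multiplying by $x_1\cdots x_n$ simply increases each exponent by $1$, turning $a_j(\sigma)$ into $a_j(\sigma)+1$.

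Next, I would take the expectation over independent uniform signs $x_j\in\{-1,1\}$. Since $\operatorname{E}[x_j^k]$ equals $1$ when $k$ is even and $0$ when $k$ is odd, the monomial corresponding to $\sigma$ survives if and only if $a_j(\sigma)+1$ is even for every $j$, i.e.\ every $a_j(\sigma)$ is odd. But $a_j(\sigma)\geq 0$ and $\sum_{j=1}^n a_j(\sigma)=n$, so the only way to have all $n$ values odd and non-negative with sum $n$ is $a_1(\sigma)=\cdots=a_n(\sigma)=1$; in other words, $\sigma$ must be a permutation.

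Finally, for each permutation $\sigma\in S_n$ the surviving contribution is exactly $\prod_{i=1}^n v_{i,\sigma(i)}$, so the expectation collapses to $\sum_{\sigma\in S_n}\prod_i v_{i,\sigma(i)}=\operatorname{Per}(V)$, which is the desired identity. There really is no ``hard step'' here: the only thing to watch is the parity argument showing that non-permutation $\sigma$'s contribute zero, and the fact that the odd-positive decomposition of $n$ into exactly $n$ parts forces every part to be $1$. Everything else is routine bookkeeping.
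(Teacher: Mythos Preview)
Your proof is correct and follows essentially the same approach as the paper: expand the product as a polynomial, observe that after multiplication by $x_1\cdots x_n$ only the monomial $x_1\cdots x_n$ of the product survives the expectation, and identify its coefficient as $\operatorname*{Per}(V)$. Your exposition is in fact a bit more explicit than the paper's---you spell out the parity/counting argument forcing all $a_j(\sigma)=1$, whereas the paper simply asserts that every other monomial is ``cancelled out by the multiplier of $x_1\cdots x_n$''---but the underlying idea is identical.
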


\begin{proof}
Let $p\left(  x_{1},\ldots,x_{n}\right)  $\ be the degree-$n$ polynomial that
corresponds to the product in the above expectation. \ Then the only monomial
of $p$ that can contribute to the expectation is $x_{1}\cdots x_{n}$, since
all the other monomials will be cancelled out by the multiplier of
$x_{1}\cdots x_{n}$\ (which is equally likely to be $1$\ or $-1$).
\ Furthermore, as in Lemma \ref{perlem1}, the coefficient of $x_{1}\cdots
x_{n}$\ is just%
\[
\sum_{\sigma\in S_{n}}\prod_{i=1}^{n}v_{i,\sigma\left(  i\right)
}=\operatorname*{Per}\left(  V\right)  .
\]
Therefore the expectation equals%
\[
\operatorname*{Per}\left(  V\right)  \operatorname*{E}_{x_{1},\ldots,x_{n}%
\in\left\{  -1,1\right\}  }\left[  x_{1}^{2}\cdots x_{n}^{2}\right]
=\operatorname*{Per}\left(  V\right)  .
\]
(Indeed, all we needed about the random variables $x_{1},\ldots,x_{n}$\ was
that they were independent and had mean $0$ and variance $1$.)
\end{proof}

Given $x=\left(  x_{1},\ldots,x_{n}\right)  \in\left\{  -1,1\right\}  ^{n}$,
let%
\[
\operatorname*{Rys}\nolimits_{x}\left(  V\right)  :=x_{1}\cdots x_{n}%
{\displaystyle\prod\limits_{i=1}^{n}}
\left(  v_{i1}x_{1}+\cdots+v_{in}x_{n}\right)  .
\]
Then Lemma \ref{ryser}\ says that $\operatorname*{Rys}\nolimits_{x}\left(
V\right)  $\ is an \textit{unbiased estimator} for the permanent, in the sense
that $\operatorname*{E}_{x}\left[  \operatorname*{Rys}\nolimits_{x}\left(
V\right)  \right]  =\operatorname*{Per}\left(  V\right)  $. \ Gurvits
\cite{gurvits:alg}\ observed the following key further fact about
$\operatorname*{Rys}\nolimits_{x}\left(  V\right)  $.

\begin{lemma}
\label{gurvitslem}$\left\vert \operatorname*{Rys}\nolimits_{x}\left(
V\right)  \right\vert \leq\left\Vert V\right\Vert ^{n}$ for all $x\in\left\{
-1,1\right\}  ^{n}$\ and all $V$.
\end{lemma}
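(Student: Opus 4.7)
The plan is straightforward once one observes what the quantity $\operatorname*{Rys}\nolimits_{x}(V)$ really is. Since each $x_{i}\in\{-1,1\}$, the prefactor $x_{1}\cdots x_{n}$ has absolute value $1$, and each inner factor $v_{i1}x_{1}+\cdots+v_{in}x_{n}$ is precisely the $i^{\text{th}}$ coordinate of the vector $Vx\in\mathbb{C}^{n}$. Thus
\[
\left\vert \operatorname*{Rys}\nolimits_{x}(V)\right\vert =\prod_{i=1}^{n}\left\vert (Vx)_{i}\right\vert ,
\]
which reduces the problem to bounding a product of coordinates of $Vx$ in terms of $\|V\|$ and $n$.

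The first step is to apply the AM-GM inequality to the nonnegative numbers $|(Vx)_{i}|^{2}$:
\[
\prod_{i=1}^{n}\left\vert (Vx)_{i}\right\vert ^{2}\leq\left(\frac{1}{n}\sum_{i=1}^{n}\left\vert (Vx)_{i}\right\vert ^{2}\right)^{n}=\left(\frac{\|Vx\|^{2}}{n}\right)^{n},
\]
so taking square roots gives $\prod_{i}|(Vx)_{i}|\leq\|Vx\|^{n}/n^{n/2}$. The second step is to bound $\|Vx\|$ using the operator-norm definition $\|V\|=\max_{\|v\|=1}\|Vv\|$: since $x\in\{-1,1\}^{n}$ satisfies $\|x\|=\sqrt{n}$, we get $\|Vx\|\leq\|V\|\cdot\sqrt{n}$. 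Substituting,
\[
\left\vert \operatorname*{Rys}\nolimits_{x}(V)\right\vert \leq\frac{(\|V\|\sqrt{n})^{n}}{n^{n/2}}=\|V\|^{n},
\]
which is exactly the claimed inequality.

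There is no real obstacle here; the whole argument is essentially just AM-GM together with the definition of the operator norm. The only mildly nonobvious step is recognizing the product $\prod_{i}(v_{i1}x_{1}+\cdots+v_{in}x_{n})$ as $\prod_{i}(Vx)_{i}$, after which the bound drops out in two lines. This is presumably why Gurvits's unbiased estimator is so useful: not only does $\operatorname*{Rys}\nolimits_{x}(V)$ have the right mean (by Lemma \ref{ryser}), but its magnitude is always controlled by $\|V\|^{n}$, which is $O(1)$ whenever $V$ is a submatrix of a unitary. Hoeffding-type concentration will then yield the promised additive $\pm\varepsilon\|V\|^{n}$ approximation to $\operatorname*{Per}(V)$ in $O(n^{2}/\varepsilon^{2})$ samples.
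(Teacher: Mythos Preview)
Your proof is correct and follows essentially the same route as the paper's: recognize $\operatorname*{Rys}_{x}(V)$ as the product of the coordinates of $Vx$, then combine AM--GM with the bound $\|Vx\|\leq\|V\|\sqrt{n}$. The only cosmetic difference is that you apply AM--GM to the squares $|(Vx)_i|^2$ (yielding $\|Vx\|^2/n$ directly), whereas the paper applies AM--GM to $|(Vx)_i|$ and then invokes Cauchy--Schwarz to pass from $\sum_i|(Vx)_i|$ to $\|Vx\|$; your version is if anything one step shorter.
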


\begin{proof}
Given a vector $x=\left(  x_{1},\ldots,x_{n}\right)  $ all of whose entries
are $1$\ or $-1$, let $y=Vx$, and let%
\[
y_{i}:=v_{i1}x_{1}+\cdots+v_{in}x_{n}%
\]
be the $i^{th}$\ component of $y$. \ Then $\left\Vert x\right\Vert =\sqrt{n}$,
so $\left\Vert y\right\Vert \leq\left\Vert V\right\Vert \left\Vert
x\right\Vert =\left\Vert V\right\Vert \sqrt{n}$. \ Hence%
\begin{align*}
\left\vert \operatorname*{Rys}\nolimits_{x}\left(  V\right)  \right\vert  &
=\left\vert x_{1}\cdots x_{n}y_{1}\cdots y_{n}\right\vert \\
&  =\left\vert y_{1}\cdots y_{n}\right\vert \\
&  \leq\left(  \frac{\left\vert y_{1}\right\vert +\cdots+\left\vert
y_{n}\right\vert }{n}\right)  ^{n}\\
&  \leq\left(  \frac{\left\Vert y\right\Vert }{\sqrt{n}}\right)  ^{n}\\
&  \leq\left\Vert V\right\Vert ^{n},
\end{align*}
where the third line follows from the arithmetic-geometric mean inequality,
and the fourth line follows from Cauchy-Schwarz.
\end{proof}

An immediate consequence of Lemma \ref{gurvitslem}\ is the following:

\begin{corollary}
\label{percor2}$\left\vert \operatorname*{Per}\left(  V\right)  \right\vert
\leq\left\Vert V\right\Vert ^{n}$ for all $V$.
\end{corollary}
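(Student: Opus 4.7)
The plan is to derive this bound as an immediate consequence of the two preceding results, Lemma \ref{ryser} (Ryser's formula) and Lemma \ref{gurvitslem}. By Ryser's formula, the permanent is exactly the expectation of the random variable $\operatorname*{Rys}\nolimits_{x}\left(  V\right)  $ over a uniformly random sign vector $x \in \{-1,1\}^n$. Lemma \ref{gurvitslem} then provides a pointwise absolute bound $\left\vert \operatorname*{Rys}\nolimits_{x}\left(  V\right)  \right\vert \leq \left\Vert V\right\Vert^{n}$ that holds for every single $x$.

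Given these two facts, the entire argument collapses to a one-line application of the triangle inequality for expectations (Jensen's inequality): the magnitude of an average is at most the average of the magnitudes, so
\[
\left\vert \operatorname*{Per}\left(  V\right)  \right\vert = \left\vert \operatorname*{E}_{x}\left[  \operatorname*{Rys}\nolimits_{x}\left(  V\right)  \right]  \right\vert \leq \operatorname*{E}_{x}\left[  \left\vert \operatorname*{Rys}\nolimits_{x}\left(  V\right)  \right\vert \right]  \leq \left\Vert V\right\Vert^{n}.
\]
There is no real obstacle here: all the substantive work has already been done in Lemma \ref{gurvitslem}, whose proof uses the arithmetic--geometric mean inequality followed by Cauchy--Schwarz to control $\prod_i |y_i|$ by the operator norm $\|V\|$. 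Consequently the write-up for this corollary need only be a sentence or two, as is appropriate for its role as a direct packaging of the preceding lemma.
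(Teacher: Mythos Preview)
Your proposal is correct and matches the paper's approach exactly: the paper simply states that Corollary~\ref{percor2} is ``an immediate consequence of Lemma~\ref{gurvitslem}'' without writing out a proof, and your one-line triangle-inequality argument combining Lemma~\ref{ryser} with Lemma~\ref{gurvitslem} is precisely the intended derivation.
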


Another consequence is a fast additive approximation algorithm for
$\operatorname*{Per}\left(  V\right)  $, which works whenever $\left\Vert
V\right\Vert $\ is small.

\begin{theorem}
[Gurvits's Permanent Approximation Algorithm \cite{gurvits:alg}]%
\label{gurvitsthm}There exists a randomized (classical) algorithm that takes a
matrix $V\in\mathbb{C}^{n\times n}$ as input, runs in $O\left(  n^{2}%
/\varepsilon^{2}\right)  $\ time, and with high probability, approximates
$\operatorname*{Per}\left(  V\right)  $\ to within an additive error
$\pm\varepsilon\left\Vert V\right\Vert ^{n}$.
\end{theorem}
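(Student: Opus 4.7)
The plan is to use Ryser's formula (Lemma \ref{ryser}) as a Monte Carlo estimator, combined with the norm bound from Lemma \ref{gurvitslem}. Specifically, the algorithm will draw i.i.d.\ uniform samples $x^{(1)}, \ldots, x^{(T)} \in \{-1, 1\}^n$, compute the quantities $\operatorname{Rys}_{x^{(t)}}(V)$ for each, and output the empirical mean
\[
\widehat{P} := \frac{1}{T} \sum_{t=1}^{T} \operatorname{Rys}_{x^{(t)}}(V)
\]
as the approximation to $\operatorname{Per}(V)$. By Lemma \ref{ryser}, each $\operatorname{Rys}_{x^{(t)}}(V)$ is an unbiased estimator, so $\mathbb{E}[\widehat{P}] = \operatorname{Per}(V)$.

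The next step is to bound the deviation. Lemma \ref{gurvitslem} tells us that $|\operatorname{Rys}_x(V)| \leq \|V\|^n$ pointwise, so the samples are bounded random variables taking values in a disk of radius $\|V\|^n$ in $\mathbb{C}$. Writing each $\operatorname{Rys}_{x^{(t)}}(V)$ as $A_t + i B_t$ with $|A_t|, |B_t| \leq \|V\|^n$, I would apply Hoeffding's inequality separately to the real and imaginary parts: for each, choosing $T = O(1/\varepsilon^2)$ with a suitable constant guarantees that the empirical mean is within $\pm (\varepsilon/\sqrt{2}) \|V\|^n$ of its expectation with probability at least (say) $1 - 1/6$. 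A union bound then gives $|\widehat{P} - \operatorname{Per}(V)| \leq \varepsilon \|V\|^n$ with probability at least $2/3$.

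For the runtime, each sample $\operatorname{Rys}_{x^{(t)}}(V)$ requires computing the vector $Vx^{(t)} \in \mathbb{C}^n$ (cost $O(n^2)$) and then taking the product of its $n$ entries together with $x_1^{(t)} \cdots x_n^{(t)}$ (cost $O(n)$), for a total of $O(n^2)$ per sample and $O(n^2/\varepsilon^2)$ overall.

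There is no genuine obstacle here, but the one subtlety worth flagging is that $\operatorname{Rys}_x(V)$ is complex-valued, so one cannot apply a scalar Hoeffding bound directly; handling the real and imaginary parts separately (or invoking a vector-valued version) is the only real care needed. Everything else reduces to combining the unbiasedness identity of Lemma \ref{ryser} with the pointwise bound of Lemma \ref{gurvitslem}, which are both already established.
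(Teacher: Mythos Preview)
Your proposal is correct and follows essentially the same approach as the paper: sample $T=O(1/\varepsilon^2)$ vectors $x^{(t)}$ uniformly from $\{-1,1\}^n$, output the empirical mean of $\operatorname{Rys}_{x^{(t)}}(V)$, invoke Lemma~\ref{ryser} for unbiasedness and Lemma~\ref{gurvitslem} for the pointwise bound, and apply a Chernoff/Hoeffding-type inequality. The paper simply says ``standard Chernoff bound'' without spelling out the real/imaginary-part decomposition that you flagged, so your write-up is in fact slightly more careful on that point.
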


\begin{proof}
By Lemma \ref{ryser},%
\[
\operatorname*{Per}\left(  V\right)  =\operatorname*{E}_{x\in\left\{
-1,1\right\}  ^{n}}\left[  \operatorname*{Rys}\nolimits_{x}\left(  V\right)
\right]  .
\]
Furthermore, we know from Lemma \ref{gurvitslem} that $\left\vert
\operatorname*{Rys}\nolimits_{x}\left(  V\right)  \right\vert \leq\left\Vert
V\right\Vert ^{n}$ for every $x$. \ So our approximation algorithm is simply
the following: for $T=O\left(  1/\varepsilon^{2}\right)  $, first choose
$T$\ vectors $x\left(  1\right)  ,\ldots,x\left(  T\right)  $\ uniformly at
random from $\left\{  -1,1\right\}  ^{n}$. \ Then output the empirical mean%
\[
\widetilde{p}:=\frac{1}{T}\sum_{t=1}^{T}\operatorname*{Rys}\nolimits_{x\left(
t\right)  }\left(  V\right)
\]
as our estimate of $\operatorname*{Per}\left(  V\right)  $. \ Since
$\operatorname*{Rys}\nolimits_{x}\left(  V\right)  $\ can be computed in
$O\left(  n^{2}\right)  $ time, this algorithm takes $O\left(  n^{2}%
/\varepsilon^{2}\right)  $ time. \ The failure probability,%
\[
\Pr_{x\left(  1\right)  ,\ldots,x\left(  T\right)  }\left[  \left\vert
\widetilde{p}-\operatorname*{Per}\left(  V\right)  \right\vert >\varepsilon
\left\Vert V\right\Vert ^{n}\right]  ,
\]
can be upper-bounded using a standard Chernoff bound.
\end{proof}

In particular, Theorem \ref{gurvitsthm} implies that, given an $n\times n$
unitary matrix $U$, one can approximate $\operatorname*{Per}\left(  U\right)
$\ to within an additive error $\pm\varepsilon$\ (with high probability) in
$\operatorname*{poly}\left(  n,1/\varepsilon\right)  $\ time.

We now sketch a proof of Gurvits's second result, giving an $n^{O\left(
k\right)  }$-time algorithm to compute the marginal distribution over any $k$
photon modes. \ We will assume the following lemma, whose proof will appear in
a forthcoming paper of Gurvits.

\begin{lemma}
[Gurvits]\label{lowranklem}Let $V\in\mathbb{C}^{n\times n}$\ be a matrix of
rank $k$. \ Then $\operatorname*{Per}\left(  V+I\right)  $\ can be computed
exactly in $n^{O\left(  k\right)  }$\ time.
\end{lemma}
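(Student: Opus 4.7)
The plan is to compute $\operatorname{Per}(V+I)$ by reducing it, via a chain of classical identities, to the extraction of a single coefficient of a rational function whose denominator has only $n^{O(k)}$ monomials.

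\textit{Step 1 (Fixed-point expansion).} I would begin from the identity $\operatorname{Per}(V+I) = \sum_{S \subseteq [n]} \operatorname{Per}(V[S])$, where $V[S]$ denotes the principal submatrix of $V$ on rows and columns indexed by $S$ (with the convention $\operatorname{Per}(V[\emptyset]) = 1$). This follows by writing $(V+I)_{i,\sigma(i)} = V_{i,\sigma(i)} + \delta_{i,\sigma(i)}$ and distributing the product over $\sigma$ according to the set $T$ of rows on which the identity term is selected; $T$ must then lie inside $\operatorname{Fix}(\sigma)$, and taking $S = [n]\setminus T$ regroups the sum.

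\textit{Step 2 (Low-rank MacMahon).} I would then factor the rank-$k$ matrix as $V = AB$ with $A \in \mathbb{C}^{n\times k}$ and $B \in \mathbb{C}^{k\times n}$. Setting $T = \operatorname{diag}(t_1,\ldots,t_n)$ and $D(t) := \det(I_k - BTA)$, MacMahon's master theorem gives $\operatorname{Per}(V[S]) = [t^{1_S}]\, 1/\det(I_{|S|} - T_S V[S])$, and Sylvester's identity turns the denominator into $\det(I_k - B T^{(S)} A)$, where $T^{(S)}$ agrees with $T$ on $S$ and vanishes off $S$. Since setting variables to $0$ does not affect the coefficient of a monomial supported on $S$, I obtain the uniform expression
\[
\operatorname{Per}(V[S]) \;=\; \bigl[t^{1_S}\bigr]\, \frac{1}{D(t)}.
\]
The key structural fact is that each entry of $BTA$ is linear in $t$ and $\det$ is a signed sum of $k$-fold products, so $D(t)$ is a polynomial in $t_1,\ldots,t_n$ of total degree at most $k$, with at most $\binom{n+k}{k} = n^{O(k)}$ monomials, and can be written down explicitly in $n^{O(k)}$ time.

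\textit{Step 3 (Sum as one coefficient).} Summing over $S$ and using the identity $\prod_{i=1}^n (1+t_i) = \sum_{S\subseteq[n]} \prod_{i\in S} t_i$, I can collapse the $2^n$ coefficient extractions into a single one:
\[
\operatorname{Per}(V+I) \;=\; \sum_{S\subseteq[n]} \bigl[t^{1_S}\bigr]\,\frac{1}{D(t)} \;=\; \Bigl[\textstyle\prod_{i=1}^n t_i\Bigr]\, \frac{\prod_{i=1}^n(1+t_i)}{D(t)}.
\]
This reduces the lemma to extracting a specific coefficient from a rational function whose denominator is a sparse, low-degree polynomial.

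\textit{Step 4 (Efficient extraction).} The main obstacle is that the numerator $\prod_i (1+t_i)$ has $2^n$ terms, so multiplying it out explicitly would be wasteful. The heart of Gurvits's argument, and presumably what the forthcoming paper fleshes out, should be a dynamic programming procedure that exploits the low-degree structure of $D$: expanding $1/D(t) = \sum_{m\ge 0}(1-D(t))^m$, only $m \le n$ summands can contribute to degree-$n$ monomials, and because $1-D(t)$ has only $n^{O(k)}$ monomials of positive degree, one can propagate a bookkeeping polynomial representing the relevant squarefree partial sums, truncated at each step to the $n^{O(k)}$ effective degrees of freedom allowed by the low-rank structure. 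Making this truncation rigorous---showing that at every stage only $n^{O(k)}$ state information is needed, rather than the full $2^n$ squarefree coefficients---is the real technical work, and is where I expect the argument to require the most care.
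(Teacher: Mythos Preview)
The paper does not prove this lemma; it explicitly defers the proof to ``a forthcoming paper of Gurvits.''  So there is nothing to compare your attempt against, and I can only assess it on its own.

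Your Steps 1--3 are correct and standard: the principal-minor expansion of $\operatorname{Per}(V+I)$, MacMahon's master theorem combined with Sylvester's identity to get $\operatorname{Per}(V[S])=[t^{1_S}]\,1/D(t)$ with $D(t)=\det(I_k-BTA)$, and the collapse to a single coefficient of $\prod_i(1+t_i)/D(t)$.  But Step 4 is a genuine gap, and you acknowledge as much.  The difficulty is real: when you expand $1/D(t)=\sum_{m\ge 0}(1-D(t))^m$, each factor $1-D(t)$ has $n^{O(k)}$ monomials, but after multiplying $m\le n$ of them together and reducing modulo $t_i^2$ (since you only care about squarefree parts), there is no evident reason the intermediate state has only $n^{O(k)}$ relevant coordinates rather than $2^n$.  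Your phrase ``truncated at each step to the $n^{O(k)}$ effective degrees of freedom allowed by the low-rank structure'' is exactly the missing idea, and it is not clear how to supply it along this route.

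A cleaner line that avoids the rational function entirely: use the Cauchy--Binet identity for permanents.  Writing $V=AB$ with $A\in\mathbb{C}^{n\times k}$, $B\in\mathbb{C}^{k\times n}$, one has for each $S$
\[
\operatorname{Per}(V[S])=\operatorname{Per}(A_S B^S)=\sum_{|\alpha|=|S|}\frac{1}{\alpha!}\operatorname{Per}\bigl((A_S)_\alpha\bigr)\operatorname{Per}\bigl((B^S)_\alpha\bigr),
\]
where $(A_S)_\alpha$ repeats column $\ell$ of $A_S$ with multiplicity $\alpha_\ell$.  Since $\operatorname{Per}((A_S)_\alpha)=\alpha!\,[y^\alpha]\prod_{i\in S}(A_i\cdot y)$ and similarly for $B$, summing over $S$ gives
\[
\operatorname{Per}(V+I)=\sum_{\alpha}\alpha!\,[y^\alpha z^\alpha]\;F(y,z),\qquad F(y,z):=\prod_{i=1}^n\Bigl(1+(A_i\cdot y)(B_{\cdot,i}\cdot z)\Bigr).
\]
Now $F$ is a polynomial in only $2k$ variables, of degree at most $2n$, hence with at most $\binom{2n+2k}{2k}=n^{O(k)}$ monomials; it can be built by multiplying in the $n$ factors one at a time in $n^{O(k)}$ total work, after which the diagonal sum is read off directly.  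This is essentially Barvinok's approach to low-rank permanents, and it sidesteps your Step 4 entirely.
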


We now show how to apply Lemma \ref{lowranklem} to the setting of linear optics.

\begin{theorem}
[Gurvits's $k$-Photon Marginal Algorithm]\label{marginalthm}There exists a
deterministic classical algorithm that, given a unitary matrix $U\in
\mathbb{C}^{m\times m}$, indices $i_{1},\ldots,i_{k}\in\left[  m\right]  $,
and occupation numbers $j_{1},\ldots,j_{k}\in\left\{  0,\ldots,n\right\}  $,
computes the joint probability%
\[
\Pr_{S=\left(  s_{1},\ldots,s_{m}\right)  \sim\mathcal{D}_{U}}\left[
s_{i_{1}}=j_{1}\wedge\cdots\wedge s_{i_{k}}=j_{k}\right]
\]
in $n^{O\left(  k\right)  }$ time.
\end{theorem}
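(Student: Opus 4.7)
The plan is to express the desired marginal probability as a short sum of permanents of matrices of the form $I_n + R$ with $\mathrm{rank}(R)\le k$, and then to invoke Lemma \ref{lowranklem} on each of them. The first step is a standard discrete Fourier identity: since every $s_{i_l}$ lies in $\{0,1,\ldots,n\}$, writing $\omega := e^{2\pi i/(n+1)}$ we have $\mathbb{I}[s_{i_l}=j_l] = \tfrac{1}{n+1}\sum_{r_l=0}^{n}\omega^{r_l(s_{i_l}-j_l)}$, and taking the product over $l\in[k]$ and then the expectation over $S\sim\mathcal{D}_U$ gives
\[
p \;=\; \frac{1}{(n+1)^k}\sum_{r\in\{0,\ldots,n\}^k}\omega^{-r\cdot j}\,\chi(r),
\qquad
\chi(r) \;:=\; \mathbb{E}_{S\sim\mathcal{D}_U}\!\Bigl[\omega^{r_1 s_{i_1}+\cdots+r_k s_{i_k}}\Bigr].
\]
So it suffices to compute each of the $(n+1)^k$ characteristic-function values $\chi(r)$ in time $n^{O(k)}$.

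Next I would recognize each $\chi(r)$ as an amplitude of a modified linear-optical network. Let $D_r$ be the $m\times m$ diagonal unitary with entry $\omega^{r_l}$ in position $i_l$ for $l\in[k]$ and $1$ elsewhere. Then $\varphi(D_r)$ is diagonal in the Fock basis and multiplies $|S\rangle$ by $\omega^{r_1 s_{i_1}+\cdots+r_k s_{i_k}}$, so using that $\varphi$ is a homomorphism and applying Theorem \ref{perust} (with $1_n! = 1$),
\[
\chi(r) \;=\; \langle 1_n |\,\varphi(U)^{\dagger}\varphi(D_r)\varphi(U)\,|1_n\rangle \;=\; \langle 1_n |\,\varphi(U^\dagger D_r U)\,|1_n\rangle \;=\; \operatorname{Per}\!\bigl((U^\dagger D_r U)_{n,n}\bigr).
\]

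The final step is to exploit the near-identity structure of this matrix. Write $D_r = I_m + E_r$; since $E_r$ is diagonal and supported on at most $k$ modes, $\mathrm{rank}(E_r)\le k$, hence $\mathrm{rank}(U^\dagger E_r U)\le k$, and therefore its top-left $n\times n$ block $(U^\dagger E_r U)_{n,n}$ also has rank at most $k$. Thus
\[
(U^\dagger D_r U)_{n,n} \;=\; I_n + (U^\dagger E_r U)_{n,n}
\]
is the identity plus a rank-$\le k$ perturbation, so by Lemma \ref{lowranklem} its permanent can be computed exactly in $n^{O(k)}$ time. Summing the $(n+1)^k = n^{O(k)}$ contributions in the Fourier expansion yields $p$ deterministically in $n^{O(k)}$ total time.

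Substantively, the argument is a bookkeeping exercise on top of Lemma \ref{lowranklem}, which is imported as a black box; the only conceptually interesting move is the observation that phase-shifting $k$ modes, when conjugated by an arbitrary $U$, perturbs the identity by only a rank-$k$ matrix and so keeps Gurvits's low-rank permanent algorithm applicable. Consequently the main ``obstacle'' is just spotting this reduction to a near-identity permanent: once it is in hand, the Fourier step (which handles the discreteness of occupation numbers) and the rank-counting step are both immediate, and the complexity bound drops out automatically.
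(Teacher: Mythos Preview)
Your proof is correct and follows essentially the same route as the paper: reduce to computing $\mathbb{E}_{S\sim\mathcal{D}_U}\bigl[\prod_l c_l^{s_{i_l}}\bigr]$ for $(n+1)^k$ choices of the $c_l$'s, identify each such expectation with $\operatorname{Per}\!\bigl((U^\dagger D U)_{n,n}\bigr)$ for a diagonal $D$ differing from $I$ in at most $k$ entries, rewrite this as $\operatorname{Per}(I_n+V)$ with $\operatorname{rank}(V)\le k$, and invoke Lemma~\ref{lowranklem}. The only cosmetic difference is that the paper takes the $c_l$ to be real (so the expectation is a Fock norm $\langle q,q\rangle$, handled via Theorem~\ref{invarthm2} and Lemma~\ref{perlem1}) and then solves a generic $(n+1)^k\times(n+1)^k$ linear system, whereas you take $c_l=\omega^{r_l}$ at $(n{+}1)$st roots of unity and invert explicitly by discrete Fourier transform---a slightly cleaner bookkeeping choice, but the same argument.
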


\begin{proof}
By symmetry, we can assume without loss of generality that $\left(
i_{1},\ldots,i_{k}\right)  =\left(  1,\ldots,k\right)  $. \ Let $c=\left(
c_{1},\ldots,c_{k}\right)  $ be an arbitrary vector in $\mathbb{C}^{k}$.
\ Then the crucial claim is that we can compute the expectation%
\[
\operatorname*{E}_{S\sim\mathcal{D}_{U}}\left[  \left\vert c_{1}\right\vert
^{2s_{1}}\cdots\left\vert c_{k}\right\vert ^{2s_{k}}\right]  =\sum
_{s_{1},\ldots,s_{k}}\Pr\left[  s_{1},\ldots,s_{k}\right]  \left\vert
c_{1}\right\vert ^{2s_{1}}\cdots\left\vert c_{k}\right\vert ^{2s_{k}}%
\]
in $n^{O\left(  k\right)  }$\ time. \ Given this claim, the theorem follows
easily. \ We simply need to choose $\left(  n+1\right)  ^{k}$\ values for
$\left\vert c_{1}\right\vert ,\ldots,\left\vert c_{k}\right\vert $, compute
$\operatorname*{E}_{S\sim\mathcal{D}_{U}}\left[  \left\vert c_{1}\right\vert
^{2s_{1}}\cdots\left\vert c_{k}\right\vert ^{2s_{k}}\right]  $\ for each one,
and then solve the resulting system of $\left(  n+1\right)  ^{k}$\ independent
linear equations in $\left(  n+1\right)  ^{k}$\ unknowns to obtain the
probabilities $\Pr\left[  s_{1},\ldots,s_{k}\right]  $\ themselves.

We now prove the claim. \ Let $I_{c}:\mathbb{C}^{m}\rightarrow\mathbb{C}^{m}%
$\ be the diagonal linear transformation that maps the vector $\left(
x_{1},\ldots,x_{m}\right)  $ to $\left(  c_{1}x_{1},\ldots,c_{k}x_{k}%
,x_{k+1},\ldots,x_{m}\right)  $, and let $I_{\left\vert c\right\vert ^{2}%
}=I_{c}^{\dag}I_{c}$ be the linear transformation that maps $\left(
x_{1},\ldots,x_{m}\right)  $ to $\left(  \left\vert c_{1}\right\vert ^{2}%
x_{1},\ldots,\left\vert c_{k}\right\vert ^{2}x_{k},x_{k+1},\ldots
,x_{m}\right)  $. \ Also, let%
\[
U\left[  J_{m,n}\right]  \left(  x\right)  =\sum_{S\in\Phi_{m,n}}a_{S}x^{S}.
\]
Now define a polynomial $q$\ by%
\[
q\left(  x\right)  :=I_{c}U\left[  J_{m,n}\right]  \left(  x\right)  ,
\]
and note that%
\[
q\left(  x\right)  =\sum_{S\in\Phi_{m,n}}a_{S}x^{S}c_{1}^{s_{1}}\cdots
c_{k}^{s_{k}}.
\]
Hence%
\begin{align*}
\operatorname*{E}_{S=\left(  s_{1},\ldots,s_{m}\right)  \sim\mathcal{D}_{U}%
}\left[  \left\vert c_{1}\right\vert ^{2s_{1}}\cdots\left\vert c_{k}%
\right\vert ^{2s_{k}}\right]   &  =\sum_{S=\left(  s_{1},\ldots,s_{m}\right)
\in\Phi_{m,n}}\left(  \left\vert a_{S}\right\vert ^{2}s_{1}!\cdots
s_{m}!\right)  \left\vert c_{1}\right\vert ^{2s_{1}}\cdots\left\vert
c_{k}\right\vert ^{2s_{k}}\\
&  =\sum_{S=\left(  s_{1},\ldots,s_{m}\right)  \in\Phi_{m,n}}\left(
\overline{a}_{S}\overline{c}_{1}^{s_{1}}\cdots\overline{c}_{k}^{s_{k}}\right)
\left(  a_{S}c_{1}^{s_{1}}\cdots c_{k}^{s_{k}}\right)  s_{1}!\cdots s_{m}!\\
&  =\left\langle q,q\right\rangle .
\end{align*}
Now,%
\begin{align*}
\left\langle q,q\right\rangle  &  =\left\langle I_{c}U\left[  J_{m,n}\right]
,I_{c}U\left[  J_{m,n}\right]  \right\rangle \\
&  =\left\langle U\left[  J_{m,n}\right]  ,I_{\left\vert c\right\vert ^{2}%
}U\left[  J_{m,n}\right]  \right\rangle \\
&  =\left\langle J_{m,n},U^{\dag}I_{\left\vert c\right\vert ^{2}}U\left[
J_{m,n}\right]  \right\rangle \\
&  =\operatorname*{Per}\left(  \left(  U^{\dag}I_{\left\vert c\right\vert
^{2}}U\right)  _{n,n}\right)
\end{align*}
where the second and third lines follow from Theorem \ref{invarthm2}, and the
fourth line follows from Lemma \ref{perlem1}. \ Finally, let $\Lambda
:=I_{\left\vert c\right\vert ^{2}}-I$. \ Then $\Lambda$\ is a diagonal matrix
of rank at most $k$, and%
\begin{align*}
\left(  U^{\dag}I_{\left\vert c\right\vert ^{2}}U\right)  _{n,n}  &  =\left(
U^{\dag}\left(  \Lambda+I\right)  U\right)  _{n,n}\\
&  =\left(  U^{\dag}\Lambda U+I\right)  _{n,n}\\
&  =V+I,
\end{align*}
where $V:=\left(  U^{\dag}\Lambda U\right)  _{n,n}$ is an $n\times n$\ matrix
of rank at most $k$. \ So by Lemma \ref{lowranklem}, we can compute%
\[
\operatorname*{Per}\left(  V+I\right)  =\operatorname*{E}_{S=\left(
s_{1},\ldots,s_{m}\right)  \sim\mathcal{D}_{U}}\left[  \left\vert
c_{1}\right\vert ^{2s_{1}}\cdots\left\vert c_{k}\right\vert ^{2s_{k}}\right]
\]
in $n^{O\left(  k\right)  }$\ time. \ Furthermore, notice that we can compute
$V$ itself in $O\left(  n^{2}k\right)  =n^{O\left(  1\right)  }$\ time,
independent of $m$. \ Therefore the total time needed to compute the
expectation is $n^{O\left(  k\right)  +O\left(  1\right)  }=n^{O\left(
k\right)  }$. \ This proves the claim.
\end{proof}

\section{Appendix: The Bosonic Birthday Paradox\label{BIRTHDAY}}

By the \textit{birthday paradox}, we mean the statement that, if $n$ balls are
thrown uniformly and independently into $m$ bins, then with high probability
we will see a collision (i.e., two or more balls in the same bin) if
$m=O\left(  n^{2}\right)  $, but \textit{not otherwise}.

In this appendix, we prove the useful fact that the birthday paradox still
holds if the balls are identical bosons, and \textquotedblleft
throwing\textquotedblright\ the balls means applying a Haar-random unitary
matrix. \ More precisely, suppose there are $m$ modes, of which the first $n$
initially contain $n$ identical photons (with one photon in each mode) and the
remaining $m-n$\ are unoccupied. \ Suppose we mix the modes by applying an
$m\times m$\ unitary matrix\ $U$ chosen uniformly at random from the Haar
measure. \ Then if we measure the occupation number of each mode, we will
observe a collision (i.e., two or more photons in the same mode) with
probability bounded away from $0$ if $m=O\left(  n^{2}\right)  $\ but not otherwise.

It is well-known that identical bosons are \textquotedblleft
gregarious,\textquotedblright\ in the sense of being \textit{more} likely than
classical particles to occur in the same state. \ For example, if we throw two
balls uniformly and independently into two bins, then the probability of both
balls landing in the same bin is only $1/2$\ with classical balls, but
$2/3$\ if the balls are identical bosons.\footnote{This is in stark contrast
to the situation with identical fermions, no two of which ever occur in the
same state by the Pauli exclusion principle.} \ So the interesting part of the
bosonic birthday paradox is the \textquotedblleft converse
direction\textquotedblright: when $m\gg n^{2}$, the probability of two or more
bosons landing in the same mode is \textit{not} too large. \ In other words,
while bosons are \textquotedblleft somewhat\textquotedblright\ more gregarious
than classical particles, they are not \textit{so} gregarious as to require a
different asymptotic relation between $m$ and $n$.

The proof of our main result, Theorem \ref{mainresult}, implicitly used this
fact: we needed that when $m\gg n^{2}$, the basis states with two or more
photons in the same mode can safely be neglected. \ However, while in
principle one could extract a proof of the bosonic birthday paradox from the
proof of Theorem \ref{mainresult}, we thought it would be illuminating to
prove the bosonic birthday paradox directly.

The core of the proof is the following simple lemma about the transition
probabilities induced by unitary matrices.

\begin{lemma}
[Unitary Pigeonhole Principle]\label{goodbadlem}Partition a finite set
$\left[  M\right]  $\ into a \textquotedblleft good part\textquotedblright%
\ $G$\ and \textquotedblleft bad part\textquotedblright\ $B=\left[  M\right]
\setminus G$. \ Also, let $U=\left(  u_{xy}\right)  $ be any $M\times
M$\ unitary matrix. \ Suppose we choose an element $x\in G$\ uniformly at
random, apply $U$ to $\left\vert x\right\rangle $, then measure $U\left\vert
x\right\rangle $\ in the standard basis. \ Then letting $y$\ be the
measurement outcome, we have $\Pr\left[  y\in B\right]  \leq\left\vert
B\right\vert /\left\vert G\right\vert $.
\end{lemma}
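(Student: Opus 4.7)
The plan is to express the probability in question as a double sum over squared matrix entries $|u_{yx}|^2$, reverse the order of summation, and then exploit the unitarity of $U$ via the fact that each row of $U$ has unit $\ell_2$ norm. No deep ingredients beyond linear algebra are needed; the entire argument is essentially a one-line computation once set up correctly.

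More concretely, first I would write down the probability directly from the definitions. Since applying $U$ to $|x\rangle$ and then measuring in the standard basis yields outcome $y$ with probability $|u_{yx}|^2$, and since $x$ is drawn uniformly from $G$, we have
\[
\Pr[y \in B] \;=\; \frac{1}{|G|}\sum_{x \in G}\sum_{y \in B}|u_{yx}|^2.
\]
Next I would swap the order of summation, rewriting this as
\[
\Pr[y \in B] \;=\; \frac{1}{|G|}\sum_{y \in B}\sum_{x \in G}|u_{yx}|^2.
\]

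Now the key observation is that for each fixed $y \in [M]$, the sum $\sum_{x \in G}|u_{yx}|^2$ is bounded above by $\sum_{x \in [M]}|u_{yx}|^2$, which equals $1$ because the $y^{\text{th}}$ row of the unitary matrix $U$ is a unit vector. Thus $\sum_{y \in B}\sum_{x \in G}|u_{yx}|^2 \leq |B|$, and dividing by $|G|$ yields the claimed bound $\Pr[y \in B] \leq |B|/|G|$.

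Since the argument is so short, there is no real obstacle to identify; if anything, the only subtle point is choosing to sum over the rows of $U$ (indexed by the outcome $y$) rather than the columns (indexed by the input $x$). Summing over columns would give $\sum_{y \in B}|u_{yx}|^2 \leq 1$ for each $x \in G$, leading only to the trivial bound $\Pr[y \in B] \leq 1$. The row-based sum is what converts the \textquotedblleft good set is large\textquotedblright\ hypothesis into a useful bound on hitting the bad set.
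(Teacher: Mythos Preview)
Your proof is correct and in fact more direct than the paper's. The paper introduces the doubly-stochastic matrix $R=(r_{xy})$ with $r_{xy}=|u_{xy}|^2$ and then applies an inclusion--exclusion identity
\[
\sum_{x,y\in G} r_{xy}=\sum_{x\in G,\,y\in[M]} r_{xy}+\sum_{x\in[M],\,y\in G} r_{xy}-\sum_{x,y\in[M]} r_{xy}+\sum_{x,y\in B} r_{xy}\geq 2|G|-M,
\]
using that both the row sums and the column sums of $R$ equal $1$, and from this deduces $\Pr[y\in G]\geq 1-|B|/|G|$. Your argument bypasses the inclusion--exclusion step entirely: by summing first over $y\in B$ and using only that each \emph{row} of $U$ has unit norm, you get $\sum_{x\in G}\sum_{y\in B}|u_{yx}|^2\leq|B|$ in one stroke. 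So your proof uses strictly less (only $UU^{\dagger}=I$, not full double-stochasticity) and is shorter; the paper's route is a slightly more roundabout way to the same bound.
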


\begin{proof}
Let\ $R$\ be an $M\times M$\ doubly-stochastic matrix whose $\left(
x,y\right)  $\ entry is $r_{xy}:=\left\vert u_{xy}\right\vert ^{2}$. \ Then
applying $U$ to a computational basis state $\left\vert x\right\rangle $\ and
measuring immediately afterward is the same as applying $R$; in particular,
$\Pr\left[  y\in B\right]  =r_{xy}$. \ Moreover,%
\begin{align*}
\sum_{x,y\in G}r_{xy}  &  =\sum_{x\in G,y\in\left[  M\right]  }r_{xy}%
+\sum_{x\in\left[  M\right]  ,y\in G}r_{xy}-\sum_{x,y\in\left[  M\right]
}r_{xy}+\sum_{x,y\in B}r_{xy}\\
&  =\left\vert G\right\vert +\left\vert G\right\vert -M+\sum_{x,y\in B}%
r_{xy}\\
&  \geq2\left\vert G\right\vert -M,
\end{align*}
where the first line follows from simple rearrangements and the second line
follows from the double-stochasticity of $R$. \ Hence%
\[
\Pr\left[  y\in G\right]  =\operatorname*{E}_{x\in G}\left[  \sum_{y\in
G}r_{xy}\right]  \geq\frac{2\left\vert G\right\vert -M}{\left\vert
G\right\vert }=1-\frac{\left\vert B\right\vert }{\left\vert G\right\vert },
\]
and%
\[
\Pr\left[  y\in B\right]  =1-\Pr\left[  y\in G\right]  \leq\frac{\left\vert
B\right\vert }{\left\vert G\right\vert }.
\]

\end{proof}

Lemma \ref{goodbadlem}\ has the following important corollary. \ Suppose we
draw the $M\times M$\ unitary matrix $U$\ from a probability distribution
$\mathcal{Z}$, where $\mathcal{Z}$\ is \textit{symmetric} with respect to some
transitive group of permutations on the good set $G$. \ Then $\Pr\left[  y\in
B\right]  $\ is clearly independent of the choice of initial state $x\in G$.
\ And therefore, in the statement of the lemma, we might as well \textit{fix}
$x\in G$ rather than choosing it randomly. \ The statement then becomes:

\begin{corollary}
\label{goodbadcor}Partition a finite set $\left[  M\right]  $\ into a
\textquotedblleft good part\textquotedblright\ $G$\ and \textquotedblleft bad
part\textquotedblright\ $B=\left[  M\right]  \setminus G$.\ \ Also, let
$\Gamma\leq S_{M}$\ be a permutation group that is transitive with respect to
$G$, and let $\mathcal{Z}$ be a probability distribution over $M\times
M$\ unitary matrices that is symmetric with respect to $\Gamma$.\ Fix an
element $x\in G$. \ Suppose we draw a unitary matrix $U$ from $\mathcal{Z}$,
apply $U$ to $\left\vert x\right\rangle $, and measure $U\left\vert
x\right\rangle $\ in the standard basis. \ Then the measurement outcome will
belong to $B$ with probability at most $\left\vert B\right\vert /\left\vert
G\right\vert $.
\end{corollary}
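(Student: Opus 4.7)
The plan is to reduce the corollary to Lemma~\ref{goodbadlem} by a standard symmetrization argument. For each $x \in G$, define
$$q_x := \Pr_{U \sim \mathcal{Z}}\bigl[\,\text{the measurement of } U\lvert x\rangle \text{ lies in } B\,\bigr].$$
The heart of the argument is the claim that $q_x$ is independent of the choice of $x \in G$. Granting that claim momentarily, I would then write
$$q_x \;=\; \operatorname*{E}_{x' \in G} q_{x'} \;=\; \operatorname*{E}_{U \sim \mathcal{Z}} \Pr_{x' \in G}\bigl[\,\text{the measurement of } U\lvert x'\rangle \text{ lies in } B\,\bigr],$$
where the second equality is just Fubini (swapping the uniform average over $x' \in G$ with the expectation over $U \sim \mathcal{Z}$). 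For each fixed $U$, the inner probability is at most $\lvert B\rvert / \lvert G\rvert$ by Lemma~\ref{goodbadlem}, and taking the expectation over $U \sim \mathcal{Z}$ preserves this bound, giving the desired conclusion.

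To verify the claim that $q_x = q_{x'}$ for all $x, x' \in G$, I would invoke transitivity of $\Gamma$ on $G$ to pick some $\pi \in \Gamma$ with $\pi(x) = x'$, and let $P_\pi$ denote the associated $M \times M$ permutation matrix, so that $P_\pi \lvert x\rangle = \lvert x'\rangle$. The symmetry hypothesis on $\mathcal{Z}$ with respect to $\Gamma$ asserts that if $U \sim \mathcal{Z}$, then $U P_\pi \sim \mathcal{Z}$ as well. Hence the random vector $U \lvert x'\rangle = (U P_\pi)\lvert x\rangle$ has the same distribution as $U \lvert x\rangle$, which means the induced distributions on measurement outcomes agree, and in particular $q_x = q_{x'}$.

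There is no substantial obstacle here: once the $\Gamma$-symmetry of $\mathcal{Z}$ is used to decouple $q_x$ from the choice of $x \in G$, the corollary follows immediately from Lemma~\ref{goodbadlem}. The only mild subtlety is using the correct side (left vs.\ right) of the $\Gamma$-action on $\mathcal{Z}$, which is dictated by the convention that $\Gamma$ permutes basis states via $P_\pi \lvert x\rangle = \lvert \pi(x)\rangle$; with this convention, right-invariance of $\mathcal{Z}$ under multiplication by $P_\pi$ is exactly what makes the law of $U\lvert x\rangle$ the same as the law of $U\lvert x'\rangle$.
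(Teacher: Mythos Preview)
Your proposal is correct and matches the paper's approach exactly: the paper's proof is the short paragraph preceding the corollary, which asserts (without further detail) that by the $\Gamma$-symmetry of $\mathcal{Z}$ the probability $\Pr[y\in B]$ is independent of the starting point $x\in G$, so one may average over $x\in G$ and invoke Lemma~\ref{goodbadlem}. You have simply spelled out the symmetrization and Fubini steps that the paper leaves implicit.
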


Given positive integers $m\geq n$, recall that $\Phi_{m,n}$ is the set of
lists of nonnegative integers $S=\left(  s_{1},\ldots,s_{m}\right)  $\ such
that $s_{1}+\cdots+s_{m}=n$. \ Also, recall from Theorem \ref{mainresult}%
\ that a basis state $S\in\Phi_{m,n}$\ is called \textit{collision-free} if
each $s_{i}$\ is either $0$ or $1$. \ Let $G_{m,n}$\ be the set of
collision-free $S$'s, and let $B_{m,n}=\Phi_{m,n}\setminus G_{m,n}$. \ Then we
have the following simple estimate.

\begin{proposition}
\label{goodbound}%
\[
\frac{\left\vert G_{m,n}\right\vert }{\left\vert \Phi_{m,n}\right\vert
}>1-\frac{n^{2}}{m}.
\]

\end{proposition}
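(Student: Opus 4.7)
The plan is purely combinatorial: compute the two cardinalities exactly and then bound the ratio with an elementary product estimate. First I would recall that $|\Phi_{m,n}|=\binom{m+n-1}{n}$ (standard stars-and-bars), a fact already noted in the body of the paper when the Hilbert space $H_{m,n}$ was introduced. Then $|G_{m,n}|=\binom{m}{n}$, since a collision-free element is determined by the choice of which $n$ modes out of $m$ are occupied.

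Next I would simplify the ratio
\[
\frac{|G_{m,n}|}{|\Phi_{m,n}|}
=\frac{\binom{m}{n}}{\binom{m+n-1}{n}}
=\frac{m!\,(m-1)!}{(m-n)!\,(m+n-1)!}
=\prod_{i=0}^{n-1}\frac{m-i}{m+i},
\]
by cancelling the factorials and rewriting the rising and falling factorials as products. The $i=0$ term equals $1$, so the product effectively runs from $i=1$ to $n-1$.

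The final step is the elementary estimate. For each $i\in\{1,\dots,n-1\}$, I would use $\frac{m-i}{m+i}\ge 1-\frac{2i}{m}$, which follows immediately from $m+i>0$ and $(m-i)-(m+i)(1-2i/m)=2i^2/m\ge 0$. Then the standard fact that $\prod_k(1-x_k)\ge 1-\sum_k x_k$ whenever $0\le x_k\le 1$ (a trivial induction) yields
\[
\prod_{i=1}^{n-1}\frac{m-i}{m+i}\ \ge\ 1-\sum_{i=1}^{n-1}\frac{2i}{m}\ =\ 1-\frac{n(n-1)}{m}\ >\ 1-\frac{n^{2}}{m},
\]
as required. (If $n>\sqrt{m}$ the bound is vacuous and there is nothing to prove, so we may tacitly assume the $x_k$'s are in $[0,1]$.)

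There is no real obstacle here; the only thing to be slightly careful about is the direction of the inequality when converting $\frac{m-i}{m+i}$ into a linear lower bound, and making sure the bookkeeping of factorials in the ratio of binomial coefficients is done correctly. Everything else is a one-line Bernoulli-type estimate.
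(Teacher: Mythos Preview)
Your proposal is correct and essentially matches the paper's proof. Both compute $\frac{|G_{m,n}|}{|\Phi_{m,n}|}=\frac{\binom{m}{n}}{\binom{m+n-1}{n}}$, express it as a product of factors slightly below $1$, and finish with the Bernoulli-type bound $\prod(1-x_k)\ge 1-\sum x_k$; the only cosmetic difference is that the paper groups the factorials as $\prod_{j=0}^{n-1}\bigl(1-\tfrac{n-1}{m+j}\bigr)$ rather than your symmetric pairing $\prod_{i}\tfrac{m-i}{m+i}$, but both routes land at $1-\tfrac{n(n-1)}{m}>1-\tfrac{n^2}{m}$.
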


\begin{proof}%
\begin{align*}
\frac{\left\vert G_{m,n}\right\vert }{\left\vert \Phi_{m,n}\right\vert }  &
=\frac{\binom{m}{n}}{\binom{m+n-1}{n}}\\
&  =\frac{m!\left(  m-1\right)  !}{\left(  m-n\right)  !\left(  m+n-1\right)
!}\\
&  =\left(  1-\frac{n-1}{m}\right)  \left(  1-\frac{n-1}{m+1}\right)
\cdot\cdots\cdot\left(  1-\frac{n-1}{m+n-1}\right) \\
&  >1-\frac{n^{2}}{m}.
\end{align*}

\end{proof}

Now let $U$ be an $m\times m$\ unitary matrix, and recall from Section
\ref{PHYSDEF} that $\varphi\left(  U\right)  $\ is the \textquotedblleft
lifting\textquotedblright\ of $U$ to the $n$-photon Hilbert space of dimension
$M=\binom{m+n-1}{n}$. \ Also, let $A=A\left(  U,n\right)  $\ be the $m\times
n$\ matrix corresponding to the first $n$ columns of $U$. \ Then recall that
$\mathcal{D}_{A}$\ is the probability distribution over $\Phi_{m,n}$ obtained
by drawing each basis state $S\in\Phi_{m,n}$ with probability equal to
$\left\vert \left\langle 1_{n}|\varphi\left(  U\right)  |S\right\rangle
\right\vert ^{2}$.

Using the previous results, we can upper-bound the probability that a
Haar-random unitary maps the basis state $\left\vert 1_{n}\right\rangle $\ to
a basis state containing two or more photons in the same mode.

\begin{theorem}
[Boson Birthday Bound]\label{birthdaythm}Recalling that $\mathcal{H}_{m,m}%
$\ is the Haar measure over $m\times m$\ unitary matrices,%
\[
\operatorname*{E}_{U\in\mathcal{H}_{m,m}}\left[  \Pr_{\mathcal{D}_{A\left(
U,n\right)  }}\left[  S\in B_{m,n}\right]  \right]  <\frac{2n^{2}}{m}.
\]

\end{theorem}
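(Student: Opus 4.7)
The plan is to apply the symmetry version of the pigeonhole principle (Corollary \ref{goodbadcor}) to the lifted unitary $\varphi(U)$, using $|1_n\rangle$ as the fixed initial state and the collision-free basis states $G_{m,n}$ as the ``good'' part. The only real content is to verify the two hypotheses of the corollary: that $|1_n\rangle$ lies in $G_{m,n}$ (immediate) and that the distribution on $M\times M$ unitaries induced by pushing $\mathcal{H}_{m,m}$ through $\varphi$ is symmetric under a permutation group that acts transitively on $G_{m,n}$. Given that, the conclusion of the corollary is exactly the desired bound on the expectation over $U\sim\mathcal{H}_{m,m}$, and the numerical bound $2n^2/m$ falls out of Proposition \ref{goodbound}.

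To set up the symmetry, I would take $\Gamma$ to be the image under $\varphi$ of the permutation matrices $\{P_\pi:\pi\in S_m\}$ acting on the $m$ modes. Each $\varphi(P_\pi)$ is an $M\times M$ permutation matrix on $\Phi_{m,n}$ that simply relabels the coordinates of $S=(s_1,\ldots,s_m)$ by $\pi$, so $\Gamma$ preserves the partition $\Phi_{m,n}=G_{m,n}\sqcup B_{m,n}$. Because any two collision-free occupation vectors differ only by a relabeling of modes, $\Gamma$ acts transitively on $G_{m,n}$. Left-invariance of the Haar measure under $U\mapsto P_\pi U$, combined with the homomorphism property $\varphi(P_\pi U)=\varphi(P_\pi)\varphi(U)$, shows that the pushforward distribution on $\varphi(U)$ is symmetric under $\Gamma$. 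Now Corollary \ref{goodbadcor}, applied with $x=1_n\in G_{m,n}$, gives
\[
\operatorname*{E}_{U\sim\mathcal{H}_{m,m}}\!\left[\Pr_{\mathcal{D}_{A(U,n)}}[S\in B_{m,n}]\right]\;\leq\;\frac{|B_{m,n}|}{|G_{m,n}|},
\]
since measuring $\varphi(U)|1_n\rangle$ in the computational basis is precisely sampling from $\mathcal{D}_{A(U,n)}$.

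To finish, I would combine this ratio bound with Proposition \ref{goodbound}. If $m\geq 2n^2$ then $|G_{m,n}|/|\Phi_{m,n}|>1/2$, hence $|B_{m,n}|/|G_{m,n}|<(n^2/m)/(1-n^2/m)\leq 2n^2/m$. In the remaining regime $m<2n^2$ the claimed bound $2n^2/m$ exceeds $1$ and is trivial, so the two cases together yield the theorem.

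The honest answer to where the main obstacle lies is that there really isn't one: the theorem is essentially a packaging of Lemma \ref{goodbadlem} plus a counting estimate, and the only non-formal step is recognizing that the natural mode-permutation symmetry of $\mathcal{H}_{m,m}$ descends through $\varphi$ to a transitive symmetry on the collision-free sector. Once that is noted, everything else is bookkeeping. If anything, the subtlety worth double-checking is that $\Gamma$ need only act transitively on $G_{m,n}$ (not on all of $\Phi_{m,n}$) for Corollary \ref{goodbadcor} to apply; this is what permits us to lose only a factor of $2$ rather than having to worry about how $\Gamma$ stratifies $B_{m,n}$.
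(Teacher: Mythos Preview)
Your proposal is correct and follows essentially the same route as the paper: apply Corollary \ref{goodbadcor} to the pushforward of $\mathcal{H}_{m,m}$ under $\varphi$, using the mode-permutation group acting transitively on $G_{m,n}$ and the fixed good state $1_n$, then bound $|B_{m,n}|/|G_{m,n}|$ via Proposition \ref{goodbound}. Your case split on $m\geq 2n^2$ versus $m<2n^2$ is equivalent to the paper's $\min\{\,\cdot\,,1\}$ maneuver at the end.
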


\begin{proof}
Given a permutation $\sigma\in S_{m}$ of single-photon states (or equivalently
of modes), let $\varphi\left(  \sigma\right)  $ be the permutation on the set
$\Phi_{m,n}$\ of $n$-photon states that is induced by $\sigma$, and let
$\Gamma:=\left\{  \varphi\left(  \sigma\right)  :\sigma\in S_{m}\right\}  $.
\ Then $\Gamma$\ is a subgroup of $S_{M}$\ of order $m!$ (where as before,
$M=\binom{m+n-1}{n}$). \ Furthermore, $\Gamma$\ is transitive with respect to
the set $G_{m,n}$, since we can map any collision-free basis state\ $S\in
G_{m,n}$\ to any other collision-free basis state $S^{\prime}\in G_{m,n}$\ via
a suitable permutation $\sigma\in S_{m}$\ of the underlying modes.

Now let $\mathcal{U}$\ be the probability distribution over $M\times
M$\ unitary matrices $V$ that is obtained by first drawing an $m\times
m$\ unitary matrix $U$\ from $\mathcal{H}_{m,m}$\ and then setting
$V:=\varphi\left(  U\right)  $. \ Then since $\mathcal{H}_{m,m}$\ is symmetric
with respect to permutations $\sigma\in S_{m}$, it follows that $\mathcal{U}%
$\ is symmetric with respect to permutations $\varphi\left(  \sigma\right)
\in S_{M}$.

We want to upper-bound $\operatorname*{E}_{U\in\mathcal{H}_{m,m}}\left[
\Pr_{\mathcal{D}_{A\left(  U,n\right)  }}\left[  S\in B_{m,n}\right]  \right]
$. \ This is simply the probability that, after choosing an $m\times m$
unitary $U$\ from $\mathcal{H}_{m,m}$, applying the $M\times M$\ unitary
$\varphi\left(  U\right)  $\ to the basis state $\left\vert 1_{n}\right\rangle
$, and then measuring in the Fock basis, we obtain an outcome in $G_{m,n}$.
\ So \
\[
\operatorname*{E}_{U\in\mathcal{H}_{m,m}}\left[  \Pr_{\mathcal{D}_{A\left(
U,n\right)  }}\left[  S\in B_{m,n}\right]  \right]  \leq\frac{\left\vert
B_{m,n}\right\vert }{\left\vert G_{m,n}\right\vert }<\frac{n^{2}/m}{1-n^{2}%
/m}.
\]
Here the first inequality follows from Corollary \ref{goodbadcor} together
with the fact that $1_{n}\in G_{m,n}$, while the second inequality follows
from\ Proposition \ref{goodbound}. \ Since the expectation is in any case at
most $1$, we therefore have an upper bound of%
\[
\min\left\{  \frac{n^{2}/m}{1-n^{2}/m},1\right\}  \leq\frac{2n^{2}}{m}.
\]

\end{proof}

\end{document}